\documentclass{article}
\usepackage[utf8]{inputenc}
\usepackage[T1]{fontenc}
\usepackage{amssymb,amsmath,amsthm,mathtools,bm}
\usepackage[margin=1.25in]{geometry}
\usepackage[round]{natbib}
\usepackage{float}
\usepackage{graphicx}
\usepackage[dvipsnames]{xcolor}
\usepackage{booktabs}
\usepackage{tabularx,array}
\usepackage{enumitem}
\usepackage{subcaption}
\usepackage{microtype}
\usepackage[section]{placeins}
\usepackage[colorlinks,citecolor=BlueViolet,linkcolor=black]{hyperref}

\newtheorem{assumption}{Assumption}
\newtheorem{theorem}{Theorem}
\newtheorem{proposition}{Proposition}
\newtheorem{corollary}{Corollary}
\newtheorem{remark}{Remark}

\newtheorem{lemma}{Lemma}

\newcommand{\Var}{\mathrm{Var}}
\newcommand{\Cov}{\mathrm{Cov}}
\newcommand{\1}{\mathbf{1}}
\newcommand{\I}{\mathcal{I}}
\newcolumntype{Y}{>{\raggedright\arraybackslash}X}
\title{Design-Based Inference for Time-Series GMM}
\date{June 2026}
\author{Thomas Glinnan\thanks{Department of Economics, London School of Economics and Political Science. Email: \href{mailto:t.m.glinnan@lse.ac.uk}{t.m.glinnan@lse.ac.uk}}\\
LSE}

\begin{document}
\maketitle

\begin{abstract}
This paper studies inference for time-series GMM when uncertainty comes from shock assignment within a realized historical episode. Rather than treating the data as one random draw from a population of hypothetical economies, the framework conditions on the historical environment and considers alternative realizations of shocks and instruments. For locally correctly specified GMM estimators, the centered moment has design long-run variance $\Omega_R$, which determines the sandwich covariance for the finite-history estimand. Conventional HAC estimators instead converge to $\Omega_R^+=\Omega_R+\Omega_\mu$, where $\Omega_\mu\succeq0$ is the long-run variance of the centered mean-moment path. HAC inference is therefore conservative for scalar functions of the finite-history estimand. Projection adjustment using predetermined covariates can reduce this HAC variance limit in Loewner order and, under an additional long-run orthogonality condition, yields a tighter conservative bound on the corresponding asymptotic covariance. Monte Carlo evidence shows when the distinction is quantitatively important. In a monetary-policy application, standard-error reductions from rich macro covariates provide a diagnostic for economically meaningful predictable variation in the mean-moment path.
\end{abstract}

\noindent\textbf{Keywords:} Design-based inference; Time-series GMM; HAC standard errors; Local projections; Finite-population asymptotics; Regression adjustment.

\noindent\textbf{JEL codes:} C12, C22, C32, C36.

\section{Introduction} \label{sec:introduction}
Suppose a researcher uses a local projection or SVAR to study a past historical episode, such as monetary-policy shocks over a particular time period. Often, researchers are not directly interested in how estimates of dynamic treatment effects---such as impulse responses---would vary across hypothetical economies drawn from the same stationary process. Instead, the question of interest may be the following: holding the historical environment fixed, how different would the estimated impulse response have been under alternative realizations of the economic shocks? This question often corresponds more closely to the object of interest in applied work, but it requires a different notion of variance from the one that is most commonly used. Standard HAC standard errors treat the observed moment path as a draw from an unconditional time-series process. By contrast, the design approach below conditions on the realized finite history and treats only the assignment shocks, and possibly instruments, as random. The estimand is therefore the minimizer of the sample-period population GMM criterion based on average design mean moments over the realized dates, with variance given by the design variance under re-randomization of the shocks.

The distinction is the time-series analogue of the finite-population variance calculation in randomized experiments. In cross-sectional settings, design-based inference asks about treatment effects \textit{in the sample}, with uncertainty arising only from treatment assignment. Analogously, this framework targets dynamic treatment effects \textit{for the observed time period} only, with uncertainty generated only by alternative realizations of shocks or instruments. In a simple randomized experiment, studied by \citet{Neyman1923_thesis}, the variance of a difference-in-means estimator contains an unidentified negative term proportional to treatment-effect heterogeneity, so the usual plug-in formula is conservative. Recent finite-population results extend this logic to regression, M-estimation, and overidentified GMM \citep{AbadieAtheyImbensWooldridge2020,Xu2021FinitePopulationMEstimators,KakehiMatsushitaOtsu2026FinitePopulationGMM}. The time-series question is how the same unidentified heterogeneity term changes when a shock at one date can affect outcomes and moments at other dates.

This paper answers that question for a class of time-series GMM estimators. The conditioning environment $\mathcal E_T$ collects the outcome maps, state-transition rules, deterministic sampling windows, and covariate paths that the design holds fixed. Conditional on $\mathcal E_T$, only the assignment shocks are re-randomized. The estimand is denoted $\theta_T^\star$, which may be, for example, the average $h$-period ahead impulse response over the observed time period. While standard methods such as LPs and SVARs can deliver valid point estimation of $\theta_T^\star$ under the design-based framework, statistical inference requires more care. Specifically, write the moment used in estimation as $g_t(W_t,\theta)=\mu_{T,t}(\theta) + e_{T,t}(W_t,\theta)$. Here $\mu_{T,t}(\theta)$ is the design-conditional mean of the date-$t$ moments, so the sequence $(\mu_{T,t}(\theta_T^\star))_{t\le T}$ is the \textit{mean path}, recording predictable date-by-date movement in moments over the fixed historical episode. The key decomposition is that the usual asymptotic variance $\Omega_R^+$ satisfies
\[
\Omega_R^+ = \Omega_R + \Omega_\mu, \qquad \Omega_\mu\succeq 0.
\]
Here $\Omega_R$ is the asymptotic design-based variance of the centered moments. Conventional HAC procedures estimate $\Omega_R^+$ because they are applied to the observed moment sequence and therefore include both design innovation variation and fixed movement in the mean path. The component $\Omega_\mu$ is the long-run variance obtained by applying the same HAC long-run-variance calculation to the mean path after subtracting its sample-period average. This is the dependent-data analogue of the finite-population heterogeneity correction and characterizes when $\Omega_R^+$ and $\Omega_R$ are not equal.

The first contribution of the paper is to show that conventional HAC estimators and HAC multiplier bootstraps estimate a conservative variance limit for scalar design-based inference. If the target is instead an unconditional population of hypothetical economies, as in some forecasting exercises, the usual asymptotic variance remains the relevant object. The results below apply to the finite-history question, where the realized historical environment is held fixed and only the assignment shocks are re-randomized.

The second contribution is a projection-adjusted, or regression-adjusted, refinement based on predetermined covariates. Projecting the moments on such covariates and applying the same HAC long-run-variance calculation to the residualized moments gives an adjusted variance limit $\Omega_R^+(r)$ satisfying $\Omega_R^+(r)\preceq\Omega_R^+$. This positive-semidefinite, or Loewner-order, reduction result does not by itself assert that the adjusted matrix is conservative for the design variance. The stronger interpretation $\Omega_R\preceq\Omega_R^+(r)\preceq\Omega_R^+$ requires long-run orthogonality, meaning zero HAC long-run cross-covariance, between the centered innovation and the part of the date-specific mean path left after projecting on the adjustment covariates. Equality with $\Omega_R$ holds when the residualized centered mean path has zero long-run variance; a simple sufficient case is that the centered mean path lies exactly in the linear span of the adjustment covariates. Without those restrictions, the adjusted matrix is a diagnostic and a feasible reduction of the HAC variance limit rather than an estimator of $\Omega_R$.

The paper studies estimators defined by unconditional moment restrictions, including just-identified LPs, stable VAR normal equations, proxy or event-study IV moments, and locally correctly specified overidentified GMM\@. The main text records primitive sufficient routes for fixed-dimensional LPs and stable VAR normal equations, while Appendix~\ref{app:primitive-verification} gives the corresponding details. The Monte Carlo section evaluates the variance limits directly under fixed conditioning environments. The simulations show that conventional HAC can be materially conservative when movement in the date-specific mean path is large and that aligned predetermined adjustment can remove a substantial part of the gap. They also show that covariate adjustment meaningfully improves standard errors only when the adjustment span tracks the predictable component of that mean path. The empirical application to monthly U.S. monetary-policy shocks follows the same logic. It asks whether a richer set of predetermined macro covariates explains part of the HAC moment variation in standard LP and state-dependent specifications. The exercise diagnoses the variance-limit distinction in a familiar monetary-shock setting rather than supplying new structural evidence about monetary transmission.

The paper proceeds as follows. Section~\ref{sec:motivation} introduces the design-based time-series perspective, Section~\ref{sec:related-literature} relates the argument to design-based econometrics and macro time-series inference, and Section~\ref{sec:setup-and-main-results} defines the finite-history GMM estimand and gives the first-order theory. Sections~\ref{sec:calculating-SEs}--\ref{sec:empirical} then establish the HAC and bootstrap results, give the projection-adjusted refinement, and report the simulations and monetary-shock application. The appendices contain additional supporting theory, proofs, additional simulations and application diagnostics, and examples showing how other macro estimators fit the moment-level setup.

\section{Motivation}\label{sec:motivation}
This section makes the conditioning convention concrete before the GMM notation is introduced. It begins with the direct potential-outcome framework of \citet{BojinovShephard2019_TimeSeriesExperiments} and \citet{RambachanShephard2019_NonparametricDynamicCausalModel}, under which assignment shocks are the source of randomness in the economy. The examples in this section use i.i.d.\ shocks for clarity; the GMM theory below allows weak dependence and states the needed long-run covariance and CLT conditions at the centered-moment level. To keep the observed-period notation one-sided, write $w_{1:t}=(w_1,\ldots,w_t)$ for the realized shock history up to time $t\ge 1$.

Suppose that $(W_t)_{t \in \mathbb{Z}}$ has common law $\mathcal{F}_W$ on $\mathcal{W}\subseteq\mathbb{R}^{d_w}$ and that, for each observed date $t\ge 1$, there is a vector-valued potential-outcome map $Y_t:\mathcal{W}^t\to \mathbb{R}^{d_y}$ with $w_{1:t}\mapsto Y_t(w_{1:t})$. Conditional on these maps, the assignment shocks are the source of design randomness, and the maps are nonanticipating in the sense that the date-$t$ outcome depends only on shocks realized up to date $t$. This construction motivates the moment-level conditioning convention in Assumption~\ref{ass:design-environment}. Fix an observed date $s\ge 1$ and a horizon $h\ge 0$. For any $w,w'\in\mathcal{W}$ and any finite off-date shock values
$w_{-s}^{(h)}:=(w_1,\ldots,w_{s-1},w_{s+1},\ldots,w_{s+h})$, the $h$-step causal effect of replacing $W_s=w'$ by $w$
(holding those off-date shocks fixed) is
\[
\tau_{s,h}(w,w';w_{-s}^{(h)})
:= Y_{s+h}\!\big(w_{1:s-1},\,w,\,w_{s+1:s+h}\big)
   - Y_{s+h}\!\big(w_{1:s-1},\,w',\,w_{s+1:s+h}\big).
\]
In the notation of \citet{RambachanShephard2019_NonparametricDynamicCausalModel}, the usual $h$-period impulse response at date $s$ is $\tau_{s,h}(1,0;W_{-s}^{(h)})$. The observed data are $(Y_t(W_{1:t}), W_t)_{t=1}^T$. This convention is the time-series analogue of the Neyman--Rubin finite-population setup, with shocks playing the role of randomized treatments.

The shock-conditioning convention is already implicit in many time-series models. A stationary SVAR written in its MA\((\infty)\) representation, \(Y_t=\sum_{s=0}^\infty A_sW_{t-s}\), treats the structural shocks as the stochastic input and maps each possible shock history into an outcome path. Conditional on the coefficients and any initial conditions, a shock sequence \(w=(w_1,\ldots,w_T)\) therefore determines a potential outcome path \(Y_t(w)\), while the observed path is the one selected by the realized shocks \(W_t\). The design-based convention takes this potential-outcome map as fixed over the historical episode and re-randomizes only the assignment shocks. The difference from conventional stationary asymptotics is not that SVARs make outcomes random while the design approach does not; outcomes are random in both views. The difference is what is averaged over: conventional asymptotics average over the law generating the economy, whereas the design calculation conditions on the realized finite-history outcome functions and averages only over alternative shock draws.

In order for there to be a difference between the usual and the design-based notions of asymptotic variance, there needs to be heterogeneity in dynamic causal effects. In the cross-sectional Neyman formula, the gap between design-based and conventional variance limits is driven by treatment-effect heterogeneity. In time series, the analogous object is heterogeneity in impulse responses across dates. For example, in the constant-coefficient SVAR(1), \(Y_t=AY_{t-1}+BW_t\), the date-\(t\), horizon-\(h\) response is the same at every date, \(A^hB\). If instead the impact matrix depends on a state variable, \(B=B(Z_t)\), with \(Z_t\) predetermined relative to \(W_t\), then the response is \(A^hB(Z_t)\) and varies with $Z_t$. Holding that fixed, a natural estimand of interest would be the sample average \(\theta_T^\star := (T-h)^{-1}\sum_{t=1}^{T-h}A^hB(Z_t)\). This variation need not imply nonstationarity: the observables and effects may remain strictly stationary, for example if \((Z_t)_{t\in\mathbb Z}\) is strictly stationary. Nor must the heterogeneity itself be identified from a single realized history, as when \(B(z)\) is orthogonal for every \(z\). Design-based and conventional standard errors differ only when the realized economy is treated as fixed and dynamic effects vary over that history. Empirical state dependence is documented, for example, by \citet{AuerbachGorodnichenko2012}, \citet{TenreyroThwaites2016}, and \citet{RameyZubairy2018}, but its relevance must be assessed in each application. The main results do not require the researcher to specify the form of the heterogeneity; the possibility of predictable dynamic-effect variation is enough to make the variance distinction relevant.

The following simple LP with i.i.d.\ Bernoulli shocks gives a Neyman-style setting in which, for each date, the two potential outcomes are treated as fixed. Fix $h$ and set $T_h:=T-h$; that is, the pairs $(Y^{(1)}_{t+h},Y^{(0)}_{t+h})_{t\le T_h}$ are treated as fixed constants. All expectations and variances in this paragraph and Lemma~\ref{lem:neyman-ts} are conditional on that fixed array. The calculation therefore isolates the same-date, or lag-zero, finite-population variance component. If a full dynamic re-randomization lets overlapping histories move together, serial covariance terms can appear; those terms are handled by the later GMM long-run variance.

After this conditioning, write
\[
Y^{(1)}_{t+h}:=Y_{t+h}(W_{1:t-1},1,W_{t+1:t+h}),
\qquad
Y^{(0)}_{t+h}:=Y_{t+h}(W_{1:t-1},0,W_{t+1:t+h}),
\]
with finite-history averages $\bar Y_h^{(w)}:=T_h^{-1}\sum_{t=1}^{T_h}Y^{(w)}_{t+h}$ for $w\in\{0,1\}$. The observed outcome in the horizon-$h$ LP is $Y_{t+h}^{\mathrm{obs}}=W_tY^{(1)}_{t+h}+(1-W_t)Y^{(0)}_{t+h}$. Let $\tau_{t,h}:=Y^{(1)}_{t+h}-Y^{(0)}_{t+h}$ and $\bar\tau_h:=\bar Y_h^{(1)}-\bar Y_h^{(0)}$. The first-order linear term for the LP slope is
\[
\phi_{t,h}:=
\frac{W_t}{p}\big(Y^{(1)}_{t+h}-\bar Y_h^{(1)}\big)
-\frac{1-W_t}{1-p}\big(Y^{(0)}_{t+h}-\bar Y_h^{(0)}\big).
\]

\begin{lemma}\label{lem:neyman-ts}
Conditional on the fixed array described above, suppose $W_t \sim \operatorname{Bernoulli}(p)$ i.i.d.\ with $p\in(0,1)$ fixed, and suppose $T_h^{-1}\sum_{t=1}^{T_h}(|Y^{(1)}_{t+h}|^4+|Y^{(0)}_{t+h}|^4)=O(1)$. Then, as $T_h\to\infty$, the LP slope $\widehat\tau_h$ obtained from the intercept regression $Y_{t+h}^{\mathrm{obs}}=\alpha_h+\tau_h W_t+u_{t,h}$ admits the design expansion
\[
\sqrt{T_h}(\widehat\tau_h-\bar\tau_h)
=
T_h^{-1/2}\sum_{t=1}^{T_h}\phi_{t,h}+r_{T,h},
\qquad
\mathbb E[r_{T,h}^2]=o(1).
\]
Consequently,
\[
\Var(\widehat\tau_h)
=\frac{1}{T_h}\Gamma_h(0)+o(T_h^{-1}),
\]
where
\[
\Gamma_h(0)
:=\frac{1}{T_h}\sum_{t=1}^{T_h}\left[
\frac{\big(Y^{(1)}_{t+h}-\bar Y_h^{(1)}\big)^2}{p}
+
\frac{\big(Y^{(0)}_{t+h}-\bar Y_h^{(0)}\big)^2}{1-p}
-
\big(\tau_{t,h}-\bar\tau_h\big)^2
\right].
\]
Equivalently, $\Gamma_h(0)=S^2_{1,h}/p+S^2_{0,h}/(1-p)-S^2_{\tau,h}$, where $S^2_{1,h}$, $S^2_{0,h}$, and $S^2_{\tau,h}$ are the finite-history variances, computed with denominator $T_h$, of $Y^{(1)}_{t+h}$, $Y^{(0)}_{t+h}$, and $\tau_{t,h}$, respectively.
\end{lemma}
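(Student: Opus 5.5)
The plan is to treat $\widehat\tau_h$ as a difference of treated and control sample means and linearize each ratio around its design expectation. With $N_1:=\sum_t W_t$, $N_0:=T_h-N_1$, $\hat p:=N_1/T_h$, the OLS slope with an intercept and a binary regressor is
\[
\widehat\tau_h=\frac{\sum_t W_tY^{(1)}_{t+h}}{N_1}-\frac{\sum_t(1-W_t)Y^{(0)}_{t+h}}{N_0},
\]
defined on the event $\{0<N_1<T_h\}$. On the complement, which has probability $p^{T_h}+(1-p)^{T_h}$, we set the estimator to any fixed convention. Subtracting $\bar Y_h^{(1)}$ and $\bar Y_h^{(0)}$ gives the exact identity
\[
\sqrt{T_h}(\widehat\tau_h-\bar\tau_h)=\frac{p}{\hat p}\,A_1-\frac{1-p}{1-\hat p}\,A_0,
\]
where
\[
A_1:=T_h^{-1/2}\sum_t\frac{W_t}{p}\big(Y^{(1)}_{t+h}-\bar Y^{(1)}_h\big),\qquad A_0:=T_h^{-1/2}\sum_t\frac{1-W_t}{1-p}\big(Y^{(0)}_{t+h}-\bar Y^{(0)}_h\big).
\]
Since $T_h^{-1/2}\sum_t\phi_{t,h}=A_1-A_0$, the remainder is
\[
r_{T,h}=\Big(\frac{p}{\hat p}-1\Big)A_1-\Big(\frac{1-p}{1-\hat p}-1\Big)A_0 .
\]

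\textbf{Moments of $A_1$ and $A_0$.} Centering makes $\sum_t(Y^{(1)}_{t+h}-\bar Y^{(1)}_h)=0$, so $A_1=T_h^{-1/2}\sum_t\frac{W_t-p}{p}(Y^{(1)}_{t+h}-\bar Y^{(1)}_h)$. This is a sum of independent mean-zero terms with fixed weights. Hence
\[
\mathbb E[A_1^2]=\frac{1-p}{p}\,S^2_{1,h}=O(1),
\]
and by Rosenthal's inequality $\mathbb E[A_1^4]=O(1)$, using the fourth-moment assumption $T_h^{-1}\sum(\cdot)^4=O(1)$. The same holds for $A_0$.

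\textbf{Bounding $r_{T,h}$; this is the main obstacle.} The difficulty is the random denominator $\hat p$, which can be near zero. Split on the event $E:=\{|\hat p-p|\le p/2\}$ and its analogue for $1-p$. On $E$, $|p/\hat p-1|\le 2|\hat p-p|/p$. Cauchy--Schwarz then gives
\[
\mathbb E\big[(\hat p-p)^2A_1^2\big]\le \big(\mathbb E(\hat p-p)^4\big)^{1/2}\big(\mathbb E A_1^4\big)^{1/2}=O(T_h^{-1}),
\]
since $\mathbb E(\hat p-p)^4=O(T_h^{-2})$. Off $E$, Hoeffding gives $P(E^c)\le 2e^{-cT_h}$. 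There we use the crude bounds $|p/\hat p|\le pT_h$ whenever $N_1\ge1$, and $|A_1|\le T_h^{1/2}p^{-1}\max_t|Y^{(1)}_{t+h}-\bar Y^{(1)}_h|=O(T_h^{3/4})$, which follows from the averaged fourth moment. The degenerate events $N_1\in\{0,T_h\}$ are handled the same way. The resulting polynomial factors are killed by the exponential probability, so the total is $\mathbb E r_{T,h}^2=O(T_h^{-1})+o(1)=o(1)$.

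\textbf{Variance formula.} Write $L:=T_h^{-1/2}\sum\phi_{t,h}$. Since $\mathbb E\phi_{t,h}=0$ and the $\phi_{t,h}$ are independent across $t$,
\[
\mathbb E L^2=T_h^{-1}\sum_t\Var(\phi_{t,h}).
\]
Set $a_t:=Y^{(1)}_{t+h}-\bar Y^{(1)}_h$ and $b_t:=Y^{(0)}_{t+h}-\bar Y^{(0)}_h$. Then $\phi_{t,h}^2=W_ta_t^2/p^2+(1-W_t)b_t^2/(1-p)^2$ because $W_t(1-W_t)=0$, so $\mathbb E\phi_{t,h}^2=a_t^2/p+b_t^2/(1-p)$. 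The mean is $\mathbb E\phi_{t,h}=a_t-b_t=\tau_{t,h}-\bar\tau_h$, which does not vanish date by date. Subtracting its square gives $\Var(\phi_{t,h})=a_t^2/p+b_t^2/(1-p)-(\tau_{t,h}-\bar\tau_h)^2$. Averaging over $t$ yields $\Gamma_h(0)$, which equals $S^2_{1,h}/p+S^2_{0,h}/(1-p)-S^2_{\tau,h}$.

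\textbf{Passing to $\Var(\widehat\tau_h)$.} Note that $\mathbb E(\sqrt{T_h}(\widehat\tau_h-\bar\tau_h))=\mathbb E r_{T,h}=o(1)$. By Minkowski, $\|L+r\|_2=\|L\|_2+o(1)$ with $\|L\|_2=O(1)$, so
\[
T_h\Var(\widehat\tau_h)=\Gamma_h(0)+o(1),
\]
which is the claimed expansion.
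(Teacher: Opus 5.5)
Your proposal is correct and follows the paper's proof essentially step for step. The steps match: the difference-in-means identity, the add-and-subtract linearization on a good event where the denominators stay away from zero, Cauchy--Schwarz with fourth moments there, Hoeffding plus polynomial bounds off that event, and the same date-by-date variance computation. One small slip: $\mathbb E\phi_{t,h}$ is not zero date by date, as you note yourself a few lines later. The identity $\mathbb E L^2=T_h^{-1}\sum_t\Var(\phi_{t,h})$ should instead be justified by $\sum_t\mathbb E\phi_{t,h}=0$, i.e.\ $\mathbb E L=0$, together with independence across $t$.
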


Lemma~\ref{lem:neyman-ts} is a simple special case that connects the time-series setup to the finite-population result of \citet{Neyman1923_thesis}. Once the array of outcomes obtained by setting $W_t=1$ and $W_t=0$ at each date is fixed, the same-date, or lag-zero, term is the Neyman finite-array component. The negative term $-S^2_{\tau,h}$ is the treatment-effect-heterogeneity adjustment: it is not identified from one assignment because $Y^{(1)}_{t+h}$ and $Y^{(0)}_{t+h}$ are not jointly observed at the same date.

The rest of the paper generalizes this kind of result to the GMM setting. The full theory allows centered moment or influence contributions to be serially dependent. For a scalar centered contribution $U_{t,h}$, write $\Gamma_h^U(0):=T_h^{-1}\sum_{t=1}^{T_h}\Var(U_{t,h})$ and $\Gamma_h^U(\ell):=(T_h-\ell)^{-1}\sum_{t=\ell+1}^{T_h}\Cov(U_{t,h},U_{t-\ell,h})$ for $1\le \ell<T_h$. Then
\[
\Var\!\left(T_h^{-1/2}\sum_{t=1}^{T_h}U_{t,h}\right)
=
\Gamma_h^U(0)+2\sum_{\ell=1}^{T_h-1}\left(1-\frac{\ell}{T_h}\right)\Gamma_h^U(\ell).
\]
For vector-valued contributions, the positive-lag term is $\Gamma_h^U(\ell)+\Gamma_h^U(\ell)^\top$ rather than $2\Gamma_h^U(\ell)$. The nonzero lag terms are the time-series component. They vanish in the fixed-array Bernoulli calculation above because $\phi_{t,h}$ is a function only of $W_t$ and fixed constants, but they need not vanish for the serially dependent centered moment arrays used below. In the GMM notation, these lag terms enter the design long-run variance $\Omega_R$ of the centered moment, while movement in the date-specific mean path creates the conservative-HAC gap $\Omega_\mu$.

\section{Related literature}\label{sec:related-literature}

The first connection is to design-based econometrics. The finite-population analysis of \citet{Neyman1923_thesis} conditions on the potential outcomes and treats only the assignment as random. In that setting, the usual conservative variance formula reflects the fact that treatment-effect heterogeneity is not identified from one assignment. \citet{AbadieAtheyImbensWooldridge2020}, \citet{Xu2021FinitePopulationMEstimators}, and \citet{KakehiMatsushitaOtsu2026FinitePopulationGMM} extend related finite-population reasoning to regression, M-estimation, and GMM. Relatedly, \citet{Sancibrian2025RepeatedFinitePopulations} studies estimation uncertainty in repeated finite populations with latent attributes and noisy repeated measurements; that setting is model-based rather than assignment-based, but it shares the distinction between uncertainty about a realized finite population and uncertainty about a superpopulation. The present paper keeps the same conditioning logic but replaces cross-sectional independent assignment with a dependent time-series moment. The resulting correction is not a single finite-population variance of treatment effects; it is the long-run variance $\Omega_\mu$ of the centered mean path, the object estimated by a growing-bandwidth HAC calculation.

The second connection is to regression adjustment in randomized experiments. \citet{Freedman2008RegressionAdjustment} emphasizes that regression adjustment is not automatically justified by random assignment, while \citet{Lin2013AgnosticRegressionAdjustment} gives conditions under which adjustment improves precision. The adjustment studied here has the same model-free character. Predetermined covariates are used to remove predictable components of the mean path. The result is a positive-semidefinite reduction in the Loewner order for the HAC variance limit, while the interpretation as a tighter conservative variance bound requires additional orthogonality between the residualized innovation and the span of adjustment covariates.

A third connection is to time-series experiments and dynamic causal effects. \citet{BojinovShephard2019_TimeSeriesExperiments} and \citet{RambachanShephard2019_NonparametricDynamicCausalModel} develop potential-outcome frameworks in which treatment or assignment histories generate dynamic effects, and \citet{BojinovRambachanShephard2021PanelExperiments} extends this logic to panels. For switchback designs, \citet{BojinovSimchiLeviZhao2023Switchback} study optimal design and randomization-based inference. \citet{LiangRecht2025NEqualsOne} study randomization inference for a single time-series unit using links to system identification, and \citet{LinDing2025TimeSeriesExperiments} study regression-based and design-based inference in time-series experiments. The present paper differs in its estimand and estimators: it studies macroeconomic GMM moments, including smooth transformations, overidentified systems, HAC estimators, dependent multiplier bootstraps calibrated to the HAC long-run variance, and LP or VAR-style impulse-response applications.

Finally, this paper reinterprets standard macroeconometric inference. Classical GMM and extremum theory use long-run covariance matrices to approximate sampling uncertainty \citep{Hansen1982GMM,NeweyMcFadden1994_LargeSampleEstimation}, and HAC estimators estimate those matrices under weak dependence \citep{NeweyWest1987,Andrews1991HAC}. In conventional macro applications, the long-run variance averages over both innovation variation and predictable variation in the mean moment under the maintained stochastic process. In the finite-history interpretation developed here, the environment and the date-specific mean path are conditioned on, so the same HAC calculation estimates $\Omega_R^+=\Omega_R+\Omega_\mu$. This estimand/variance-limit distinction is complementary to work on robust LP and VAR inference, which studies how to approximate the sampling law of impulse-response estimators under alternative persistence, lag-order, and horizon asymptotics. The appendix connects the finite-history decomposition to the LP and VAR impulse-response literature, including the general LP survey of \citet{JordaTaylor2025}, robust LP inference in \citet{MontielOleaPlagborgMoller2021} and \citet{InoueJordaKuersteiner2024}, and LP--VAR equivalence and misspecification comparisons in \citet{PlagborgMollerWolf2021} and \citet{MontielOleaPlagborgMollerQianWolf2024}.

\section{Setup and main results} \label{sec:setup-and-main-results}

\subsection{Estimand and interpretation}
\label{subsec:estimand}

As mentioned above, the estimand is both \emph{finite-history} and \emph{design-based}. For each sample length, the design conditions on a fixed environment $\mathcal{E}_T$, which records the potential-outcome maps, state-transition rules, deterministic windows, and covariate paths held fixed by the design.

\begin{assumption}\label{ass:design-environment}
For each sample length $T$, the design specifies a conditioning environment $\mathcal{E}_T$. This environment contains the potential-outcome maps, state-transition rules, deterministic sampling windows, and any covariate or state paths explicitly held fixed by the design. Conditional on $\mathcal{E}_T$, the moment array used below is generated by the assignment shocks through maps from shock histories to moment contributions. The assignment shocks are the only source of design randomness, and all expectations, probabilities, and variances are taken with respect to this conditional design distribution.
\end{assumption}

Assumption~\ref{ass:design-environment} is a conditioning convention, not a law of large numbers or a central limit theorem. It fixes the probability model before the stochastic regularity conditions are imposed. The later mean moments, covariance limits, and variance comparisons are all evaluated under this conditional design distribution, and the asymptotic arguments consider sequences of such conditioning environments. Thus the choice of $\mathcal{E}_T$ is part of the estimand: changing what is held fixed changes the counterfactual experiment and can change the variance target.

This convention also separates two ideas that are often conflated. A variable may be predetermined, in the sense that it is dated before the shock being adjusted for, without being fixed under the design. It is fixed only if its realized path is included in $\mathcal{E}_T$; otherwise it remains part of the shock-history map and may change when earlier assignment shocks are re-randomized. Hence the same object should not be treated as fixed when defining the estimand and stochastic when computing the variance, unless an explicit two-layer model is introduced. Two versions of this choice recur below. In a fixed-state design, the relevant pre-shock covariate, state, or regime path is included in $\mathcal{E}_T$, so conditional means given $\mathcal{I}_{T,t}$ are fixed date-specific quantities. In a dynamic-state design, those variables are themselves generated by earlier assignment shocks and therefore move under re-randomization; the date-specific mean moment then averages over the moving pre-shock state under the design distribution. The analysis uses whichever convention is built into $\mathcal{E}_T$ and keeps that convention fixed within a variance calculation.

Generated shocks and first-stage instruments raise a separate issue. If a high-frequency shock, proxy, or instrument is estimated rather than directly assigned, its first-stage error is neither just a fixed state nor just a moving predetermined variable. Treating that error as part of the design randomness requires augmenting the moment vector, as in Appendix~\ref{app:two-step}. The main decomposition in Proposition~\ref{lem:fixed-env-decomp} uses the fixed-environment convention. However, the projection-adjustment result can also be read under a dynamic-state convention as a reduction in the conservative HAC variance limit, but its interpretation as a tighter conservative bound then requires the additional orthogonality conditions in Appendix~\ref{app:ra-lower-bound}.

To cover the estimators used in practice, the analysis uses a GMM framework that includes just-identified systems and locally correctly specified overidentified systems. For each date $t$, let $g_t(W_t,\theta)\in\mathbb{R}^k$ denote the moment at parameter value $\theta\in\Theta\subset\mathbb{R}^p$, where $\Theta$ is compact. The argument $W_t$ is shorthand for the relevant shock history under the design distribution, not necessarily only the current shock. Thus an LP moment may contain $y_{t+h}$, lagged controls, or generated shocks, and a VAR moment may contain lagged outcomes. After conditioning on $\mathcal{E}_T$, these objects are components of the map from the shock path to the moment. For brevity, write $g_t(\theta):=g_t(W_t,\theta)$. The main text treats the macro sample as fully observed, so $N=T$; the incomplete-observation case is deferred to the appendix.

Throughout Sections~\ref{sec:setup-and-main-results}--\ref{sec:implementation}, probabilities, expectations, and variances are conditional on $\mathcal{E}_T$. Write $\mathbb{E}_T[\cdot]$ for this design expectation. The date-specific mean moment is $\mu_{T,t}(\theta):=\mathbb{E}_T[g_t(W_t,\theta)]$, and the sequence $(\mu_{T,t}(\theta_T^\star))_{t\le T}$ is the mean path at the estimand. The centered innovation is $e_{T,t}(\theta):=g_t(W_t,\theta)-\mu_{T,t}(\theta)$, and the centered mean path at the estimand is $\tilde\mu_{T,t}:=\mu_{T,t}(\theta_T^\star)-T^{-1}\sum_{s=1}^T\mu_{T,s}(\theta_T^\star)$. Conditional-on-$\mathcal{I}_{T,t}$ expressions in the LP and VAR examples below are primitive sufficient conditions for these fixed date-specific mean moments, not an additional source of randomness in the main decomposition. When the $T$-subscript is not essential, write $\mathcal{I}_t$, $\mu_t$, and $\tilde\mu_t$.

With this convention, the sample moment averages a single design draw of the mapping from shock histories to moments, while the population object averages that same fixed mapping under the conditional design distribution. Define the sample moment and GMM estimator by
\[
 g_N(\theta):=\frac{1}{T}\sum_{t=1}^T g_t(W_t,\theta), \qquad
\widehat\theta_N\in\arg\min_{\theta\in\Theta} g_N(\theta)^\top \widehat A_N g_N(\theta),
\]
where $\widehat A_N$ is a positive definite estimator of a fixed weighting matrix $A\succ0$. Define the sample-period mean moment by
\[
\bar m_T(\theta):=\frac{1}{T}\sum_{t=1}^T \mu_{T,t}(\theta)
=\frac{1}{T}\sum_{t=1}^T \mathbb{E}_T[g_t(W_t,\theta)],
\]
and let
\[
Q_T(\theta):=\bar m_T(\theta)^\top A\bar m_T(\theta), \qquad
\theta_T^\star\in\arg\min_{\theta\in\Theta} Q_T(\theta).
\]
The objects of interest are $\theta_T^\star$ and smooth functions $h(\theta_T^\star)$, such as impulse responses or forecast error variance decompositions.

The theory does not require that $\theta_T^\star$ converges to some $\theta^\star$ as $T \rightarrow \infty$; however, this feature can be generated by supposing that $Q_T$ converges uniformly to a deterministic limit $Q$ with unique minimizer $\theta^\star$. When the mean path drifts with $T$, however, $\theta_T^\star$ remains the relevant econometric object for inference, and $\theta^\star$ is best viewed as a convenient asymptotic approximation rather than the estimand.

The potential-outcome examples of \citet{RambachanShephard2019_NonparametricDynamicCausalModel} often use i.i.d.\ assignments. However, the GMM theory below is stated directly for the centered moment innovations and lag products because those are the objects entering the CLT and HAC estimator. A strictly stationary weakly dependent shock process is a common sufficient route (see, e.g., \citet{Doukhan1994,Rio2000}), but the raw observed moment may still have a nonconstant date-specific mean. Applications often describe exogeneity through an information set. Let $\mathcal F_t:=\sigma(W_s:s\le t)$ be the filtration generated by the shocks, and let $Z_t$ be a vector of predetermined covariates, such as deterministic trends, regime indicators, lags, and pre-announced variables. Throughout, $\mathcal{I}_{T,t}$ denotes the information set with respect to which the assignment mechanism is exogenous. Section~\ref{sec:implementation} discusses how suitable choices of $Z_t$ can be used to tighten conservative variance bounds. By convention the contemporaneous shock $W_t$ is not included in $Z_t$. A variable is predetermined when it is dated before the shock whose moment is being adjusted; this timing condition does not by itself mean that the variable is fixed under every possible design distribution.

The rest of the paper uses this convention to compare three moment-level covariance objects. The centered innovation component has design long-run variance $\Omega_R$; the conventional HAC limit is $\Omega_R^+$; and the difference is the nonnegative mean-path component $\Omega_\mu$, so that $\Omega_R=\Omega_R^+-\Omega_\mu$. Once these are established, the relevant asymptotic variance of GMM estimators follows by the usual theory.

\subsection{Examples and mean-path drift}

Two canonical examples below can be used to show where a mean path drift can come from: LPs and VARs. 

For a given horizon $h$, let the assignment shock be $x_t$ and let $c_t$ denote predetermined controls, such as lags or trends. Define $\psi_t := (1, x_t, c_t^\top)^\top$ and $\theta_h := (\alpha_h,\beta_h,\gamma_h^\top)^\top$. The LP moment condition is $g^{\mathrm{LP}}_{t,h}(\theta_h)=\psi_t(y_{t+h}-\psi_t^\top\theta_h)$, and so stacking over horizons gives $g^{\mathrm{LP}}_t(\theta)=\big(g^{\mathrm{LP}}_{t,h}(\theta_h)\big)_{h\in\mathcal H}$. This one-shock, one-outcome LP is the running example for the design long-run variance in Theorem~\ref{thm:AN}, the HAC conservative-variance result in Section~\ref{sec:calculating-SEs}, and the projection-adjustment construction in Section~\ref{sec:implementation}. If $y_{t+h}$ is vector-valued, let $r_{t,h}(\theta_h):=Y_{t+h}-M_h\psi_t$ with coefficient matrix $M_h$, and write $g^{\mathrm{LP}}_{t,h}(\theta_h)=(I_n\otimes\psi_t)r_{t,h}(\theta_h)$, so that $\operatorname{vec}(M_h)$ is identified by orthogonality between regressors and residuals.

A second example is a stable VAR($p$) with observed shock vector $W_t$, $Y_t=\Phi_0+\Phi_1Y_{t-1}+\cdots+\Phi_pY_{t-p}+BW_t$, where $Y_t\in\mathbb R^n$, $\Phi_i\in\mathbb R^{n\times n}$, and $B\in\mathbb R^{n\times m}$. Let
\[
X_t^\top := (\,1,\, Y_{t-1}^\top,\, \ldots,\, Y_{t-p}^\top,\, W_t^\top\,),
\qquad
u_t(\phi) := Y_t - \Phi_0 - \Phi_1 Y_{t-1} - \cdots - \Phi_p Y_{t-p} - B W_t,
\]
with parameter vector $\phi:=\operatorname{vec}(\Phi_0,\Phi_1,\ldots,\Phi_p,B)$. The VAR moment function is $g^{\mathrm{VAR}}_t(\phi)=\operatorname{vec}\!\big(X_t u_t(\phi)^\top\big)=(I_n\otimes X_t)u_t(\phi)$.

In both LPs and VARs, heterogeneous dynamic causal effects imply that $\mu_{T,t}(\theta_T^\star):=\mathbb{E}_T[g_t(W_t,\theta_T^\star)]$ may not be 0 for all $t$. Instead, orthogonality holds on average at $T$, but not necessarily at dates $t < T$. The resulting mean-path drift is the time-series analogue of the treatment-effect-heterogeneity term in Neyman's finite-population variance formula. Remark~\ref{rem:LP_VAR_mean_drift} records the corresponding conditional mean moments. These formulas are not needed for the abstract GMM theorem, but they explain the economic source of $\Omega_\mu$ in the running examples.

\begin{remark}\label{rem:LP_VAR_mean_drift}
Consider the cases below.
\begin{enumerate}[label=(\roman*)]
\item Suppose, in the local-projection setting, that the shock is centered relative to the assignment information set, $\mathbb{E}[x_t\mid\I_t]=0$, that $\tau_{t,h}$ is $\I_t$-measurable, and that the potential outcome satisfies
\[
\mathbb E\!\left[x_t\{y_{t+h}(0)-\mathbb E[y_{t+h}(0)\mid\I_t]\}\mid\I_t\right]=0.
\]
At the population best linear projection coefficient
$\theta_h^\star=(\alpha_h^\star,\beta_h^\star,\gamma_h^{\star\top})^\top$, the $x_t$-row of the conditional mean moment satisfies
\[
e_x^\top\mathbb{E}\big[g^{\mathrm{LP}}_{t,h}(\theta_h^\star)\,\big|\,\I_t\big]
=
\mathrm{Var}(x_t\mid\I_t)\,\big(\tau_{t,h}-\beta_h^\star\big),
\]
where $e_x$ selects the shock row. If, in addition, the intercept/control block is correctly specified date by date in the sense that
$\mathbb{E}[y_{t+h}(0)\mid\I_t]=\alpha_h^\star+\gamma_h^{\star\top}c_t$, then
\[
\mathbb{E}\big[g^{\mathrm{LP}}_{t,h}(\theta_h^\star)\,\big|\,\I_t\big]
=\begin{bmatrix}
0\\[0.2em]
\mathrm{Var}(x_t\mid\I_t)\,\big(\tau_{t,h}-\beta_h^\star\big)\\[0.2em]
\mathbf 0
\end{bmatrix}.
\]
In that case, the conditional mean is constant across $t$ (and, because its sample-period average is zero, equal to zero) whenever
$\mathrm{Var}(x_t\mid\I_t)\,\big(\tau_{t,h}-\beta_h^\star\big)\equiv 0$.
\item Suppose, in the VAR setting, that the true law is $Y_t=\Phi_0(Z_t)+\sum_{i=1}^p \Phi_i(Z_t)Y_{t-i}+B(Z_t)W_t+\varepsilon_t$, with
$Z_t$ measurable with respect to the pre-shock information set $\I_t$, $\mathbb{E}[W_t\mid\I_t]=0$, and $\mathbb{E}[\varepsilon_t\mid\I_t,W_t]=0$. At the sample-period estimand $\phi^\star$, define $\Delta_t:=(\Phi_0(Z_t)-\Phi_0)+\sum_{i=1}^p(\Phi_i(Z_t)-\Phi_i)Y_{t-i}$, $\Sigma_{W,t}:=\mathbb{E}[W_t W_t^\top\mid\I_t]$, and $Z^{\mathrm{pred}}_t:=\big(1,\,Y_{t-1}^\top,\,\ldots,\,Y_{t-p}^\top\big)^\top$. If the realized path of $Z_t$ is included in $\mathcal E_T$, it is fixed under the design; otherwise it moves with past assignment shocks under re-randomization. Then
\[
\mathbb{E}\big[g^{\mathrm{VAR}}_t(\phi^\star)\,\big|\,\I_t\big]
=\operatorname{vec}\!\begin{pmatrix}
Z^{\mathrm{pred}}_t\,\Delta_t^\top \\[0.3em]
\Sigma_{W,t}\,(B(Z_t)-B)^\top
\end{pmatrix}.
\]
Thus the conditional mean moment vanishes whenever $\Delta_t\equiv 0$ and
$\Sigma_{W,t}(B(Z_t)-B)^\top\equiv 0$; in particular this holds when
$\Phi_0(Z_t)=\Phi_0$, $\Phi_i(Z_t)=\Phi_i$ for all $i$, and $B(Z_t)=B$.
\end{enumerate}
\end{remark}

Since DSGE models can often be rewritten via their state-space form into a VAR model \citep{Giacomini2013_DSGE_VAR_Relationship}, Remark~\ref{rem:LP_VAR_mean_drift}(ii) also characterizes when GMM-estimated DSGE models exhibit mean drift. Many other macro procedures fit into this setting; representative examples include LP \citep{Jorda2005}, external-instrument (proxy) SVARs and high-frequency/event-study IV \citep{StockWatson2012_BPEA,MertensRavn2013,GertlerKaradi2015}, heteroskedasticity-based identification \citep{Rigobon2003,LanneLutkepohl2008}, FAVAR \citep{BernankeBoivinEliasz2005}, and minimum-distance / indirect inference that matches model-implied autocovariances or IRFs \citep{GourierouxMonfortRenault1993,HallInoueNasonRossi2012}. All of these admit per-period moment vectors $g_t(W_t,\theta)$ and a sample-period estimand $\theta_T^\star$. Appendix~\ref{app:macro-estimators} records how these estimators fit into the framework.

These examples show that the notation is not special to LPs or VARs. The next subsection therefore states the moment-level assumptions behind the asymptotic results.

\subsection{Assumptions and asymptotic theory}

The estimator assumptions are stated at the moment level. This keeps the theorem general enough for LPs, VARs, proxy procedures, and minimum-distance estimators, while the surrounding prose records primitive sufficient routes for the canonical cases.

\begin{assumption}\label{ass:dependence}
Let $e_{T,t}:=e_{T,t}(\theta_T^\star)$. Conditional on $\mathcal{E}_T$, the centered innovation array satisfies:
\begin{enumerate}[label=(\roman*)]
\item For each fixed $\ell\ge0$, the limit
\[
\Gamma_e(\ell):=\lim_{T\to\infty}T^{-1}\sum_{t=\ell+1}^T
\mathbb{E}_T[e_{T,t}e_{T,t-\ell}^\top]
\]
exists, with $\Gamma_e(-\ell):=\Gamma_e(\ell)^\top$ for $\ell>0$, and $\sum_{\ell\in\mathbb{Z}}\|\Gamma_e(\ell)\|<\infty$.
\item The normalized innovation sum satisfies
\[
T^{-1/2}\sum_{t=1}^T e_{T,t}\Rightarrow \mathcal{N}(0,\Omega_R),
\qquad
\Omega_R:=\sum_{\ell\in\mathbb{Z}}\Gamma_e(\ell).
\]
\item For some $\delta>0$, $\sup_{T,t}\mathbb{E}_T\|e_{T,t}\|^{2+\delta}<\infty$.
\end{enumerate}
\end{assumption}

Assumption~\ref{ass:dependence} is a condition on the innovation component of the moment, not on the raw moment, the observable series, or the date-specific mean path. It allows the raw moment process to have a changing conditional mean through $\mu_{T,t}$. Strict stationarity of $(e_{T,t})$ is therefore not imposed as part of the baseline theory; it is only one sufficient route.

The primitive sufficient conditions are standard: the assumption follows, for example, if the triangular array $(e_{T,t})$ is uniformly strongly mixing, has the displayed uniform $2+\delta$ moment bound, satisfies a Lindeberg condition, and has fixed-lag Ces\`aro covariance limits with an absolutely summable covariance envelope. In finite-lag LPs with i.i.d.\ or finite-memory assignment shocks, the centered moment is $q_h$-dependent. In stable VARs with short-memory innovations and absolutely summable linear filters, the centered moment is a short-memory linear process. In both cases the assumption reduces to the usual moment and weak-dependence requirements for a time-series CLT. Thus, when a stationary or mixing primitive is invoked below, it is a sufficient condition for the assignment-shock process or for the centered innovation array conditional on $\mathcal E_T$. It is not an assumption that the raw moment $g_t(W_t,\theta_T^\star)$, the observed series, or the fixed mean path $\mu_{T,t}$ is stationary.

The main text specializes to the case where the entire time series of interest is observed. By contrast, the appendix records the modifications under random or deterministic sub-sampling; in other words, the cases where a researcher may be interested in some $T$-length time series, but only observe $N<T$ of those time periods. The asymptotic theory also requires smoothness of the moment map and point identification of the sample-period object. In the canonical LP and VAR cases these requirements reduce to familiar moment, nonsingularity, and stability conditions; the appendix gives the corresponding verification routes.

Let $B_{g,T,t}$, $b_{g,T,t}$, $B_{g,1,T,t}$, and $b_{g,1,T,t}$ denote nonnegative envelope and Lipschitz variables for the moment and its Jacobian.
\begin{assumption}\label{ass:smoothness}
For some $\delta>0$, for each $t$, $g_t(W_t,\theta)$ is measurable and continuously differentiable in $\theta$. For all $\theta,\theta'\in\Theta$, $\|g_t(W_t,\theta)\|\le B_{g,T,t}$, $\|g_t(W_t,\theta)-g_t(W_t,\theta')\|\le b_{g,T,t}\|\theta-\theta'\|$, $\|\nabla_\theta g_t(W_t,\theta)\|\le B_{g,1,T,t}$, and $\|\nabla_\theta g_t(W_t,\theta)-\nabla_\theta g_t(W_t,\theta')\|\le b_{g,1,T,t}\|\theta-\theta'\|$. The envelope moments are uniformly bounded: $\sup_{T,t}\mathbb{E}_T[B_{g,T,t}^{2+\delta}+B_{g,1,T,t}^{2}+b_{g,T,t}^{2}+b_{g,1,T,t}]<\infty$. Finally, $G_T:=T^{-1}\sum_{t=1}^T \mathbb{E}_T[\nabla_\theta g_t(W_t,\theta_T^\star)]\to G$, where $G$ has full column rank.
\end{assumption}

Assumption~\ref{ass:smoothness} is the standard empirical-process/GMM regularity condition. The envelope and Lipschitz bounds allow the proofs to pass from pointwise convergence to uniform convergence over $\Theta$, control plug-in errors when $\widehat\theta_N$ replaces $\theta_T^\star$, and linearize the sample moment map through the Jacobian. In the appendix proofs, the assumption is used mainly for the uniform laws of large numbers for moments and Jacobians and for the plug-in steps in HAC and bootstrap consistency. To keep the appendix notation light, write $B_g(W_t)$, $b_g(W_t)$, $B_{g,1}(W_t)$, and $b_{g,1}(W_t)$ for these triangular-array envelopes when no confusion arises.

The main consistency proof also needs a uniform law of large numbers over $\Theta$ for the moment function and Jacobian. This condition is stated separately because the innovation short-memory condition at $\theta_T^\star$ does not by itself deliver generic-$\theta$ uniform convergence, and the separation avoids making point identification depend on a stronger process-level stationarity condition.

\begin{assumption}\label{ass:uniform-lln}
The moment function and Jacobian satisfy
\[
\sup_{\theta\in\Theta}\left\|\frac{1}{T}\sum_{t=1}^T\Big(g_t(W_t,\theta)-\mathbb{E}_T[g_t(W_t,\theta)]\Big)\right\|\xrightarrow{p}0,
\]
\[
\sup_{\theta\in\Theta}\left\|\frac{1}{T}\sum_{t=1}^T\Big(\nabla_\theta g_t(W_t,\theta)-\mathbb{E}_T[\nabla_\theta g_t(W_t,\theta)]\Big)\right\|\xrightarrow{p}0.
\]
\end{assumption}

Assumption~\ref{ass:uniform-lln} makes explicit the generic-$\theta$ regularity needed
for consistency and linearization. It is conceptually distinct from the innovation CLT imposed at $\theta_T^\star$. A primitive route is compactness of $\Theta$, the Lipschitz envelopes in Assumption~\ref{ass:smoothness}, and a uniform law of large numbers for the underlying finite-dimensional moment function and Jacobian arrays. For the linear LP and VAR moments used below, this reduces to a law of large numbers for finitely many products of shocks, outcomes, instruments, and controls. The main result further requires local correct specification of the moments.

\begin{assumption}\label{ass:local-correct-specification}
The sample-period estimand satisfies $\sqrt T\|\bar m_T(\theta_T^\star)\|=o(1)$.
\end{assumption}

Assumption~\ref{ass:local-correct-specification} is automatic in just-identified linear LPs and VARs because the sample-period normal equations imply $\bar m_T(\theta_T^\star)=0$. In overidentified GMM it rules out a first-order misspecification component in the main theorem. Appendix~\ref{app:misspecified-gmm} records the corresponding expansion for the fixed-weight criterion with $\widehat A_N\equiv A$ when this condition is not imposed; the first-order moment then includes Jacobian fluctuations multiplied by the nonzero pseudo-true mean. If the weight matrix is estimated in that locally misspecified case, the weight-estimation expansion is an additional first-order ingredient rather than part of the main theorem. Note further that this average-moment restriction does not require $\mu_{T,t}(\theta_T^\star)=0$ date by date and therefore does not rule out mean-path drift. The centered path $\tilde\mu_{T,t}=\mu_{T,t}(\theta_T^\star)-\bar m_T(\theta_T^\star)$ can be nonzero even when $\bar m_T(\theta_T^\star)=o(T^{-1/2})$.

Finally, standard \citet{NeweyMcFadden1994_LargeSampleEstimation}-style conditions on the GMM objective function and weight matrix are required to justify estimation. These are the same conditions that would be required in the non-design setting.
\begin{assumption}\label{ass:gmm}
Fix a positive definite weight matrix $A\succ0$. Define $Q_T(\theta):=\bar m_T(\theta)^\top A\bar m_T(\theta)$, and assume that for each $T$, $Q_T(\theta)$ is minimized at some $\theta_T^\star\in\Theta$. A feasible weight matrix satisfies $\widehat A_N\to_p A$.
\end{assumption}

Assumption~\ref{ass:gmm} is the existence-and-weighting part of the GMM setup. It fixes the sample-period population criterion for each sample length and requires the feasible weight matrix to converge to a fixed nonsingular limit. In the proofs, this condition permits comparison of the sample and population criteria and stabilizes the linear GMM expansion. Consistency for a moving estimand also requires $Q_T$ to keep separating $\theta_T^\star$ from the rest of the parameter space as $T$ changes, so the next assumption states that condition explicitly.

\begin{assumption}\label{ass:estimand-separation}
For every $\varepsilon>0$, $\liminf_{T\to\infty}\inf_{\|\theta-\theta_T^\star\|\ge\varepsilon}[Q_T(\theta)-Q_T(\theta_T^\star)]>0$.
\end{assumption}

Assumption~\ref{ass:estimand-separation} is the moving-estimand identification condition.
Because the estimand is $\theta_T^\star$ rather than a fixed $\theta^\star$, the paper
needs a statement saying that the criterion continues to separate $\theta_T^\star$ from
other parameter values as $T$ changes. Its role is narrow but essential: it is the
condition that turns uniform convergence of the sample criterion into consistency of
$\widehat\theta_N$ for the sample-period estimand.

For the linearization in part (ii), the theorem uses the usual interior local solution. In the linear GMM applications considered below, this is the standard closed-form estimator on the event that the sample Gram matrix is nonsingular, so the condition is substantive mainly when boundary solutions or singular sample criteria are plausible.

\begin{assumption}\label{ass:gmm-interior}
There exists $\eta>0$ such that $\{\theta\in\mathbb R^p:\|\theta-\theta_T^\star\|\le \eta\}\subseteq\Theta$ for all sufficiently large $T$. With probability approaching one, $\widehat\theta_N$ lies in this neighborhood and is an interior local minimizer of $J_N(\theta):=g_N(\theta)^\top \widehat A_N g_N(\theta)$.
\end{assumption}

Assumption~\ref{ass:gmm-interior} is only needed for the local first-order expansion in
Theorem~\ref{thm:AN}(ii). The deterministic ball condition makes $\theta_T^\star$ an interior point of $\Theta$ for all large $T$, so differentiability of the population criterion yields the population first-order condition at $\theta_T^\star$. The high-probability sample-local-minimizer condition separately ensures that the sample first-order condition is available and that the line segment used in the mean-value expansion remains inside $\Theta$. The assumption is not needed for the basic consistency conclusion in Theorem~\ref{thm:AN}.

The first main result is that, under the conditions above, the GMM estimator is consistent and asymptotically normal, with asymptotic variance induced by the design variance of the moments.

\begin{theorem}\label{thm:AN}
For the estimator with a fully observed sample in this section, $N=T$. Under Assumptions~\ref{ass:design-environment}, \ref{ass:smoothness}, \ref{ass:uniform-lln}, \ref{ass:gmm}, and \ref{ass:estimand-separation}, any measurable selection $\widehat\theta_N\in\arg\min_{\theta\in\Theta} g_N(\theta)^\top \widehat A_N g_N(\theta)$ satisfies $\widehat\theta_N-\theta_T^\star\to_p 0$. If Assumptions~\ref{ass:dependence}, \ref{ass:local-correct-specification}, and \ref{ass:gmm-interior} also hold, then
\begin{equation*}
\sqrt{N}\,(\widehat\theta_N-\theta_T^\star)\Rightarrow \mathcal{N}(0,\Sigma),
\end{equation*}
where
\[
\Sigma:=(G^\top A G)^{-1}G^\top A\,\Omega_R\,A G\,(G^\top A G)^{-1}.
\]
For any continuously differentiable $h:\Theta\to\mathbb{R}^q$, let $\mathcal J_T:=\nabla_\theta h(\theta_T^\star)$. If $\mathcal J_T\to\mathcal J$, then $\sqrt{N}\,\big(h(\widehat\theta_N)-h(\theta_T^\star)\big)\Rightarrow \mathcal{N}(0,\mathcal J\Sigma\mathcal J^\top)$.
\end{theorem}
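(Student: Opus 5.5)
The plan follows the classical extremum-estimator route of \citet{NeweyMcFadden1994_LargeSampleEstimation}, adapted to the moving estimand $\theta_T^\star$, in three steps: consistency, a linear expansion of the estimator, and the delta method for $h$.

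\textbf{Consistency.} Let $J_N(\theta)=g_N(\theta)^\top\widehat A_N g_N(\theta)$ and write
\[
g_N(\theta)-\bar m_T(\theta)=T^{-1}\sum_t\{g_t(W_t,\theta)-\mathbb E_T g_t(W_t,\theta)\}.
\]
\begin{enumerate}[label=(\roman*)]
\item By Assumption~\ref{ass:uniform-lln}, this difference is $o_p(1)$ uniformly on $\Theta$.
\item By the envelope bound in Assumption~\ref{ass:smoothness}, $\sup_\theta\|\bar m_T(\theta)\|\le \sup_{T,t}\mathbb E_T B_{g,T,t}<\infty$, so $\bar m_T$ is uniformly bounded.
\item Combining (i), (ii) and $\widehat A_N\to_p A$ (Assumption~\ref{ass:gmm}) gives $\sup_\theta|J_N(\theta)-Q_T(\theta)|=o_p(1)$.
\item Since $\widehat\theta_N$ minimizes $J_N$, we have $Q_T(\widehat\theta_N)\le J_N(\widehat\theta_N)+o_p(1)\le J_N(\theta_T^\star)+o_p(1)\le Q_T(\theta_T^\star)+o_p(1)$.
\item Fix $\varepsilon>0$. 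Assumption~\ref{ass:estimand-separation} gives a $c>0$ with $Q_T(\theta)-Q_T(\theta_T^\star)\ge c$ whenever $\|\theta-\theta_T^\star\|\ge\varepsilon$, for all large $T$. With (iv), this gives $P(\|\widehat\theta_N-\theta_T^\star\|\ge\varepsilon)\to0$.
\end{enumerate}
No convergence of $\theta_T^\star$ is needed; all comparisons are made at the moving point.

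\textbf{Asymptotic normality.} Assumption~\ref{ass:gmm-interior} supplies the first-order condition with probability approaching one:
\[
\nabla g_N(\widehat\theta_N)^\top\widehat A_N g_N(\widehat\theta_N)=0.
\]
Expand $g_N(\widehat\theta_N)=g_N(\theta_T^\star)+\bar G_N(\widehat\theta_N-\theta_T^\star)$, where $\bar G_N$ is the row-wise mean-value Jacobian, well defined because the segment lies in the $\eta$-ball. To show that $\bar G_N$ and $\nabla g_N(\widehat\theta_N)$ both converge in probability to $G$:
\begin{enumerate}[label=(\roman*)]
\item The Jacobian ULLN in Assumption~\ref{ass:uniform-lln} replaces each by its design mean at the same point.
\item The Lipschitz envelope $b_{g,1,T,t}$, with bounded first moment, gives $\|T^{-1}\sum_t\mathbb E_T\nabla g_t(\theta)-G_T\|\le C\|\theta-\theta_T^\star\|$.
\item Consistency and $G_T\to G$ then complete the argument.
\end{enumerate}
Solving the first-order condition yields
\[
\sqrt T(\widehat\theta_N-\theta_T^\star)=-(G^\top AG)^{-1}G^\top A\sqrt T g_N(\theta_T^\star)+o_p(1)\cdot\sqrt T\|g_N(\theta_T^\star)\|.
\]
The key decomposition is
\[
\sqrt T g_N(\theta_T^\star)=T^{-1/2}\sum_t e_{T,t}+\sqrt T\bar m_T(\theta_T^\star).
\]
The first term converges to $\mathcal N(0,\Omega_R)$ by Assumption~\ref{ass:dependence}(ii). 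The second term is $o(1)$ by Assumption~\ref{ass:local-correct-specification}. Thus $\sqrt T g_N(\theta_T^\star)=O_p(1)$ with limit $\mathcal N(0,\Omega_R)$, and Slutsky's lemma gives the sandwich $\Sigma$.

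The mean path $\tilde\mu_{T,t}$ drops out here because it sums to $\sqrt T\bar m_T-\sqrt T\bar m_T=0$. The design randomness enters only through $e_{T,t}$, which is why $\Omega_R$, not $\Omega_R^+$, appears.

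\textbf{Delta method and main obstacle.} For $h$, take a mean-value expansion row by row:
\[
h(\widehat\theta_N)-h(\theta_T^\star)=\tilde{\mathcal J}_N(\widehat\theta_N-\theta_T^\star).
\]
Continuity of $\nabla h$ on the compact $\Theta$ makes it uniformly continuous. Together with consistency, this gives $\tilde{\mathcal J}_N-\mathcal J_T\to_p0$, and $\mathcal J_T\to\mathcal J$ completes the step.

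I expect the main obstacle to be the Jacobian-convergence step at a random point when the estimand moves with $T$. One needs both the uniform Jacobian LLN and the deterministic Lipschitz continuity of the averaged design Jacobian, uniformly in $T$. I would handle this through the $b_{g,1}$ envelope as in (ii) above, rather than by appealing to a limit criterion $Q$.
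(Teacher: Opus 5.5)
Your proposal is correct and follows essentially the same route as the paper: uniform convergence of $J_N$ to $Q_T$ plus the moving-estimand separation condition gives consistency; the interior first-order condition and a mean-value expansion, with Jacobian convergence from the uniform Jacobian law and the $b_{g,1,T,t}$ Lipschitz envelope, reduce the problem to $T^{-1/2}\sum_t e_{T,t}+\sqrt T\,\bar m_T(\theta_T^\star)$; the delta method then handles $h$. The only minor differences are that the paper first centers the score with the population first-order condition $G_T^\top A\bar m_T(\theta_T^\star)=0$ before invoking local correct specification, whereas you use $\sqrt T\,\bar m_T(\theta_T^\star)=o(1)$ directly, which suffices and is slightly leaner, and that you make explicit the uniform continuity of $\nabla h$ on compact $\Theta$ that the paper's delta-method step uses implicitly.
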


When $\Omega_R$ is nonsingular, the inverse-design-covariance weighting is $A^\star=\Omega_R^{-1}$ and the corresponding asymptotic covariance is $(G^\top\Omega_R^{-1}G)^{-1}$. This theorem is a finite-history limit theory for the design-based estimand $\theta_T^\star$. It does not reinterpret $\theta_T^\star$ as a population parameter. As with all the theory in this paper, the randomness in the limiting distribution is the assignment-shock randomness conditional on $\mathcal{E}_T$; the mean path itself is held fixed throughout the theorem.

The local correct-specification restriction has a narrow role: it keeps the main influence function in the familiar GMM form based on the centered moment innovation $e_{T,t}$. In an overidentified system with first-order pseudo-true mean $\bar m_T^\star:=\bar m_T(\theta_T^\star)$, the relevant moment is augmented by Jacobian noise. With $J_{T,t}:=\nabla_\theta g_t(W_t,\theta_T^\star)$ and $G_T:=T^{-1}\sum_t\mathbb E_T[J_{T,t}]$, the first-order term is
\[
\zeta_{T,t}:=G_T^\top A e_{T,t}+\big(J_{T,t}-G_T\big)^\top A\bar m_T^\star .
\]
Appendix~\ref{app:misspecified-gmm} records the corresponding expansion for the fixed-weight criterion with $\widehat A_N\equiv A$. Thus the clean HAC decomposition for the raw moment is a locally correctly specified result, which becomes more complicated under local misspecification.

For the running LP example, the theorem is obtained under the familiar requirements that the assignment shock and predetermined controls generate a finite-memory or short-memory moment process, that the relevant products have finite moments, and that the LP Gram matrix remains nonsingular. For the stable VAR example, it is obtained when the innovation process is short-memory, the autoregressive filters are absolutely summable, and $\theta_T^\star$ is locally identified. For HAC, these first-order conditions are not enough: the lag-product arrays, meaning arrays formed from products of moments at different lags, must also obey the covariance-envelope restrictions used in Proposition~\ref{prop:primitive-verification}.

The following corollary records one sufficient route for the running LP example. Fix a finite horizon set $\mathcal H$ and finite-dimensional regressors $\psi_t=(1,x_t,c_t^\top)^\top$. For each $h\in\mathcal H$, write $u_{T,t,h}^\star:=y_{t+h}-\psi_t^\top\theta_{T,h}^\star$ and $g^{\mathrm{LP}}_{T,t,h}:=\psi_tu_{T,t,h}^\star$. Suppose Assumption~\ref{ass:design-environment} holds. Conditional on $\mathcal E_T$, impose three sufficient restrictions. First, for some fixed integer $q<\infty$, the vector collecting $\psi_t$, $(u_{T,t,h}^\star)_{h\in\mathcal H}$, and the products entering the LP moment function and Jacobian is measurable with respect to the design shocks in the finite window $t-q,\ldots,t+\max\mathcal H+q$. Second, after enlarging $q$ if necessary, the underlying design-shock array is $q$-dependent, and the primitive products entering the moment, Jacobian, and moment-product arrays have uniformly bounded $4+\delta$ moments for some $\delta>0$. Third, the fixed-lag Ces\`aro limits of the primitive second moments exist, $T^{-1}\sum_{t=1}^T\mathbb E_T[\psi_t\psi_t^\top]\to Q_\psi\succ0$, and the sample-period LP normal equations are locally correctly specified.

\begin{corollary}\label{cor:lp-primitive}
Under the LP finite-window, moment, Ces\`aro-limit, nonsingularity, and local correct specification conditions in the preceding paragraph, the LP moment satisfies Assumptions~\ref{ass:smoothness}, \ref{ass:dependence}, \ref{ass:uniform-lln}, and~\ref{ass:local-correct-specification}; the same finite-window product-array restrictions verify the stochastic part of Assumption~\ref{ass:hac-regularity} for the kernel and bandwidth under consideration. If Assumptions~\ref{ass:gmm}, \ref{ass:estimand-separation}, and~\ref{ass:gmm-interior} also hold for the stacked LP criterion, then Theorem~\ref{thm:AN} applies to the stacked LP estimator. If the centered LP mean path also satisfies Assumption~\ref{ass:mean-path} and the kernel and bandwidth satisfy Assumption~\ref{ass:hac-kernel}, then Proposition~\ref{lem:fixed-env-decomp} and Theorem~\ref{thm:hac} apply to the same LP moment.
\end{corollary}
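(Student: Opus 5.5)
The LP moment is linear in the parameter, so most conditions reduce to statements about a finite collection of primitive product arrays. The plan verifies the assumptions in the order in which Theorem~\ref{thm:AN} uses them, and then invokes the later results.

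\emph{Smoothness and the uniform law.} Write $g^{\mathrm{LP}}_{t,h}(\theta_h)=\psi_t y_{t+h}-\psi_t\psi_t^\top\theta_h$, so the Jacobian is $-\psi_t\psi_t^\top$ for every $\theta$. Take the following envelopes, using compactness of $\Theta$ with radius $C_\Theta$:
\[
B_{g,T,t}=\|\psi_t y_{t+h}\|+C_\Theta\|\psi_t\psi_t^\top\|, \qquad b_{g,T,t}=B_{g,1,T,t}=\|\psi_t\psi_t^\top\|, \qquad b_{g,1,T,t}=0.
\]
The uniform $4+\delta$ moment bound on the primitive products then gives the envelope moment condition in Assumption~\ref{ass:smoothness}. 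The limit $G_T\to G$ holds with $G$ block-diagonal in $-Q_\psi$, which has full column rank because $Q_\psi\succ0$.

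For Assumption~\ref{ass:uniform-lln}, linearity reduces the supremum over the compact $\Theta$ to laws of large numbers for the finitely many centered arrays $\psi_t y_{t+h}$ and $\psi_t\psi_t^\top$. Each of these is $(2q+\max\mathcal H)$-dependent with bounded second moments. Chebyshev's inequality applies because the variance of the sample mean is $O(m/T)$ when the dependence length $m$ is fixed.

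\emph{Dependence and the CLT.} The innovation $e_{T,t}=g^{\mathrm{LP}}_{T,t}-\mu_{T,t}$ is a fixed measurable function of shocks in the window $[t-q,t+\max\mathcal H+q]$. Hence it is $m$-dependent with $m:=2q+\max\mathcal H$, so $\Gamma_e(\ell)=0$ for $\ell>m$. The fixed-lag Ces\`aro limits of $e$ are linear combinations of the Ces\`aro limits of the primitive second moments and of the products of means. This gives Assumption~\ref{ass:dependence}(i), with the summable envelope trivially satisfied.

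Part (iii) follows from the $4+\delta$ bounds via Minkowski's inequality. For part (ii), I would apply a CLT for triangular $m$-dependent arrays, for instance Bernstein big-block/small-block blocking or the Orey/Romano--Wolf theorem, through the Cram\'er--Wold device. The Lyapunov condition comes from the uniform $2+\delta$ moments, and the variance normalization comes from (i).

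This is the main obstacle. The $m$-dependent CLT requires the variance of the block sums to behave like $T\Omega_R$ uniformly, and $\Omega_R$ may be singular along some directions $a$. I would handle the degenerate directions separately: there $\Var(a^\top T^{-1/2}\sum_t e_{T,t})\to0$, so the limit $\mathcal N(0,0)$ holds trivially. Assumption~\ref{ass:local-correct-specification} is imposed directly by hypothesis, and it is exact in the just-identified case through the normal equations.

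\emph{HAC conditions and the remaining results.} For the stochastic part of Assumption~\ref{ass:hac-regularity}, the lag-product arrays $e_{T,t}e_{T,t-\ell}^\top$ and the plug-in cross terms are again functions of finite shock windows, now of length $m+\ell$. Their covariances vanish beyond lag $2m+\ell$, and the $4+\delta$ bounds control their second moments. Summed against kernel weights, this gives a variance bound on the kernel-weighted sample autocovariances of order $O(b_T/T)$. That is the covariance-envelope restriction Proposition~\ref{prop:primitive-verification} requires, and it vanishes when $b_T/T\to0$.

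With all of this verified, the remaining claims are direct invocations. Theorem~\ref{thm:AN} applies under Assumptions~\ref{ass:gmm}, \ref{ass:estimand-separation}, and~\ref{ass:gmm-interior}. Proposition~\ref{lem:fixed-env-decomp} and Theorem~\ref{thm:hac} apply once Assumption~\ref{ass:mean-path} and Assumption~\ref{ass:hac-kernel} are added, since their remaining inputs are exactly the moment-level conditions established above.
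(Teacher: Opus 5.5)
The smoothness, uniform-law, CLT, and local-correct-specification parts of your plan are fine. Using linearity in place of a finite net, and an $m$-dependent CLT in place of Lemma~\ref{lem:mixing-CLT}, are harmless variations on the paper's route.

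\textbf{The gap is in the growing-window HAC step.} You treat the lag product $s_{T,t}s_{T,t-\ell}^\top$ as a function of a single interval of length $m+\ell$, so its covariances vanish only beyond lag $2m+\ell$.
\begin{itemize}
\item That bound gives $T^{-1}\sum_h\sum_t|\Cov_T(Y^{ij}_{T,t}(\ell),Y^{ij}_{T,t-|h|}(\ell))|=O(m+\ell)$, which is $O(L_T)$ over the retained lags.
\item Assumption~\ref{ass:primitive-app}(iii) instead requires this sum to be $O(1)$ uniformly in $0\le\ell\le L_T$. So what you derive is not the covariance-envelope restriction that Proposition~\ref{prop:primitive-verification} needs.
\item Write $A_{T,\ell}:=\widehat\Gamma_s(\ell)-\bar\Gamma_{s,T}(\ell)$. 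Your per-lag bound is $\mathbb E_T\|A_{T,\ell}\|^2=O((m+\ell)/T)$. Plugging it into the natural estimate $\mathbb E_T\|\sum_{1\le|\ell|\le L_T}K(|\ell|/L_T)A_{T,\ell}\|^2\le CL_T\sum_{|\ell|\le L_T}\mathbb E_T\|A_{T,\ell}\|^2$ gives only $O(L_T^3/T)$.
\item $O(L_T^3/T)$ need not vanish under $L_T/\sqrt T\to0$; take $L_T=T^{0.4}$.
\item Your claimed $O(b_T/T)$ rate is asserted rather than derived. It would need cross-lag cancellations of fourth-cumulant type, and these are not available here. Because $s_{T,t}=e_{T,t}+\tilde\mu_{T,t}$ carries a deterministic mean path, terms $\tilde\mu_{T,t-\ell}\tilde\mu_{T,t'-\ell'}\Cov_T(e_{T,t},e_{T,t'})$ couple every pair of lags $(\ell,\ell')$.
\end{itemize}

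\textbf{The missing idea}, which is the core of the paper's proof, is that $s_{T,t}s_{T,t-\ell}^\top$ depends only on primitives in the two short windows $[t-q,t+q]\cup[t-\ell-q,t-\ell+q]$. It does not depend on the dates in between.
\begin{itemize}
\item Under $q$-dependence, blocks separated by more than $q$ are mutually independent; iterate the past/future splits to see this.
\item Hence $\Cov_T(Y_{T,t}(\ell),Y_{T,t-h}(\ell))$ can be nonzero only if one of $|h|$, $|h-\ell|$, $|h+\ell|$ is $O(q)$. The number of such lags is bounded uniformly in $\ell$ and $T$.
\item This yields $\mathbb E_T\|A_{T,\ell}\|^2\le C/T$ uniformly over $\ell\le L_T$, and therefore a bound of $O(L_T^2/T)=o(1)$ for the whole window.
\end{itemize}

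\textbf{Two further items are missing before Theorem~\ref{thm:hac} can be invoked.}
\begin{itemize}
\item The feasible plug-in clause of Assumption~\ref{ass:hac-regularity} is not a finite-window covariance statement. It uses $\|\widehat\theta_N-\theta_T^\star\|=O_p(T^{-1/2})$ from Theorem~\ref{thm:AN} together with envelope-product averages bounded uniformly over $\ell\le L_T$ (Assumption~\ref{ass:primitive-app}(iv)). The result is $O_p(L_T/\sqrt T)$, and this is where $L_T/\sqrt T\to0$ is needed rather than $L_T/T\to0$.
\item The deterministic lag-window bias clause is not delivered by any covariance bound and must be supplied separately. The paper does this by splitting at a fixed $M$ and using that the innovation autocovariances vanish beyond the dependence window. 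It then bounds the remaining tail by $\sum_{|\ell|>M}\|\Gamma_{g,\mu}(\ell)\|$ from Assumption~\ref{ass:mean-path}.
\end{itemize}
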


The next corollary gives the analogous sufficient route for the stable VAR normal equations used as a second running example. Fix $p$, $n$, and $m$, and consider the VAR moments $g^{\mathrm{VAR}}_t(\phi)=\operatorname{vec}(X_tu_t(\phi)^\top)$ with $X_t^\top=(1,Y_{t-1}^\top,\ldots,Y_{t-p}^\top,W_t^\top)$. Suppose Assumption~\ref{ass:design-environment} holds and, conditional on $\mathcal E_T$, the VAR moment function, its Jacobian, and the moment-product arrays satisfy the fully observed sample sufficient conditions in Assumption~\ref{ass:primitive-app} of Appendix~\ref{app:primitive-verification} after replacing the generic moment by $g_t^{\mathrm{VAR}}$. Suppose also that $T^{-1}\sum_{t=1}^T\mathbb E_T[X_tX_t^\top]\to Q_X\succ0$ and that the sample-period VAR normal equations are locally correctly specified.

\begin{corollary}\label{cor:var-primitive}
Under the VAR moment-product, nonsingularity, and local correct specification conditions in the preceding paragraph, the VAR moment satisfies Assumptions~\ref{ass:smoothness}, \ref{ass:dependence}, \ref{ass:uniform-lln}, and~\ref{ass:local-correct-specification}; the product-array restrictions verify Assumption~\ref{ass:hac-regularity} for the kernel and bandwidth covered by Assumption~\ref{ass:primitive-app}. If Assumptions~\ref{ass:gmm}, \ref{ass:estimand-separation}, and~\ref{ass:gmm-interior} also hold for the VAR criterion, then Theorem~\ref{thm:AN} applies to the VAR coefficient estimator. If the centered mean path satisfies Assumption~\ref{ass:mean-path} and the kernel and bandwidth satisfy Assumption~\ref{ass:hac-kernel}, then Proposition~\ref{lem:fixed-env-decomp} and Theorem~\ref{thm:hac} apply to the same VAR moment.
\end{corollary}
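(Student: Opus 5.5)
The VAR corollary is a verification argument. For each abstract assumption, I will reduce it to the primitive conditions of Assumption~\ref{ass:primitive-app} applied to $g_t^{\mathrm{VAR}}$, and then invoke Theorem~\ref{thm:AN}, Proposition~\ref{lem:fixed-env-decomp} and Theorem~\ref{thm:hac} as stated.

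\textbf{Exploiting linearity in $\phi$.} Write $g^{\mathrm{VAR}}_t(\phi)=(I_n\otimes X_t)Y_t-(I_n\otimes X_tX_t^\top)\phi$. The Jacobian is then the constant-in-$\phi$ matrix $-(I_n\otimes X_tX_t^\top)$, and the Lipschitz variables simplify:
\[
b_{g,1,T,t}\equiv 0,\qquad b_{g,T,t}=B_{g,1,T,t}=\|X_tX_t^\top\|.
\]
Because $\Theta$ is compact, we may take $B_{g,T,t}\le \|X_tY_t^\top\|+C\|X_tX_t^\top\|$. Thus Assumption~\ref{ass:smoothness} reduces to uniform $2+\delta$ moments of the products $X_tY_t^\top$ and $X_tX_t^\top$, which Assumption~\ref{ass:primitive-app} supplies. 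Full column rank of $G=-(I_n\otimes Q_X)$ follows from $Q_X\succ0$.

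\textbf{The uniform LLN.} The moment and its Jacobian are affine in $\phi$ with finitely many random coefficient arrays. So Assumption~\ref{ass:uniform-lln} reduces to two pointwise weak laws, one for $T^{-1}\sum_t(X_tY_t^\top-\mathbb E_T X_tY_t^\top)$ and one for $T^{-1}\sum_t(X_tX_t^\top-\mathbb E_T X_tX_t^\top)$. The supremum over compact $\Theta$ is then bounded by $\sup\|\phi\|$ times the Jacobian deviation. Each weak law follows from Chebyshev's inequality, using the summable covariance envelope for the product arrays that Assumption~\ref{ass:primitive-app} imposes.

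\textbf{The innovation conditions and local correct specification.} For Assumption~\ref{ass:dependence}, the centered innovation at $\phi_T^\star$ is a fixed linear combination of centered product arrays. The Ces\`aro covariance limits and absolute summability therefore transfer from the primitive arrays through bilinearity. The CLT follows from the short-memory CLT route in Assumption~\ref{ass:primitive-app} together with the Cramér--Wold device, and the $2+\delta$ bound follows from the envelope bound. Assumption~\ref{ass:local-correct-specification} is assumed directly, and in the just-identified case it holds with equality because of the normal equations. The HAC stochastic part, Assumption~\ref{ass:hac-regularity}, is verified the same way: lag products of $g_t$ at $\hat\phi$ expand into polynomials in the primitive product arrays plus terms in $\hat\phi-\phi_T^\star=O_p(T^{-1/2})$, and the moment-product envelope conditions control each term for the stated kernel and bandwidth.

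\textbf{Concluding.} With these verifications in hand, Theorem~\ref{thm:AN} applies once Assumptions~\ref{ass:gmm}--\ref{ass:gmm-interior} are added. Proposition~\ref{lem:fixed-env-decomp} and Theorem~\ref{thm:hac} apply once Assumptions~\ref{ass:mean-path} and~\ref{ass:hac-kernel} are added. I expect the main obstacle to be the HAC plug-in step. There I would bound the bandwidth-weighted sum of cross terms, which involve $\hat\phi-\phi_T^\star$ and products $X_tX_t^\top$ at different lags, by $S_T\cdot O_p(T^{-1/2})\cdot O_p(1)$, where $S_T$ denotes the bandwidth. I would then use the bandwidth rate in Assumption~\ref{ass:hac-kernel}, $S_T/\sqrt T\to0$, to conclude that these cross terms are $o_p(1)$.
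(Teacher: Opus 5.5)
Your plan is correct and uses the same reduction as the paper. The difference is that you re-derive by hand what the paper obtains by citing Proposition~\ref{prop:primitive-verification}. The paper's proof notes only that $g_t^{\mathrm{VAR}}$ is affine in $\phi$ and lets that proposition supply the mixing CLT, the uniform laws (Chebyshev plus a finite-net/equicontinuity argument), and the HAC law (product-covariance bound plus the $O_p(L_T/\sqrt T)$ plug-in bound). It then computes $G_T=-I_n\otimes T^{-1}\sum_t\mathbb E_T[X_tX_t^\top]$, so that $Q_X\succ0$ gives full column rank. Your inline route buys a simpler uniform law: the Jacobian is constant in $\phi$, so the supremum over compact $\Theta$ is controlled by the centered averages of $X_tY_t^\top$ and $X_tX_t^\top$, and no net argument is needed. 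The paper's route buys brevity and a common template with the LP corollary.

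Three small corrections.
\begin{enumerate}
\item Take the fixed-lag covariance limits, the summable envelope, and the $2+\delta$ bound for $e_{T,t}$ directly from Assumption~\ref{ass:primitive-app}(i). They cannot in general be ``transferred through bilinearity'': the combination coefficients $\phi_T^\star$ move with $T$ and need not converge, and no Ces\`aro limits are imposed on the primitive products.
\item The plug-in step uses $\widehat\phi-\phi_T^\star=O_p(T^{-1/2})$ and the analogous rate for the recentering error $g_N(\widehat\phi)-\bar m_T(\phi_T^\star)$, both from Theorem~\ref{thm:AN}. So the HAC verification already requires Assumptions~\ref{ass:gmm}, \ref{ass:estimand-separation}, and~\ref{ass:gmm-interior}. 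This is not circular, because Theorem~\ref{thm:AN} does not use Assumption~\ref{ass:hac-regularity}.
\item The uniform $2+\delta$ moments of $X_tY_t^\top$ and $X_tX_t^\top$ that you use for Assumption~\ref{ass:smoothness} are not literally part of Assumption~\ref{ass:primitive-app}. The paper's proof relies on the same tacit reading, so it is cleaner to state these envelope moments explicitly among the VAR moment-product conditions.
\end{enumerate}
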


The two primitive corollaries are stated for fixed-dimensional LPs and VARs with a fixed horizon or lag order. The sufficient conditions in Appendix~\ref{app:primitive-verification} are stronger than necessary, especially the moment-product and covariance-envelope conditions used for HAC, but they make the product-array content of the high-level assumptions explicit. Cases with many horizons, high-dimensional controls, time-varying parameter dimension, data-selected adjustment sets, or bandwidth rules outside the stated lag-window conditions require separate empirical-process and HAC arguments.

\subsection{Mean-path decomposition}

The next assumption is imposed on the centered mean component of the moment to which the HAC calculation is applied. It requires that this fixed path have a stable short-run, or HAC, long-run covariance.

\begin{assumption}\label{ass:mean-path}
For each fixed $\ell\ge0$, the limit
\[
\Gamma_{g,\mu}(\ell):=\lim_{T\to\infty}T^{-1}\sum_{t=\ell+1}^T
\tilde \mu_{T,t}\tilde \mu_{T,t-\ell}^\top
\]
exists, with $\Gamma_{g,\mu}(-\ell):=\Gamma_{g,\mu}(\ell)^\top$ for $\ell>0$, and $\sum_{\ell\in\mathbb{Z}}\|\Gamma_{g,\mu}(\ell)\|<\infty$.
\end{assumption}

Assumption~\ref{ass:mean-path} should not be read as ruling out deterministic changes in the historical environment. Under the design-based interpretation, the realized paths of states, regimes, calendar variables, sampling windows, and potential dynamic effects are fixed once the conditioning environment is fixed. In that sense, the mean path $(\tilde\mu_{T,t})_{t\le T}$ is itself a deterministic object. The assumption answers a narrower question: after the components treated as fixed by the design have been accounted for in the moment, is the remaining sample-centered mean path the kind of object to which a growing-bandwidth HAC long-run-variance calculation can be applied?

The restriction is therefore a restriction on unresolved low-frequency variation in the moment, not a requirement that impulse responses or shock variances be time invariant. It allows predictable movement in dynamic causal effects to contribute to the mean-path term as long as the centered mean path has stable fixed-lag Ces\`aro autocovariances and those limiting autocovariances are summable in lag. This is the time-series analogue of allowing treatment-effect heterogeneity in the finite-population cross-sectional setting: the heterogeneity may be present in the fixed history, but the variance calculation still needs a regular limiting object. For example, variation driven by a realized short-memory state path can satisfy the assumption even though, conditional on the realized history, that path is fixed.

What the assumption does exclude from the raw HAC calculation is a persistent low-frequency component that has not otherwise been modeled or partialled out. A permanent deterministic break, a monotone trend, a fixed seasonal pattern, or a low-frequency drift of fixed magnitude can have fixed-lag Ces\`aro autocovariances that remain nonnegligible over arbitrarily many lags. A growing-bandwidth HAC estimator would then be accumulating a deterministic low-frequency component rather than converging to a finite long-run covariance. That failure does not contradict the motivation of the paper; it says that such components should not be forced into the same short-memory HAC approximation.

The condition is imposed on the moment after any residualization that is part of the design. If the fixed mean path can be decomposed as
\[
\tilde\mu_{T,t}=D_{T,t}\lambda_T+\nu_{T,t},
\]
where $D_{T,t}$ is a fixed low-dimensional deterministic block, such as trends, calendar terms, sampling-window indicators, or regime indicators, then raw HAC applied to the unreduced moment need not have a finite long-run limit when $D_{T,t}\lambda_T$ is persistent. Including $D_{T,t}$ in the residualization step removes this deterministic component before the lag-window limit is taken. The relevant mean-path condition is then the corresponding condition on the residual path $\nu_{T,t}$, or on the residualized moment more generally, not on the unreduced low-frequency component. This is why the implementation first residualizes deterministic terms associated with the fixed design and only then uses stochastic predetermined covariates for projection adjustment.

In the special case of time-invariant impulse responses and time-invariant shock variances, $\tilde\mu_{T,t}\equiv0$, so Assumption~\ref{ass:mean-path} is automatic. More generally, the assumption permits deterministic time variation that is either explicitly controlled by the fixed design or leaves behind a residual centered mean path with finite HAC long-run covariance.

Write
\begin{equation}\label{eq:Omegam}
\Omega_\mu:=\sum_{\ell\in\mathbb{Z}}\Gamma_{g,\mu}(\ell).
\end{equation}
The sum in \eqref{eq:Omegam} is a HAC long-run covariance: fixed-lag Ces\`aro limits are taken first, and the resulting lag sequence is then summed. It is not the finite-$T$ sum of all sample autocovariances of the centered path, which would vanish because $\sum_t\tilde\mu_{T,t}=0$. Thus $\Omega_\mu$ measures the short-memory long-run covariance of the residual predictable mean path that remains in the moment being fed to HAC.

The following elementary decomposition is the core object used by the HAC results. For each fixed $\ell\ge0$, define
\begin{align}
\Gamma_{g,\mathrm{hac}}(\ell)
&:=\lim_{T\to\infty}\frac{1}{T}\sum_{t=\ell+1}^T
\mathbb{E}_T\!\left[\Big(g_t(W_t,\theta_T^\star)-\bar m_T(\theta_T^\star)\Big)
\Big(g_{t-\ell}(W_{t-\ell},\theta_T^\star)-\bar m_T(\theta_T^\star)\Big)^\top\right], \label{eq:Gamma-ghac}\\
\Gamma_{g,\mathrm{hac}}(-\ell)&:=\Gamma_{g,\mathrm{hac}}(\ell)^\top,\qquad \ell>0,\nonumber
\end{align}
and let $\Omega_R^+:=\sum_{\ell\in\mathbb{Z}}\Gamma_{g,\mathrm{hac}}(\ell)$ whenever the sum exists.

\begin{proposition}\label{lem:fixed-env-decomp}
Under Assumptions~\ref{ass:design-environment}, \ref{ass:dependence}, and~\ref{ass:mean-path}, the fixed-lag limits in \eqref{eq:Gamma-ghac} exist, $\sum_{\ell\in\mathbb Z}\|\Gamma_{g,\mathrm{hac}}(\ell)\|<\infty$, and
\begin{equation}\label{eq:OmegaR}
\qquad
\Omega_R=\Omega_R^+-\Omega_\mu,
\end{equation}
where $\Omega_\mu\succeq0$.
\end{proposition}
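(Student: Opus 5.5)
The plan is to decompose the HAC-centered moment at the estimand into its innovation and mean-path parts, and then compute the fixed-lag cross-products term by term. At $\theta_T^\star$ we have
\[
g_t(W_t,\theta_T^\star)-\bar m_T(\theta_T^\star)=e_{T,t}+\tilde\mu_{T,t}.
\]
Here $e_{T,t}=e_{T,t}(\theta_T^\star)$. The second term is the sample-centered mean path, which is deterministic conditional on $\mathcal E_T$ by Assumption~\ref{ass:design-environment}. For each fixed $\ell\ge0$, the summand in \eqref{eq:Gamma-ghac} therefore expands into four pieces:
\[
\mathbb E_T[e_{T,t}e_{T,t-\ell}^\top]+\tilde\mu_{T,t}\tilde\mu_{T,t-\ell}^\top+\tilde\mu_{T,t}\,\mathbb E_T[e_{T,t-\ell}]^\top+\mathbb E_T[e_{T,t}]\,\tilde\mu_{T,t-\ell}^\top .
\]
The two cross terms vanish identically, because $\mathbb E_T[e_{T,t}]=\mathbb E_T[g_t]-\mu_{T,t}=0$ by construction of $\mu_{T,t}$ and $\tilde\mu_{T,t}$ is nonrandom under the design. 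This exact orthogonality is the only place where Assumption~\ref{ass:design-environment} is used, and it is what makes the decomposition additive.

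Next, I average over $t$ and pass to the limit. The Ces\`aro average of the first piece converges to $\Gamma_e(\ell)$ by Assumption~\ref{ass:dependence}(i), and that of the second converges to $\Gamma_{g,\mu}(\ell)$ by Assumption~\ref{ass:mean-path}. Hence the limit in \eqref{eq:Gamma-ghac} exists and equals
\[
\Gamma_{g,\mathrm{hac}}(\ell)=\Gamma_e(\ell)+\Gamma_{g,\mu}(\ell).
\]
The convention for negative lags is consistent with both components, since both are defined by transposition. The triangle inequality then gives $\sum_\ell\|\Gamma_{g,\mathrm{hac}}(\ell)\|\le\sum_\ell\|\Gamma_e(\ell)\|+\sum_\ell\|\Gamma_{g,\mu}(\ell)\|<\infty$. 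Summing over $\ell\in\mathbb Z$, which is legitimate by absolute summability, yields $\Omega_R^+=\Omega_R+\Omega_\mu$, i.e.\ \eqref{eq:OmegaR}.

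The step requiring genuine care is $\Omega_\mu\succeq0$. This is not automatic, because $\Omega_\mu$ is a lag-limit-first sum, as the text stresses: the finite-$T$ full sum of sample autocovariances is identically zero. I would fix a vector $a$, set $x_t:=a^\top\tilde\mu_{T,t}$, and use a Fej\'er (Bartlett) weighting. For fixed $L$,
\[
0\le\frac1{TL}\sum_{s}\Big(\sum_{j=0}^{L-1}x_{s-j}\Big)^2
=\sum_{|\ell|<L}\Big(1-\tfrac{|\ell|}{L}\Big)\,\frac1T\sum_t x_tx_{t-\ell}+O(L^2/T)\cdot\max\text{-term}.
\]
The sum over $s$ is taken over indices where the window fits, with boundary terms controlled by the boundedness $T^{-1}\sum_t\|\tilde\mu_{T,t}\|^2=O(1)$, which is implied by the $\ell=0$ limit. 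Letting $T\to\infty$ with $L$ fixed gives $\sum_{|\ell|<L}(1-|\ell|/L)\,a^\top\Gamma_{g,\mu}(\ell)a\ge0$. Letting $L\to\infty$, dominated convergence using summability gives $a^\top\Omega_\mu a\ge0$. The main obstacle is handling the edge terms when $L$ is fixed; they are of order $L\cdot T^{-1}\max_t\|\tilde\mu_{T,t}\|^2$, or can be bounded by Cauchy--Schwarz with the $\ell=0$ Ces\`aro limit, so they vanish as $T\to\infty$ before $L\to\infty$.
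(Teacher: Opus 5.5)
Your decomposition $g_t-\bar m_T=e_{T,t}+\tilde\mu_{T,t}$, the vanishing of the cross terms by date-wise centering, the fixed-lag identity $\Gamma_{g,\mathrm{hac}}(\ell)=\Gamma_e(\ell)+\Gamma_{g,\mu}(\ell)$, and the summation step are exactly the paper's argument. The Fej\'er device for $\Omega_\mu\succeq0$ is also what the paper uses in Lemma~\ref{lem:sumGE-psd}.

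\textbf{The gap: the edge terms do not vanish.} Because you sum only over full windows $s=L,\dots,T$, you need the first and last $O(L)$ dates to be negligible, i.e.\ something like $T^{-1}\max_t\|\tilde\mu_{T,t}\|^2\to0$. Nothing in the proposition's hypotheses gives this. The $\ell=0$ Ces\`aro limit yields only $T^{-1}\sum_t\|\tilde\mu_{T,t}\|^2=O(1)$, so Cauchy--Schwarz makes the edge terms bounded, not vanishing. The $2+\delta$ bound in Assumption~\ref{ass:dependence}(iii) concerns $e_{T,t}$, not the mean path.

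Here is a concrete counterexample. Take $k=1$, $\tilde\mu_{T,1}=c\sqrt T$, and $\tilde\mu_{T,t}=-c\sqrt T/(T-1)$ for $t\ge2$. This path is sample-centered, and Assumption~\ref{ass:mean-path} holds with $\Gamma_{g,\mu}(0)=c^2$ and $\Gamma_{g,\mu}(\ell)=0$ for $\ell\neq0$. Yet your full-window average tends to $c^2/L$, while the Fej\'er-weighted autocovariance sum tends to $c^2$. So your edge term converges to $-c^2(1-1/L)\neq0$, and nonnegativity of the weighted sum does not follow from your display.

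\textbf{The fix} is the one the paper makes: extend $x_t$ by zero outside $\{1,\dots,T\}$ and sum over all $s\in\mathbb Z$. Then
\[
\frac{1}{TL}\sum_{s\in\mathbb Z}\Big(\sum_{j=0}^{L-1}x_{s-j}\Big)^2=\sum_{|\ell|<L}\Big(1-\frac{|\ell|}{L}\Big)\frac1T\sum_{t}x_tx_{t-\ell}
\]
holds exactly at every finite $T$, with no remainder. Equivalently, your full-window sum is bounded above by the right side, since the partial windows you discard contribute nonnegative squares. Either way the Fej\'er-weighted sum is nonnegative at each $T$, and your limits ($T\to\infty$ at fixed $L$, then $L\to\infty$ by dominated convergence using summability) go through unchanged.

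Your edge bound would be valid under an extra uniform bound such as $\sup_{T,t}\|\tilde\mu_{T,t}\|<\infty$. The envelope in Assumption~\ref{ass:smoothness} would supply this, but it is not among the proposition's hypotheses.
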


This decomposition is one of the main results and highlights the main differences between the usual and design-based variances in the time series setting. Proposition~\ref{lem:fixed-env-decomp} compares two covariance calculations applied to the same moment. The design CLT uses the centered innovation $e_{T,t}$, while HAC applied to the centered observed moment also contains the fixed path $\tilde\mu_{T,t}$. This is the formal reason that the HAC limit contains the design variance plus a nonnegative mean-path component rather than an unrestricted collection of covariance terms. If one instead lets the conditioning object that defines the mean path be random under the design distribution, the same clean decomposition requires additional long-run orthogonality between the innovation and the predictable path. The main theory avoids that ambiguity by conditioning on $\mathcal{E}_T$.

In \eqref{eq:OmegaR}, the key term in the asymptotic variance is the design-based long-run covariance $\Omega_R$. Proposition~\ref{lem:fixed-env-decomp} shows that $\Omega_R$ equals the conventional HAC limit minus the nonnegative mean-path term $\Omega_\mu$. The formula reduces to the usual HAC long-run variance when $\tilde \mu_t\equiv 0$, and in the i.i.d.\ no-overlap setting of \citet{KakehiMatsushitaOtsu2026FinitePopulationGMM} it collapses to $\Omega_R= \Gamma_{g,\mathrm{hac}}(0) - \Gamma_{g,\mu}(0)$. In general, $\Omega_R$ equals the HAC-type component minus the mean-path term $\Omega_\mu$, which is nonnegative but unidentified. This is why conventional standard errors are conservative and why projection or regression adjustment can sharpen the bound under the conditions below. Appendix~\ref{app:incomplete-sampling} records the corresponding formulas under incomplete observation. The applied reading is simple: HAC counts predictable changes in the mean moment as time-series uncertainty, while the design variance does not when those changes are part of the fixed historical environment.

For LPs and VARs, Remark~\ref{rem:LP_VAR_mean_drift} gives a direct interpretation of $\Omega_\mu$. Under the fixed-state convention, the displayed conditional expectations are the fixed date-specific means $\mu_{T,t}(\theta_T^\star)$. Under the dynamic-state convention, the same formulas describe predictable components before averaging over the moving pre-shock state; the fixed mean path in Assumption~\ref{ass:mean-path} is then the corresponding $\mathbb E_T$ expectation. In the LP running example, these predictable components capture changes in the conditional impulse response $\tau_{t,h}$ or in the shock variance $\Var(x_t\mid\I_t)$ over $t$. Thus $\Omega_R$ captures the short-memory part of moment variation, while $\Omega_\mu=\Omega_R^+-\Omega_R$ captures variation in these predictable components.

\begin{remark}\label{rem:OmegaMu}
Write $\Omega_\mu:=\sum_{\ell\in\mathbb{Z}}\Gamma_{g,\mu}(\ell)=\mathrm{LRV}(\tilde\mu_t)$, where $\tilde\mu_t:=\mu_{T,t}(\theta_T^\star)-\bar m_T(\theta_T^\star)$ and the displayed conditional expectations below give sufficient expressions for $\mu_{T,t}(\theta_T^\star)$. Then, under the additional LP condition stated in Remark~\ref{rem:LP_VAR_mean_drift}(i), if $d_{t,h}^{LP}:=\Var(x_t\mid\I_t)(\tau_{t,h}-\beta_h^\star)$ and $\bar d_{T,h}^{LP}:=T^{-1}\sum_{t=1}^T d_{t,h}^{LP}$, then $\Omega_{\mu}^{LP}=\mathrm{LRV}(d_{t,h}^{LP}-\bar d_{T,h}^{LP})\,e_xe_x^\top$, where $e_x$ selects the $x$-row of the LP moment. For the VAR case, if
\[
d_t^{VAR}:=
\begin{bmatrix}
\operatorname{vec}\!\big(Z_t^{\mathrm{pred}}\Delta_t^\top\big)\\[0.2em]
\operatorname{vec}\!\big(\Sigma_{W,t}(B(Z_t)-B)^\top\big)
\end{bmatrix},
\qquad
\bar d_T^{VAR}:=\frac1T\sum_{t=1}^T d_t^{VAR},
\]
then $\Omega^{VAR}_\mu=\mathrm{LRV}(d_t^{VAR}-\bar d_T^{VAR})$. In both cases $\Omega_\mu\succeq 0$ (see Lemma~\ref{lem:sumGE-psd}).
\end{remark}

The appendix proof of Remark~\ref{rem:LP_VAR_mean_drift} derives the displayed conditional mean paths. Combining those paths with the definition of $\Omega_\mu$ gives the formulas in Remark~\ref{rem:OmegaMu}. For LPs, under the additional mean condition, the extra term is a rank-one block proportional to the long-run covariance of the centered scalar drift in the $x_t$-row. For VARs, the drift has two blocks: time variation in the deterministic and autoregressive coefficients through $Z_t^{\mathrm{pred}}\Delta_t^\top$ and state dependence in the impact matrix through $\Sigma_{W,t}(B(Z_t)-B)^\top$. By Lemma~\ref{lem:sumGE-psd}, positive semidefiniteness follows, so $\Omega_\mu$ measures time variation in predictable dynamic effects, the time-series analogue of heterogeneity in the cross-sectional design-based literature.

Assumptions~\ref{ass:dependence} and~\ref{ass:mean-path} separate the innovation part of the moment from the predictable mean-path drift induced by time-varying dynamic causal effects. For LPs with finite-lag controls and for stable VARs with short-memory innovations, these assumptions reduce to standard finite-moment and weak-dependence conditions, while allowing rich state dependence in impulse responses so long as the centered drift has finite long-run variance. Appendix~\ref{app:incomplete-sampling} records the incomplete-observation extension. Appendix~\ref{app:lpvar} gives some basic results on the bias-variance tradeoff in the choice of LPs versus VARs in the design-based setting. The results are essentially the same as in the standard case.

\section{Calculating standard errors} \label{sec:calculating-SEs}

\subsection{The HAC variance}

Under the finite-history design interpretation, HAC standard errors play the same role as Eicker--Huber--White standard errors in the cross-sectional finite-population setting: under the conditions below, they estimate a conservative variance limit for scalar design inference. They are computed as
\begin{equation}\label{eq:HAC}
\widehat{\Omega}_R^{+}(L,K):=\widehat{\Gamma}_{\mathrm{hac}}(0)+\sum_{\ell=1}^{T-1}K\!\left(\frac{\ell}{L}\right)\left(\widehat{\Gamma}_{\mathrm{hac}}(\ell)+\widehat{\Gamma}_{\mathrm{hac}}(\ell)^\top\right),
\end{equation}
where $K$ is a bounded, symmetric kernel function and $L=L_T$ is a bandwidth parameter. The empirical lag covariances are, for $\ell\ge 0$,
\[
\widehat{\Gamma}_{\mathrm{hac}}(\ell)
:=\frac{1}{T}\sum_{t=\ell+1}^{T} \big(g_t(W_t,\widehat\theta_N)-g_N(\widehat\theta_N)\big)\big(g_{t-\ell}(W_{t-\ell},\widehat\theta_N)-g_N(\widehat\theta_N)\big)^{\!\top}.
\]

Because Assumption~\ref{ass:local-correct-specification} imposes $\sqrt{T}\,\|\bar m_T(\theta_T^\star)\|=o(1)$, the raw and centered fixed-lag limits coincide, and recentering by either $g_N(\widehat\theta_N)$ or the infeasible $\bar m_T(\theta_T^\star)$ changes the weighted HAC sum by at most $O_p(L_T\|\bar m_T(\theta_T^\star)\|^2)+o_p(1)=o_p(1)$ under Assumption~\ref{ass:hac-kernel}. The centered formula is the natural finite-sample analogue of the centered long-run variance limit $\Omega_R^{+}$; the conventional uncentered formula has the same limit under the stated local correct specification condition. Write $\widehat{\Omega}_R^{+}$ as shorthand for $\widehat{\Omega}_R^{+}(L_T,K)$. The paper places mild assumptions on the kernel in order to rely on standard HAC theory.

\begin{assumption}\label{ass:hac-kernel}
$K$ is bounded and symmetric, with $K(0)=1$, $K(x)=0$ for $|x|>1$, and continuity at $0$. The bandwidth parameter $L_T$ satisfies $L_T\to\infty$ and $L_T/\sqrt T\to 0$.
\end{assumption}

Assumption~\ref{ass:hac-kernel} is the kernel and bandwidth condition for feasible HAC estimation. The rate $L_T/\sqrt T\to0$ is slightly stronger than the minimal infeasible lag-window requirement $L_T/T\to0$ for the same centered path; it is used only to make the feasible plug-in replacement $g_t(W_t,\widehat\theta_N)$ negligible uniformly across the $O(L_T)$ lags in the HAC sum. The assumption is otherwise the usual lag-window condition for consistent long-run variance estimation with a diverging bandwidth. It is only an implementation assumption for HAC and the multiplier bootstrap; it is not part of the core design-based consistency/CLT result for the GMM estimator itself. Thus admissible bandwidths share the same first-order limit, although finite-sample conservativeness can depend materially on the bandwidth. Let $s_{T,t}:=g_t(W_t,\theta_T^\star)-\bar m_T(\theta_T^\star)$. For $\ell\ge0$, write $\widehat\Gamma_s(\ell):=T^{-1}\sum_{t=\ell+1}^{T}s_{T,t}s_{T,t-\ell}^{\top}$, with the transpose convention for negative lags.

\begin{assumption}\label{ass:hac-regularity}
For the kernel and bandwidth sequence used in the HAC estimator, the fixed-lag limits $\Gamma_{g,\mathrm{hac}}(\ell)$ in \eqref{eq:Gamma-ghac} exist and satisfy $\sum_{\ell\in\mathbb{Z}}\|\Gamma_{g,\mathrm{hac}}(\ell)\|<\infty$. For every fixed $\ell$, $\widehat\Gamma_s(\ell)\xrightarrow{p}\Gamma_{g,\mathrm{hac}}(\ell)$. For that same kernel and bandwidth sequence,
\[
\left\|\sum_{1\le |\ell|\le L_T}K(|\ell|/L_T)
\big(\widehat\Gamma_s(\ell)-\Gamma_{g,\mathrm{hac}}(\ell)\big)\right\|\xrightarrow{p}0.
\]
If $\widehat\Gamma_{\hat s}(\ell)$ denotes the same lag covariance with $\widehat s_t:=g_t(W_t,\widehat\theta_N)-g_N(\widehat\theta_N)$ in place of $s_{T,t}$, then for each fixed $\ell$, $\|\widehat\Gamma_{\hat s}(\ell)-\widehat\Gamma_s(\ell)\|\to_p0$, and
\[
\left\|\sum_{1\le |\ell|\le L_T}K(|\ell|/L_T)
\big[
\widehat\Gamma_{\hat s}(\ell)-\widehat\Gamma_s(\ell)
\big]\right\|\to_p0 .
\]
\end{assumption}

Assumption~\ref{ass:hac-regularity} is the lag-window law of large numbers needed for HAC consistency. The fixed-lag clause controls the zero and finitely many nonzero lags, while the growing-window clause controls the additional lags admitted as $L_T$ diverges. The assumption is weaker than stationarity of the raw moment and stronger than the innovation CLT alone because HAC uses products over a growing set of lags.

Appendix~\ref{app:primitive-verification} gives a formal sufficient route. It imposes a uniform covariance-sum bound on the centered moment-product arrays over the retained lags; together with $L_T/\sqrt T\to0$, this bound gives the growing-lag law by Chebyshev's inequality. The feasible plug-in step then follows from the $T^{-1/2}$ GMM rate and the same bandwidth restriction. The $4+\delta$ moment-envelope moments used there are sufficient for products and are not part of the abstract moment-level theorem.

The lag-window class covered by Assumption~\ref{ass:hac-kernel} includes Bartlett, Parzen, and compactly supported flat-top kernels. Quadratic-spectral kernels, prewhitening, and fixed-$b$ asymptotics require separate arguments. If the centered moment has an exact finite dependence window, a fixed flat-top kernel whose $K=1$ region covers that window recovers the same conservative variance limit; this special case is stated as the fixed-bandwidth alternative in the theorem. The theorem therefore gives consistency for $\Omega_R^+$, the conservative HAC variance limit, rather than for $\Omega_R$ itself.

\begin{theorem}\label{thm:hac}
Suppose the sample is fully observed. Under Assumptions~\ref{ass:design-environment}, \ref{ass:dependence}, \ref{ass:mean-path}, \ref{ass:local-correct-specification}, \ref{ass:smoothness}, \ref{ass:uniform-lln}, \ref{ass:gmm}, \ref{ass:estimand-separation}, and \ref{ass:hac-regularity}, if the kernel and bandwidth satisfy Assumption~\ref{ass:hac-kernel}, then the feasible centered HAC estimator satisfies $\widehat{\Omega}_R^{+}(L_T,K) \xrightarrow{p} \Omega_R^{+}$.
The same conclusion holds with fixed $L$ if the fixed-lag and plug-in clauses of Assumption~\ref{ass:hac-regularity} hold for that fixed lag set, there exists $q<\infty$ such that the finite-$T$ centered lag covariance of the moment is zero for all $|\ell|>q$, eventually in $T$, and the flat-top kernel satisfies $K(\ell/L)=1$ for every integer $|\ell|\le q$.
\end{theorem}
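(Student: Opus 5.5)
The plan is a three-term decomposition of the HAC error. Write $\widehat\Omega_R^+(L_T,K)=\sum_{|\ell|<T}K(\ell/L_T)\widehat\Gamma_{\hat s}(\ell)$, with $\widehat\Gamma_{\hat s}(-\ell):=\widehat\Gamma_{\hat s}(\ell)^\top$ and $K(0)=1$. Since $K$ vanishes outside $[-1,1]$, only lags $|\ell|\le L_T$ enter. We then split
\[
\widehat\Omega_R^+-\Omega_R^+
=\sum_{|\ell|\le L_T}K(\ell/L_T)\big[\widehat\Gamma_{\hat s}(\ell)-\widehat\Gamma_s(\ell)\big]
+\sum_{|\ell|\le L_T}K(\ell/L_T)\big[\widehat\Gamma_s(\ell)-\Gamma_{g,\mathrm{hac}}(\ell)\big]
+\Big[\sum_{|\ell|\le L_T}K(\ell/L_T)\Gamma_{g,\mathrm{hac}}(\ell)-\Omega_R^+\Big],
\]
and show that each of the three terms is $o_p(1)$.

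\textbf{Plug-in and stochastic terms.} Consistency $\widehat\theta_N\to_p\theta_T^\star$ from Theorem~\ref{thm:AN} is what makes the plug-in replacement meaningful; it uses Assumptions~\ref{ass:smoothness}, \ref{ass:uniform-lln}, \ref{ass:gmm}, and~\ref{ass:estimand-separation}. Given that, the first term is exactly the plug-in clause of Assumption~\ref{ass:hac-regularity}: the lag-zero part is covered by the fixed-lag clause and the nonzero lags by the growing-window clause. The second term is likewise the lag-zero fixed-lag convergence plus the growing-window clause for $1\le|\ell|\le L_T$. The centering by $\bar m_T(\theta_T^\star)$ inside $s_{T,t}$, as opposed to a raw uncentered version, is harmless by Assumption~\ref{ass:local-correct-specification}, as noted before Assumption~\ref{ass:hac-kernel}. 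So the stochastic content is absorbed by the high-level assumption.

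\textbf{Deterministic bias term.} This is where the real argument lies. Proposition~\ref{lem:fixed-env-decomp} gives existence of the limits $\Gamma_{g,\mathrm{hac}}(\ell)$, absolute summability, and $\Omega_R^+=\sum_\ell\Gamma_{g,\mathrm{hac}}(\ell)$. The third term then equals
\[
\sum_{|\ell|\le L_T}\big(K(\ell/L_T)-1\big)\Gamma_{g,\mathrm{hac}}(\ell)-\sum_{|\ell|>L_T}\Gamma_{g,\mathrm{hac}}(\ell).
\]
The tail sum vanishes because the series is summable and $L_T\to\infty$. For the first sum, fix $M$ and split at $|\ell|\le M$ and $M<|\ell|\le L_T$:
\begin{itemize}
\item On $|\ell|\le M$, $K(\ell/L_T)\to K(0)=1$ by continuity at $0$, so this finite sum tends to zero.
\item On $M<|\ell|\le L_T$, the sum is bounded by $(\sup|K|+1)\sum_{|\ell|>M}\|\Gamma_{g,\mathrm{hac}}(\ell)\|$, which is small once $M$ is chosen large.
\end{itemize}
This is the standard dominated-convergence argument for lag windows. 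The main obstacle I expect is not here but in making sure the feasible centering by $g_N(\widehat\theta_N)$ rather than $\bar m_T(\theta_T^\star)$ is covered. I will read $\widehat\Gamma_{\hat s}$ as defined with exactly that centering, so the plug-in clause handles it directly.

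\textbf{Fixed-bandwidth alternative.} With $L$ fixed, the sum runs over finitely many lags $|\ell|\le L$. The fixed-lag and plug-in clauses give $\widehat\Gamma_{\hat s}(\ell)\to_p\Gamma_{g,\mathrm{hac}}(\ell)$ for each such $\ell$. The finite-dependence condition makes the finite-$T$ centered lag covariance zero for $|\ell|>q$ eventually, so $\Gamma_{g,\mathrm{hac}}(\ell)=0$ there and hence $\Omega_R^+=\sum_{|\ell|\le q}\Gamma_{g,\mathrm{hac}}(\ell)$. On lags $q<|\ell|\le L$ the estimated terms converge to zero, and the weights are bounded, so they do not matter. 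Because $K(\ell/L)=1$ for $|\ell|\le q$, the limit is exactly $\sum_{|\ell|\le q}\Gamma_{g,\mathrm{hac}}(\ell)=\Omega_R^+$, with no kernel bias.
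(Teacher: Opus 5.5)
Your proposal is correct and follows essentially the same route as the paper's proof. It uses the same three-way split: a plug-in term handled by the feasible clause of Assumption~\ref{ass:hac-regularity}; a stochastic lag-window term, with the zero lag controlled by the fixed-lag clause and the nonzero lags by the growing-window clause; and a deterministic kernel/truncation bias term controlled by fixing $M$, using continuity of $K$ at zero and absolute summability. You also treat the fixed flat-top case as the paper does: the population truncation term vanishes exactly because $K(\ell/L)=1$ on $|\ell|\le q$ and the lag limits are zero beyond $q$.
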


This theorem identifies the probability limit of the feasible HAC estimator as the conservative moment matrix $\Omega_R^+$, not as the design long-run variance $\Omega_R$ of the centered moment. The distinction matters exactly when $\Omega_\mu\neq 0$: HAC treats the fixed mean path as part of the long-run moment variation, whereas the design moment variance treats only deviations from that path as random.

\begin{corollary}\label{cor:scalar-conservative}
Let
\[
\Sigma^+:=(G^\top A G)^{-1}G^\top A\,\Omega_R^+\,A G\,(G^\top A G)^{-1}.
\]
Under the conditions of Theorems~\ref{thm:AN} and~\ref{thm:hac}, $\Sigma^+-\Sigma\succeq0$. Hence, for any fixed vector $a$, $a^\top\Sigma a\le a^\top\Sigma^+a$.
\end{corollary}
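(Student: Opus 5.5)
The plan is to reduce the comparison of $\Sigma^+$ and $\Sigma$ to the moment-level decomposition in Proposition~\ref{lem:fixed-env-decomp}, using the fact that both sandwich matrices share the same bread. Write $M:=(G^\top A G)^{-1}G^\top A$, a fixed $p\times k$ matrix. It is well defined because $G$ has full column rank by Assumption~\ref{ass:smoothness} and $A\succ0$ by Assumption~\ref{ass:gmm}, so $G^\top A G\succ0$. Using the symmetry of $A$ and of $G^\top AG$, we have $\Sigma=M\Omega_R M^\top$ and $\Sigma^+=M\Omega_R^+M^\top$.

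The first step is to confirm that the objects in the statement are the limits identified earlier. Theorem~\ref{thm:AN} gives $\Sigma$ with $\Omega_R$ as in Assumption~\ref{ass:dependence}. Theorem~\ref{thm:hac} and Proposition~\ref{lem:fixed-env-decomp} ensure that $\Omega_R^+=\sum_\ell\Gamma_{g,\mathrm{hac}}(\ell)$ exists as an absolutely summable limit. The second step invokes Proposition~\ref{lem:fixed-env-decomp}, which gives $\Omega_R^+-\Omega_R=\Omega_\mu\succeq0$; the positive semidefiniteness comes via Lemma~\ref{lem:sumGE-psd}. By linearity of the congruence map $X\mapsto MXM^\top$,
\[
\Sigma^+-\Sigma=M(\Omega_R^+-\Omega_R)M^\top=M\Omega_\mu M^\top .
\]
For any $v\in\mathbb R^p$ this gives $v^\top M\Omega_\mu M^\top v=(M^\top v)^\top\Omega_\mu(M^\top v)\ge0$, which is the Loewner inequality $\Sigma^+-\Sigma\succeq0$. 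The scalar statement $a^\top\Sigma a\le a^\top\Sigma^+a$ is this quadratic form evaluated at $v=a$. The same congruence argument, with $\mathcal J M$ in place of $M$, extends the comparison to smooth functions $h$.

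The computation itself is immediate; the only substantive ingredient is $\Omega_\mu\succeq0$. This is where I expect the main obstacle, and it is already supplied by Proposition~\ref{lem:fixed-env-decomp}. If it had to be argued directly, one would note that for any fixed $c$ and any lag window $L$,
\[
c^\top\Big(\sum_{|\ell|<L}(1-|\ell|/L)\,\widehat\Gamma_{\mu,T}(\ell)\Big)c=L^{-1}T^{-1}\sum_{t}\Big(\sum_{j=0}^{L-1}c^\top\tilde\mu_{T,t-j}\Big)^2+\text{edge terms}\ge o(1),
\]
where $\widehat\Gamma_{\mu,T}(\ell)$ denotes the finite-$T$ fixed-lag Ces\`aro autocovariance of the centered mean path. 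The argument would then take $T\to\infty$ at fixed $L$ and afterwards let $L\to\infty$, using the absolute summability in Assumption~\ref{ass:mean-path} to pass from the Bartlett-weighted sum to $\Omega_\mu$ by dominated convergence. The order of limits, fixed lag first and only then summation, matters here, since the full finite-$T$ sum vanishes identically. No further care is needed beyond noting that $M$ is deterministic, so no stochastic limits enter the comparison.
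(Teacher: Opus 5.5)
Your proposal is correct and matches the paper's proof. The paper likewise sets $H:=(G^\top AG)^{-1}G^\top A$, takes $\Omega_R^+-\Omega_R=\Omega_\mu\succeq0$ from Proposition~\ref{lem:fixed-env-decomp} and Lemma~\ref{lem:sumGE-psd}, and concludes $\Sigma^+-\Sigma=H\Omega_\mu H^\top\succeq0$ by congruence. Your sketched direct argument for $\Omega_\mu\succeq0$ is the paper's Fej\'er-identity proof of Lemma~\ref{lem:sumGE-psd}; there the path is extended by zero outside the sample, which makes the identity exact with no edge terms.
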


\noindent Furthermore, the following result holds.

\begin{corollary}\label{cor:scalar-coverage}
Suppose the conditions of Theorems~\ref{thm:AN} and~\ref{thm:hac} hold. Let $\tau_T:=a^\top h(\theta_T^\star)$ and $\widehat\tau:=a^\top h(\widehat\theta_N)$ for a continuously differentiable functional $h:\Theta\to\mathbb R^q$ and a fixed vector $a\in\mathbb R^q$. Suppose $\mathcal J_T:=\nabla_\theta h(\theta_T^\star)\to\mathcal J$, set $d:=\mathcal J^\top a$, define $v:=d^\top\Sigma d$ and $v^+:=d^\top\Sigma^+d$, and let $\widehat v^+\to_p v^+$. If $v^+>0$, then the Wald interval
\[
\widehat\tau \pm z_{1-\alpha/2}\sqrt{\widehat v^+/T}
\]
has asymptotic design coverage at least $1-\alpha$. If $v>0$, the limiting coverage is
\[
2\Phi\!\left(z_{1-\alpha/2}\sqrt{v^+/v}\right)-1\ge 1-\alpha .
\]
\end{corollary}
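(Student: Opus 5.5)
The argument is a delta-method step followed by a Slutsky argument, with the ordering $v\le v^+$ taken from Corollary~\ref{cor:scalar-conservative}.

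\emph{Step 1: limit law of the scalar estimator.} By Theorem~\ref{thm:AN} applied to the map $\theta\mapsto a^\top h(\theta)$, we have $\sqrt T(\widehat\tau-\tau_T)\Rightarrow\mathcal N(0,a^\top\mathcal J\Sigma\mathcal J^\top a)=\mathcal N(0,v)$, where $d=\mathcal J^\top a$. Because $\nabla_\theta(a^\top h)(\theta_T^\star)=a^\top\mathcal J_T\to a^\top\mathcal J$, the theorem applies to this scalar functional directly.

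\emph{Step 2: the variance ordering.} Corollary~\ref{cor:scalar-conservative} gives $\Sigma^+-\Sigma\succeq0$. Taking the quadratic form in $d$ yields $v\le v^+$.

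\emph{Step 3: the case $v>0$.} Since $\widehat v^+\to_p v^+>0$, the continuous mapping theorem gives $\sqrt{\widehat v^+}\to_p\sqrt{v^+}$. By Slutsky,
\[
\frac{\sqrt T(\widehat\tau-\tau_T)}{\sqrt{\widehat v^+}}\Rightarrow \mathcal N(0,v/v^+).
\]
The coverage event $\tau_T\in\widehat\tau\pm z_{1-\alpha/2}\sqrt{\widehat v^+/T}$ is the event that the absolute value of this ratio is at most $z_{1-\alpha/2}$. The limiting normal is continuous, so the boundary has probability zero and the portmanteau theorem applies. The limiting coverage is therefore
\[
P\bigl(|Z|\sqrt{v/v^+}\le z_{1-\alpha/2}\bigr)=2\Phi\bigl(z_{1-\alpha/2}\sqrt{v^+/v}\bigr)-1.
\]
Since $v^+/v\ge1$ and $\Phi$ is increasing, this is at least $2\Phi(z_{1-\alpha/2})-1=1-\alpha$.

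\emph{Step 4: the degenerate case $v=0$.} This case needs a separate argument, because the ratio representation in Step 3 is no longer a nondegenerate normal. Here $\sqrt T(\widehat\tau-\tau_T)\to_p0$, since convergence in distribution to a constant implies convergence in probability. The half-width satisfies $z_{1-\alpha/2}\sqrt{\widehat v^+}\to_p z_{1-\alpha/2}\sqrt{v^+}>0$. Hence
\[
P\bigl(\sqrt T|\widehat\tau-\tau_T|\le z_{1-\alpha/2}\sqrt{\widehat v^+}\bigr)\to1\ge1-\alpha.
\]
Combining Steps 3 and 4 shows that, whenever $v^+>0$, the limiting coverage is at least $1-\alpha$.

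\emph{Main obstacle.} No step is genuinely hard once the earlier results are in hand. The only delicate points are handling the degenerate case $v=0$ separately, and noting that all probabilities are design probabilities conditional on $\mathcal E_T$ along the sequence of environments. Under that conditioning, the moving target $\tau_T$ is a deterministic centering, so Slutsky's lemma and the portmanteau theorem apply unchanged.
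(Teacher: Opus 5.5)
Your proposal is correct and matches the paper's proof essentially step for step. Both use the delta-method limit from Theorem~\ref{thm:AN}, the ordering $0\le v\le v^+$ from Corollary~\ref{cor:scalar-conservative}, Slutsky's theorem for the standardized ratio when $v>0$, and a separate argument that coverage tends to one when $v=0$. The paper also states explicitly that $\widehat v^+>0$ with probability approaching one, which your Step~3 uses only implicitly when dividing by $\sqrt{\widehat v^+}$.
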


Theorem~\ref{thm:hac} and Corollaries~\ref{cor:scalar-conservative}--\ref{cor:scalar-coverage} show that HAC standard errors deliver asymptotically conservative scalar design inference for smooth functions of the finite-history estimand under the fully observed, locally correctly specified conditions. In a constant-coefficient VAR with martingale-difference innovations and no mean-path drift, the relevant score autocovariances may vanish beyond lag zero, so the HAC expression collapses to the usual VAR covariance formula. The HAC notation is used here as a common long-run-variance representation that also covers state dependence, predictable mean-path drift, and other macro GMM moments with nontrivial serial dependence.

As mentioned above, $\Omega_\mu$ is not identified from a single realized history without additional structure. Section~\ref{sec:implementation} shows that projecting the moment on predetermined covariates can remove predictable components and reduce the conservative HAC variance limit; its interpretation as a tighter conservative variance bound requires the stronger conditions discussed there and in the appendix.

\subsection{The multiplier bootstrap}
Macro bootstrap procedures often approximate impulse-response distributions by resampling moment-contribution-like objects \citep{GoncalvesKilian2004,KilianLuetkepohl2017}. The design-based analogue here is a dependent Gaussian multiplier applied to centered GMM moments, and it estimates the same conservative HAC covariance matrix as in Theorem~\ref{thm:hac}. Let the per-date GMM moments stack into $g_t(\theta)\in\mathbb{R}^k$, with sample mean $g_N(\theta):=T^{-1}\sum_{t=1}^T g_t(\theta)$ as above and Jacobian $\widehat G_N(\theta):=\nabla_\theta g_N(\theta)$.

The following two examples help to make the Jacobian in the bootstrap update explicit. For LPs at each horizon $h\in\mathcal H$, let $\psi_t:=(1,x_t,c_t^\top)^\top$ and $\theta_h:=(\alpha_h,\beta_h,\gamma_h^\top)^\top$. If $g^{\mathrm{LP}}_{t,h}(\theta_h)=\psi_t(y_{t+h}-\psi_t^\top\theta_h)$ and $g_t(\theta)=\big(g^{\mathrm{LP}}_{t,h}(\theta_h)\big)_{h\in\mathcal H}$, then $\widehat G_N(\theta)$ is block diagonal with $h$-blocks $-T^{-1}\sum_t\psi_t\psi_t^\top$. When the conditional Gram average converges, the corresponding limit is $G_{\mathrm{LP}}=-\lim_T T^{-1}\sum_t\mathbb E_T[\psi_t\psi_t^\top]$. Moreover, for a VAR$(p)$, let $X_t^\top=(1,y_{t-1}^\top,\ldots,y_{t-p}^\top,x_t^\top)$, $u_t(\varphi):=y_t-\Phi_0-\Phi_1y_{t-1}-\cdots-\Phi_py_{t-p}-Bx_t$, and $\varphi:=\operatorname{vec}(\Phi_0,\Phi_1,\ldots,\Phi_p,B)$. The moment is $g^{\mathrm{VAR}}_t(\varphi)=\operatorname{vec}\!\big(X_t u_t(\varphi)^\top\big)=(I_n\!\otimes X_t)u_t(\varphi)$, so $\widehat G_N(\varphi)=-T^{-1}\sum_t(I_n\!\otimes X_t)\mathcal R_t$, where $\mathcal R_t$ is the linear map with $u_t(\varphi)=y_t-\mathcal R_t\varphi$.

For the multiplier bootstrap, choose a kernel/bandwidth pair such that the Toeplitz matrix $\big(K(|t-s|/L)\big)_{1\le t,s\le T}$ is positive semidefinite. Draw a mean-zero Gaussian vector $\xi=(\xi_1,\ldots,\xi_T)$ with covariance $\mathrm{Cov}^*(\xi_t,\xi_s)=K(|t-s|/L)$. Form the bootstrap moment average $B_N^*:=T^{-1/2}\sum_{t=1}^T \xi_t\big(g_t(\widehat\theta_N)-g_N(\widehat\theta_N)\big)$. Write $\widehat G_N:=\widehat G_N(\widehat\theta_N)$. The bootstrap update is
\[
\sqrt T\!\left(\widehat\theta_N^*-\widehat\theta_N\right)
=
-\Big(\widehat G_N^\top \widehat A_N \widehat G_N\Big)^{-1}\widehat G_N^\top \widehat A_N\,B_N^*,
\]
whose conditional covariance is the plug-in GMM covariance induced by the HAC moment matrix with kernel $K$ and bandwidth $L$. For the Gaussian multiplier bootstrap, the finite-sample covariance matrix of the multipliers must be positive semidefinite. The paper therefore restricts attention to kernels and bandwidths for which the Toeplitz matrix $\big(K(|t-s|/L_T)\big)_{1\le t,s\le T}$ is positive semidefinite.

\begin{theorem}\label{thm:bootstrap}
Suppose the conditions of Theorem~\ref{thm:hac} hold, Assumption~\ref{ass:gmm-interior} holds, and either the kernel and bandwidth satisfy Assumption~\ref{ass:hac-kernel} or the fixed flat-top alternative in Theorem~\ref{thm:hac} applies. Suppose also, where $L_T$ denotes the chosen bandwidth, possibly constant under the fixed flat-top alternative, that the multiplier Toeplitz matrix $\big(K(|t-s|/L_T)\big)_{1\le t,s\le T}$ is positive semidefinite for each $T$. Conditionally on the data, $\sqrt{T}(\widehat\theta_N^\ast-\widehat\theta_N)\Rightarrow^\ast \mathcal{N}(0,\Sigma^+)$ in probability, where
\[
\Sigma^+\;:=\;(G^\top A G)^{-1}G^\top A\,\Omega_R^+\,A G\,(G^\top A G)^{-1}.
\]
\end{theorem}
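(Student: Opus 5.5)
Conditionally on the data, the bootstrap quantity is an exact Gaussian linear map of the multiplier vector $\xi$, so I would prove the theorem by showing that its conditional covariance converges in probability to $\Sigma^+$.

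Write $\widehat s_t:=g_t(\widehat\theta_N)-g_N(\widehat\theta_N)$. Then $B_N^*=T^{-1/2}\sum_t\xi_t\widehat s_t$ is, given the data, mean-zero Gaussian with covariance
\[
\Cov^*(B_N^*)=\frac1T\sum_{t,s}K\!\left(\frac{|t-s|}{L_T}\right)\widehat s_t\widehat s_s^\top=\widehat\Omega_R^+(L_T,K),
\]
which is exactly the feasible centered HAC estimator in \eqref{eq:HAC}. The positive semidefiniteness of the Toeplitz multiplier matrix guarantees that this Gaussian law is well defined for each $T$.

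Theorem~\ref{thm:hac} gives $\widehat\Omega_R^+\to_p\Omega_R^+$. It covers both the growing-bandwidth case under Assumption~\ref{ass:hac-kernel} and the fixed flat-top alternative, so no separate argument is needed for either branch.

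Next I would handle the linear map. Set
\[
\widehat M_N:=-(\widehat G_N^\top\widehat A_N\widehat G_N)^{-1}\widehat G_N^\top\widehat A_N,
\]
so that $\sqrt T(\widehat\theta_N^*-\widehat\theta_N)=\widehat M_NB_N^*$ is conditionally $\mathcal N(0,\widehat M_N\widehat\Omega_R^+\widehat M_N^\top)$. I would show $\widehat G_N=\nabla_\theta g_N(\widehat\theta_N)\to_p G$ by splitting into three terms:
\begin{itemize}
\item the uniform Jacobian LLN of Assumption~\ref{ass:uniform-lln}, which replaces the sample Jacobian by its design mean;
\item the Lipschitz envelope $b_{g,1}$ of Assumption~\ref{ass:smoothness} together with consistency $\widehat\theta_N-\theta_T^\star\to_p0$ from Theorem~\ref{thm:AN}, which moves the evaluation point from $\widehat\theta_N$ to $\theta_T^\star$;
\item the condition $G_T\to G$ in Assumption~\ref{ass:smoothness}.
\end{itemize}
Since $\widehat A_N\to_pA$ and $G$ has full column rank with $A\succ0$, $G^\top AG$ is nonsingular. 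With probability approaching one $\widehat G_N^\top\widehat A_N\widehat G_N$ is invertible, and $\widehat M_N\to_p M:=-(G^\top AG)^{-1}G^\top A$ by the continuous mapping theorem. Hence $M\Omega_R^+M^\top=\Sigma^+$.

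Finally, I would pass from covariance convergence to weak convergence in probability, which I expect to be the main (mild) obstacle because the conclusion is a conditional one. I would use the subsequence characterization. Given any subsequence, extract a further subsequence along which
\[
\widehat M_N\widehat\Omega_R^+\widehat M_N^\top\to\Sigma^+\quad\text{almost surely}.
\]
Along it, the conditional laws $\mathcal N(0,\widehat M_N\widehat\Omega_R^+\widehat M_N^\top)$ converge weakly to $\mathcal N(0,\Sigma^+)$ almost surely, because Gaussian laws are continuous in their covariance, including at singular limits (characteristic functions converge pointwise). Consequently the bounded-Lipschitz distance to $\mathcal N(0,\Sigma^+)$ tends to zero in probability, which is the claimed $\Rightarrow^*$ in probability.

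The remaining point to check is that Assumption~\ref{ass:gmm-interior} is used only to ensure that $\widehat\theta_N$ is a well-defined interior point where $\widehat G_N$ is evaluated. No bootstrap CLT is required, because the multipliers are exactly Gaussian.
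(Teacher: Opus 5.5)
Your proposal is correct and follows the paper's proof: you identify the conditional covariance of $B_N^\ast$ with the feasible centered HAC matrix, invoke Theorem~\ref{thm:hac} for both bandwidth branches, and show that the plug-in GMM map converges using the uniform Jacobian law, the Lipschitz envelope, and consistency. The only cosmetic difference is the last step. The paper first obtains the Gaussian limit of $B_N^\ast$ through a square-root Lipschitz bound and then applies conditional Slutsky. You instead carry the full covariance $\widehat M_N\widehat\Omega_R^+\widehat M_N^\top$ through and conclude with a subsequence argument, which is equally valid.
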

Thus the bootstrap reproduces the conservative Gaussian law with covariance $\Sigma^+$; it reproduces the exact design law only in the no-mean-path case $\Omega_\mu=0$. Appendix~\ref{app:incomplete-sampling} records how the moment covariance changes under incomplete observation.

\section{Projection and regression adjustment}\label{sec:implementation}
This section studies a feasible way to reduce the conservative HAC variance limit $\Omega_R^+$. The basic result is a positive-semidefinite, or Loewner-order, reduction of the HAC limit. A stronger interpretation as a tighter conservative bound for $\Omega_R$ requires long-run orthogonality, meaning zero HAC long-run cross-covariance, between the centered innovation and the part of the date-specific mean path left after projecting on the adjustment covariates. Equality with the design variance obtains when the residualized centered mean path has zero long-run variance; a simple sufficient case is that the centered mean path lies exactly in the linear span of the adjustment covariates.

If the date-specific design mean $\mu_t=\mathbb{E}_T[g_t(W_t,\theta_T^\star)]$ were known, one could form the infeasible residual moment $r_t^\star:=g_t(W_t,\theta_T^\star)-\mu_t$ and apply HAC to $(r_t^\star)_{t\le T}$. The resulting long-run matrix would estimate $\Omega_R$ because $\mathbb{E}_T[r_t^\star]=0$ date by date. The difficulty is that $\mu_t$ is not identified from one realized history.

The feasible procedure uses predetermined adjustment variables to approximate the date-specific design mean. The econometric idea is to subtract variation that is predictable before the shock before applying HAC. This can reduce the reported HAC variance limit when those variables track the date-specific mean path, but it does not by itself prove that the adjusted matrix estimates the design variance. Here a variable is predetermined when it is dated before the shock whose moment is being adjusted. After centering or residualizing variables treated as fixed by the design, denote the retained adjustment vector by $z_t$. When such variables are themselves generated by past shocks, the basic positive-semidefinite reduction result applies to the stacked process $(z_t,s_t)$, while the stronger conservative variance interpretation requires the long-run orthogonality condition stated in Proposition~\ref{prop:HAC_RA_conservative}.

The population long-run projection objects are defined as follows. For vector processes $(a_t)$ and $(b_t)$ with absolutely summable two-sided cross-covariances, write $m_{a,T,t}:=\mathbb{E}_T[a_t]$ and $\tilde m_{a,T,t}:=m_{a,T,t}-T^{-1}\sum_{s=1}^T m_{a,T,s}$, with analogous definitions for $b_t$. For $\ell\ge0$, set $\Gamma_{ab,\mathrm{hac}}(\ell):=\lim_{T\to\infty}T^{-1}\sum_{t=\ell+1}^T \mathbb{E}_T[a_t b_{t-\ell}^\top]$ and $\Gamma_{ab,\mu}(\ell):=\lim_{T\to\infty}T^{-1}\sum_{t=\ell+1}^T \tilde m_{a,T,t}\tilde m_{b,T,t-\ell}^\top$. For negative lags use $\Gamma_{ab,\mathrm{hac}}(-\ell):=\Gamma_{ba,\mathrm{hac}}(\ell)^\top$ and $\Gamma_{ab,\mu}(-\ell):=\Gamma_{ba,\mu}(\ell)^\top$ for $\ell>0$. Define $S_{ab}^{+}:=\sum_{\ell\in\mathbb{Z}}\Gamma_{ab,\mathrm{hac}}(\ell)$ and $S_{ab}^{R}:=S_{ab}^{+}-\sum_{\ell\in\mathbb{Z}}\Gamma_{ab,\mu}(\ell)$. Because Assumption~\ref{ass:local-correct-specification} imposes $\bar m_T(\theta_T^\star)\to0$, the raw and centered fixed-lag limits coincide for the moment, so $S_{gg}^{+}=\Omega_R^{+}$ and $S_{gg}^{R}=\Omega_R$ when $a_t=b_t=g_t(W_t,\theta_T^\star)$.

In implementation, intercepts, deterministic trends, and regime dummies are first treated as variables fixed by the design and residualized out. The stochastic predetermined adjustment variables are centered or residualized with respect to those deterministic terms before forming long-run projection matrices. Start from a vector of raw $\I_t$-measurable adjustment variables, and let $z_t\in\mathbb{R}^{m}$ denote the resulting retained covariate vector after dropping directions that are linearly redundant in long-run second moment. Write $s_t:=g_t(W_t,\theta_T^\star)$ for the moment to be adjusted. Here $s_t$ denotes the moment being residualized, not an assignment shock. If $z_t$ is generated by past shocks, the basic positive-semidefinite reduction result applies to the stacked process $(z_t,s_t)$; the stronger interpretation of the adjusted matrix as a conservative refinement of $\Omega_R$ requires the long-run orthogonality condition stated in Proposition~\ref{prop:HAC_RA_conservative}.

For empirical analogues, define $\widehat\Gamma_{ab,\mathrm{hac}}(\ell):=T^{-1}\sum_{t=\ell+1}^{T} a_t b_{t-\ell}^{\!\top}$ for $\ell\ge0$, set $\widehat\Gamma_{ab,\mathrm{hac}}(-\ell):=\widehat\Gamma_{ba,\mathrm{hac}}(\ell)^\top$ for $\ell>0$, and let $\hat S_{ab}^{+}:=\sum_{|\ell|\le L} K(|\ell|/L)\,\widehat\Gamma_{ab,\mathrm{hac}}(\ell)$. Because $z_t$ has been centered or residualized on variables treated as fixed by the design, and because local correct specification makes the sample-period mean of the moment asymptotically negligible, these uncentered cross-HAC formulas have the same first-order limits as their centered analogues.

\begin{assumption}\label{ass:z_nondeg}
The stacked process $(z_t^\top,s_t^\top)^\top$ satisfies the same fixed-lag and lag-window HAC regularity as in Assumption~\ref{ass:hac-regularity} componentwise, and $S_{zz}^{+}:=\sum_{\ell}\Gamma_{zz,\mathrm{hac}}(\ell)\succ 0$.
\end{assumption}

Assumption~\ref{ass:z_nondeg} bundles the nondegeneracy of the retained covariate span with the stacked HAC regularity needed to estimate $S_{zz}^+$, $S_{zs}^+$, and $S_{sz}^+$ consistently. The positive-definiteness condition is therefore a normalization of the chosen covariate span, not a substantive restriction that every raw candidate covariate be noncollinear. This assumption is used only in Proposition~\ref{prop:HAC_RA} and the corresponding regression-adjusted covariance formulas.

The feasible projection coefficient is $\hat B:=(\hat S_{zz}^{+})^{-1}\hat S_{zs}^{+}$. The residualized moments are $\hat r_t:=g_t(W_t,\widehat\theta_N)-\hat B^\top z_t$. The reported adjusted matrix is $\widehat\Omega_{R}^{+}(\hat r):=\hat S_{\hat r\hat r}^{+}$. The corresponding GMM covariance matrix is
\[
\widehat\Sigma_{\mathrm{RA}}(A)
:=
(\widehat G_N^\top A \widehat G_N)^{-1}
\widehat G_N^\top A\,\widehat\Omega_{R}^{+}(\hat r)\,A \widehat G_N
(\widehat G_N^\top A \widehat G_N)^{-1}.
\]

The following result shows that the projection-adjusted standard errors converge to their limits under the design-based asymptotics. It also states the exact sense in which asymptotic variance is reduced.

\begin{proposition}\label{prop:HAC_RA}
Suppose that Assumptions~\ref{ass:design-environment}, \ref{ass:dependence}, \ref{ass:smoothness}, \ref{ass:uniform-lln}, \ref{ass:local-correct-specification}, \ref{ass:gmm}, \ref{ass:estimand-separation}, \ref{ass:mean-path},\ref{ass:hac-kernel}, \ref{ass:hac-regularity}, and~\ref{ass:z_nondeg} hold. If $L_T\to\infty$ and $L_T/\sqrt T\to0$, then $\widehat\Omega_{R}^{+}(\hat r)\to_p\Omega_R^+(r):=S_{rr}^+$, where $r_t=s_t-B^{\circ\top}z_t$ and $B^\circ:=(S_{zz}^+)^{-1}S_{zs}^+$. Moreover:
\begin{enumerate}[label=(\roman*)]
\item The adjusted HAC variance limit satisfies $\Omega_R^+(r)=S_{ss}^{+}-S_{sz}^{+}(S_{zz}^{+})^{-1}S_{zs}^{+}\preceq S_{ss}^{+}=\Omega_R^+$. Equality holds if and only if $S_{zs}^+=0$.
\item For any fixed weighting matrix $A\succ0$, the corresponding GMM asymptotic covariance satisfies $\Sigma_{\mathrm{RA}}(A):=(G^\top A G)^{-1}G^\top A\,\Omega_R^+(r)\,A G\,(G^\top A G)^{-1}\preceq\Sigma^+(A)$, where $\Sigma^+(A)$ is the same sandwich with $\Omega_R^+$ in place of $\Omega_R^+(r)$.
\item Whenever $\Omega_R^+(r)$ and $\Omega_R^+$ are positive definite, the corresponding inverse-variance-weight covariances satisfy $\Sigma_{\mathrm{RA}}^\star:=(G^\top \Omega_R^+(r)^{-1}G)^{-1}\preceq (G^\top (\Omega_R^+)^{-1}G)^{-1}=:\Sigma_{\mathrm{raw}}^\star$.
\end{enumerate}
\end{proposition}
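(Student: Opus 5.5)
The proof splits into a consistency step and three algebraic comparisons of long-run matrices. For consistency, we first replace $g_t(W_t,\widehat\theta_N)$ by $s_t=g_t(W_t,\theta_T^\star)$. By Theorem~\ref{thm:AN}, $\widehat\theta_N-\theta_T^\star=O_p(T^{-1/2})$. The plug-in clause of Assumption~\ref{ass:hac-regularity}, applied to the stacked process through Assumption~\ref{ass:z_nondeg}, then shows that the kernel-weighted cross-HAC sums $\hat S^+_{zz}$, $\hat S^+_{zs}$ and $\hat S^+_{ss}$ computed at $\widehat\theta_N$ differ from their infeasible versions at $\theta_T^\star$ by $o_p(1)$. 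The bandwidth rate $L_T/\sqrt T\to0$ is what makes the $O(L_T)$ plug-in errors negligible. The fixed-lag and lag-window clauses then give $\hat S^+_{ab}\to_p S^+_{ab}$ for $a,b\in\{z,s\}$. Since $S^+_{zz}\succ0$, the continuous mapping theorem yields $\hat B\to_p B^\circ$.

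The key algebraic observation is that the HAC functional is bilinear. With $\hat r_t=\hat s_t-\hat B^\top z_t$,
\[
\hat S^+_{\hat r\hat r}=\hat S^+_{ss}-\hat S^+_{sz}\hat B-\hat B^\top\hat S^+_{zs}+\hat B^\top\hat S^+_{zz}\hat B,
\]
exactly for each $T$ (with the plug-in $\hat s_t$). Each term converges, so the whole expression converges to $S^+_{ss}-S^+_{sz}B^\circ-B^{\circ\top}S^+_{zs}+B^{\circ\top}S^+_{zz}B^\circ=S^+_{rr}$. Substituting $B^\circ$ collapses this to $S^+_{ss}-S^+_{sz}(S^+_{zz})^{-1}S^+_{zs}$. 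Symmetry $S^+_{sz}=(S^+_{zs})^\top$ follows from the negative-lag conventions.

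One technical point needs care. Because $z_t$ is not multiplied by $\hat B$ inside the lag sums in a random way, the lag-$\ell$ bilinearity holds pathwise, so no additional uniformity beyond the componentwise regularity is needed. The identification $S^+_{ss}=\Omega_R^+$ uses Assumption~\ref{ass:local-correct-specification}: the raw and centered limits coincide, as noted before the statement.

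For part (i), the matrix $S^+_{zs}{}^{\!\top}(S^+_{zz})^{-1}S^+_{zs}$ is positive semidefinite because $(S^+_{zz})^{-1}\succ0$. This gives $\Omega_R^+(r)\preceq\Omega_R^+$. For the equality case, note that $S^+_{sz}(S^+_{zz})^{-1}S^+_{zs}=0$ if and only if $(S^+_{zz})^{-1/2}S^+_{zs}=0$, which holds if and only if $S^+_{zs}=0$.

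Part (ii) follows from congruence. Let $M:=(G^\top AG)^{-1}G^\top A$. Then $M(\Omega_R^+-\Omega_R^+(r))M^\top\succeq0$.

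For part (iii), the Loewner order reverses under inversion on positive definite matrices, so $\Omega_R^+(r)^{-1}\succeq(\Omega_R^+)^{-1}$. Congruence with $G$ gives $G^\top\Omega_R^+(r)^{-1}G\succeq G^\top(\Omega_R^+)^{-1}G$. Both sides are positive definite because $G$ has full column rank, so inverting once more gives the claim.

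The main obstacle is the consistency step. We need to confirm that the stacked regularity in Assumption~\ref{ass:z_nondeg} includes the plug-in clause for the cross terms $\hat S^+_{zs}$ at $\widehat\theta_N$. If it does not, we would argue directly: a mean-value expansion gives $g_t(\widehat\theta_N)-g_t(\theta_T^\star)$ bounded by $b_{g,T,t}\|\widehat\theta_N-\theta_T^\star\|$. Combined with Cauchy--Schwarz and the moment bounds in Assumption~\ref{ass:smoothness}, this bounds the perturbation of each lag covariance by $O_p(T^{-1/2})$, and summing over at most $2L_T$ lags gives $O_p(L_T/\sqrt T)=o_p(1)$.
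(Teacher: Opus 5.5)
Your proposal is correct and follows essentially the same route as the paper. Both arguments use a direct Lipschitz-envelope plug-in bound of order $O_p(L_T\|\widehat\theta_N-\theta_T^\star\|)$, which is your fallback route; with Cauchy--Schwarz over full-sample sums it is automatically uniform in the lag. Both then use exact bilinearity of the lag-window map with continuous mapping, followed by the Schur-complement, congruence, and inverse-monotonicity steps for (i)--(iii); the paper differs only cosmetically in getting (i) by completing the square around $B^\circ$ for arbitrary $B$. As in the paper, the $O_p(T^{-1/2})$ rate you take from Theorem~\ref{thm:AN} formally also needs Assumption~\ref{ass:gmm-interior}, which is not among the statement's listed hypotheses.
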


The next proposition states the stronger conservative variance interpretation available when the residualized adjustment component is long-run orthogonal to the centered innovation component of the moment. In the fully observed setup, date-specific centering of $e_{T,t}$ already gives $S_{e\mu}^+=0$. The additional substantive restriction is the long-run orthogonality $S_{ez}^+=0$. This condition holds automatically for deterministic residualization variables fixed under the design, such as intercepts and deterministic trends, and is substantive for stochastic lagged macro variables generated by past shocks.

\begin{proposition}\label{prop:HAC_RA_conservative}
Suppose the conditions of Proposition~\ref{prop:HAC_RA} hold with a fully observed sample. Let $B^\circ:=(S_{zz}^{+})^{-1}S_{zs}^{+}$, $r_t:=s_t-B^{\circ\top}z_t$, and $d_t:=\tilde\mu_{T,t}-B^{\circ\top}z_t$. Suppose also that the long-run cross matrix $S_{ez}^{+}$ exists and equals zero, and that $S_{dd}^{+}$ exists. Then
\[
\Omega_R^+(r)=\Omega_R+S_{dd}^{+}
=\Omega_R+S_{\mu\mu}^{+}-S_{\mu z}^{+}(S_{zz}^{+})^{-1}S_{z\mu}^{+},
\]
and hence $\Omega_R\preceq\Omega_R^+(r)\preceq\Omega_R^+$. Equality with $\Omega_R$ holds if and only if $S_{dd}^{+}=0$; in particular, equality holds when there is a fixed matrix $B_0$ with $\tilde\mu_{T,t}=B_0^\top z_t$ along the sequence.
\end{proposition}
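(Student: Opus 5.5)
The plan is to decompose the residual moment $r_t=s_t-B^{\circ\top}z_t$ into its centered innovation and a deterministic-plus-projection remainder, and then expand its long-run covariance bilinearly. Since local correct specification makes $\bar m_T(\theta_T^\star)$ asymptotically negligible, I write $s_t=\tilde\mu_{T,t}+e_{T,t}+\bar m_T(\theta_T^\star)$. The last term contributes nothing to any fixed-lag Ces\`aro limit: it is $o(T^{-1/2})$ times bounded averages. Hence, at the level of long-run limits,
\[
r_t \simeq e_{T,t}+\big(\tilde\mu_{T,t}-B^{\circ\top}z_t\big)=e_{T,t}+d_t .
\]
Bilinearity of $S^+_{ab}$ in its arguments then gives
\[
S_{rr}^+=S_{ee}^++S_{ed}^++S_{de}^++S_{dd}^+ .
\]
Each piece exists by Assumption~\ref{ass:dependence}, Assumption~\ref{ass:z_nondeg}, and the hypotheses that $S_{ez}^+$ and $S_{dd}^+$ exist. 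Finite sums of absolutely summable lag sequences remain summable, so the lag sums may be exchanged with the bilinear expansion.

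The next step identifies each piece. By Assumption~\ref{ass:dependence}, $S^+_{ee}=\Omega_R$. For the cross term I write $S_{ed}^+=S_{e\mu}^+-S_{ez}^+B^\circ$. The first piece vanishes because $\tilde\mu_{T,t}$ is a fixed path and $\mathbb E_T[e_{T,t}]=0$ date by date, so every term $\mathbb E_T[e_{T,t}]\tilde\mu_{T,t-\ell}^\top$ is zero. The second piece vanishes by the hypothesis $S_{ez}^+=0$. The transpose term $S_{de}^+$ is handled by the same argument. This yields $\Omega_R^+(r)=\Omega_R+S_{dd}^+$.

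To expand $S_{dd}^+$, I first obtain $S_{zs}^+=S_{z\mu}^++S_{ze}^+=S_{z\mu}^+$, using the same two facts. Therefore $B^\circ=(S_{zz}^+)^{-1}S_{z\mu}^+$, and
\[
S_{dd}^+=S_{\mu\mu}^+-S_{\mu z}^+B^\circ-B^{\circ\top}S_{z\mu}^++B^{\circ\top}S_{zz}^+B^\circ=S_{\mu\mu}^+-S_{\mu z}^+(S_{zz}^+)^{-1}S_{z\mu}^+ .
\]
Here $S_{\mu\mu}^+=\Omega_\mu$, because $\tilde\mu$ is deterministic and its HAC autocovariances are exactly the $\Gamma_{g,\mu}(\ell)$.

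The ordering follows from two facts. First, $S_{dd}^+\succeq0$, because a long-run covariance is the limit of $\Var(T^{-1/2}\sum_t a^\top d_t)$-type quadratic forms; this is Lemma~\ref{lem:sumGE-psd} applied to the stacked process. That gives $\Omega_R\preceq\Omega_R^+(r)$. Second, $\Omega_R^+(r)\preceq\Omega_R^+$ is Proposition~\ref{prop:HAC_RA}(i). Equality with $\Omega_R$ holds if and only if $S_{dd}^+=0$, which is immediate from the identity. If $\tilde\mu_{T,t}=B_0^\top z_t$, then $S_{z\mu}^+=S_{zz}^+B_0$, so $B^\circ=B_0$. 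In that case $d_t\equiv0$ and $S_{dd}^+=0$.

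The main obstacle is the careful treatment of $z_t$ when it is stochastic. In that case $S_{dd}^+$ mixes a deterministic path with a random process that has its own mean path and innovation. So I must check that the bilinear expansion and the vanishing of $S^+_{e\mu}$ hold in the raw, uncentered HAC sense used to define $S^+_{ab}$, and that replacing $s_t$ by $\tilde\mu+e$ is harmless at every lag uniformly. I would handle this by working only with fixed-lag limits and absolute summability, and by invoking the PSD lemma on the stacked vector $(e_t,d_t)$.
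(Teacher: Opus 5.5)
Your proposal is correct and follows essentially the same route as the paper's proof. Like the paper, you drop the negligible $\bar m_T$ term, write $r_t=e_{T,t}+d_t$, kill the cross terms via date-specific centering and $S_{ez}^+=0$, identify $B^\circ=(S_{zz}^+)^{-1}S_{z\mu}^+$, and take the upper bound from Proposition~\ref{prop:HAC_RA}(i).

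The one step to tighten is $S_{dd}^+\succeq0$. Your heuristic that $S_{dd}^+$ is a limit of $\Var(T^{-1/2}\sum_t a^\top d_t)$ is not the right characterization: for a deterministic path that variance is zero while $S_{dd}^+$ need not be. Also, Lemma~\ref{lem:sumGE-psd} is stated only for deterministic paths. For stochastic $z_t$ you should instead run its Fej\'er argument on $\mathbb E_T$ of the squared $M$-block sums of $u^\top d_t$, sending $T\to\infty$ at fixed $M$ and then $M\to\infty$, which is exactly what the paper does.
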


Proposition~\ref{prop:HAC_RA} concerns positive-semidefinite reduction of the conservative HAC variance limit $\Omega_R^+$. Proposition~\ref{prop:HAC_RA_conservative} gives the additional conditions under which the adjusted object is also conservative for $\Omega_R$. In the fully observed fixed-environment case, the cross term $S_{e\mu}^{+}$ vanishes by date-specific centering of $e_{T,t}$; the substantive long-run orthogonality condition in the proposition is $S_{ez}^{+}=0$.

The distinction matters because ``predetermined'' is a timing statement, not a guarantee of design validity. A lagged macro variable is observed before the current shock, but if it was generated by earlier shocks it moves under designs that redraw those earlier shocks. In that case Proposition~\ref{prop:HAC_RA} still gives a reduction of the HAC limit; Proposition~\ref{prop:HAC_RA_conservative} supplies the extra orthogonality condition needed to interpret the adjusted matrix as a tighter conservative bound for $\Omega_R$. The result is fixed-dimensional; high-dimensional or data-selected adjustment sets require incorporating the selection step into the moment or proving that it is asymptotically negligible. In finite samples either standard error can be larger, so the result does not justify adding covariates mechanically.

In applications the paper therefore reports baseline HAC and regression-adjusted HAC side by side, describes the timing and content of the covariate set, and interprets adjustment through whether the covariates are credible proxies for predictable date-specific drift. A convenient moment-level scalar summary is $\mathrm{Gap}:=[\mathrm{tr}(\widehat\Omega_R^{+})-\mathrm{tr}(\widehat\Omega_R^{+}(\hat r))]/\mathrm{tr}(\widehat\Omega_R^{+})$. For a selected parameter block, the same trace ratio can be computed after applying the GMM sandwich and the corresponding selection matrix. Large values of $\mathrm{Gap}$ indicate substantial predictable movement in date-specific mean moments captured by the chosen covariate set; they do not, without the stronger conditions, estimate $\Omega_\mu$ itself.

\section{Monte Carlo evidence}\label{sec:mc}
\label{sec:monte-carlo}

In each Monte Carlo setting below, the realized state and disturbance paths that define the conditioning environment are held fixed, and the assignment shocks are repeatedly re-randomized. This makes the sample-period estimand and its design variance directly observable by simulation. The data-generating process is:
\[
y_t = 0.85 y_{t-1} + \tau_t W_t + u_t, \qquad u_t = 0.5 u_{t-1} + \sqrt{1-0.5^2}\,\eta_t, \qquad c_t = 0.9 c_{t-1} + \sqrt{1-0.9^2}\,\varepsilon_t,
\]
with i.i.d.\ standard-normal $(\varepsilon_t,\eta_t)$, i.i.d.\ Rademacher $W_t$, and a burn-in of $500$ observations. Here, $W_t$ is the re-randomized shock, and $\tau_t$ is the date-specific effect of that shock. Thus time variation in $\tau_t$ is a source of predictable drift in the local-projection moment: even though the shock is randomized, the effect of the shock can be larger or smaller at dates with different realized states.

The main text reports two designs chosen to isolate when regression adjustment should help. Design A sets $\tau_t=1+c_t$. Here the shock effect is larger when the persistent state $c_t$ is high, and the same state $c_t$ is used as the predetermined adjustment variable. This is the aligned case: the adjustment variable is deliberately chosen to track the predictable component of the mean path, so regression-adjusted HAC should move the feasible variance estimate toward the oracle benchmark. Design B sets $\tau_t=1+c_t+0.75(c_t^2-1)$, but still adjusts only with $c_t$. The term $c_t^2-1$ is a population-centered nonlinear state component, so Design B keeps the same basic state-dependence while adding mean-path variation that a linear adjustment in $c_t$ cannot span. This design asks whether regression adjustment can still provide benefits when the adjustment only imperfectly spans the predictable mean-path variation.

At each horizon $h=0,\ldots,12$, the simulation estimates the local projection
\[
y_{t+h} = \alpha_h + \beta_h W_t + \gamma_h y_{t-1} + \delta_h c_t + \mathrm{error}_{t,h},
\]
where $\beta_h$ is the LP coefficient whose sample-period target is $\beta_{T,h}^\star$, $\alpha_h$ absorbs the average level of $y_{t+h}$, $\gamma_h$ controls for lagged outcome persistence, and $\delta_h$ controls for the observed state $c_t$. The number of usable observations at horizon $h$ is $T_h:=T-h$, and all feasible procedures use Bartlett HAC with bandwidth $L_h=\lfloor T_h^{1/3}\rfloor$. The results below use $T=240$. For each environment and horizon, the reference simulation estimates the sample-period estimand $\beta_{T,h}^\star$ together with the design variance $V_{R,h}$, the raw HAC limit $V_{H,h}^{+}$, the regression-adjusted HAC limit $V_{RA,h}^{+}$, and the infeasible oracle variance $V_{O,h}$ obtained after removing the true mean path. The simulation then evaluates $90\%$ normal intervals for $\beta_{T,h}^\star$ under repeated shock re-randomizations. Reported entries are averages across environments, and parenthesized values are standard errors across environments.  The baseline designs use $J=20$ fixed environments, $R=250$ shock re-randomizations per environment, and $M=5000$ reference draws per environment. The misspecified-design appendix keeps $\tau_t=1+c_t$ but replaces $c_t$ in the adjustment by $y_{t-1}$. The appendix also reports a no-drift diagnostic with $\tau_t\equiv 1$, a sample-size comparison, and a comparison of the same LP estimator with several conventional local-projection inference choices. The no-drift, misspecified, and local-projection comparison appendices use the same $J$, $R$, and $M$ counts; the sample-size appendix uses a lighter grid with $J=10$, $R=120$, and $M=2000$.

The variance ratios show that the mean-path term is quantitatively important mainly at short horizons in these designs. Table~\ref{tab:mc_variance_main} reports selected values at horizons $h=0,4,8,12$. In Design A, HAC is notably conservative at short horizons: at $h=0$, $(V_H^{+}/V_R,V_{RA}^{+}/V_R,V_O/V_R)=(6.026,1.633,0.999)$. In Design B the same short-horizon conservativeness is more pronounced, with $(9.939,4.821,0.998)$ at $h=0$. By $h=4$ the corresponding triples are $(1.136,1.028,1.011)$ in Design A and $(1.170,1.082,1.006)$ in Design B, and by $h=8$ the gap is considerably smaller. Figure~\ref{fig:mc_variance_paths} shows the full horizon path. In these AR designs the quantitative relevance of the mean-path term is therefore concentrated at short horizons.

\begin{table}[t!]
  \centering
  \begin{tabular}{llccc}
\toprule
Design & $h$ & $V_H^+/V_R$ & $V_{RA}^+/V_R$ & $V_O/V_R$ \\
\midrule
A & 0 & 6.026 (0.334) & 1.633 (0.131) & 0.999 (0.004) \\
 & 4 & 1.136 (0.014) & 1.028 (0.008) & 1.011 (0.008) \\
 & 8 & 1.028 (0.011) & 1.005 (0.010) & 1.001 (0.010) \\
 & 12 & 1.000 (0.007) & 0.994 (0.006) & 0.993 (0.006) \\
\midrule
B & 0 & 9.939 (1.206) & 4.821 (0.446) & 0.998 (0.004) \\
 & 4 & 1.170 (0.016) & 1.082 (0.011) & 1.006 (0.006) \\
 & 8 & 1.035 (0.008) & 1.017 (0.008) & 1.001 (0.007) \\
 & 12 & 1.009 (0.006) & 1.004 (0.006) & 1.000 (0.006) \\
\bottomrule
\end{tabular}

  \caption{Selected variance ratios in the main Monte Carlo designs. Entries are averages across fixed environments; parenthesized values are standard errors across environments.}
  \label{tab:mc_variance_main}
\end{table}

The interval diagnostics show that these variance-ratio differences translate into shorter intervals when adjustment is aligned with the drift. Table~\ref{tab:mc_interval_main} reports the corresponding coverage and length results. In Design A at $h=0$, HAC coverage is $0.999$ with mean length $0.498$, whereas RA delivers coverage $0.938$ and length $0.258$, close to the oracle values $0.893$ and $0.210$. In Design B at $h=0$, HAC gives $(0.998,0.621)$ for (coverage, length), RA gives $(0.994,0.439)$, and the oracle gives $(0.898,0.218)$. By $h=4$ in Design B the remaining gap is considerably smaller: coverage is $(0.870,0.858,0.866)$ and mean length is $(0.770,0.734,0.744)$ for HAC, RA, and the oracle. The residual deviation from nominal coverage at medium horizons is therefore shared by RA and the oracle and is consistent with finite-sample approximation error common to both procedures. Figure~\ref{fig:mc_interval_paths} shows that the main differences operate through short-horizon interval length rather than through instability at longer horizons.

\begin{table}[t!]
  \centering
  {\small
\renewcommand{\arraystretch}{1.05}
\setlength{\tabcolsep}{3pt}
\begin{tabular}{@{}llcccccc@{}}
\toprule
Design & $h$ & \shortstack{HAC\\cov.} & \shortstack{RA\\cov.} & \shortstack{Oracle\\cov.} & \shortstack{HAC\\len.} & \shortstack{RA\\len.} & \shortstack{Oracle\\len.} \\
\midrule
A & 0 & 0.999 (0.000) & 0.938 (0.009) & 0.893 (0.006) & 0.498 (0.014) & 0.258 (0.011) & 0.210 (0.002) \\
 & 4 & 0.875 (0.006) & 0.863 (0.006) & 0.874 (0.006) & 0.699 (0.016) & 0.664 (0.016) & 0.681 (0.016) \\
 & 8 & 0.866 (0.006) & 0.860 (0.006) & 0.873 (0.006) & 0.763 (0.020) & 0.749 (0.020) & 0.773 (0.020) \\
 & 12 & 0.866 (0.006) & 0.859 (0.006) & 0.875 (0.005) & 0.778 (0.021) & 0.767 (0.021) & 0.792 (0.022) \\
\midrule
B & 0 & 0.998 (0.001) & 0.994 (0.002) & 0.898 (0.005) & 0.621 (0.043) & 0.439 (0.022) & 0.218 (0.003) \\
 & 4 & 0.870 (0.004) & 0.858 (0.005) & 0.866 (0.005) & 0.770 (0.037) & 0.734 (0.033) & 0.744 (0.033) \\
 & 8 & 0.859 (0.004) & 0.853 (0.004) & 0.865 (0.004) & 0.828 (0.038) & 0.814 (0.037) & 0.838 (0.038) \\
 & 12 & 0.869 (0.005) & 0.864 (0.005) & 0.879 (0.005) & 0.844 (0.039) & 0.834 (0.038) & 0.859 (0.039) \\
\bottomrule
\end{tabular}
}%

  \caption{Selected $90\%$ interval diagnostics in the main Monte Carlo designs. Coverage is for $\beta_{T,h}^\star$. Entries are averages across fixed environments; parenthesized values are standard errors across environments.}
  \label{tab:mc_interval_main}
\end{table}

\begin{figure}[t!]
  \centering
  \includegraphics[width=0.98\linewidth]{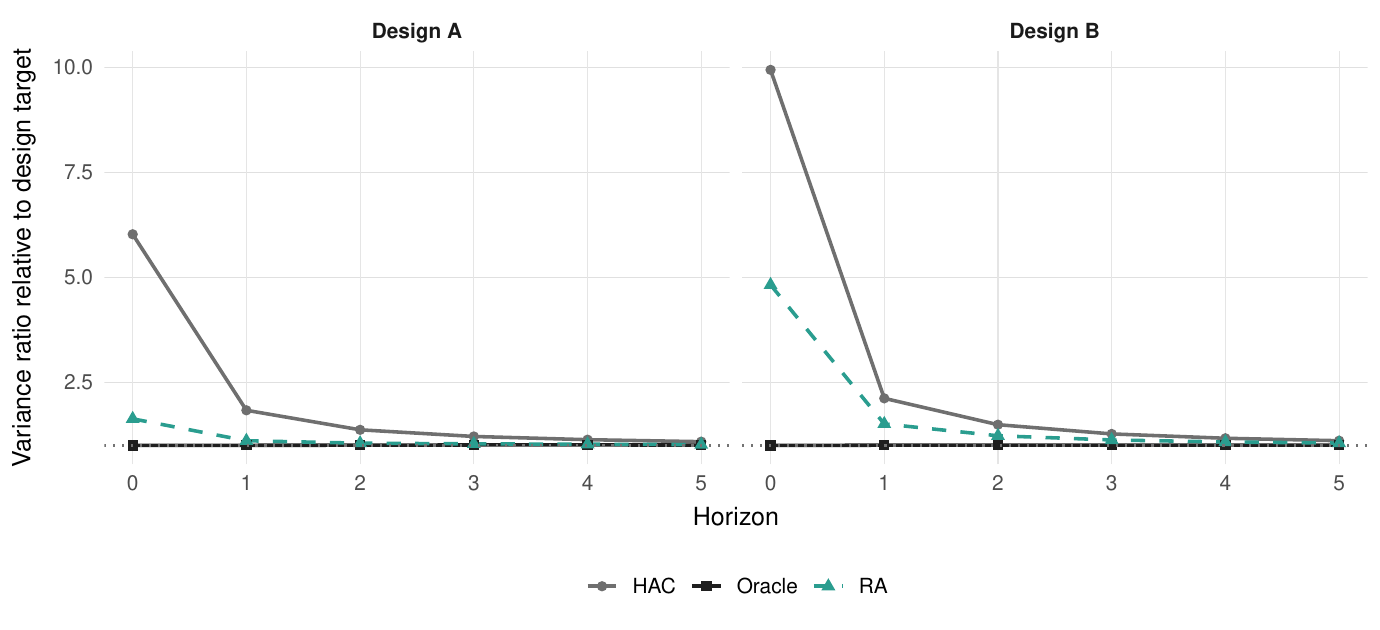}
  \caption{Variance-ratio paths for Designs A and B. The plotted objects are $V_H^{+}/V_R$, $V_{RA}^{+}/V_R$, and $V_O/V_R$ by horizon. The dashed line (on the $x$-axis) represents a ratio of one.}
  \label{fig:mc_variance_paths}
\end{figure}

\begin{figure}[t!]
  \centering
  \begin{subfigure}{0.92\linewidth}
    \centering
    \includegraphics[width=\linewidth,height=0.34\textheight,keepaspectratio]{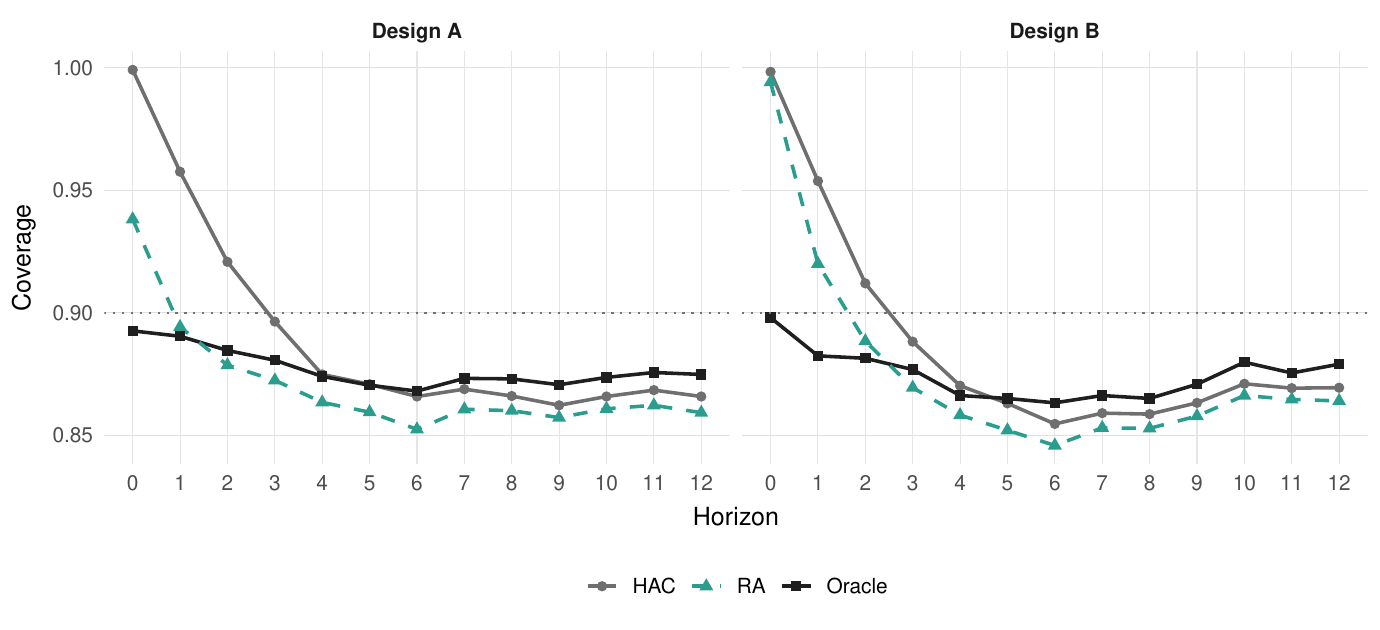}
    \caption{Coverage by horizon.}
  \end{subfigure}

  \vspace{0.6em}
  \begin{subfigure}{0.92\linewidth}
    \centering
    \includegraphics[width=\linewidth,height=0.34\textheight,keepaspectratio]{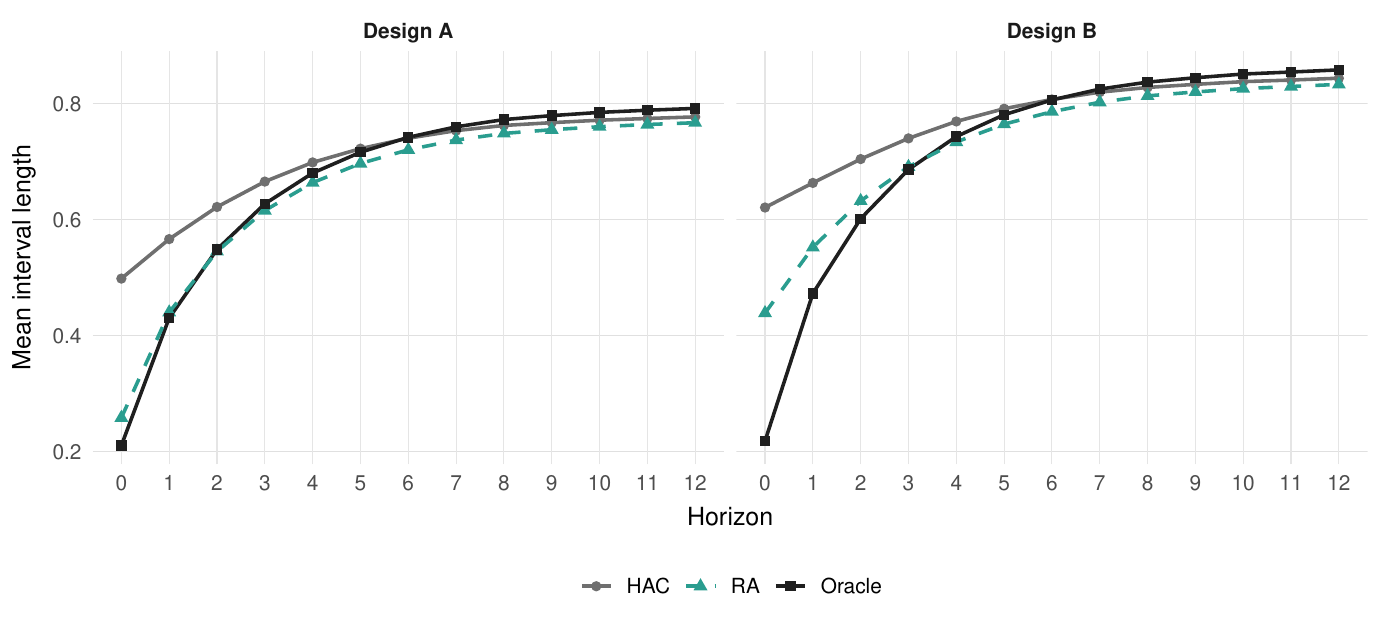}
    \caption{Mean interval length by horizon.}
  \end{subfigure}
  \caption{Interval diagnostics for Designs A and B. Coverage is for $\beta_{T,h}^\star$. The dashed line in the coverage panel marks nominal $0.90$.}
  \label{fig:mc_interval_paths}
\end{figure}

The simulations illustrate two conclusions under these designs. First, conventional HAC can be materially conservative for finite-history estimands, especially at short horizons. Second, predetermined projection can remove a substantial part of that conservativeness when the adjustment variable tracks the mean path, but the improvement is limited when it does not. These designs diagnose the variance decomposition and show how the variance limits move when the adjustment span is aligned or misaligned with the mean path. The appendix comparisons with misspecified adjustments and with common local-projection inference choices make the same distinction visible.

The deliberately adverse state-feedback design in Appendix~\ref{app:mc-stress} reinforces the same qualification. There the adjustment variable is affected by the shocks that are re-randomized, and the adjusted variance ratio falls to $0.033$ with coverage of $0.238$, so the failure is not a small finite-sample deterioration. The example emphasizes that timing alone is not enough for the conservative adjusted interpretation: adjustment variables must either be fixed by the design, or satisfy the long-run orthogonality conditions needed for Proposition~\ref{prop:HAC_RA_conservative}.

\section{Empirical application}\label{sec:empirical}

The empirical application asks whether the same distinction that appears in the simulations matters in a familiar macroeconomic setting. It studies monthly U.S. monetary-policy shocks using the updated Jaroci\'nski--Karadi shock series and public monthly macro data, with a simple diagnostic logic: in local projections, the shock is the object whose assignment variation identifies the response, but the moment used for inference can still vary predictably over time because the macroeconomic environment changes. If information available before the shock captures this predictable variation, then residualizing the HAC moment on that information should matter. The application asks how much of the HAC moment variation in standard monetary-shock regressions can be explained by economically interpretable pre-shock macro variables, and it treats the resulting standard-error reductions primarily as a diagnostic for predictable mean-moment variation unless the stronger conservative-adjustment conditions are maintained.

The monthly shock is the updated \textit{MP\_median} monetary-policy shock from \citet{JarocinskiKaradiUpdate2024}, and the macro block is built from public monthly U.S. series in the spirit of \citet{McCrackenNg2016}. The working sample is the pre-COVID overlap 1990:02--2019:08. The headline figures focus on the Moody's BAA--AAA corporate bond spread and log CPI; the main summary table also reports the 1-year Treasury rate, and the appendix adds log industrial production. For each horizon $h=0,\ldots,18$, the baseline linear local projection is
\[
y_{t+h}=\alpha_h+\beta_h x_t+\gamma_h^\top C_{t-1}+u_{t,h}.
\]
Here $y_{t+h}$ is the macro outcome at horizon $h$, $x_t$ is the monetary-policy shock, and $C_{t-1}$ is information dated before the shock. The coefficient $\beta_h$ is the horizon-$h$ response to the shock after controlling for lagged macro conditions. The application also estimates a state-dependent specification,
\[
y_{t+h}=\alpha_h+\beta_{L,h}(1-S_{t-1})x_t+\beta_{H,h}S_{t-1}x_t+\gamma_h^\top C_{t-1}+u_{t,h}.
\]
The indicator $S_{t-1}$ equals one when the unemployment rate is at least 6.5 percent. Thus $\beta_{H,h}$ is the response in slack periods, and $\beta_{L,h}$ is the response outside slack periods. This specification asks whether the response to monetary-policy shocks differs across macroeconomic states.

The baseline control vector $C_{t-1}$ contains six monthly lags of the dependent variable and of each of log industrial production, log CPI, the 1-year Treasury rate, the BAA--AAA spread, and the monetary-policy shock. These controls define the familiar local-projection specification. They are distinct from the regression-adjustment covariates used to study the HAC moment. The shock is scaled as in the published series; all reported responses and standard errors therefore use the units of that series. A one-unit response is a response to a one-unit movement in the published \textit{MP\_median} shock series, and the replication files record the raw-data scaling used in the analysis.

The regression-adjustment covariates are organized in three nested predetermined sets. The first, parsimonious set contains low-order time trends and, in state-dependent regressions, the state indicator. This set asks whether simple deterministic features of the sample explain part of the HAC moment. The second, macro-lag set adds the same lag block that appears in the local projection. This set asks whether the standard pre-shock controls already contain information about predictable moment variation. The third, rich macro set adds additional lags of equity prices, labor-market slack, the short rate, and commodity prices, implemented with the S\&P 500, the unemployment rate, the federal funds rate, and a broad commodity-price series. This set asks whether broader macro-financial information available before the shock further explains the moment variation. All regression-adjustment variables are dated before the monetary-policy shock whose moment is adjusted. That timing makes them predetermined for implementation, but because lagged macro variables may themselves reflect earlier shocks, the conservative variance interpretation rests on the additional orthogonality logic in Section~\ref{sec:implementation} rather than on timing alone. The covariate sets are fixed before inspecting horizon-by-horizon significance and are not selected separately by outcome or horizon. The point is therefore not to search for the most favorable adjustment variable, but to compare economically interpretable information sets of increasing richness.

The published high-frequency series is treated as the measured assignment shock used in the design, not as an estimated first-stage object. For the reported calculations, the conditioning environment fixes the sample window, the local-projection and residualization specifications, the observed lagged-information matrices used in $C_{t-1}$, the adjustment-covariate paths, and the measurement convention for $x_t$. The re-randomized object is the current measured monetary surprise entering the moment, with HAC providing a feasible short-memory long-run-variance approximation for the resulting moment sequence. Under this convention, the calculations do not add uncertainty from constructing the high-frequency shock series. A design that conditions on the realized shock path itself would leave no shock-assignment uncertainty to approximate, while a design that treats the high-frequency construction as random would require replacing the local-projection moment by the augmented moment vector in Appendix~\ref{app:two-step}. The empirical calculations below are therefore conditional on the measurement convention for the shock series and do not require monetary surprises to be literal randomized policy experiments.

The main pattern is that richer predetermined macro covariates produce larger standard-error reductions. Table~\ref{tab:money_main_summary} reports the corresponding empirical results. In the linear projections, the average HAC standard-error reduction for log CPI rises from 3.71\% under the parsimonious set to 16.12\% under the rich macro covariate set; for the BAA--AAA spread it rises from 1.57\% to 22.02\%; and for the 1-year Treasury rate it rises from 8.95\% to 32.27\%. In the slack-state CPI specification, the high-slack coefficient has the largest reduction, with a 19.96\% average decline in the standard error. These reductions are consistent with the Loewner-reduction result in Proposition~\ref{prop:HAC_RA}: larger sets of predetermined covariates reduce the reported HAC variance limit when they track predictable components of the mean path. Under the stronger conditions of Proposition~\ref{prop:HAC_RA_conservative}, the same reductions can be interpreted as tightening a conservative bound for the design variance; absent those conditions, the table is a diagnostic for movement in the date-specific mean path. It identifies neither $\Omega_\mu$ nor the exact design variance; it shows that the chosen covariate span explains part of the raw HAC moment variation.

Figure~\ref{fig:money_linear_main} reports the linear projections using the rich macro covariate set for the two main outcomes. The point estimates are essentially unchanged, while the reported bands based on the adjusted HAC limit are narrower. For the spread, at $h=5$ the estimated response is 0.015 with a HAC standard error of 0.011 and an RA standard error of 0.008. For log CPI, the largest differences occur at longer horizons; at $h=15$ and $h=18$, the adjusted-HAC pointwise interval excludes zero while the baseline HAC interval does not. These pointwise differences are descriptive evidence about how residualization changes the reported variance limit, and their interpretation as tighter valid design intervals requires the stronger conditions discussed above. Fixed-horizon simultaneous statements can use either Bonferroni or \v{S}id{\'a}k adjustments under the fixed-$J$ joint Gaussian limit in Appendix~\ref{app:simultaneous}; the appendix reports Bonferroni as the conservative reference.

Figure~\ref{fig:money_state_cpi_main} shows the state-dependent CPI responses under the slack split. The rich macro covariate set has limited effect at short horizons and a larger effect at medium and longer horizons, especially in the high-slack state. At $h=13$ the high-slack coefficient is $-0.191$ with HAC and RA standard errors 0.099 and 0.085, and at $h=18$ it is $-0.206$ with corresponding standard errors 0.122 and 0.100. In this application, regression adjustment leaves the response profile largely unchanged and narrows the reported bands where the predictable moment component appears quantitatively important, subject to the same conservative-interpretation qualification.

The appendix reports the remaining monetary results: Table~\ref{tab:app_money_linear_summary} adds log industrial production to the linear results, Table~\ref{tab:app_money_state_summary} reports the slack and zero-lower-bound (ZLB) specifications for both CPI and the spread, and Table~\ref{tab:app_money_varx_summary} reports a vector autoregression with exogenous controls (VARX) impulse-response analysis on the same data. In the VARX results, the rich macro covariate set reduces average standard errors by 25.71\% for cumulative inflation responses, 19.73\% for BAA--AAA spread responses, 15.77\% for 1-year Treasury-rate responses, and 14.86\% for cumulative industrial-production responses. The appendix also reports bandwidth and covariate-set sensitivity checks. The reductions remain positive across conventional HAC bandwidth choices and are driven primarily by larger sets of predetermined macro covariates rather than by demeaning or low-order trends alone. The appendix reports selected descriptive horizons at which pointwise significance changes when regression-adjusted HAC standard errors replace baseline HAC standard errors and shows that most do not survive fixed-horizon simultaneous adjustment. The empirical conclusion concerns predictable movement in date-specific mean moments and conservative variance limits rather than uniform rejection across response horizons.

\begin{table}[t!]
  \centering
  \small
  \setlength{\tabcolsep}{4.5pt}
  \renewcommand{\arraystretch}{1.08}
  \begin{minipage}{\linewidth}
    \centering
    \textbf{Panel A. Linear Local Projections}\\[0.35em]
    \begin{tabular}{llcccc}
      \toprule
      Outcome & $Z_t$ set & Mean \% red. & Median \% red. & Share $>0$ & HAC$\to$RA flips\\
      \midrule
      log CPI & Parsimonious & 3.71 & 3.27 & 1.00 & 0\\
       & Macro lags & 8.70 & 8.75 & 0.89 & 1\\
       & Rich macro & 16.12 & 16.80 & 0.95 & 4\\
      \addlinespace[0.25em]
      BAA--AAA spread & Parsimonious & 1.57 & 1.75 & 1.00 & 0\\
       & Macro lags & 13.51 & 12.82 & 1.00 & 0\\
       & Rich macro & 22.02 & 21.74 & 1.00 & 1\\
      \addlinespace[0.25em]
      1-year Treasury rate & Parsimonious & 8.95 & 8.71 & 1.00 & 0\\
       & Macro lags & 22.91 & 20.97 & 1.00 & 0\\
       & Rich macro & 32.27 & 31.58 & 1.00 & 1\\
      \bottomrule
    \end{tabular}
  \end{minipage}

  \vspace{0.55em}
  \begin{minipage}{\linewidth}
    \centering
    \textbf{Panel B. State-Dependent CPI Local Projection (Slack Split)}\\[0.35em]
    \begin{tabular}{llcccc}
      \toprule
      Coefficient & $Z_t$ set & Mean \% red. & Median \% red. & Share $>0$ & HAC$\to$RA flips\\
      \midrule
      Low slack & Parsimonious & 3.74 & 1.94 & 1.00 & 0\\
       & Macro lags & 6.02 & 8.30 & 0.74 & 0\\
       & Rich macro & 12.56 & 13.15 & 0.89 & 1\\
      \addlinespace[0.25em]
      High slack & Parsimonious & 6.04 & 5.21 & 1.00 & 4\\
       & Macro lags & 22.33 & 21.32 & 1.00 & 6\\
       & Rich macro & 19.96 & 17.89 & 1.00 & 5\\
      \bottomrule
    \end{tabular}
  \end{minipage}
  \caption{Empirical tightening in the monthly monetary application. Panel A reports linear local projections for the main outcomes. Panel B reports the slack-state CPI specification used in Figure~\ref{fig:money_state_cpi_main}. Entries are percentage reductions in the HAC standard error over horizons $h=0,\ldots,18$; \emph{HAC$\to$RA flips} counts horizons where the coefficient is insignificant under HAC but significant under regression-adjusted HAC.}
  \label{tab:money_main_summary}
\end{table}

\clearpage
\begin{figure}[p]
  \centering
  \captionsetup{font=footnotesize,skip=2pt}
  \includegraphics[width=0.88\linewidth]{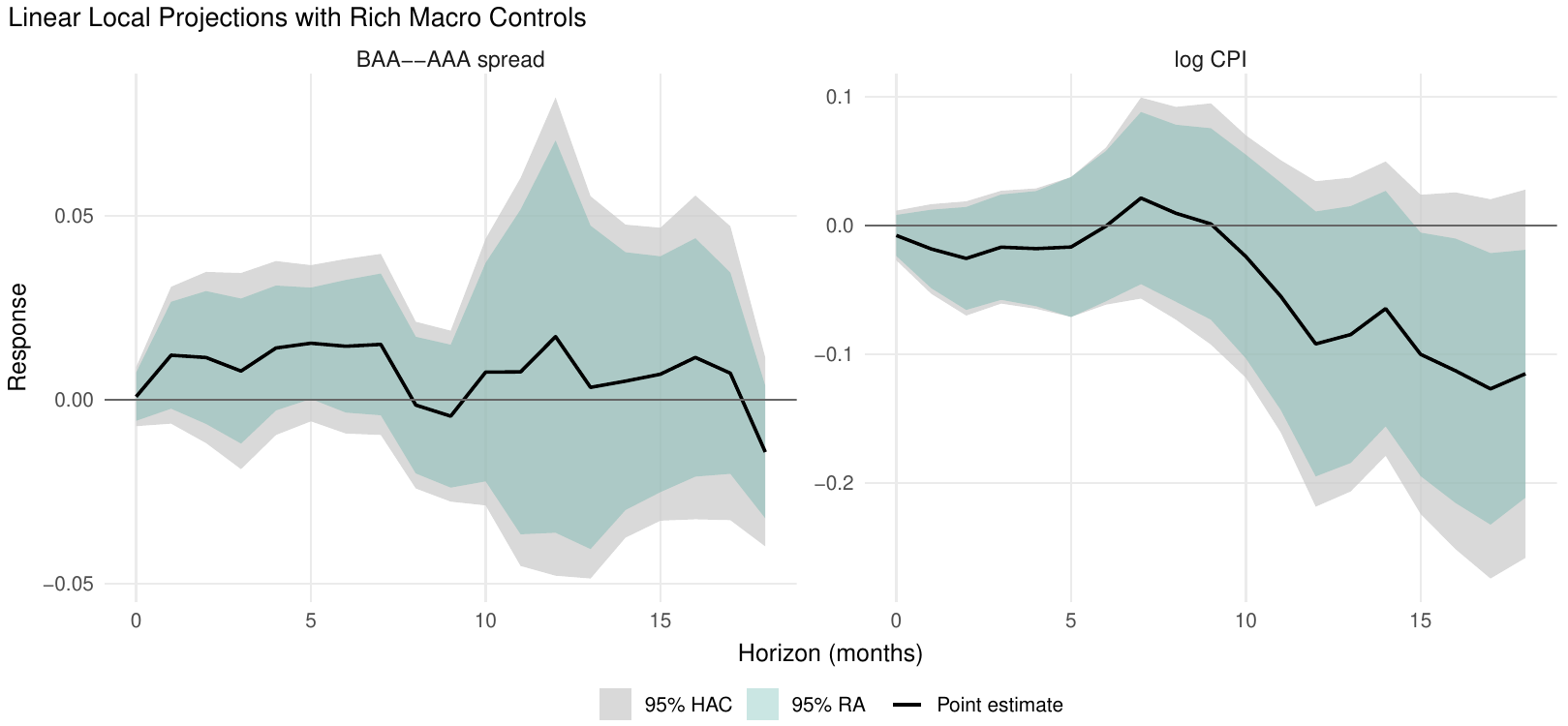}
  \caption{Linear local projections for the headline outcomes using the rich macro covariate set. Each panel reports the point estimate together with 95\% HAC bands and corresponding regression-adjusted HAC bands; the latter are interpreted with the qualifications in Section~\ref{sec:implementation}.}
  \label{fig:money_linear_main}
\end{figure}

\begin{figure}[p]
  \centering
  \captionsetup{font=footnotesize,skip=2pt}
  \includegraphics[width=0.88\linewidth]{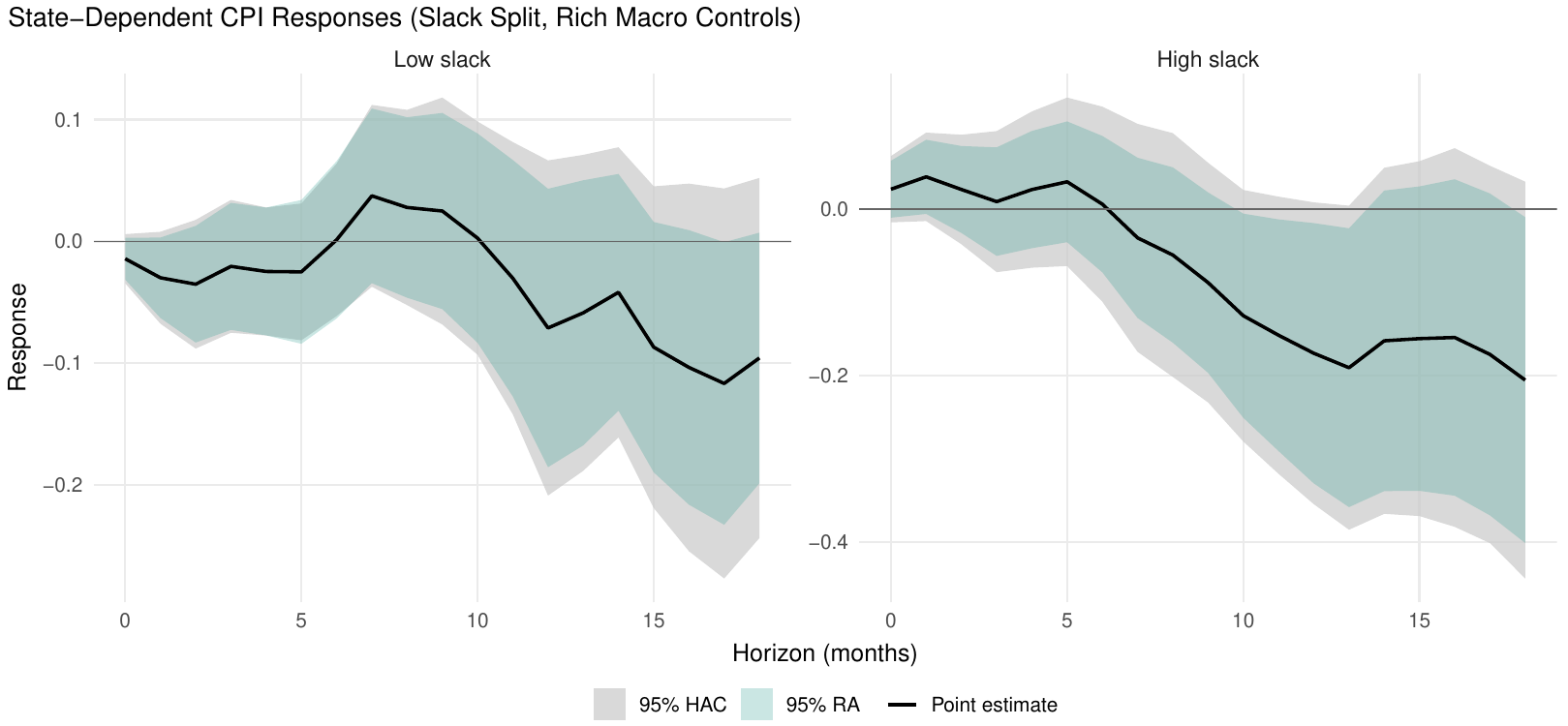}
  \caption{State-dependent CPI responses under the slack split with the rich macro covariate set. Each panel reports the point estimate together with 95\% HAC bands and corresponding regression-adjusted HAC bands; the latter are interpreted with the qualifications in Section~\ref{sec:implementation}.}
  \label{fig:money_state_cpi_main}
\end{figure}

\FloatBarrier
\section{Conclusion} \label{sec:conclusion}
This paper develops a design-based interpretation of time-series GMM inference. The central idea is that macroeconomic time series are often used to study what happened in one realized economy, not to average over an imagined population of economies. Conditional on the realized economic environment, the finite-history estimand is fixed and the relevant uncertainty comes only from the assignment variation in the shocks. While the design long-run variance for that uncertainty is $\Omega_R$, conventional HAC procedures generally estimate a different object, $\Omega_R^+=\Omega_R+\Omega_\mu$, where $\Omega_\mu$ is the long-run variance of the centered date-specific mean path. The decomposition shows exactly what HAC adds: besides innovation uncertainty, it also counts predictable drift in the moments across dates. Under the maintained conditions, this makes HAC conservative for scalar functions of the finite-history estimand rather than invalid.

The empirical takeaway from this paper is that baseline HAC intervals remain a useful conservative benchmark, but they need not be the end of the analysis. When predetermined variables explain predictable movement in the moment, regression-adjusted HAC can lower the conservative HAC limit, and under the long-run orthogonality conditions it can be interpreted as moving the bound closer to the design variance. Without those conditions, the reduction remains a diagnostic for predictable moment variation rather than a standalone validity claim. The simulations show why alignment matters: adjustment works well when the covariates track the mean path, though they naturally work less well when they only imperfectly span it. The monetary-shock application shows the same force in data, with rich macro information delivering material standard-error reductions and little movement in point estimates. More broadly, the paper separates uncertainty about shock assignment within the realized economy from uncertainty over alternative possible economies, and uses that distinction to clarify when familiar HAC tools are conservative and when additional structure can make them sharper. Future work could extend the framework to high-dimensional adjustment, fixed-$b$ inference, generated shocks, and overidentified GMM under local misspecification.

\clearpage
\bibliography{bibli}

\clearpage
\appendix

\clearpage
\section{Supplementary theoretical results}\label{app:supplementary-results}

This appendix collects theoretical results which supplement those in the main text. The main text specializes to the fully observed sample because that case is the transparent macro setting used for the main variance-limit results. Here, the appendix instead records a primitive verification route for the fully observed sample, the modifications under incomplete observation, the stronger projection-adjustment statement available under additional conditions for the fully observed sample, simultaneous inference over a fixed set of horizons, the LP--VAR comparison, the overidentified expansion under local misspecification, and two-step/generated-regressor constructions. Appendix~\ref{app:proofs} then gives the corresponding proofs and auxiliary probability lemmas before the simulation, empirical, and macro-estimator appendices.

\subsection{Notation}
The appendix uses the following notation repeatedly. Under incomplete observation, $R_{T,t}\in\{0,1\}$ is the inclusion indicator, $N:=\sum_{t=1}^T R_{T,t}$ is the realized sample size, $\rho_T:=T^{-1}\sum_{t=1}^T \mathbb{E}[R_{T,t}]$ is the sampling rate, and $\kappa_R(\ell)$ is the lag-$\ell$ pair-frequency limit. Whenever $R_{T,t}$ appears in an appendix result, the sample moment and sample Jacobian are $g_N(\theta):=N^{-1}\sum_{t=1}^T R_{T,t}g_t(W_t,\theta)$ and $\widehat G_N(\theta):=N^{-1}\sum_{t=1}^T R_{T,t}\nabla_\theta g_t(W_t,\theta)$ on the event $N>0$. As in the main text, $\mu_{T,t}(\theta_T^\star):=\mathbb{E}_T[g_t(W_t,\theta_T^\star)]$ is the date-specific design mean moment and $\tilde\mu_{T,t}:=\mu_{T,t}(\theta_T^\star)-\bar m_T(\theta_T^\star)$ is its centered version; conditional-on-$\mathcal{I}_{T,t}$ formulas are used only as sufficient expressions for this fixed path in examples. GMM expansions use $H:=(G^\top A G)^{-1}G^\top A$ for the GMM sandwich linear map, and the local VAR comparison uses $H_\phi:=(G_\phi^\top A_\phi G_\phi)^{-1}G_\phi^\top A_\phi$. When the sample is fully observed, these objects collapse to $R_{T,t}\equiv 1$, $N=T$, $\rho_T=1$, and $\kappa_R(\ell)=1$.

The next subsection gives one sufficient route for the moment-level and HAC assumptions used in the main text. It is stated for the fully observed sample because that is the case used in the main theorem statements. The incomplete-observation results below impose the corresponding conditions based on observed lag pairs directly.

\subsection{Primitive verification of HAC conditions}\label{app:primitive-verification}

The main text separates the econometric argument from the model-specific weak-dependence verification. The next condition gives one route for the fully observed sample that is strong enough for the canonical fixed-dimensional LP and VAR examples. It is sufficient rather than minimal because, in particular, the HAC part imposes moment and covariance restrictions on moment products, not only on the moment itself.

Let $J_{T,t}(\theta):=\nabla_\theta g_t(W_t,\theta)$ and $s_{T,t}:=g_t(W_t,\theta_T^\star)-\bar m_T(\theta_T^\star)$. For $\ell\ge0$, set $\bar\Gamma_{s,T}(\ell):=T^{-1}\sum_{t=\ell+1}^T\mathbb E_T[s_{T,t}s_{T,t-\ell}^\top]$, with the transpose convention for negative lags. The following assumption concerns the fully observed sample only, so $R_{T,t}\equiv1$ and $N=T$.

\begin{assumption}\label{ass:primitive-app}
For some $\delta>0$ and for the kernel and bandwidth sequence used in Assumption~\ref{ass:hac-kernel}, the following conditions hold.
\begin{enumerate}[label=(\roman*)]
\item The centered innovation array $(e_{T,t})$ is uniformly $\alpha$-mixing with coefficients $\alpha(h)$ satisfying $\sum_{h\ge1}\alpha(h)^{\delta/(2+\delta)}<\infty$, satisfies $\sup_{T,t}\mathbb E_T\|e_{T,t}\|^{2+\delta}<\infty$, and has fixed-lag covariance limits. For $\ell\in\mathbb Z$, the bound $\|T^{-1}\sum_{t=|\ell|+1}^T\mathbb E_T[e_{T,t}e_{T,t-|\ell|}^\top]\|\le b_e(\ell)$ holds eventually in $T$, where $\sum_{\ell\in\mathbb Z}b_e(\ell)<\infty$ and the transpose convention is used for negative lags.
\item For each scalar coordinate $f_{T,t}(\theta)$ of $g_t(W_t,\theta)-\mathbb E_T[g_t(W_t,\theta)]$ and of $J_{T,t}(\theta)-\mathbb E_T[J_{T,t}(\theta)]$,
\[
\sup_{\theta\in\Theta}\frac1T\sum_{h=-(T-1)}^{T-1}\sum_{t=|h|+1}^{T}
\left|\Cov_T(f_{T,t}(\theta),f_{T,t-|h|}(\theta))\right|=O(1),
\]
where $\Cov_T$ is conditional covariance under the fixed environment. The sample averages $T^{-1}\sum_t b_{g,T,t}$ and $T^{-1}\sum_t b_{g,1,T,t}$ are $O_p(1)$.
\item The fixed-lag limits $\Gamma_{g,\mathrm{hac}}(\ell)=\lim_T\bar\Gamma_{s,T}(\ell)$ exist and satisfy $\sum_{\ell\in\mathbb Z}\|\Gamma_{g,\mathrm{hac}}(\ell)\|<\infty$. For each coordinate pair $(i,j)$ and each $0\le\ell\le L_T$, define $Y^{ij}_{T,t}(\ell):=s_{T,t,i}s_{T,t-\ell,j}-\mathbb E_T[s_{T,t,i}s_{T,t-\ell,j}]$. Then
\[
\sup_{0\le\ell\le L_T}\sup_{i,j}\frac1T\sum_{h=-(T-1)}^{T-1}\sum_{t=|h|+1}^{T}
\left|\Cov_T(Y^{ij}_{T,t}(\ell),Y^{ij}_{T,t-|h|}(\ell))\right|=O(1),
\]
and the deterministic lag-window bias satisfies $\|\sum_{1\le |\ell|\le L_T}K(|\ell|/L_T)(\bar\Gamma_{s,T}(\ell)-\Gamma_{g,\mathrm{hac}}(\ell))\|\to0$.
\item With $\bar B_{T,t}:=B_{g,T,t}+T^{-1}\sum_{r=1}^T\mathbb E_T B_{g,T,r}$, the plug-in product-envelope averages satisfy
\[
\sup_{0\le\ell\le L_T}\frac1T\sum_{t=\ell+1}^{T}
\big(b_{g,T,t}\bar B_{T,t-\ell}+\bar B_{T,t}b_{g,T,t-\ell}\big)=O_p(1),
\]
with the same boundary convention as in the HAC sums.
\end{enumerate}
\end{assumption}

Assumption~\ref{ass:primitive-app} is appendix-only: the first condition gives a triangular-array route to the centered-moment CLT, the second condition is the covariance-sum requirement that makes the moment function and Jacobian uniform laws referee-checkable, and the third condition is the product-array requirement behind growing-bandwidth HAC consistency. The fourth condition is used only when replacing the infeasible moment at $\theta_T^\star$ by the feasible moment at $\widehat\theta_N$ over the $O(L_T)$ retained lags. All covariance-sum conditions are componentwise; because the moment and parameter dimensions are fixed, they imply the corresponding norm bounds used in the proofs.

The mixing condition here is a triangular-array condition conditional on $\mathcal E_T$. It permits time-varying coefficients and fixed deterministic paths; what must have short-memory behavior is the centered innovation array and, for HAC, the centered product array after the fixed mean path has been separated.

\begin{proposition}\label{prop:primitive-verification}
Suppose the sample is fully observed and fix a kernel and bandwidth satisfying Assumption~\ref{ass:hac-kernel}. If Assumptions~\ref{ass:design-environment}, \ref{ass:smoothness}, \ref{ass:local-correct-specification}, \ref{ass:gmm}, \ref{ass:estimand-separation}, \ref{ass:gmm-interior}, and~\ref{ass:primitive-app} hold, then Assumptions~\ref{ass:dependence}, \ref{ass:uniform-lln}, and~\ref{ass:hac-regularity} hold for that kernel and bandwidth. Consequently Theorem~\ref{thm:AN} applies; if Assumption~\ref{ass:mean-path} also holds, then Proposition~\ref{lem:fixed-env-decomp}, Theorem~\ref{thm:hac}, and Corollary~\ref{cor:scalar-conservative} apply for the same kernel and bandwidth.
\end{proposition}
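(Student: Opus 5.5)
The plan is to verify the three high-level assumptions one at a time from the corresponding clauses of Assumption~\ref{ass:primitive-app}. The consequences then follow by direct appeal to Theorem~\ref{thm:AN}, Proposition~\ref{lem:fixed-env-decomp}, Theorem~\ref{thm:hac}, and Corollary~\ref{cor:scalar-conservative}, whose remaining hypotheses are assumed in the statement. The common tool is a Chebyshev bound for a centered triangular array $f_{T,t}$ conditional on $\mathcal E_T$. Writing $V_T:=T^{-1}\sum_h\sum_t|\Cov_T(f_{T,t},f_{T,t-|h|})|$, one has
\[
\Var_T\Big(T^{-1}\sum_t f_{T,t}\Big)\le T^{-1}V_T .
\]
Hence a covariance-sum bound $V_T=O(1)$ yields $O_p(T^{-1/2})$ fluctuations.

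\textbf{Assumption~\ref{ass:dependence}.} Parts (i) and (iii) are restated in clause (i) of Assumption~\ref{ass:primitive-app}. The envelope $b_e$ passes to the limits $\Gamma_e(\ell)$, so $\sum_\ell\|\Gamma_e(\ell)\|\le\sum_\ell b_e(\ell)<\infty$. For part (ii) I would use the Cramér--Wold device. For each fixed $a$, the scalar array $a^\top e_{T,t}$ is $\alpha$-mixing with the stated summability and has uniform $2+\delta$ moments. Its normalized variance converges to $a^\top\Omega_R a$: split the lag sum at a fixed $M$, use the fixed-lag limits for lags $|\ell|\le M$, and use the tail of $b_e$ (dominated convergence) for lags $|\ell|>M$. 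A triangular-array mixing CLT (e.g.\ Herrndorf or Ekström type, or a Bernstein blocking argument) then gives normality. The case $a^\top\Omega_Ra=0$ is handled trivially because the variance tends to zero.

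\textbf{Assumption~\ref{ass:uniform-lln}.} Clause (ii) together with the Chebyshev bound gives pointwise convergence at rate $O_p(T^{-1/2})$ for every coordinate of the centered moment and Jacobian, uniformly in $\theta$ in the sense of the variance bound. I would then cover the compact $\Theta$ with a finite $\varepsilon$-net. Between net points, the sample part is controlled by $\varepsilon\,T^{-1}\sum_t b_{g,T,t}=O_p(\varepsilon)$ using clause (ii). The expectation part is controlled by $\varepsilon\sup_{T,t}\mathbb E_T b_{g,T,t}$, which is bounded by Assumption~\ref{ass:smoothness}. Letting $\varepsilon\downarrow0$ gives the uniform law, and the Jacobian is handled in the same way with $b_{g,1,T,t}$.

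\textbf{Assumption~\ref{ass:hac-regularity}.} The fixed-lag limits and their summability are stated in clause (iii). For the stochastic part, apply the Chebyshev bound to $Y^{ij}_{T,t}(\ell)$. This gives $\widehat\Gamma_s(\ell)-\bar\Gamma_{s,T}(\ell)=O_p(T^{-1/2})$ uniformly over $0\le\ell\le L_T$ in second moment. Summing over the $O(L_T)$ retained lags with bounded $K$ gives $O_p(L_T/\sqrt T)=o_p(1)$ by Minkowski's inequality. The deterministic bias term vanishes by the last part of clause (iii), and this yields both the fixed-lag and growing-window claims. For the plug-in clause, write $\hat s_t-s_{T,t}$ as $[g_t(\widehat\theta_N)-g_t(\theta_T^\star)]-[g_N(\widehat\theta_N)-\bar m_T(\theta_T^\star)]$. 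The first bracket is bounded by $b_{g,T,t}\|\widehat\theta_N-\theta_T^\star\|$. For the second bracket, Theorem~\ref{thm:AN} (with Assumption~\ref{ass:local-correct-specification}) gives $O_p(T^{-1/2})$, using the already verified Assumptions~\ref{ass:dependence} and~\ref{ass:uniform-lln}. Expanding $\hat s_t\hat s_{t-\ell}^\top-s_{T,t}s_{T,t-\ell}^\top$ into cross terms, each lag contributes at most $\|\widehat\theta_N-\theta_T^\star\|$ times the product-envelope average of clause (iv), plus a quadratic remainder. The total over the lag window is then $O_p(L_T/\sqrt T)=o_p(1)$.

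I expect the main obstacle to be this plug-in step, together with the quadratic remainder $T^{-1}\sum_t b_{g,T,t}b_{g,T,t-\ell}\|\widehat\theta_N-\theta_T^\star\|^2$. Clause (iv) does not directly control this remainder. I would bound it by Cauchy--Schwarz using $\sup\mathbb E_T b_{g,T,t}^2<\infty$ from Assumption~\ref{ass:smoothness}. That gives $O_p(L_T/T)$ after summing over lags, which is negligible.
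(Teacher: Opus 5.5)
Your proposal is correct and follows essentially the same route as the paper. Both arguments use Cram\'er--Wold with a triangular-array mixing CLT and truncated dominated convergence for the variance, and Chebyshev with a finite net and Lipschitz multipliers for the uniform laws. Both bound the growing-window HAC term in $L^2$ by $O(L_T/\sqrt T)$, and both handle the plug-in step using the $T^{-1/2}$ rate from Theorem~\ref{thm:AN} (which does not need Assumption~\ref{ass:hac-regularity}), clause (iv), and Cauchy--Schwarz. One small imprecision: your plug-in expansion also contains linear terms of the form $\|g_N(\widehat\theta_N)-\bar m_T(\theta_T^\star)\|\,T^{-1}\sum_t\bar B_{T,t}$, which come from the recentering error and are not covered by clause (iv). They are trivially $O_p(T^{-1/2})$ per lag, which is how the paper absorbs them into its $A_{1,T,\ell}$ term.
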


\subsection{Incomplete observation }\label{app:incomplete-sampling}

This subsection records the changes needed when the econometrician observes only selected dates. The sample-period estimand remains $\theta_T^\star$; incomplete observation changes the sample average and the covariance constants, not the estimand.

Write $R_{T,t}\in\{0,1\}$ for inclusion, $N:=\sum_{t=1}^T R_{T,t}$, and $\rho_T:=T^{-1}\sum_{t=1}^T\mathbb E[R_{T,t}]\to\rho\in(0,1]$. For a fixed lag $\ell\in\mathbb Z$, set $D_T(\ell):=\sum_{t=|\ell|+1}^T R_{T,t}R_{T,t-|\ell|}$ and $\widehat\kappa_R(\ell):=D_T(\ell)/N$, with the convention $D_T(-\ell)=D_T(\ell)$. The pair-frequency limit is $\kappa_R(\ell):=\lim_{T\to\infty}(T\rho_T)^{-1}\sum_{t=|\ell|+1}^T\mathbb E[R_{T,t}R_{T,t-|\ell|}]$ whenever the limit exists; under a deterministic sampling window the expectation is the fixed product $R_{T,t}R_{T,t-|\ell|}$. Thus $\kappa_R(0)=1$ whenever $N/(T\rho_T)\to1$.

\begin{assumption}\label{ass:sampling-app}
The sample is generated by one of the following two sampling schemes.
\begin{enumerate}[label=(\roman*)]
\item In the random-sampling scheme, for each $T$, $(R_{T,t})_{t\in\mathbb Z}$ is strictly stationary, independent of the moment process indexed by the shock history conditional on $\mathcal E_T$, has mean $\rho_T$, and is $\alpha$-mixing with coefficients $\alpha_R(h)$ satisfying $\sum_{h\ge1}\alpha_R(h)^{\delta/(2+\delta)}<\infty$ for the same $\delta$ as in Assumption~\ref{ass:dependence}. Its autocovariances $\gamma_{R,T}(\ell):=\Cov(R_{T,t},R_{T,t-|\ell|})$ have fixed-lag limits $\gamma_R(\ell)$ and satisfy $|\gamma_{R,T}(\ell)|\le b_R(\ell)$ with $\sum_{\ell\in\mathbb Z}b_R(\ell)<\infty$. The realized sample size obeys $N/(T\rho_T)\to_p1$.
\item In the deterministic-window scheme, $R_{T,t}$ is nonrandom conditional on $\mathcal E_T$. Also $\rho_T=T^{-1}\sum_{t=1}^T R_{T,t}\to\rho$, and, for every fixed $\ell$, $D_T(\ell)/(T\rho_T)\to\kappa_R(\ell)$. For the moment function and Jacobian classes used in Lemmas~\ref{lem:ULLN} and~\ref{lem:ULLN-J},
\[
\sup_{\theta\in\Theta}\left\|(T\rho_T)^{-1}\sum_{t=1}^T(R_{T,t}-\rho_T)\mathbb E_T[g_t(W_t,\theta)]\right\|\to0
\]
and the same condition holds with $g_t(W_t,\theta)$ replaced by $\nabla_\theta g_t(W_t,\theta)$.
\end{enumerate}
For both schemes, each fixed-lag limit $\kappa_R(\ell)$ exists and $\sup_\ell\kappa_R(\ell)<\infty$. Expressions involving $g_N$ are evaluated on $N>0$; this event has probability approaching one under the assumption.
\end{assumption}

Assumption~\ref{ass:sampling-app} is appendix-only and separates the two sources of regularity needed under incomplete observation. Random sampling requires independence from the design shocks and weak dependence of the inclusion process, whereas deterministic sampling windows require explicit balance of the selected finite-history moments. The balance condition keeps the estimand equal to the sample-period estimand rather than replacing it by a selected-window estimand.

In the random-sampling scheme, the probability space is enlarged by an observation mechanism. Assumption~\ref{ass:design-environment} still describes the generation of the moment map conditional on $\mathcal E_T$: assignment shocks are the source of design randomness in the moment, while $R_{T,t}$ is additional sampling or observation randomness and is assumed independent of that map in the random-sampling scheme.

For HAC calculations based only on observed lag pairs, define, for $\ell\ge0$,
\[
a_{T,t}(\ell)
:=\mathbb E_T\!\left[
g_t(W_t,\theta_T^\star)g_{t-\ell}(W_{t-\ell},\theta_T^\star)^\top
\right].
\]
\begin{assumption}\label{ass:pair-average-app}
For each fixed $\ell\ge0$, the Ces\`aro limit $\Gamma_{\mathrm{hac}}(\ell):=\lim_{T\to\infty} T^{-1}\sum_{t=\ell+1}^T a_{T,t}(\ell)$ exists. If $\kappa_R(\ell)>0$, then
\[
D_T(\ell)^{-1}\sum_{t=\ell+1}^T R_{T,t}R_{T,t-\ell}a_{T,t}(\ell)\xrightarrow{p}\Gamma_{\mathrm{hac}}(\ell).
\]
For negative lags set $\Gamma_{\mathrm{hac}}(-\ell):=\Gamma_{\mathrm{hac}}(\ell)^\top$. If $\kappa_R(\ell)=0$, no observed-pair average is required at that lag; its weighted HAC contribution is defined to be zero.
\end{assumption}

Assumption~\ref{ass:pair-average-app} says that the observed lag pairs preserve the fixed-history second-moment limit of the full series whenever the lag has positive asymptotic pair frequency. It is automatic under a fully observed sample, is a genuine observed-pair restriction under deterministic sampling windows, and follows under random sampling from the same independence and weak-dependence route used for Lemma~\ref{lem:kappa-rho-HAC} when the lag-product array has a summable covariance envelope.

Under random sampling in Assumption~\ref{ass:sampling-app}, let $\Omega_R^{\mathrm{obs}}$ denote the covariance matrix of the incomplete-observation normalized sample moment. Its variance decomposition is
\begin{equation}\label{eq:OmegaR_sampling}
\Omega_R^{\mathrm{obs}}
=
\Omega_R^{+,\mathrm{obs}}-\rho\Omega_\mu,
\qquad
\Omega_R^{+,\mathrm{obs}}
:=
\Gamma_{g,\mathrm{hac}}(0)+\sum_{\ell\neq0}\kappa_R(\ell)\Gamma_{g,\mathrm{hac}}(\ell),
\qquad
\Omega_\mu:=\sum_{\ell\in\mathbb Z}\Gamma_{g,\mu}(\ell).
\end{equation}
Here $\Gamma_{g,\mathrm{hac}}(\ell)$ is the fixed-lag long-run covariance of the raw centered moment defined in \eqref{eq:Gamma-ghac}, and $\Gamma_{g,\mu}(\ell)$ is the corresponding lag limit of the centered mean path. The factor $\rho$ in front of $\Omega_\mu$ is specific to the independent random-sampling calculation in Lemma~\ref{lem:CLT-mom}. In a deterministic window, the centered innovation covariance is instead $\sum_{\ell\in\mathbb Z}\kappa_R(\ell)\Gamma_e(\ell)$ and a nonzero mean shift may remain. When $R_{T,t}\equiv1$, $\rho=1$ and $\kappa_R(\ell)=1$, so \eqref{eq:OmegaR_sampling} collapses to \eqref{eq:OmegaR}--\eqref{eq:Omegam}, with $\Omega_R^{\mathrm{obs}}=\Omega_R$ and $\Omega_R^{+,\mathrm{obs}}=\Omega_R^+$.

The deterministic-window GMM expansion needs the selected sample to identify and linearize the same sample-period estimand. Let $M_{N,T}(\theta):=N^{-1}\sum_{t=1}^T R_{T,t}\mathbb E_T[g_t(W_t,\theta)]$ and $G_{N,T}(\theta):=N^{-1}\sum_{t=1}^T R_{T,t}\mathbb E_T[\nabla_\theta g_t(W_t,\theta)]$.
\begin{assumption}\label{ass:det-selected-gmm}
For the deterministic-window scheme, the selected sample satisfies \(\sup_{\theta\in\Theta}\|g_N(\theta)-M_{N,T}(\theta)\|\to_p0\), \(\sup_{\theta\in\Theta}\|M_{N,T}(\theta)-\bar m_T(\theta)\|\to0\), \(\sup_{\theta\in\Theta}\|\widehat G_N(\theta)-G_{N,T}(\theta)\|\to_p0\), and \(\|G_{N,T}(\theta_T^\star)-G\|\to0\).
The estimator considered below is a measurable global minimizer of $g_N(\theta)^\top\widehat A_Ng_N(\theta)$ over $\Theta$.
\end{assumption}

Assumption~\ref{ass:det-selected-gmm} states the selected-window GMM regularity used below. The first two convergence conditions give uniform convergence of the selected sample criterion to the full-span population criterion. The latter two conditions give the Jacobian convergence needed for the GMM linearization. The global-minimizer clause is used for consistency; Assumption~\ref{ass:gmm-interior} supplies the interior local first-order condition. Together, these conditions keep the estimand equal to $\theta_T^\star$.

For deterministic sampling windows, let $m_T:=(T\rho_T)^{-1/2}\sum_{t=1}^T(R_{T,t}-\rho_T)\mu_{T,t}(\theta_T^\star)$, which measures the mean-path imbalance created by observing only the deterministic sampling window. Let $\Omega_B$ denote the covariance matrix in the selected-window centered-moment CLT, and write $H:=(G^\top A G)^{-1}G^\top A$.
\begin{theorem}\label{thm:det-GMM}
Suppose Assumptions~\ref{ass:design-environment}, \ref{ass:smoothness}, \ref{ass:gmm}, \ref{ass:estimand-separation}, \ref{ass:local-correct-specification}, \ref{ass:gmm-interior}, \ref{ass:sampling-app}, and~\ref{ass:det-selected-gmm} hold under the deterministic-window scheme. Suppose also that $m_T\to m$ and \(\sqrt N\big(g_N(\theta_T^\star)-\bar m_T(\theta_T^\star)\big)\Rightarrow\mathcal N(m,\Omega_B)\). Then \(\sqrt N(\widehat\theta_N-\theta_T^\star)\Rightarrow\mathcal N(-Hm,H\Omega_BH^\top)\).
The centered Gaussian limit obtains when $m=0$.
\end{theorem}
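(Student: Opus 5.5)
We prove Theorem~\ref{thm:det-GMM} in two steps: consistency of $\widehat\theta_N$ for $\theta_T^\star$, then a mean-value expansion of the sample first-order condition around $\theta_T^\star$. The structure follows the proof of Theorem~\ref{thm:AN}. The only new features are that sample averages are taken over the selected window with normalization $N^{-1}$, and that the centered-moment CLT is supplied directly as a hypothesis with a possibly nonzero mean $m$.

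\textbf{Step 1: consistency.} By Assumption~\ref{ass:det-selected-gmm}, $\sup_\theta\|g_N(\theta)-\bar m_T(\theta)\|\to_p0$, obtained by the triangle inequality through $M_{N,T}$. Since $\bar m_T$ is uniformly bounded by the envelope moments in Assumption~\ref{ass:smoothness} and $\widehat A_N\to_pA$ by Assumption~\ref{ass:gmm}, the feasible criterion $J_N(\theta)=g_N^\top\widehat A_Ng_N$ satisfies $\sup_\theta|J_N(\theta)-Q_T(\theta)|\to_p0$. Because $\widehat\theta_N$ is a global minimizer, $Q_T(\widehat\theta_N)\le Q_T(\theta_T^\star)+o_p(1)$. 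Assumption~\ref{ass:estimand-separation} then gives $\widehat\theta_N-\theta_T^\star\to_p0$.

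\textbf{Step 2: expansion of the first-order condition.} With probability approaching one, Assumption~\ref{ass:gmm-interior} places $\widehat\theta_N$ in the interior ball, where $\widehat G_N(\widehat\theta_N)^\top\widehat A_Ng_N(\widehat\theta_N)=0$. Expanding $g_N$ row by row gives $g_N(\widehat\theta_N)=g_N(\theta_T^\star)+\bar G_N(\widehat\theta_N-\theta_T^\star)$, where $\bar G_N$ is evaluated at intermediate points on the segment, which stays inside $\Theta$. By the uniform Jacobian convergence in Assumption~\ref{ass:det-selected-gmm}, the Lipschitz envelope $b_{g,1}$ with sample average $O_p(1)$, consistency, and $G_{N,T}(\theta_T^\star)\to G$, both $\widehat G_N(\widehat\theta_N)$ and $\bar G_N$ converge in probability to $G$. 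Full column rank of $G$ makes $\widehat G_N^\top\widehat A_N\bar G_N$ invertible with probability approaching one, so
\[
\sqrt N(\widehat\theta_N-\theta_T^\star)=-(H+o_p(1))\sqrt N g_N(\theta_T^\star).
\]

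\textbf{Step 3: distributional limit.} Decompose
\[
\sqrt N g_N(\theta_T^\star)=\sqrt N\big(g_N(\theta_T^\star)-\bar m_T(\theta_T^\star)\big)+\sqrt N\,\bar m_T(\theta_T^\star).
\]
The second term is $o(1)$ in probability by Assumption~\ref{ass:local-correct-specification}, together with $N/T=\rho_T\to\rho\le1$, which is deterministic here. The first term converges to $\mathcal N(m,\Omega_B)$ by hypothesis. Slutsky's lemma and the continuous mapping theorem then yield the limit $\mathcal N(-Hm,H\Omega_BH^\top)$. When $m=0$ this is the centered Gaussian limit.

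\textbf{Main obstacle.} The delicate point is the Jacobian convergence at intermediate points. This requires combining the uniform convergence $\widehat G_N-G_{N,T}\to_p0$ with continuity of $G_{N,T}(\cdot)$ near the moving target $\theta_T^\star$. We plan to obtain that continuity from the Lipschitz bound $\|\nabla g_t(\theta)-\nabla g_t(\theta')\|\le b_{g,1,T,t}\|\theta-\theta'\|$ and the uniform bound on $\mathbb E_T b_{g,1,T,t}$, averaged over the selected dates with $N/T$ bounded away from zero because $\rho>0$. With this equicontinuity in hand, $G_{N,T}(\tilde\theta)-G_{N,T}(\theta_T^\star)\to_p0$ for any $\tilde\theta\to_p$-close to $\theta_T^\star$.
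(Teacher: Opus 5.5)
Your proposal is correct and follows essentially the same route as the paper's proof. Both arguments use argmin consistency via uniform convergence of $g_N-\bar m_T$ through $M_{N,T}$, then a mean-value expansion of the sample first-order condition, with the same three-term bound (uniform Jacobian law, Lipschitz continuity of $G_{N,T}$ via $\mathbb E_T b_{g,1,T,t}$ and $\rho>0$, and $G_{N,T}(\theta_T^\star)\to G$) for Jacobian convergence at intermediate points, and finally the supplied CLT plus Slutsky. The one difference is cosmetic: you bound $\sqrt N\,\bar m_T(\theta_T^\star)$ directly via Assumption~\ref{ass:local-correct-specification} and $N\le T$. The paper first subtracts the population first-order condition $G_T^\top A\bar m_T(\theta_T^\star)=0$, a step that is redundant once local correct specification is imposed.
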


\begin{corollary}\label{cor:det-centered}
Under the assumptions of Theorem~\ref{thm:det-GMM}, if $m_T\to0$, then $\sqrt{N}(\widehat\theta_N-\theta_T^\star)\Rightarrow\mathcal N(0,H\Omega_BH^\top)$. The fully observed sample $R_{T,t}\equiv1$ is included because then $m_T\equiv0$.
\end{corollary}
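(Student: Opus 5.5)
The statement to prove is Corollary~\ref{cor:det-centered}. It is a direct specialization of Theorem~\ref{thm:det-GMM}, so the plan is to reduce to that theorem rather than redo any GMM expansion.

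\textbf{Step 1: reduce to $m=0$.} Theorem~\ref{thm:det-GMM} gives $\sqrt N(\widehat\theta_N-\theta_T^\star)\Rightarrow\mathcal N(-Hm,H\Omega_BH^\top)$. It assumes $m_T\to m$ together with the selected-window CLT $\sqrt N\big(g_N(\theta_T^\star)-\bar m_T(\theta_T^\star)\big)\Rightarrow\mathcal N(m,\Omega_B)$. Under the corollary's hypothesis $m_T\to0$, take $m=0$. The theorem's hypotheses, including this CLT, are inherited as stated, so the limit is $\mathcal N(0,H\Omega_BH^\top)$, which is the first claim.

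\textbf{Step 2: the fully observed case.} Set $R_{T,t}\equiv1$. Then $\rho_T=T^{-1}\sum_t1=1$, and every summand $(R_{T,t}-\rho_T)\mu_{T,t}(\theta_T^\star)$ in the definition of $m_T$ vanishes, so $m_T\equiv0$ exactly for every $T$. The same centered conclusion follows from Step~1.

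For completeness I would also check that this case satisfies the deterministic-window scheme of Assumption~\ref{ass:sampling-app}(ii):
\begin{itemize}
\item $R_{T,t}$ is nonrandom.
\item $\rho_T\to1$.
\item $D_T(\ell)/(T\rho_T)=(T-|\ell|)/T\to1=\kappa_R(\ell)$, which is bounded in $\ell$.
\item The balance conditions hold trivially because $R_{T,t}-\rho_T=0$.
\item $N=T>0$.
\end{itemize}
Assumption~\ref{ass:det-selected-gmm} also reduces to familiar conditions. Since $M_{N,T}=\bar m_T$ and $G_{N,T}(\theta_T^\star)=G_T$, it becomes Assumption~\ref{ass:uniform-lln} together with $G_T\to G$ from Assumption~\ref{ass:smoothness}. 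These two are assumed in the main-text setting; under the corollary's phrasing they are simply part of the inherited hypotheses.

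\textbf{Main obstacle.} The proof carries essentially no difficulty. The only point needing care is that $m_T\to0$ shifts the mean of the required selected-window CLT to zero, and this is exactly the input Theorem~\ref{thm:det-GMM} takes as given. In the fully observed case one can further identify $\Omega_B=\Omega_R$ via Assumption~\ref{ass:dependence}(ii) and local correct specification. That identification links the corollary back to Theorem~\ref{thm:AN}, but it is not needed for the stated result.
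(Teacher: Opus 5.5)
Your proposal is correct and matches the paper's proof: Theorem~\ref{thm:det-GMM} gives $\mathcal N(-Hm,H\Omega_BH^\top)$, the hypothesis $m_T\to0$ forces $m=0$, and $R_{T,t}\equiv1$ gives $\rho_T=1$, so every summand of $m_T$ vanishes. Your extra check that the fully observed case fits Assumptions~\ref{ass:sampling-app}(ii) and~\ref{ass:det-selected-gmm} is a harmless addition that the paper does not spell out; note also that $\Omega_B=\Omega_R$ in that case follows from Assumption~\ref{ass:dependence}(ii) alone, without local correct specification.
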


\begin{proposition}\label{prop:window-centering}
Suppose the observed sample is a single contiguous window, so $I_T=\{s_T+1,\dots,s_T+N_T\}$, $R_{T,t}=\1\{t\in I_T\}$, and $N_T/T\to\rho\in(0,1]$. Write $\tilde\mu_{T,t}:=\mu_{T,t}(\theta_T^\star)-\bar m_T(\theta_T^\star)$. If
\[
\sup_{0\le u<v\le1}\left\|T^{-1/2}\sum_{t=\lfloor uT\rfloor+1}^{\lfloor vT\rfloor}\tilde\mu_{T,t}\right\|\to0,
\]
then $m_T\to0$.
\end{proposition}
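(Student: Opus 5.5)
The argument is a direct algebraic reduction of $m_T$ to partial sums of the centered mean path, which the hypothesis controls uniformly. Since the sample-period average of $\tilde\mu_{T,t}$ over $t=1,\dots,T$ is zero by construction, I would first rewrite $m_T$ in terms of $\tilde\mu_{T,t}$. Because $R_{T,t}$ is deterministic, $\rho_T=N_T/T$. Write $\mu_{T,t}(\theta_T^\star)=\tilde\mu_{T,t}+\bar m_T(\theta_T^\star)$. The constant part contributes
\[
(T\rho_T)^{-1/2}\sum_{t=1}^T(R_{T,t}-\rho_T)\,\bar m_T(\theta_T^\star)=0,
\]
since $\sum_t R_{T,t}=N_T=T\rho_T$. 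Hence
\[
m_T=(T\rho_T)^{-1/2}\Big(\sum_{t\in I_T}\tilde\mu_{T,t}-\rho_T\sum_{t=1}^T\tilde\mu_{T,t}\Big)=N_T^{-1/2}\sum_{t=s_T+1}^{s_T+N_T}\tilde\mu_{T,t},
\]
where the second term vanishes because $\sum_{t=1}^T\tilde\mu_{T,t}=0$.

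Next I would express the window sum as a partial sum between rescaled endpoints. Set $u_T:=s_T/T$ and $v_T:=(s_T+N_T)/T$, so that $0\le u_T<v_T\le1$ and $\lfloor u_TT\rfloor=s_T$, $\lfloor v_TT\rfloor=s_T+N_T$ exactly, because these are integers. Then
\[
\|m_T\|=(T/N_T)^{1/2}\,\Big\|T^{-1/2}\sum_{t=\lfloor u_TT\rfloor+1}^{\lfloor v_TT\rfloor}\tilde\mu_{T,t}\Big\|\le (T/N_T)^{1/2}\sup_{0\le u<v\le1}\Big\|T^{-1/2}\sum_{t=\lfloor uT\rfloor+1}^{\lfloor vT\rfloor}\tilde\mu_{T,t}\Big\|.
\]
The endpoints $u_T,v_T$ may move with $T$, but this causes no difficulty because the hypothesis is a supremum over all pairs $(u,v)$.

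Finally, $T/N_T\to1/\rho<\infty$ since $\rho>0$, so the prefactor is bounded. The supremum tends to zero by assumption, and therefore $m_T\to0$.

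There is no substantial obstacle. The only points requiring care are that the intercept term $\bar m_T(\theta_T^\star)$ cancels exactly, because the deterministic window gives $\sum_t R_{T,t}=T\rho_T$ with no rounding, and that the floor functions hit the window endpoints exactly. No condition on the magnitude of $\bar m_T$, such as Assumption~\ref{ass:local-correct-specification}, is needed.
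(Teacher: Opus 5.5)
Your proof is correct and matches the paper's argument step for step. The steps are the same: the constant $\bar m_T(\theta_T^\star)$ cancels exactly because $\sum_t(R_{T,t}-\rho_T)=0$; the $\rho_T\sum_t\tilde\mu_{T,t}$ term vanishes by centering; and the window sum is bounded by the uniform partial-sum hypothesis, with the prefactor $\rho_T^{-1/2}$ staying bounded. Your explicit endpoints $u_T=s_T/T$ and $v_T=(s_T+N_T)/T$ simply make precise the paper's remark that $I_T$ is one of the intervals covered by the supremum.
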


\subsection{More on Projection adjustment}\label{app:ra-lower-bound}

The main text establishes the raw HAC positive-semidefinite reduction result and states the fully observed sample orthogonality case in which the adjusted variance limit remains conservative for the original design variance. This appendix records the conservative-variance logic in notation useful for fixed-dimensional predetermined adjustment variables and then gives a counterexample showing why incomplete observation requires additional restrictions. Let $z_t\in\mathbb R^m$ be the retained adjustment vector, let $g_t:=g_t(W_t,\theta_T^\star)$, and, for each fixed conformable matrix $B\in\mathbb R^{m\times k}$, write $r_t(B):=g_t-B^\top z_t$ and $d_t(B):=\tilde\mu_{T,t}-B^\top z_t$. All long-run matrices in this subsection are the growing-bandwidth HAC limits defined in Section~\ref{sec:implementation}; in particular, $S_{ab}^+=(S_{ba}^+)^\top$ when both sides exist.

\begin{assumption}\label{ass:RA-orth}
The sample is fully observed, $R_{T,t}\equiv1$, and Assumption~\ref{ass:local-correct-specification} holds. With $e_{T,t}:=g_t-\mathbb E_T[g_t]$, the long-run matrices $S_{ab}^+$ involving $a,b\in\{e,\tilde\mu,z\}$ exist through absolutely summable fixed-lag limits, $S_{zz}^+\succ0$, and $S_{ez}^+=0$. For every fixed conformable matrix $B$, $S_{d(B)d(B)}^+$ exists.
\end{assumption}

Assumption~\ref{ass:RA-orth} is the appendix version of the orthogonality condition for the fully observed sample in Proposition~\ref{prop:HAC_RA_conservative}. The cross term $S_{e\mu}^+$ is not assumed: it follows from date-specific centering of $e_{T,t}$ and the fixed status of the centered mean path. The substantive restriction is $S_{ez}^+=0$; together with Proposition~\ref{lem:fixed-env-decomp}, when the sample is fully observed, $\Omega_R=\Omega_R^+-S_{\mu\mu}^+$ under this notation. The condition is not that every predetermined adjustment variable is fixed under the design. It is that, after the design convention has fixed which objects move under shock re-randomizations, the residualized adjustment component is long-run orthogonal to the innovation component in the sense stated above.

Define the HAC-minimizing projection coefficient $B^*:=(S_{zz}^+)^{-1}S_{zg}^+$ and the adjusted variance limit $\Omega_R^+(r(B)):=S_{r(B)r(B)}^+$.
\begin{theorem}\label{thm:RA-fullobs}
Under Assumption~\ref{ass:RA-orth}, every fixed conformable matrix $B$ satisfies \(\Omega_R^+(r(B))=\Omega_R+S_{d(B)d(B)}^+\succeq\Omega_R\).
For $B^*=(S_{zz}^+)^{-1}S_{zg}^+$,
\[
\Omega_R\preceq\Omega_R^+(r(B^*))\preceq\Omega_R^+,
\qquad
\Omega_R^+(r(B^*))=\Omega_R+S_{\mu\mu}^+-S_{\mu z}^+(S_{zz}^+)^{-1}S_{z\mu}^+ .
\]
Equality with $\Omega_R$ holds if and only if $S_{d(B^*)d(B^*)}^+=0$.
\end{theorem}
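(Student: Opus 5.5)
The proof rests on bilinearity of the long-run cross-covariance operator $S^+_{ab}$ together with the decomposition of each moment into innovation and mean path. Fix a conformable $B$. We write
\[
g_t=\bar m_T(\theta_T^\star)+\tilde\mu_{T,t}+e_{T,t},
\]
so that
\[
r_t(B)=e_{T,t}+d_t(B)+\bar m_T(\theta_T^\star).
\]
By Assumption~\ref{ass:local-correct-specification}, $\bar m_T(\theta_T^\star)\to0$. A constant vanishing shift changes no fixed-lag Ces\`aro limit, so it drops from every $S^+$ object. This is the same remark the paper uses to identify $S^+_{gg}$ with $\Omega_R^+$. Bilinearity of the fixed-lag limits, applied termwise and then summed over lags (absolute summability from Assumption~\ref{ass:RA-orth}), gives
\[
S^+_{r(B)r(B)}=S^+_{ee}+S^+_{ed(B)}+S^+_{d(B)e}+S^+_{d(B)d(B)}.
\]
Here $S^+_{ee}=\Omega_R$, because $e_{T,t}$ has mean zero date by date, so its HAC limit is the design long-run variance of Assumption~\ref{ass:dependence}.

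The key step is to show $S^+_{ed(B)}=0$. We expand $S^+_{ed(B)}=S^+_{e\tilde\mu}-S^+_{ez}B$. The second term vanishes by the maintained orthogonality $S^+_{ez}=0$. For the first term, $\tilde\mu_{T,t}$ is a fixed (nonrandom under $\mathcal E_T$) path. Hence, for every $\ell$,
\[
\mathbb E_T\bigl[e_{T,t}\tilde\mu_{T,t-\ell}^\top\bigr]=\mathbb E_T[e_{T,t}]\,\tilde\mu_{T,t-\ell}^\top=0,
\]
and similarly for negative lags. So every fixed-lag cross limit is zero and $S^+_{e\tilde\mu}=0$. The transpose $S^+_{d(B)e}$ vanishes for the same reason. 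This gives
\[
\Omega_R^+(r(B))=\Omega_R+S^+_{d(B)d(B)}.
\]
The bound $\succeq\Omega_R$ then follows because $S^+_{dd}$ is a long-run variance, hence positive semidefinite by Lemma~\ref{lem:sumGE-psd}. This is the one place an earlier lemma is genuinely invoked: a Ces\`aro autocovariance sum is a limit of variances of normalized block sums, which are PSD.

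For $B^*$, we expand $S^+_{d(B)d(B)}$ as a quadratic in $B$:
\[
S^+_{\mu\mu}-S^+_{\mu z}B-B^\top S^+_{z\mu}+B^\top S^+_{zz}B.
\]
Since $S^+_{ez}=0$ and the constant shift is negligible, we have $S^+_{zg}=S^+_{z\mu}$, so $B^*=(S^+_{zz})^{-1}S^+_{z\mu}$. Completing the square then gives
\[
S^+_{d(B)d(B)}=S^+_{\mu\mu}-S^+_{\mu z}(S^+_{zz})^{-1}S^+_{z\mu}+(B-B^*)^\top S^+_{zz}(B-B^*).
\]
Taking $B=B^*$ yields the displayed formula. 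The upper bound $\Omega_R^+(r(B^*))\preceq\Omega_R^+$ follows by comparing with $B=0$: in that case $d(0)=\tilde\mu$, so
\[
\Omega_R^+(r(0))=\Omega_R+S^+_{\mu\mu}=\Omega_R^+
\]
by Proposition~\ref{lem:fixed-env-decomp}, and the completed square shows $S^+_{d(B^*)d(B^*)}\preceq S^+_{d(0)d(0)}$. Equivalently, one can cite Proposition~\ref{prop:HAC_RA}(i). The equality clause is then immediate from $\Omega_R^+(r(B^*))-\Omega_R=S^+_{d(B^*)d(B^*)}$.

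The main obstacle is bookkeeping rather than depth. We must justify that bilinear splitting commutes with the lag sum, which requires each cross-lag limit to exist and be summable. Assumption~\ref{ass:RA-orth} supplies this for pairs in $\{e,\tilde\mu,z\}$, and $d(B)$ and $r(B)$ are fixed linear combinations of these. We must also track the transpose convention for negative lags, so that $S^+_{ab}=(S^+_{ba})^\top$ is used consistently in the completing-the-square step.
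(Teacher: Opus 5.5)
Your proposal is correct and follows the paper's proof essentially step by step. You drop the vanishing common mean $\bar m_T(\theta_T^\star)$, use date-specific centering together with $S_{ez}^+=0$ to remove the cross terms so that $S_{r(B)r(B)}^+=\Omega_R+S_{d(B)d(B)}^+$, and complete the square around $B^*$; you do this inside $S_{d(B)d(B)}^+$ rather than $S_{r(B)r(B)}^+$, which changes nothing because the two differ only by the constant $\Omega_R$.

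One citation needs adjusting. Lemma~\ref{lem:sumGE-psd} is stated for a deterministic path, but $d_t(B)=\tilde\mu_{T,t}-B^\top z_t$ is random whenever $z_t$ moves with past shocks, so the lemma does not apply verbatim. The positive-semidefiniteness step should instead rerun the Fej\'er block-sum identity with $\mathbb E_T$ taken inside, as the paper does; this is exactly the block-sum mechanism you describe, so the argument itself stands.
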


\begin{corollary}\label{cor:RA-oracle}
Under Assumption~\ref{ass:RA-orth}, if there exists $B_0$ such that $S_{d(B_0)d(B_0)}^+=0$, then $\Omega_R^+(r(B^*))=\Omega_R$. In particular, the conclusion holds if $\tilde\mu_{T,t}=B_0^\top z_t$ for every $t$ along the sequence.
\end{corollary}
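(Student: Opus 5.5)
The statement to prove is Corollary~\ref{cor:RA-oracle}. I would derive it directly from Theorem~\ref{thm:RA-fullobs}, taking its identity $\Omega_R^+(r(B))=\Omega_R+S_{d(B)d(B)}^+$ for every fixed conformable $B$ as given. The key observation is that $B^*$ minimizes $S_{d(B)d(B)}^+$ in the Loewner order over fixed $B$. If some $B_0$ makes this matrix zero, then the minimum is zero, and the equality clause of Theorem~\ref{thm:RA-fullobs} gives $\Omega_R^+(r(B^*))=\Omega_R$.

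\textbf{Step 1: expand $S_{d(B)d(B)}^+$ as a quadratic in $B$.} Since $d_t(B)=\tilde\mu_{T,t}-B^\top z_t$ and long-run matrices are bilinear in their fixed-lag limits (all of which exist by Assumption~\ref{ass:RA-orth}),
\[
S_{d(B)d(B)}^+=S_{\mu\mu}^+-B^\top S_{z\mu}^+-S_{\mu z}^+B+B^\top S_{zz}^+B .
\]
Completing the square with $S_{zz}^+\succ0$ gives
\[
S_{d(B)d(B)}^+=\big(S_{\mu\mu}^+-S_{\mu z}^+(S_{zz}^+)^{-1}S_{z\mu}^+\big)+(B-B_\mu)^\top S_{zz}^+(B-B_\mu),
\]
where $B_\mu:=(S_{zz}^+)^{-1}S_{z\mu}^+$.

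\textbf{Step 2: show $B^*=B_\mu$.} Write $g_t=\mu_{T,t}+e_{T,t}$. Local correct specification makes the constant $\bar m_T$ contribute nothing in the limit, and $S_{ez}^+=0$ by assumption. Together these give $S_{zg}^+=S_{z\mu}^+$, hence $B^*=B_\mu$. This step is the only delicate one. One must check that the cross terms involving the recentering constant $\bar m_T(\theta_T^\star)$ vanish under the HAC limits. They do because $\bar m_T\to0$ and the fixed-lag Ces\`aro averages of $z_t$ are centered after deterministic residualization, as set up in Section~\ref{sec:implementation}.

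\textbf{Step 3: conclude.} By Steps 1 and 2, $S_{d(B)d(B)}^+\succeq S_{d(B^*)d(B^*)}^+\succeq0$ for every $B$. Taking $B=B_0$, the left side is zero by hypothesis, so $S_{d(B^*)d(B^*)}^+=0$, and Theorem~\ref{thm:RA-fullobs} yields $\Omega_R^+(r(B^*))=\Omega_R$.

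\textbf{Particular case.} If $\tilde\mu_{T,t}=B_0^\top z_t$ for every $t$, then $d_t(B_0)\equiv0$. Every fixed-lag limit $\Gamma_{d(B_0)d(B_0)}(\ell)$ is therefore zero, so $S_{d(B_0)d(B_0)}^+=0$, and the first part applies.
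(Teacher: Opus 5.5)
Your proof is correct and follows essentially the same route as the paper: the completion-of-squares identity makes $B^*$ Loewner-minimal, and together with the lower bound $\Omega_R\preceq\Omega_R^+(r(B))$ from Theorem~\ref{thm:RA-fullobs} this forces $\Omega_R^+(r(B^*))=\Omega_R$ once some $B_0$ gives zero residual; the particular case is handled identically. The only difference is that the paper completes the square in $S^+_{r(B)r(B)}$ around $B^*$ and then uses $\Omega_R^+(r(B))=\Omega_R+S^+_{d(B)d(B)}$ for every $B$, so your Step~2 identification $B^*=(S_{zz}^+)^{-1}S_{z\mu}^+$ is not needed, although it is correct and is also established in the proof of that theorem.
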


\begin{proposition}\label{prop:RA-counterexample}
Consider the random-sampling scheme in Assumption~\ref{ass:sampling-app} with i.i.d.\ Bernoulli$(\rho)$ inclusion indicators, $\rho\in(0,1)$, independent of the moment map. Let the scalar moment be fixed under the assignment design, $g_t=\mu_{T,t}$, where $T^{-1/2}\sum_{t=1}^T\mu_{T,t}\to0$, $T^{-1}\sum_{t=1}^T\mu_{T,t}^2\to\sigma_\mu^2>0$, and all nonzero empirical lag products have vanishing Ces\`aro limits. Let $z_t=\mu_{T,t}$ and take $B=1$. Then $r_t(B)=0$, so the adjusted observed-pair HAC limit is zero, while the exact incomplete-observation design covariance is $\Omega_R^{\mathrm{obs}}=(1-\rho)\sigma_\mu^2>0$. Thus the analogue of the fully observed lower bound is false under incomplete sampling without additional restrictions.
\end{proposition}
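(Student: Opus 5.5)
The plan is to compute both sides of the claimed inequality directly, since in this example every object is explicit. The only randomness comes from the i.i.d.\ inclusion indicators, because $g_t=\mu_{T,t}$ is fixed under the assignment design. There are three steps: (i) show the adjusted observed-pair HAC limit is exactly zero; (ii) compute the exact incomplete-observation covariance of the normalized sample moment; (iii) check that this value agrees with the decomposition in \eqref{eq:OmegaR_sampling}, which confirms we are comparing the right objects.

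\emph{Step (i).} With $z_t=\mu_{T,t}=g_t$ and $B=1$, we get $r_t(B)=g_t-z_t\equiv0$ for every $t$. Hence every observed-pair lag product $R_{T,t}R_{T,t-\ell}\,r_t r_{t-\ell}$ vanishes identically, every weighted HAC sum is zero, and the adjusted limit is $0$. I will also note that $B=1$ is the HAC-minimizing coefficient, so the choice of $B$ is not arbitrary:
\begin{itemize}
\item the Ces\`aro assumptions give $S_{zz}^+=\sigma_\mu^2>0$ and $S_{zg}^+=\sigma_\mu^2$, because only lag zero survives;
\item therefore $B^*=(S_{zz}^+)^{-1}S_{zg}^+=1$.
\end{itemize}

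\emph{Step (ii).} On $\{N>0\}$, write
\[
\sqrt N\big(g_N-\bar m_T\big)=\sqrt{\tfrac{T\rho}{N}}\Big[(T\rho)^{-1/2}\textstyle\sum_t (R_{T,t}-\rho)\mu_{T,t}+\rho\,(T\rho)^{-1/2}\sum_t\mu_{T,t}\Big]-\sqrt N\,\bar m_T .
\]
The terms are handled as follows.
\begin{itemize}
\item The deterministic terms are $O(T^{-1/2}\sum_t\mu_{T,t})\to0$ by the hypothesis $T^{-1/2}\sum_t\mu_{T,t}\to0$. This also gives $\sqrt N\,\bar m_T\to0$.
\item By the weak law for i.i.d.\ Bernoulli variables, $N/(T\rho)\to_p1$.
\item The leading term is a sum of independent mean-zero summands. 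Its exact variance is $(T\rho)^{-1}\rho(1-\rho)\sum_t\mu_{T,t}^2\to(1-\rho)\sigma_\mu^2$.
\item All cross-date covariances vanish because the $R_{T,t}$ are independent of one another and of the fixed path.
\end{itemize}
Slutsky then identifies $\Omega_R^{\mathrm{obs}}=(1-\rho)\sigma_\mu^2$, which is strictly positive because $\rho<1$ and $\sigma_\mu^2>0$.

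\emph{Step (iii).} Here $e_{T,t}\equiv0$, $\Gamma_{g,\mathrm{hac}}(0)=\Gamma_{g,\mu}(0)=\sigma_\mu^2$, and all nonzero lags have vanishing Ces\`aro limits. So $\Omega_R^{+,\mathrm{obs}}=\sigma_\mu^2$ and $\Omega_\mu=\sigma_\mu^2$, and \eqref{eq:OmegaR_sampling} returns $\sigma_\mu^2-\rho\sigma_\mu^2$, matching Step (ii). Combining Steps (i) and (ii), the adjusted limit $0$ falls strictly below $\Omega_R^{\mathrm{obs}}$. This shows that the lower bound of Theorem~\ref{thm:RA-fullobs} fails under random sampling.

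The main obstacle is distributional rather than variance-level. Upgrading the exact variance calculation to a Gaussian limit needs a Lindeberg condition, namely $\max_t|\mu_{T,t}|=o(\sqrt T)$, and this does not follow from the second-moment limit alone. I would avoid that condition by stating the conclusion for the covariance of the normalized moment, which is what $\Omega_R^{\mathrm{obs}}$ denotes. If the Gaussian limit is wanted as well, I would add the Lindeberg condition as a harmless regularity remark. Since a uniformly bounded path satisfies it, the counterexample itself is unaffected.
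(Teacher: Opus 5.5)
Your proposal is correct, and your main computation is more direct than the paper's. The paper's proof consists of your Step (i) followed only by your Step (iii). It reads $\Omega_R^{+,\mathrm{obs}}=\sigma_\mu^2$ off the raw fixed-lag limits. It then shows $\Omega_\mu=\sigma_\mu^2$ by bounding the effect of recentering at $\bar\mu_T=o(T^{-1/2})$ by a constant times $|\bar\mu_T|\,T^{-1}\sum_t|\mu_{T,t}|+\bar\mu_T^2=o(1)$, and plugs both into \eqref{eq:OmegaR_sampling}. Your Step (ii) instead uses the fact that the summands $(R_{T,t}-\rho)\mu_{T,t}$ are independent. The variance is therefore exact, the uncentered $\sum_t\mu_{T,t}^2$ appears directly, and none of the mixing and envelope machinery behind Lemmas~\ref{lem:schemeA} and~\ref{lem:CLT-mom} is needed.

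The paper's route is shorter given \eqref{eq:OmegaR_sampling}, and it displays the mechanism: the sampling formula subtracts only $\rho\Omega_\mu$, whereas exact adjustment removes all of $\Omega_\mu$. However, it tacitly inherits the hypotheses under which that formula was derived. For a fixed path, the uniform $2+\delta$ envelope in Assumption~\ref{ass:smoothness} forces $\sup_{T,t}|\mu_{T,t}|<\infty$, which is exactly what delivers the Lindeberg condition you flag. Your caveat is therefore genuine relative to the statement as written. The path $\mu_{T,1}=-\mu_{T,T}=\sigma_\mu\sqrt{T/2}$, with zeros elsewhere, satisfies every listed hypothesis and has limiting variance $(1-\rho)\sigma_\mu^2$, but its limit is non-Gaussian.

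Two small tidy-ups. In Step (iii), include the one-line recentering bound, since $\Gamma_{g,\mu}$ is defined from the centered path. If you adopt the covariance-only reading, state it as the limit of the variance of the linear term $(T\rho)^{-1/2}\sum_t(R_{T,t}-\rho)\mu_{T,t}$. The random factor $\sqrt{T\rho/N}$ rules out an exact finite-$T$ identity for $\sqrt N(g_N-\bar m_T)$.

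Your remark that $B=1$ coincides with $B^*$ is a harmless strengthening. The paper does not need it, because the fully observed bound in Theorem~\ref{thm:RA-fullobs} is asserted for every fixed $B$.
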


\subsection{Simultaneous inference}\label{app:simultaneous}

The main-text procedures are pointwise in horizon. For a fixed set of horizons, Bonferroni and \v{S}id{\'a}k adjustments both give simultaneous guarantees under the joint Gaussian condition below, but pointwise intervals should not themselves be read as simultaneous bands. Let $z_q$ denote the $q$th standard-normal quantile. Define $\mathcal C_{j,T}^{\mathrm{Bonf}}:=[\widehat\psi_j \pm z_{1-\alpha/(2J)}\,\widehat s_j/\sqrt T]$ and $\mathcal C_{j,T}^{\mathrm{Sid}}:=[\widehat\psi_j \pm z_{\{1+(1-\alpha)^{1/J}\}/2}\,\widehat s_j/\sqrt T]$ for $j=1,\ldots,J$.

\begin{theorem}\label{thm:bonf}
Let $\psi_T=(\psi_{T,1},\dots,\psi_{T,J})^\top$ be a fixed-$J$ vector of functions of the estimand, and fix $0<\alpha<1$. Suppose that $\widehat s_j>0$ with probability approaching one for each $j$, and that $T_T:=(T_{1,T},\ldots,T_{J,T})^\top\Rightarrow Z$, where $T_{j,T}:=\sqrt T(\widehat\psi_j-\psi_{T,j})/\widehat s_j$ and $Z$ is centered Gaussian with $\Var(Z_j)=\tau_j^2\le1$. Then both $\liminf_{T\to\infty}\Pr(\psi_{T,j}\in\mathcal C_{j,T}^{\mathrm{Bonf}}\ \text{for all }j=1,\dots,J)\ge1-\alpha$ and $\liminf_{T\to\infty}\Pr(\psi_{T,j}\in\mathcal C_{j,T}^{\mathrm{Sid}}\ \text{for all }j=1,\dots,J)\ge1-\alpha$.
\end{theorem}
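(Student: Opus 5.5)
The plan is to reduce both claims to the elementary behaviour of a centered Gaussian vector with variances at most one, combined with the continuous mapping theorem. Write $c_B:=z_{1-\alpha/(2J)}$ and $c_S:=z_{\{1+(1-\alpha)^{1/J}\}/2}$. On the event $\{\widehat s_j>0\ \forall j\}$, which has probability approaching one, $\psi_{T,j}\in\mathcal C_{j,T}^{\mathrm{Bonf}}$ for all $j$ is equivalent to $\max_j|T_{j,T}|\le c_B$. The analogous statement holds for Šidák with $c_S$. Hence
\[
\Pr\big(\psi_{T,j}\in\mathcal C_{j,T}\ \forall j\big)=\Pr\big(\max_j|T_{j,T}|\le c\big)+o(1).
\]
The map $x\mapsto\max_j|x_j|$ is continuous, so $\max_j|T_{j,T}|\Rightarrow\max_j|Z_j|$. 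The portmanteau theorem applied to the closed set $\{\max_j|x_j|\le c\}$ gives only an upper bound, so I would instead use the open set $\{\max_j|x_j|<c\}$:
\[
\liminf_{T\to\infty}\Pr\big(\max_j|T_{j,T}|<c\big)\ge\Pr\big(\max_j|Z_j|<c\big).
\]
Since $\{\max_j|T_{j,T}|<c\}\subseteq\{\max_j|T_{j,T}|\le c\}$, this lower bound transfers to the coverage event. It therefore suffices to show that $\Pr(\max_j|Z_j|<c)\ge1-\alpha$ for both choices of $c$.

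For Bonferroni, apply the union bound:
\[
\Pr\big(\max_j|Z_j|\ge c_B\big)\le\sum_j\Pr\big(|Z_j|\ge c_B\big).
\]
Since $\tau_j\le1$, each term is at most $\Pr(|N(0,1)|\ge c_B)=\alpha/J$ when $\tau_j>0$, and it is $0$ when $\tau_j=0$. Summing over $j$ gives at most $\alpha$.

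For Šidák, the key input is Šidák's inequality for centered Gaussian vectors with arbitrary correlation:
\[
\Pr\Big(\bigcap_j\{|Z_j|<c\}\Big)\ge\prod_j\Pr\big(|Z_j|<c\big).
\]
I would cite it as a classical result (Šidák, 1967); it can be proved via Gaussian correlation for symmetric slabs, or more elementarily by an induction showing monotonicity in one correlation parameter. Degenerate coordinates with $\tau_j=0$ contribute factor one and can be dropped. Each remaining factor satisfies $\Pr(|Z_j|<c_S)\ge\Pr(|N(0,1)|<c_S)=(1-\alpha)^{1/J}$, because $\tau_j\le1$ only makes the slab more likely. Taking the product gives at least $1-\alpha$.

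The main obstacle is the Šidák step, since it needs a genuine Gaussian dependence inequality rather than the union bound. I would state it as a cited lemma rather than reprove it. A secondary point is boundary care when passing from weak convergence to the coverage probability, which the open-set portmanteau argument handles. It also uses that $\Pr(\max_j|Z_j|=c)=0$ is not required, because only the inequality direction is needed.
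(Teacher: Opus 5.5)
Your proof is correct and follows essentially the paper's argument: reduce coverage to the rectangle event $\{|T_{j,T}|\le c\ \text{for all } j\}$ on $\{\widehat s_j>0\ \text{for all } j\}$, then use the union bound for Bonferroni, and \v{S}id\'ak's inequality together with the marginal bound from $\tau_j\le1$ for \v{S}id\'ak. The differences are only technical: you pass to the limit with the open-set portmanteau lower bound rather than the paper's boundary-null continuity-set argument, and you cite \v{S}id\'ak's inequality directly for possibly singular covariance, whereas the paper extends the nonsingular case by perturbing to $Z+\varepsilon^{1/2}U$ and letting $\varepsilon\downarrow0$.
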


\subsection{Local LP--VAR comparison}\label{app:lpvar}

The LP--VAR comparison is a local first-order statement for fixed-dimensional procedures and a fixed finite horizon set. Fix $\mathcal H=\{h_1,\dots,h_J\}$. For each $h\in\mathcal H$, consider the just-identified OLS local-projection moment $g_{t,h}^{LP}(\theta_h):=\psi_t(y_{t+h}-\psi_t^\top\theta_h)$, where $\psi_t=(1,x_t,c_t^\top)^\top$ and $\theta_h=(\alpha_h,\beta_h,\gamma_h^\top)^\top$. Let $s_x$ select the $x_t$ coefficient, so $\beta_h=s_x^\top\theta_h$. Write $\bar m_{h,T}(\theta):=T^{-1}\sum_{t=1}^T\mathbb E_T[g_{t,h}^{LP}(\theta)]$ and $Q_{h,T}:=T^{-1}\sum_{t=1}^T\mathbb E_T[\psi_t\psi_t^\top]$.

\begin{theorem}\label{thm:LP-local}
For each $h\in\mathcal H$, let $\theta_{h,T}^\star$ be the unique solution to $\bar m_{h,T}(\theta)=0$ for all large $T$. Suppose $Q_{h,T}\to Q_h\succ0$, $T^{-1}\sum_t\mathbb E_T[x_t]\to0$, and $T^{-1}\sum_t\mathbb E_T[x_t c_t^\top]\to0$. Suppose also that $\bar m_{h,T}(\theta_{0,h})=T^{-1/2}d_{h,T}+o(T^{-1/2})$ with $d_{h,T}\to d_h=(d_{0,h},0,d_{c,h}^\top)^\top$, and that the design-based CLT around $\theta_{h,T}^\star$ holds. Then, for each fixed $h\in\mathcal H$, \(\sqrt T(\widehat\beta_h^{LP}-\beta_{0,h})\Rightarrow\mathcal{N}(0,v_{LP,h})\), where $v_{LP,h}$ is the design-based asymptotic variance of $\sqrt T(\widehat\beta_h^{LP}-\beta_{h,T}^\star)$.
\end{theorem}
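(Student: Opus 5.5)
The plan is to split $\sqrt T(\widehat\beta_h^{LP}-\beta_{0,h})$ into the design-based estimation error around the sample-period estimand and a deterministic drift term:
\[
\sqrt T(\widehat\beta_h^{LP}-\beta_{0,h})=\sqrt T(\widehat\beta_h^{LP}-\beta_{h,T}^\star)+\sqrt T(\beta_{h,T}^\star-\beta_{0,h}).
\]
The first term converges to $\mathcal N(0,v_{LP,h})$ by the assumed design-based CLT around $\theta_{h,T}^\star$, which is the content of Theorem~\ref{thm:AN} specialized to the just-identified LP moment. By Slutsky's lemma, the theorem then reduces to the deterministic claim $\sqrt T(\beta_{h,T}^\star-\beta_{0,h})\to0$.

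To prove that claim, I will use the linearity of the LP moment in $\theta_h$. The Jacobian of $g_{t,h}^{LP}$ is $-\psi_t\psi_t^\top$, which does not depend on $\theta_h$, so the sample-period mean moment is exactly affine:
\[
\bar m_{h,T}(\theta)=\bar m_{h,T}(\theta_{0,h})-Q_{h,T}(\theta-\theta_{0,h}).
\]
Evaluating at $\theta_{h,T}^\star$ and using $\bar m_{h,T}(\theta_{h,T}^\star)=0$ gives
\[
\theta_{h,T}^\star-\theta_{0,h}=Q_{h,T}^{-1}\bar m_{h,T}(\theta_{0,h}).
\]
The inverse exists for all large $T$ because $Q_{h,T}\to Q_h\succ0$. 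Substituting the local expansion $\bar m_{h,T}(\theta_{0,h})=T^{-1/2}d_{h,T}+o(T^{-1/2})$ yields
\[
\sqrt T(\beta_{h,T}^\star-\beta_{0,h})=s_x^\top Q_{h,T}^{-1}d_{h,T}+o(1)\to s_x^\top Q_h^{-1}d_h .
\]

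It remains to show that $s_x^\top Q_h^{-1}d_h=0$. The $x$-row of $Q_{h,T}$ is $T^{-1}\sum_t\mathbb E_T[(x_t,\;x_t^2,\;x_tc_t^\top)]$. Under the stated conditions $T^{-1}\sum_t\mathbb E_T[x_t]\to0$ and $T^{-1}\sum_t\mathbb E_T[x_tc_t^\top]\to0$, this row converges to $(0,q_{xx},0)$. By symmetry the $x$-column has the same limit, so in the ordering $(1,x,c)$ the matrix $Q_h$ is block diagonal, with the scalar block $q_{xx}$ separated from the $(1,c)$ block. Because $Q_h\succ0$, we have $q_{xx}>0$, and $Q_h^{-1}$ inherits the same block structure. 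Hence
\[
s_x^\top Q_h^{-1}=q_{xx}^{-1}s_x^\top,
\qquad
s_x^\top Q_h^{-1}d_h=q_{xx}^{-1}\,s_x^\top d_h=0,
\]
since the $x$-component of $d_h=(d_{0,h},0,d_{c,h}^\top)^\top$ is zero. This shows the drift term vanishes, and the conclusion follows from the Slutsky step.

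The main point of care is the passage from $Q_{h,T}^{-1}d_{h,T}$ to its limit: it needs continuity of matrix inversion at the positive definite limit $Q_h$ together with $d_{h,T}\to d_h$. The substantive content is the block-diagonality argument: intercept and control drifts do not leak into the shock coefficient precisely because the shock is asymptotically orthogonal, on average over the sample period, to the intercept and the controls.
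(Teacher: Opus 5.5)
Your proposal is correct and follows the paper's proof essentially step for step. Like the paper, you write the LP mean moment as exactly affine in $\theta$ to get $\theta_{h,T}^\star-\theta_{0,h}=Q_{h,T}^{-1}\bar m_{h,T}(\theta_{0,h})$, use the two orthogonality limits to make $Q_h$ block diagonal so that the shock coordinate of $Q_h^{-1}d_h$ vanishes, and then add this $o(1)$ deterministic shift to the assumed design-based CLT via Slutsky.
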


For the VAR comparison, let $\phi\in\mathbb{R}^{p_\phi}$ denote the VAR parameter, let $g_t^{VAR}(\phi)$ denote the corresponding GMM moment, and let $r(\phi)\in\mathbb{R}^J$ collect the impulse responses at the horizon set $\mathcal H$. Let $A_\phi\succ0$ be fixed. Define $\bar m_T^{VAR}(\phi):=T^{-1}\sum_{t=1}^T\mathbb E_T[g_t^{VAR}(\phi)]$, $Q_T^{VAR}(\phi):=\bar m_T^{VAR}(\phi)^\top A_\phi\bar m_T^{VAR}(\phi)$, $G_{\phi,T}:=T^{-1}\sum_t\mathbb E_T[\partial_\phi g_t^{VAR}(\phi_0)]$, and $H_{\phi,T}:=(G_{\phi,T}^\top A_\phi G_{\phi,T})^{-1}G_{\phi,T}^\top A_\phi$ when the inverse exists. Suppose $G_{\phi,T}\to G_\phi$, put $H_\phi:=(G_\phi^\top A_\phi G_\phi)^{-1}G_\phi^\top A_\phi$, let $R:=\nabla_\phi r(\phi_0)$, and let $\Sigma_\phi$ denote the design-based asymptotic covariance of $\sqrt T(\widehat\phi-\phi_T^\star)$.

\begin{theorem}\label{thm:VAR-local}
Suppose $G_\phi$ has full column rank, $\bar m_T^{VAR}(\phi_0)=T^{-1/2}d_{VAR,T}+o(T^{-1/2})$ with $d_{VAR,T}\to d_{VAR}$, and $r(\cdot)$ is continuously differentiable at $\phi_0$. Suppose also that there is a sequence $\phi_T^\star\to\phi_0$ of interior local minimizers of $Q_T^{VAR}$ such that, with $D_{\phi,T}^\star:=\nabla_\phi\bar m_T^{VAR}(\phi_T^\star)$, $(D_{\phi,T}^\star)^\top A_\phi\bar m_T^{VAR}(\phi_T^\star)=0$ and $D_{\phi,T}^\star=G_{\phi,T}+o(1)$. Finally, suppose that, for every sequence $\phi_T\to\phi_0$, \(\bar m_T^{VAR}(\phi_T)=\bar m_T^{VAR}(\phi_0)+G_{\phi,T}(\phi_T-\phi_0)+a_T(\phi_T)\), with \(\|a_T(\phi_T)\|=o(\|\phi_T-\phi_0\|)+o(T^{-1/2})\). If $\sqrt T(\widehat\phi-\phi_T^\star)\Rightarrow\mathcal N(0,\Sigma_\phi)$, then \(\sqrt T\big(r(\widehat\phi)-r(\phi_0)\big)\Rightarrow \mathcal{N}(b_{VAR},R\Sigma_\phi R^\top)\), where \(b_{VAR}:=-RH_\phi d_{VAR}\).
\end{theorem}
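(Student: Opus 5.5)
The plan is to treat the VAR impulse responses as a smooth function of the VAR parameter and to split the estimation error at the moving estimand:
\[
\sqrt T\big(r(\widehat\phi)-r(\phi_0)\big)=\sqrt T\big(r(\widehat\phi)-r(\phi_T^\star)\big)+\sqrt T\big(r(\phi_T^\star)-r(\phi_0)\big).
\]
The first term is handled by the delta method. Since $\sqrt T(\widehat\phi-\phi_T^\star)\Rightarrow\mathcal N(0,\Sigma_\phi)$, we have $\widehat\phi-\phi_T^\star=O_p(T^{-1/2})$. Because $\phi_T^\star\to\phi_0$, both $\widehat\phi$ and $\phi_T^\star$ eventually lie in a neighbourhood of $\phi_0$ where $r$ is continuously differentiable. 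A mean-value expansion (row by row) then gives $\sqrt T(r(\widehat\phi)-r(\phi_T^\star))=R\sqrt T(\widehat\phi-\phi_T^\star)+o_p(1)$, and this converges to $\mathcal N(0,R\Sigma_\phi R^\top)$.

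The second term is deterministic and supplies the bias. The main step is to show that $\sqrt T(\phi_T^\star-\phi_0)\to -H_\phi d_{VAR}$.

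\textbf{Rate.} Write $\Delta_T:=\phi_T^\star-\phi_0$. The assumed local expansion gives
\[
\bar m_T^{VAR}(\phi_T^\star)=T^{-1/2}d_{VAR,T}+G_{\phi,T}\Delta_T+a_T,\qquad \|a_T\|=o(\|\Delta_T\|)+o(T^{-1/2}).
\]
Substitute this into the first-order condition $(D_{\phi,T}^\star)^\top A_\phi\bar m_T^{VAR}(\phi_T^\star)=0$ and use $D_{\phi,T}^\star=G_{\phi,T}+o(1)$. The resulting bounded terms are absorbed using $\|\bar m_T^{VAR}(\phi_T^\star)\|=O(\|\Delta_T\|+T^{-1/2})$. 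This yields
\[
G_{\phi,T}^\top A_\phi G_{\phi,T}\Delta_T=-G_{\phi,T}^\top A_\phi T^{-1/2}d_{VAR,T}+o(\|\Delta_T\|)+o(T^{-1/2}).
\]
Since $G_\phi$ has full column rank and $G_{\phi,T}\to G_\phi$, the matrix $G_{\phi,T}^\top A_\phi G_{\phi,T}$ is invertible for large $T$ with bounded inverse. Hence $\|\Delta_T\|\le C T^{-1/2}+o(\|\Delta_T\|)$, which gives $\Delta_T=O(T^{-1/2})$.

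\textbf{Limit.} With the rate in hand, all the $o(\|\Delta_T\|)$ remainders are $o(T^{-1/2})$. Multiplying the display by $\sqrt T$ therefore gives $\sqrt T\Delta_T=-H_{\phi,T}d_{VAR,T}+o(1)\to -H_\phi d_{VAR}$.

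A second delta-method step finishes the bias: $\sqrt T(r(\phi_T^\star)-r(\phi_0))=R\sqrt T\Delta_T+o(1)\to b_{VAR}$. Combining this deterministic limit with the first term by Slutsky's lemma gives the stated limit $\mathcal N(b_{VAR},R\Sigma_\phi R^\top)$.

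The main obstacle is the rate step. There, the $o(1)$ error in the Jacobian $D_{\phi,T}^\star$ multiplies $\bar m_T^{VAR}(\phi_T^\star)$, and the argument is circular until $\|\Delta_T\|$ is shown to be $O(T^{-1/2})$. I would handle it by first bounding the residual moment linearly in $\|\Delta_T\|+T^{-1/2}$. Then I would absorb the $o(\|\Delta_T\|)$ terms into the left-hand side using the smallest eigenvalue of $G_\phi^\top A_\phi G_\phi$.
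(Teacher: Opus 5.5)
Your proposal is correct and follows essentially the same route as the paper. You substitute the local expansion into the first-order condition, use the bounded inverse of $G_{\phi,T}^\top A_\phi G_{\phi,T}$ to absorb the $o(\|\phi_T^\star-\phi_0\|)$ terms and obtain the $T^{-1/2}$ rate, read off $\sqrt T(\phi_T^\star-\phi_0)\to -H_\phi d_{VAR}$, and finish with the delta method and Slutsky. The only cosmetic difference is that the paper expands $r$ once around $\phi_0$, which needs only differentiability there, rather than expanding separately around $\phi_T^\star$ and $\phi_0$.
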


\begin{proposition}\label{prop:tangent}
Let $\Sigma_{LP}$ denote the positive-definite design-based asymptotic covariance of the stacked LP estimator for the IRF vector $\beta\in\mathbb{R}^J$. Let $C$ be full row rank with $\ker(C)=\operatorname{Im}(R)$. Suppose the VAR IRF estimator is asymptotically equivalent, under local correct specification, to the restricted Gaussian GMM estimator of $\beta$ subject to $C(\beta-\beta_0)=0$. Then $C\Sigma_{LP}C^\top$ is nonsingular and the VAR IRF covariance satisfies \(\Sigma_{VAR}=\Sigma_{LP}-\Sigma_{LP}C^\top(C\Sigma_{LP}C^\top)^{-1}C\Sigma_{LP}\preceq \Sigma_{LP}\).
\end{proposition}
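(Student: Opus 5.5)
Two facts carry the argument: nonsingularity of $C\Sigma_{LP}C^\top$ follows from $\Sigma_{LP}\succ0$ and full row rank of $C$, and $\Sigma_{VAR}$ is the asymptotic covariance of the efficient restricted Gaussian GMM (minimum-distance) estimator. Since $\Sigma_{LP}\succ0$ and $C$ has full row rank, for any $v\neq0$ we have $C^\top v\neq0$ and hence $v^\top C\Sigma_{LP}C^\top v>0$. Therefore $C\Sigma_{LP}C^\top\succ0$ is invertible.

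For the covariance formula, I use the maintained asymptotic equivalence. The VAR IRF estimator behaves like $\widetilde\beta=\arg\min_{\beta:\,C(\beta-\beta_0)=0}(\widehat\beta-\beta)^\top\Sigma_{LP}^{-1}(\widehat\beta-\beta)$, where $\widehat\beta$ is the stacked LP estimator with $\sqrt T(\widehat\beta-\beta_T^\star)\Rightarrow\mathcal N(0,\Sigma_{LP})$; the design CLT here comes from Theorem~\ref{thm:AN} applied horizon by horizon, as in Theorem~\ref{thm:LP-local}. The Lagrangian first-order conditions give the closed form
\[
\widetilde\beta=\widehat\beta-\Sigma_{LP}C^\top(C\Sigma_{LP}C^\top)^{-1}C(\widehat\beta-\beta_0).
\]
Under local correct specification, $C(\beta_T^\star-\beta_0)=o(T^{-1/2})$. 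So $\sqrt T(\widetilde\beta-\beta_T^\star)=P\sqrt T(\widehat\beta-\beta_T^\star)+o_p(1)$ with $P:=I-\Sigma_{LP}C^\top(C\Sigma_{LP}C^\top)^{-1}C$, and the limiting covariance is $P\Sigma_{LP}P^\top$.

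A direct expansion then gives the stated formula. Write $M:=\Sigma_{LP}C^\top(C\Sigma_{LP}C^\top)^{-1}C\Sigma_{LP}$. The two cross terms each equal $-M$, and the quadratic term equals $+M$, so
\[
P\Sigma_{LP}P^\top=\Sigma_{LP}-M,
\]
which is $\Sigma_{VAR}$ as stated. The Loewner inequality $\Sigma_{VAR}\preceq\Sigma_{LP}$ holds because $M=(C\Sigma_{LP})^\top(C\Sigma_{LP}C^\top)^{-1}(C\Sigma_{LP})$ is a congruence of a positive definite matrix, so $M\succeq0$.

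The main obstacle is translating "asymptotically equivalent to the restricted Gaussian GMM estimator" into the linear representation $\sqrt T(\widetilde\beta-\beta_T^\star)=P\sqrt T(\widehat\beta-\beta_T^\star)+o_p(1)$. This requires that the restriction hold for the finite-history estimand up to $o(T^{-1/2})$. I would use Assumption~\ref{ass:local-correct-specification} and the local VAR expansion of Theorem~\ref{thm:VAR-local} with $d_{VAR}=0$, which together place $\beta_T^\star$ in $\beta_0+\operatorname{Im}(R)$ to first order. Once that holds, the remainder is linear algebra.
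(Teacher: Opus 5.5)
Your proof is correct and reaches the same Schur-complement matrix as the paper, but by a different justification. The paper takes $X\sim\mathcal N(0,\Sigma_{LP})$ and reads the restricted covariance off the Gaussian conditioning formula for $\Var(X\mid CX=0)$. It then simply asserts that the tangent-space equivalence identifies the VAR influence vector with this restricted one. You instead solve the restricted minimum-distance problem in closed form, obtain the influence map $P=I-\Sigma_{LP}C^\top(C\Sigma_{LP}C^\top)^{-1}C$, and expand $P\Sigma_{LP}P^\top$. The two agree because $PX=X-\mathbb E[X\mid CX]$.

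What your route buys is an explicit version of the step the paper leaves implicit. It also shows that the formula, and the Loewner ordering, depend on using the efficient weight $\Sigma_{LP}^{-1}$ in the restricted problem. With a generic weight, the projected covariance need not equal the conditional variance, and need not even be dominated by $\Sigma_{LP}$. The paper's route is shorter and directly gives the ``variance reduction by conditioning on the restriction'' interpretation.

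Two small corrections.

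\begin{itemize}
\item The centering step you call the main obstacle is not needed for the covariance claim. A deterministic drift $C\sqrt T(\beta_T^\star-\beta_0)$ only shifts the mean of the limit, not $P\Sigma_{LP}P^\top$. If you do want that step, the right reference is Theorem~\ref{thm:LP-local} (equivalently, the first part of Proposition~\ref{prop:LP-VAR-drift-design}), which gives $\sqrt T(\beta_T^\star-\beta_0)\to0$ outright. Theorem~\ref{thm:VAR-local} concerns the VAR pseudo-true value $\phi_T^\star$, not the LP estimand. Assumption~\ref{ass:local-correct-specification} alone does not locate $\beta_T^\star$ relative to $\beta_0+\operatorname{Im}(R)$.
\item $\Sigma_{LP}$ needs the joint CLT for the stacked LP estimator, that is, Theorem~\ref{thm:AN} applied to the stacked moment. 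Horizon-by-horizon marginal limits are not enough.
\end{itemize}
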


\begin{proposition}\label{prop:LP-VAR-drift-design}
Under the conditions of Theorem~\ref{thm:LP-local}, $\sqrt T(\beta_{h,T}^\star-\beta_{0,h})\to0$ for each fixed $h\in\mathcal H$. Under the conditions of Theorem~\ref{thm:VAR-local}, $\sqrt T(r(\phi_T^\star)-r(\phi_0))\to b_{VAR}=-RH_\phi d_{VAR}$. If the tangent-space condition of Proposition~\ref{prop:tangent} also holds, then the VAR covariance for the impulse-response vector satisfies $\Sigma_{VAR}\preceq\Sigma_{LP}$. No covariance ordering is implied without that tangent-space condition.
\end{proposition}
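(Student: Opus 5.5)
The statement to prove is Proposition~\ref{prop:LP-VAR-drift-design}, which has three parts. I would treat each clause separately, reusing the linearizations already built into Theorems~\ref{thm:LP-local} and~\ref{thm:VAR-local}.

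\emph{LP clause.} I would expand the just-identified sample-period normal equations around $\theta_{0,h}$. Because the LP moment is linear in $\theta_h$, we have exactly
\[
0=\bar m_{h,T}(\theta_{h,T}^\star)=\bar m_{h,T}(\theta_{0,h})-Q_{h,T}(\theta_{h,T}^\star-\theta_{0,h}).
\]
This gives $\sqrt T(\theta_{h,T}^\star-\theta_{0,h})=Q_{h,T}^{-1}d_{h,T}+o(1)$, and hence the limit $Q_h^{-1}d_h$.

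The key step is to show that $s_x^\top Q_h^{-1}d_h=0$. The conditions $T^{-1}\sum_t\mathbb E_T[x_t]\to0$ and $T^{-1}\sum_t\mathbb E_T[x_tc_t^\top]\to0$ make the limit $Q_h$ block diagonal between the $x$-coordinate and the $(1,c_t)$ block. Hence $Q_h^{-1}$ has the same block structure. Since the $x$-entry of $d_h$ is zero, the $x$-entry of $Q_h^{-1}d_h$ is $(Q_h^{-1})_{xx}\cdot 0=0$. This yields $\sqrt T(\beta_{h,T}^\star-\beta_{0,h})\to0$. It is the only non-routine point, and the block-orthogonality argument handles it.

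\emph{VAR clause.} Use the population first-order condition $(D^\star_{\phi,T})^\top A_\phi\bar m_T^{VAR}(\phi_T^\star)=0$ together with the assumed expansion evaluated at $\phi_T=\phi_T^\star$. Substituting gives
\[
(G_{\phi,T}+o(1))^\top A_\phi\bigl[T^{-1/2}d_{VAR,T}+G_{\phi,T}(\phi_T^\star-\phi_0)+o(\|\phi_T^\star-\phi_0\|)+o(T^{-1/2})\bigr]=0.
\]
Full column rank of $G_\phi$ makes $G_{\phi,T}^\top A_\phi G_{\phi,T}$ invertible for large $T$. Solving gives
\[
(\phi_T^\star-\phi_0)(1+o(1))=-H_{\phi,T}T^{-1/2}d_{VAR,T}+o(T^{-1/2}).
\]
From this, first $\|\phi_T^\star-\phi_0\|=O(T^{-1/2})$, and then $\sqrt T(\phi_T^\star-\phi_0)\to -H_\phi d_{VAR}$. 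The delta method with $R=\nabla_\phi r(\phi_0)$ then gives $\sqrt T(r(\phi_T^\star)-r(\phi_0))\to -RH_\phi d_{VAR}=b_{VAR}$.

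The one delicate point here is absorbing the $o(\|\phi_T^\star-\phi_0\|)$ remainder. I would handle it by noting that this remainder can be moved to the left-hand side as a $(1+o(1))$ factor on $\phi_T^\star-\phi_0$.

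\emph{Covariance clause.} Under the tangent-space condition this is exactly the conclusion of Proposition~\ref{prop:tangent}: the difference $\Sigma_{LP}-\Sigma_{VAR}=\Sigma_{LP}C^\top(C\Sigma_{LP}C^\top)^{-1}C\Sigma_{LP}$ is positive semidefinite. The "no ordering without it" sentence is a remark rather than a claim needing proof. If a justification is wanted, I would sketch a scalar example in which the VAR map $R\Sigma_\phi R^\top$ exceeds $\Sigma_{LP}$ in some direction.
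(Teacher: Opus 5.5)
Your proposal is correct and follows the paper's route. The paper's proof cites the drift calculations already carried out in the proofs of Theorems~\ref{thm:LP-local} and~\ref{thm:VAR-local}, namely the exact LP normal-equation identity with block-diagonal $Q_h$ and the first-order-condition expansion with absorption of the $o(\|\phi_T^\star-\phi_0\|)$ remainder, together with Proposition~\ref{prop:tangent}; it treats the final sentence as a remark, and you inline exactly these arguments. One cosmetic point: that remainder is small only in norm, not a scalar multiple of $\phi_T^\star-\phi_0$, so write the absorption as the inequality $\|\phi_T^\star-\phi_0\|\le CT^{-1/2}+\eta_T\|\phi_T^\star-\phi_0\|+o(T^{-1/2})$ with $\eta_T\to0$, as the paper does, rather than as a $(1+o(1))$ factor.
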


The bias term $b_{VAR}$ is a local mean-drift term. If the VAR moment is correctly specified date by date so that $d_{VAR}=0$, then this design-drift term vanishes and the impulse-response limit is centered at the reference value $r(\phi_0)$.

\subsection{Overidentified GMM under local misspecification}\label{app:misspecified-gmm}

The main theorem uses Assumption~\ref{ass:local-correct-specification} to keep the first-order covariance in the familiar GMM sandwich form based on the centered moment innovations. This subsection records the expansion for the fixed-weight criterion when the pseudo-true mean is not negligible at the $T^{-1/2}$ scale.

If the weight matrix is estimated and the pseudo-true mean does not vanish, the influence function of the weight estimator enters the same first-order condition through the derivative of $G_T(\theta)^\top A\bar m_T(\theta)$ with respect to $A$. The statement below sets $\widehat A_N\equiv A$ to isolate the design issue.

Let $J_{T,t}(\theta):=\nabla_\theta g_t(W_t,\theta)$, $G_T(\theta):=T^{-1}\sum_t\mathbb E_T[J_{T,t}(\theta)]$, $\widehat G_N(\theta):=T^{-1}\sum_{t=1}^T J_{T,t}(\theta)$, and $\bar m_T^\star:=\bar m_T(\theta_T^\star)$. Suppose $\theta_T^\star$ is an interior minimizer of $Q_T(\theta)=\bar m_T(\theta)^\top A\bar m_T(\theta)$, so $G_T(\theta_T^\star)^\top A\bar m_T^\star=0$. Define
\[
\mathcal D_T:=\nabla_\theta\big(G_T(\theta)^\top A\bar m_T(\theta)\big)\big|_{\theta=\theta_T^\star},
\qquad
\zeta_{T,t}:=G_T(\theta_T^\star)^\top A e_{T,t}+\big(J_{T,t}(\theta_T^\star)-G_T(\theta_T^\star)\big)^\top A\bar m_T^\star .
\]

\begin{proposition}\label{prop:misspecified-gmm}
Consider the fixed-weight estimator obtained by setting $\widehat A_N\equiv A$ in the sample criterion. Suppose Assumptions~\ref{ass:design-environment}, \ref{ass:smoothness}, \ref{ass:uniform-lln}, \ref{ass:gmm}, \ref{ass:estimand-separation}, and~\ref{ass:gmm-interior} hold for this fixed $A$, but do not impose Assumption~\ref{ass:local-correct-specification}. Suppose $\mathcal D_T\to\mathcal D$ with $\mathcal D$ nonsingular. Suppose also that, on the high-probability event where the fixed-weight sample first-order condition holds, the local moment satisfies
\[
\widehat G_N(\widehat\theta_N)^\top A g_N(\widehat\theta_N)
-\widehat G_N(\theta_T^\star)^\top A g_N(\theta_T^\star)
=\mathcal D_T(\widehat\theta_N-\theta_T^\star)+o_p(T^{-1/2}),
\]
and
\[
T^{-1/2}\sum_t e_{T,t}=O_p(1),
\qquad
T^{-1/2}\sum_t\operatorname{vec}\!\left(J_{T,t}(\theta_T^\star)-G_T(\theta_T^\star)\right)=O_p(1).
\]
Suppose finally that $T^{-1/2}\sum_{t=1}^T\zeta_{T,t}\Rightarrow\mathcal N(0,\Omega_\zeta)$. Then
\[
\sqrt T(\widehat\theta_N-\theta_T^\star)
=-\mathcal D_T^{-1}T^{-1/2}\sum_{t=1}^T\zeta_{T,t}+o_p(1)
\Rightarrow \mathcal N(0,\mathcal D^{-1}\Omega_\zeta\mathcal D^{-\top}).
\]
\end{proposition}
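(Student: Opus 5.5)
The argument works directly from the sample first-order condition, linearized around $\theta_T^\star$. The first part of Theorem~\ref{thm:AN} gives consistency, $\widehat\theta_N-\theta_T^\star\to_p0$. That part uses only Assumptions~\ref{ass:design-environment}, \ref{ass:smoothness}, \ref{ass:uniform-lln}, \ref{ass:gmm}, and~\ref{ass:estimand-separation}, and not Assumption~\ref{ass:local-correct-specification}, so it applies here with $\widehat A_N\equiv A$. By Assumption~\ref{ass:gmm-interior}, with probability approaching one $\widehat\theta_N$ is an interior local minimizer of $g_N(\theta)^\top A g_N(\theta)$. On that event the sample first-order condition $\widehat G_N(\widehat\theta_N)^\top A g_N(\widehat\theta_N)=0$ holds. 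Substituting it into the assumed local expansion gives
\[
0=\widehat G_N(\theta_T^\star)^\top A g_N(\theta_T^\star)+\mathcal D_T(\widehat\theta_N-\theta_T^\star)+o_p(T^{-1/2}).
\]

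The next step decomposes the score at $\theta_T^\star$. Write $\bar e_T:=T^{-1}\sum_t e_{T,t}$ and $\bar J_T:=T^{-1}\sum_t\big(J_{T,t}(\theta_T^\star)-G_T(\theta_T^\star)\big)$. Then $g_N(\theta_T^\star)=\bar m_T^\star+\bar e_T$ and $\widehat G_N(\theta_T^\star)=G_T(\theta_T^\star)+\bar J_T$. Expanding the product gives four terms:
\[
G_T(\theta_T^\star)^\top A\bar m_T^\star
+G_T(\theta_T^\star)^\top A\bar e_T
+\bar J_T^\top A\bar m_T^\star
+\bar J_T^\top A\bar e_T .
\]
The first term vanishes by the population first-order condition. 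This condition is available because $\theta_T^\star$ is an interior minimizer of $Q_T$, which follows from the deterministic ball in Assumption~\ref{ass:gmm-interior} together with differentiability of $\bar m_T$ under the envelope bounds of Assumption~\ref{ass:smoothness}. The second and third terms together equal exactly $T^{-1}\sum_t\zeta_{T,t}$. The fourth term is a product of two $O_p(T^{-1/2})$ factors by the two stated tightness conditions, so it is $O_p(T^{-1})=o_p(T^{-1/2})$.

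Combining the two displays gives $\mathcal D_T(\widehat\theta_N-\theta_T^\star)=-T^{-1}\sum_t\zeta_{T,t}+o_p(T^{-1/2})$. Since $\mathcal D_T\to\mathcal D$ is nonsingular, $\mathcal D_T^{-1}$ exists for large $T$ and is bounded. Premultiplying by $\sqrt T\,\mathcal D_T^{-1}$ yields the stated linear representation with an $o_p(1)$ remainder; the complement of the high-probability event contributes only $o_p(1)$. Combining the assumed CLT for $T^{-1/2}\sum_t\zeta_{T,t}$ with $\mathcal D_T^{-1}\to\mathcal D^{-1}$ via Slutsky's lemma then delivers the limit $\mathcal N(0,\mathcal D^{-1}\Omega_\zeta\mathcal D^{-\top})$.

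The step needing the most care is the bookkeeping in the score decomposition. The point is that the nonzero pseudo-true mean $\bar m_T^\star$ is not small, so the Jacobian fluctuation $\bar J_T$ enters at first order and cannot be absorbed into the remainder; only the cross term $\bar J_T^\top A\bar e_T$ is negligible. Everything else is carried by the hypotheses: the curvature remainder is assumed outright in the displayed local expansion, and the problem is thereby reduced to this algebraic bookkeeping.
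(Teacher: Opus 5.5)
Your proof is correct and follows essentially the same route as the paper. The shared steps are consistency from Theorem~\ref{thm:AN}(i), substitution of the sample first-order condition into the assumed local expansion, and the four-term decomposition of $\widehat G_N(\theta_T^\star)^\top A g_N(\theta_T^\star)$ in which the population first-order condition removes the leading term and $\bar J_T^\top A\bar e_T=O_p(T^{-1})$, followed by inversion of $\mathcal D_T$ and Slutsky. Your justification of the population first-order condition via the deterministic ball in Assumption~\ref{ass:gmm-interior} is a useful detail that the paper leaves implicit.
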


Proposition~\ref{prop:misspecified-gmm} explains the role of Assumption~\ref{ass:local-correct-specification} in the main text. If $\sqrt T\|\bar m_T^\star\|=o(1)$, then the Jacobian-estimation contribution satisfies $T^{-1/2}\sum_{t=1}^T(J_{T,t}-G_T)^\top A\bar m_T^\star=o_p(1)$, and the derivative term in $\mathcal D_T$ involving $\bar m_T^\star$ is negligible under the same smoothness conditions. The expansion then reduces to the locally correctly specified GMM sandwich based on $G_T^\top A e_{T,t}$.

When $\bar m_T^\star$ is first-order nonzero, fixed-weight inference must instead use the augmented misspecification moment $\zeta_{T,t}$. Estimated weights or first-stage components require their corresponding influence terms to be stacked in the same first-order condition.

\subsection{Two-step generated regressors}\label{app:two-step}

This subsection treats generated shocks and proxy-style first stages as part of the moment. If the first-stage construction is included in the design distribution, the variance limit is formed from the augmented moment vector below, which stacks the first-stage influence contribution with the second-stage moment. Ignoring the first stage is therefore a conditioning convention, not an implication of the design-based argument.

Let $\pi\in\mathbb{R}^{p_\pi}$ be a first-stage parameter and $\theta\in\mathbb{R}^{p_\theta}$ a second-stage parameter. Let $g_t(\theta,\pi)\in\mathbb R^k$ be smooth in both arguments, and define $\bar m_T(\theta,\pi):=T^{-1}\sum_t\mathbb E_T[g_t(\theta,\pi)]$. For an estimand pair $(\theta_T^\star,\pi_T^\star)$, write $G_{\theta,T}:=T^{-1}\sum_t\mathbb E_T[\partial_\theta g_t(\theta_T^\star,\pi_T^\star)]$ and $G_{\pi,T}:=T^{-1}\sum_t\mathbb E_T[\partial_\pi g_t(\theta_T^\star,\pi_T^\star)]$, and let $H_{\theta,T}:=(G_{\theta,T}^\top A G_{\theta,T})^{-1}G_{\theta,T}^\top A$ when the inverse exists. Let $IF_{\pi,T,t}$ denote a first-stage influence function, and define the augmented second-stage moment as $\widetilde g_{T,t}:=g_t(\theta_T^\star,\pi_T^\star)+G_{\pi,T}IF_{\pi,T,t}$.

\begin{theorem}\label{thm:twostep}
Suppose $\sqrt T\|\bar m_T(\theta_T^\star,\pi_T^\star)\|=o(1)$, $G_{\theta,T}\to G_\theta$ with $G_\theta$ full column rank, $G_{\pi,T}\to G_\pi$, and $\widehat A_N\to_p A\succ0$. Suppose the first stage satisfies $\sqrt T(\widehat\pi-\pi_T^\star)=T^{-1/2}\sum_{t=1}^T IF_{\pi,T,t}+o_p(1)$ and, on an event with probability approaching one, the second-stage first-order condition has the local expansion
\[
0=G_{\theta,T}^\top A\Big[g_N(\theta_T^\star,\pi_T^\star)+G_{\theta,T}(\widehat\theta-\theta_T^\star)+G_{\pi,T}(\widehat\pi-\pi_T^\star)\Big]+o_p(T^{-1/2}).
\]
Then \(\sqrt T(\widehat\theta-\theta_T^\star)=-H_{\theta,T}T^{-1/2}\sum_{t=1}^T\widetilde g_{T,t}+o_p(1)\).
If, in addition, the augmented contribution is locally correctly specified, \(\sqrt T\|T^{-1}\sum_{t=1}^T\mathbb E_T[\widetilde g_{T,t}]\|=o(1)\), its centered version satisfies the corresponding CLT, the augmented contribution satisfies the corresponding mean-path, HAC product-array, and multiplier conditions, and the parameter-dependent second-stage moment satisfies the smoothness and uniform-law conditions needed for the displayed local expansion, then the main GMM, HAC, and bootstrap conclusions apply to the augmented moment representation after replacing the original limiting Jacobian by $G_\theta$.
\end{theorem}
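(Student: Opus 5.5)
The plan is to write the second-stage first-order condition, substitute the first-stage linear representation, and then solve for $\sqrt T(\widehat\theta-\theta_T^\star)$.

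\emph{Step 1 (linear representation).} On the high-probability event where the displayed local expansion holds, multiply it by $\sqrt T$:
\[
0=G_{\theta,T}^\top A\Big[\sqrt T g_N(\theta_T^\star,\pi_T^\star)+G_{\theta,T}\sqrt T(\widehat\theta-\theta_T^\star)+G_{\pi,T}\sqrt T(\widehat\pi-\pi_T^\star)\Big]+o_p(1).
\]
Then replace $\sqrt T(\widehat\pi-\pi_T^\star)$ by $T^{-1/2}\sum_t IF_{\pi,T,t}+o_p(1)$. Since $G_{\pi,T}\to G_\pi$ is bounded, the remainder stays $o_p(1)$. By the definition of $\widetilde g_{T,t}$, the bracket becomes
\[
T^{-1/2}\sum_t\widetilde g_{T,t}+G_{\theta,T}\sqrt T(\widehat\theta-\theta_T^\star)+o_p(1).
\]
Because $G_{\theta,T}\to G_\theta$ has full column rank, $G_{\theta,T}^\top AG_{\theta,T}$ is invertible for large $T$, and its inverse converges to $(G_\theta^\top AG_\theta)^{-1}$. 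Premultiplying by that inverse gives
\[
\sqrt T(\widehat\theta-\theta_T^\star)=-H_{\theta,T}T^{-1/2}\sum_t\widetilde g_{T,t}+o_p(1).
\]
Here I read $A$ as the limit weight, as in the displayed expansion. If the expansion were instead stated with $\widehat A_N$, the replacement would cost $(\widehat A_N-A)O_p(1)=o_p(1)$, provided the bracket is $O_p(1)$.

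\emph{Step 2 (tightness of the bracket).} This is the main obstacle: it is not stated outright. I would take it from the additional hypotheses. Write
\[
T^{-1/2}\sum_t\widetilde g_{T,t}=T^{-1/2}\sum_t(\widetilde g_{T,t}-\mathbb E_T\widetilde g_{T,t})+\sqrt T\,T^{-1}\sum_t\mathbb E_T\widetilde g_{T,t}.
\]
The first term is $O_p(1)$ by the assumed centered CLT. The second term is $o(1)$ by augmented local correct specification. For the first claim alone I would argue more directly: $\sqrt T g_N(\theta_T^\star,\pi_T^\star)$ is $O_p(1)$ under the main-text CLT plus $\sqrt T\|\bar m_T\|=o(1)$, and $\sqrt T(\widehat\pi-\pi_T^\star)$ is $O_p(1)$ whenever its influence sum is tight. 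Failing that, the representation holds as an identity in $o_p(1)$ terms as long as the inverse is bounded, which is all Step 1 requires.

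\emph{Step 3 (transfer of main results).} With the representation in hand, the augmented sequence $\widetilde g_{T,t}$ plays exactly the role of $g_t(W_t,\theta_T^\star)$ in Theorem~\ref{thm:AN}, with Jacobian limit $G_\theta$ and $H_{\theta,T}\to H_\theta$. The centered CLT then gives asymptotic normality with the sandwich formed from $\Omega_R$ of the centered augmented moment. Under the mean-path condition, Proposition~\ref{lem:fixed-env-decomp} applies verbatim to $\widetilde g$, since its proof uses only centering and the fixed-lag limits. Theorem~\ref{thm:hac} and Theorem~\ref{thm:bootstrap} follow once the HAC product-array and plug-in conditions are assumed for the feasible augmented contributions $\widehat{\widetilde g}_t$. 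The smoothness and uniform-law hypotheses control replacing $(\theta_T^\star,\pi_T^\star,G_{\pi,T},IF)$ by their estimates, and I would invoke them as stated rather than re-derive them.
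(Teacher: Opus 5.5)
Your proposal is correct and follows the paper's proof essentially step for step. You multiply the local first-order expansion by $\sqrt T$, substitute the first-stage influence representation to form $\widetilde g_{T,t}$, and invert the eventually nonsingular $G_{\theta,T}^\top AG_{\theta,T}$; you then use augmented local correct specification together with the assumed augmented CLT, mean-path, HAC, and multiplier conditions to rerun Theorems~\ref{thm:AN}, \ref{thm:hac}, and~\ref{thm:bootstrap} with Jacobian $G_\theta$. Your remark that the first claim needs only boundedness of $(G_{\theta,T}^\top AG_{\theta,T})^{-1}$, not tightness of the bracket, is accurate and slightly sharper than the paper, which opens by asserting $\widehat\pi-\pi_T^\star=O_p(T^{-1/2})$ even though the argument does not use it.
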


\begin{corollary}\label{cor:stacked}
Let $h_t(\eta):=(a_t(\pi)^\top,g_t(\theta,\pi)^\top)^\top$ with $\eta=(\pi^\top,\theta^\top)^\top$. Suppose the recursive stacked system is just identified and that, after replacing $g_t$ by $h_t$ and $\theta$ by $\eta$, the stacked moment satisfies Assumptions~\ref{ass:design-environment}, \ref{ass:dependence}, \ref{ass:smoothness}, \ref{ass:uniform-lln}, \ref{ass:local-correct-specification}, \ref{ass:gmm}, \ref{ass:estimand-separation}, and~\ref{ass:gmm-interior}. Suppose further that its block Jacobian is
\[
G_{h,T}=\begin{bmatrix}
G_{a,T} & 0\\
G_{\pi,T} & G_{\theta,T}
\end{bmatrix},
\]
where $G_{a,T}$ and $G_{\theta,T}$ are nonsingular for all large $T$. Then the joint estimator satisfies the design-based GMM theorem with the stacked Jacobian and stacked long-run covariance. Its second-stage block is equivalently represented by Theorem~\ref{thm:twostep} with $IF_{\pi,T,t}=-G_{a,T}^{-1}a_t(\pi_T^\star)$.
\end{corollary}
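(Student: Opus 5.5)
The strategy is to write the stacked recursive system as a just-identified GMM problem in $\eta=(\pi^\top,\theta^\top)^\top$ and apply Theorem~\ref{thm:AN} directly. Then I read off the second-stage block using the block-triangular structure of $G_{h,T}$ and check that it matches the representation in Theorem~\ref{thm:twostep}.

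\emph{Step 1.} Since the stacked moment $h_t(\eta)$ satisfies every assumption of Theorem~\ref{thm:AN} once $g_t$ is replaced by $h_t$, that theorem gives consistency of $\widehat\eta$ for $\eta_T^\star$. It also gives asymptotic normality with the sandwich built from the stacked Jacobian limit $G_h$ and the stacked long-run covariance $\Omega_{R,h}$. Just-identification makes $G_h$ square. The diagonal blocks $G_a$ and $G_\theta$ are limits of nonsingular matrices; I add the standing requirement that the limits themselves stay nonsingular, which is implicit in the full-column-rank condition of Assumption~\ref{ass:smoothness}. The sandwich then collapses to $G_h^{-1}\Omega_{R,h}G_h^{-\top}$ for any $A$. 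The influence representation is
\[
\sqrt T(\widehat\eta-\eta_T^\star)=-G_{h,T}^{-1}T^{-1/2}\sum_{t=1}^T h_t(\eta_T^\star)+o_p(1).
\]
I obtain this from the mean-value expansion used in the proof of Theorem~\ref{thm:AN}. In the just-identified case the sample first-order condition reduces to $h_N(\widehat\eta)=0$ with probability approaching one, and $G_{h,T}\to G_h$ lets me replace the limit by $G_{h,T}$ at $o_p(1)$ cost.

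\emph{Step 2.} I invert the block lower-triangular matrix explicitly:
\[
G_{h,T}^{-1}=\begin{bmatrix}G_{a,T}^{-1}&0\\-G_{\theta,T}^{-1}G_{\pi,T}G_{a,T}^{-1}&G_{\theta,T}^{-1}\end{bmatrix}.
\]
The $\pi$-block gives $\sqrt T(\widehat\pi-\pi_T^\star)=T^{-1/2}\sum_t IF_{\pi,T,t}+o_p(1)$ with $IF_{\pi,T,t}=-G_{a,T}^{-1}a_t(\pi_T^\star)$. The $\theta$-block gives
\[
\sqrt T(\widehat\theta-\theta_T^\star)=-G_{\theta,T}^{-1}T^{-1/2}\sum_{t=1}^T\big[g_t(\theta_T^\star,\pi_T^\star)+G_{\pi,T}IF_{\pi,T,t}\big]+o_p(1).
\]
Because $G_{\theta,T}$ is square and nonsingular, $H_{\theta,T}=(G_{\theta,T}^\top AG_{\theta,T})^{-1}G_{\theta,T}^\top A=G_{\theta,T}^{-1}$. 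The bracketed term is exactly $\widetilde g_{T,t}$, so this reproduces the first conclusion of Theorem~\ref{thm:twostep}.

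\emph{Step 3.} For the equivalence claim, I would either verify the hypotheses of Theorem~\ref{thm:twostep} or simply note that its conclusion has been derived directly. The first-stage linear expansion is the $\pi$-row above. The second-stage local first-order expansion follows from a mean-value expansion of the $g$-rows of $h_N(\widehat\eta)=0$, premultiplied by $G_{\theta,T}^\top A$. Its remainders are $o_p(T^{-1/2})$ by the Jacobian uniform law (Assumption~\ref{ass:uniform-lln}), the Lipschitz envelopes (Assumption~\ref{ass:smoothness}), and $\sqrt T$-consistency.

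The main obstacle is bookkeeping rather than substance. It lies in Step 1: I must confirm that the sample Jacobian at the intermediate points converges to $G_{h,T}$ uniformly in probability, with the zero upper-right block, so that the $o_p(1)$ remainders are valid blockwise. Here I rely on the recursive structure: $a_t$ does not depend on $\theta$, so the upper-right block of the sample Jacobian is identically zero, not just asymptotically zero.
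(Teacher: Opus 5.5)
Your proposal is correct and follows the paper's proof essentially step for step. You apply Theorem~\ref{thm:AN} to the stacked just-identified system to get the $-G_{h,T}^{-1}$ influence representation, invert the block lower-triangular Jacobian, and read off the $\pi$- and $\theta$-blocks, with $H_{\theta,T}=G_{\theta,T}^{-1}$. Your Step~3 hypothesis check is extra: the paper simply observes, as in your second option, that the derived second-stage block is the augmented-moment representation of Theorem~\ref{thm:twostep}.
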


\clearpage
\section{Proofs}\label{app:proofs}

This section collects the proofs in dependency order. The first group records deterministic variance and covariance facts; the next groups give the uniform-law and CLT tools, the main GMM and HAC arguments, the incomplete-observation extensions, and the specialized appendix results for regression adjustment, LP--VAR comparisons, misspecification, and generated regressors.

\subsection{Preliminary identities}

The first step is to record two elementary bounds used in several proofs. First, let $(U_{T,t})$ be a zero-mean triangular array such that
\[
T^{-1}\sum_{h=-(T-1)}^{T-1}\sum_{t=|h|+1}^T
\left\|\Cov(U_{T,t},U_{T,t-|h|})\right\|=O(1).
\]
Collecting the double covariance sum by lag gives
\[
\Var\Big(T^{-1}\sum_{t=1}^TU_{T,t}\Big)
=T^{-2}\sum_{h=-(T-1)}^{T-1}\sum_{t=|h|+1}^T\Cov(U_{T,t},U_{T,t-|h|}),
\]
and the displayed bound implies $\|\Var(T^{-1}\sum_{t=1}^TU_{T,t})\|=O(T^{-1})$.

A second repeatedly used calculation controls plug-in lag products. Let $\hat{s}_t$, $s_t$, and $\Delta_T$ be time series, and let $D_T:=\sum_t Z_t$ for nonnegative normalization weights. Suppose $\| g_t(W_t, \hat{\theta}_N) - s_t \|\le b_{T,t}\Delta_T$, $\|s_t\|\vee\|g_t(W_t, \hat{\theta}_N)\|\le B_{T,t}$, the relevant second moments of $b_{T,t}$ and $B_{T,t}$ are uniformly bounded, and $\Delta_T\to_p 0$. If the normalized averages $D_T^{-1}\sum_t Z_t b_{T,t}B_{T,t-|\ell|}$ and $D_T^{-1}\sum_t Z_t B_{T,t}b_{T,t-|\ell|}$ are $O_p(1)$ uniformly over the retained finite lag set, then
\[
D_T^{-1}\sum_t Z_t\,\big\|g_t(W_t, \hat{\theta}_N)g_{t-|\ell|}(W_{t-|\ell|}, \hat{\theta}_N)^\top - s_t s_{t-|\ell|}^\top\big\|
=O_p(1)\Delta_T=o_p(1).
\]
Indeed, with $\hat g_t:=g_t(W_t,\hat{\theta}_N)$, $\hat g_t\hat g_{t-|\ell|}^\top-s_ts_{t-|\ell|}^\top=(\hat g_t-s_t)s_{t-|\ell|}^\top+s_t(\hat g_{t-|\ell|}-s_{t-|\ell|})^\top+(\hat g_t-s_t)(\hat g_{t-|\ell|}-s_{t-|\ell|})^\top$. Hence the displayed average is bounded by $\Delta_TD_T^{-1}\sum_tZ_tb_{T,t}B_{T,t-|\ell|}+\Delta_TD_T^{-1}\sum_tZ_tB_{T,t}b_{T,t-|\ell|}+\Delta_T^2D_T^{-1}\sum_tZ_tb_{T,t}b_{T,t-|\ell|}=O_p(1)\Delta_T+O_p(1)\Delta_T^2$, where the last average is $O_p(1)$ by Cauchy--Schwarz and the stated second-moment bounds.

The next deterministic lemma is used whenever a growing-bandwidth HAC covariance of a fixed centered path is interpreted as a nonnegative matrix.

\begin{lemma}
\label{lem:sumGE-psd}
For each fixed $\ell\ge 0$, let $\Gamma_{\mu,T}(\ell):=T^{-1}\sum_{t=\ell+1}^T \tilde\mu_t\tilde\mu_{t-\ell}^\top$ and $\Gamma_{\mu,T}(-\ell):=\Gamma_{\mu,T}(\ell)^\top$ for $\ell>0$, and suppose the fixed-lag limits $\Gamma_{g,\mu}(\ell)=\lim_T\Gamma_{\mu,T}(\ell)$ exist and are absolutely summable. Then $\sum_{\ell\in\mathbb{Z}}\Gamma_{g,\mu}(\ell)\succeq0$.
\end{lemma}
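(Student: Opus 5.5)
The plan is to reduce to scalars and then write the HAC sum of the fixed path as the limit of Bartlett-weighted quadratic forms, each of which is a nonnegative squared norm. Fix any vector $a\in\mathbb R^k$ and set $x_{T,t}:=a^\top\tilde\mu_t$, a deterministic scalar array. Then $\gamma_T(\ell):=a^\top\Gamma_{\mu,T}(\ell)a=T^{-1}\sum_{t=\ell+1}^T x_{T,t}x_{T,t-\ell}$. The quadratic form of $\Gamma_{\mu,T}(-\ell)=\Gamma_{\mu,T}(\ell)^\top$ gives the same scalar, so $\gamma_T$ is symmetric in $\ell$, and $\gamma_T(\ell)\to\gamma(\ell):=a^\top\Gamma_{g,\mu}(\ell)a$ for each fixed $\ell$, with $\sum_\ell|\gamma(\ell)|<\infty$. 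It therefore suffices to show $\sum_{\ell\in\mathbb Z}\gamma(\ell)\ge0$.

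Fix an integer $M\ge1$ and consider, for each $T$, the nonnegative quantity
\[
0\le \frac{1}{TM}\sum_{j}\Big(\sum_{t=j+1}^{j+M}x_{T,t}\Big)^2,
\]
where $j$ ranges over $-M,\ldots,T$ and $x_{T,t}:=0$ for $t\notin\{1,\ldots,T\}$. This is the standard Bartlett/block-sum device. Expanding the square and counting how many blocks contain a given pair $(t,t-\ell)$ with $|\ell|<M$ (exactly $M-|\ell|$) gives the identity
\[
\frac{1}{TM}\sum_{j}\Big(\sum_{t=j+1}^{j+M}x_{T,t}\Big)^2=\sum_{|\ell|<M}\Big(1-\frac{|\ell|}{M}\Big)\gamma_T(\ell).
\]
Because only finitely many lags appear for fixed $M$, letting $T\to\infty$ and using the fixed-lag limits yields $\sum_{|\ell|<M}(1-|\ell|/M)\gamma(\ell)\ge0$ for every $M$.

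Finally, let $M\to\infty$. Since $\sum_\ell|\gamma(\ell)|<\infty$ and the Bartlett weights $1-|\ell|/M$ lie in $[0,1]$ and converge to $1$ pointwise, dominated convergence for series gives $\sum_{|\ell|<M}(1-|\ell|/M)\gamma(\ell)\to\sum_{\ell\in\mathbb Z}\gamma(\ell)$. Hence $a^\top\big(\sum_\ell\Gamma_{g,\mu}(\ell)\big)a\ge0$ for all $a$. Since the lag sum is symmetric by the transpose convention, this is positive semidefiniteness.

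The main obstacle is the order of limits. One cannot use the full finite-$T$ sum over all lags, because that sum equals $T^{-1}(\sum_t x_{T,t})^2=0$ and carries no information about $\sum_\ell\gamma(\ell)$. The argument must instead take $T\to\infty$ at fixed $M$ and only then send $M\to\infty$, which matches the HAC definition of $\Omega_\mu$ as a sum of fixed-lag Ces\`aro limits. The only care needed is the exact boundary count in the block identity. Those boundary terms number $O(M^2)$ products of $x_{T,t}$ near the edges, and they should be bounded using $\sup_t\|\tilde\mu_t\|$ or the existence of the lag-$0$ limit; if needed, one can simply show the discrepancy is $o(1)$ as $T\to\infty$ for fixed $M$.
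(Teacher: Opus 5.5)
Your proof is correct and is essentially the paper's argument: reduce to the zero-extended scalar path $x_{T,t}=u^\top\tilde\mu_t$, use the Fej\'er/Bartlett block-sum identity $\frac1T\sum_t(M^{-1/2}\sum_{j=0}^{M-1}x_{T,t-j})^2=\sum_{|\ell|<M}(1-|\ell|/M)c_T(\ell)\ge0$, let $T\to\infty$ at fixed $M$, and then let $M\to\infty$ by dominated convergence. Your closing worry about boundary terms is moot, because with the zero extension and $j$ running over $-M,\dots,T$ the identity is exact. That matters: $\sup_t\|\tilde\mu_t\|<\infty$ is not a hypothesis of the lemma, and for a triangular array the lag-$0$ limit alone does not make individual edge products divided by $T$ vanish, so the fallback you sketched would not have been justified.
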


\begin{proof}[Proof of Lemma \ref{lem:sumGE-psd}]
Fix $u\in\mathbb R^k$ and write $x_{T,t}:=u^\top\tilde\mu_t$ for $1\le t\le T$, extending the sequence by $x_{T,t}=0$ outside the sample. Define the finite-sample scalar autocovariance $c_T(\ell):=T^{-1}\sum_{t\in\mathbb Z}x_{T,t}x_{T,t-\ell}$. The extension by zero makes the sum finite and gives $c_T(\ell)=0$ when $|\ell|\ge T$. For every fixed $\ell$, the fixed-lag matrix limit implies $c_T(\ell)\to\gamma(\ell):=u^\top\Gamma_{g,\mu}(\ell)u$, with the transpose convention for negative lags.

Let $M\ge1$ be fixed. Expanding the square and collecting terms by the lag $\ell=j-r$ gives the finite-$T$ Fej\'er identity
\[
\frac1T\sum_{t\in\mathbb Z}\left(M^{-1/2}\sum_{j=0}^{M-1}x_{T,t-j}\right)^2
=\sum_{|\ell|<M}\left(1-\frac{|\ell|}{M}\right)c_T(\ell).
\]
The left side is a finite sum of squares and is therefore nonnegative. Passing to the limit in $T$ at the fixed value of $M$ gives \(\sum_{|\ell|<M}\left(1-\frac{|\ell|}{M}\right)\gamma(\ell)\ge0\).
Absolute summability of the matrix lag sequence implies $\sum_\ell|\gamma(\ell)|\le \|u\|^2\sum_\ell\|\Gamma_{g,\mu}(\ell)\|<\infty$. Hence the Fej\'er-weighted sums converge as $M\to\infty$ to $\sum_{\ell\in\mathbb Z}\gamma(\ell)$ by dominated convergence for series. It follows that \(u^\top\left(\sum_{\ell\in\mathbb Z}\Gamma_{g,\mu}(\ell)\right)u=\sum_{\ell\in\mathbb Z}\gamma(\ell)\ge0\).
Since the preceding argument holds for every $u\in\mathbb R^k$, the long-run matrix $\sum_{\ell\in\mathbb Z}\Gamma_{g,\mu}(\ell)$ is positive semidefinite.
\end{proof}

\begin{proof}[Proof of Proposition \ref{lem:fixed-env-decomp}]
Fix the conditioning environment and suppress the argument $\theta_T^\star$. Write $\bar m_T:=\bar m_T(\theta_T^\star)$, $\mu_{T,t}:=\mathbb E_T[g_t(W_t,\theta_T^\star)]$, $e_{T,t}:=g_t(W_t,\theta_T^\star)-\mu_{T,t}$, and $\tilde\mu_{T,t}:=\mu_{T,t}-\bar m_T$. Then $\mathbb E_T[e_{T,t}]=0$ for each date and \(g_t(W_t,\theta_T^\star)-\bar m_T=e_{T,t}+\tilde\mu_{T,t}\).
For a fixed nonnegative lag $\ell$, define the finite-$T$ averages
\[
\Gamma_{e,T}(\ell):=\frac1T\sum_{t=\ell+1}^T\mathbb E_T[e_{T,t}e_{T,t-\ell}^\top],
\qquad
\Gamma_{\mu,T}(\ell):=\frac1T\sum_{t=\ell+1}^T\tilde\mu_{T,t}\tilde\mu_{T,t-\ell}^\top.
\]
Expanding the centered observed product gives, for each retained pair $(t,t-\ell)$,
\[
\begin{aligned}
&\mathbb E_T[(g_t-\bar m_T)(g_{t-\ell}-\bar m_T)^\top] \\
&\quad=\mathbb E_T[e_{T,t}e_{T,t-\ell}^\top]
+\mathbb E_T[e_{T,t}]\tilde\mu_{T,t-\ell}^\top
+\tilde\mu_{T,t}\mathbb E_T[e_{T,t-\ell}]^\top
+\tilde\mu_{T,t}\tilde\mu_{T,t-\ell}^\top \\
&\quad=\mathbb E_T[e_{T,t}e_{T,t-\ell}^\top]+\tilde\mu_{T,t}\tilde\mu_{T,t-\ell}^\top.
\end{aligned}
\]
The two cross terms vanish date by date because the mean path is fixed under the conditional law and each $e_{T,t}$ is conditionally centered. Averaging over $t=\ell+1,\ldots,T$ therefore gives the exact finite-$T$ identity
\[
\frac1T\sum_{t=\ell+1}^T\mathbb E_T[(g_t-\bar m_T)(g_{t-\ell}-\bar m_T)^\top]
=\Gamma_{e,T}(\ell)+\Gamma_{\mu,T}(\ell).
\]
Assumption~\ref{ass:dependence} gives $\Gamma_{e,T}(\ell)\to\Gamma_e(\ell)$, and Assumption~\ref{ass:mean-path} gives $\Gamma_{\mu,T}(\ell)\to\Gamma_{g,\mu}(\ell)$. Thus the fixed-lag limit in \eqref{eq:Gamma-ghac} exists and equals $\Gamma_{g,\mathrm{hac}}(\ell)=\Gamma_e(\ell)+\Gamma_{g,\mu}(\ell)$ for every $\ell\ge0$. The definition for negative lags is by transposition, so the same identity holds for every $\ell\in\mathbb Z$.

The lag arrays $(\Gamma_e(\ell))_{\ell\in\mathbb Z}$ and $(\Gamma_{g,\mu}(\ell))_{\ell\in\mathbb Z}$ are absolutely summable by Assumptions~\ref{ass:dependence} and~\ref{ass:mean-path}. Hence $(\Gamma_{g,\mathrm{hac}}(\ell))_{\ell\in\mathbb Z}$ is absolutely summable, and term-by-term summation of the fixed-lag identity yields
\[
\Omega_R^+=\sum_{\ell\in\mathbb Z}\Gamma_{g,\mathrm{hac}}(\ell)
=\sum_{\ell\in\mathbb Z}\Gamma_e(\ell)+\sum_{\ell\in\mathbb Z}\Gamma_{g,\mu}(\ell)
=\Omega_R+\Omega_\mu.
\]
Finally, Lemma~\ref{lem:sumGE-psd} applies to the fixed centered path $(\tilde\mu_{T,t})_{t\le T}$ and gives $\Omega_\mu\succeq0$.
\end{proof}

\begin{proof}[Proof of Lemma \ref{lem:neyman-ts}]
Fix $h\ge0$ and condition on the fixed array $\{(Y^{(1)}_{t+h},Y^{(0)}_{t+h}):1\le t\le T_h\}$. Write $n:=T_h$, $a_t:=Y^{(1)}_{t+h}-\bar Y_h^{(1)}$, $b_t:=Y^{(0)}_{t+h}-\bar Y_h^{(0)}$, and $\hat p_n:=n^{-1}\sum_{t=1}^n W_t$. With an intercept and a binary regressor, the LP slope is the difference between the treated and untreated sample means. Since $\sum_t a_t=\sum_t b_t=0$,
\[
\widehat\tau_h-\bar\tau_h
=\frac{A_n}{\hat p_n}-\frac{B_n}{1-\hat p_n},
\qquad
A_n:=\frac1n\sum_{t=1}^n W_ta_t,
\quad
B_n:=\frac1n\sum_{t=1}^n(1-W_t)b_t .
\]
Adding and subtracting $A_n/p-B_n/(1-p)$ gives \(\widehat\tau_h-\bar\tau_h=n^{-1}\sum_{t=1}^n\phi_{t,h}+R_n\), where \(R_n=A_n\frac{p-\hat p_n}{p\hat p_n}-B_n\frac{\hat p_n-p}{(1-p)(1-\hat p_n)}\). Let $\mathcal E_n:=\{\hat p_n\in[p/2,(1+p)/2]\}$. On $\mathcal E_n$, the denominators in $R_n$ are bounded away from zero and $|R_n|\le C(|A_n|+|B_n|)|\hat p_n-p|$. Since $\sum_ta_t=\sum_tb_t=0$, $A_n=n^{-1}\sum_t(W_t-p)a_t$ and $B_n=-n^{-1}\sum_t(W_t-p)b_t$. The fixed-array fourth-moment inequality for independent centered Bernoulli sums gives $\mathbb E[A_n^4]\le Cn^{-4}((\sum_ta_t^2)^2+\sum_ta_t^4)=O(n^{-2})$, and the same calculation gives $\mathbb E[B_n^4]=O(n^{-2})$. Also, $\mathbb E[(\hat p_n-p)^4]=O(n^{-2})$. Therefore Cauchy--Schwarz yields $n\mathbb E[R_n^2\1_{\mathcal E_n}]\le Cn(\mathbb E[A_n^4]+\mathbb E[B_n^4])^{1/2}\mathbb E[(\hat p_n-p)^4]^{1/2}=O(n^{-1})$. Hoeffding's inequality gives $\Pr(\mathcal E_n^c)\le 2e^{-cn}$. With the harmless convention that an empty-cell mean is set to zero, all cell-mean deviations on $\mathcal E_n^c$ are bounded by $C(1+\max_t|a_t|+\max_t|b_t|)$. The fixed-array fourth-moment bound implies $\max_t|a_t|\vee\max_t|b_t|\le Cn^{1/4}$, so $n\mathbb E[R_n^2\1_{\mathcal E_n^c}]\le Cn(1+n^{1/4})^2e^{-cn}=o(1)$. Thus $r_{T,h}:=\sqrt n R_n$ satisfies $\mathbb E[r_{T,h}^2]=o(1)$, proving the linear expansion above.

For each date, \(\mathbb E[\phi_{t,h}]=Y^{(1)}_{t+h}-\bar Y_h^{(1)}-Y^{(0)}_{t+h}+\bar Y_h^{(0)}=\tau_{t,h}-\bar\tau_h\), and hence $n^{-1}\sum_t\mathbb E[\phi_{t,h}]=0$. Conditional on the fixed array, $\phi_{t,h}$ is a function only of $W_t$. The Bernoulli assignments are independent across dates, so $\Cov(\phi_{t,h},\phi_{s,h})=0$ for $t\ne s$ and \(\Var\!\left(n^{-1/2}\sum_{t=1}^n\phi_{t,h}\right)=n^{-1}\sum_{t=1}^n\Var(\phi_{t,h})\). The $L^2$ remainder contributes $o(n^{-1})$ to $\Var(\widehat\tau_h)$ by Cauchy--Schwarz. The variance calculation uses $\mathbb E[\phi_{t,h}^2]=a_t^2/p+b_t^2/(1-p)$ and $\mathbb E[\phi_{t,h}]=a_t-b_t=\tau_{t,h}-\bar\tau_h$, so
\[
\Var(\phi_{t,h})
=p\left(\frac{Y^{(1)}_{t+h}-\bar Y_h^{(1)}}{p}\right)^2
+(1-p)\left(\frac{Y^{(0)}_{t+h}-\bar Y_h^{(0)}}{1-p}\right)^2
-(\tau_{t,h}-\bar\tau_h)^2 .
\]
Averaging over $t=1,\dots,n$ yields $\Gamma_h(0)=S^2_{1,h}/p+S^2_{0,h}/(1-p)-S^2_{\tau,h}$, with the finite-history variances defined in the statement. Substituting into the preceding variance identity gives the result.
\end{proof}

\begin{proof}[Proof of Remark \ref{rem:LP_VAR_mean_drift}]
(i) Using $y_{t+h}=y_{t+h}(0)+\tau_{t,h}x_t$ and the definition of $\theta_h^\star$, the conditional mean moment decomposes as $\mathbb{E}[\psi_t(y_{t+h}-\psi_t^\top\theta_h^\star)\mid\I_t]=\mathbb{E}[\psi_t(y_{t+h}(0)-\alpha_h^\star-\gamma_h^{\star\top}c_t)\mid\I_t]+\mathbb{E}[\psi_t(\tau_{t,h}-\beta_h^\star)x_t\mid\I_t]$.
Let $e_x$ select the shock coordinate, so $e_x^\top\psi_t=x_t$. By design exogeneity, $e_x^\top\mathbb{E}[\psi_t(y_{t+h}(0)-\alpha_h^\star-\gamma_h^{\star\top}c_t)\mid\I_t]=\mathbb{E}[x_t(y_{t+h}(0)-\alpha_h^\star-\gamma_h^{\star\top}c_t)\mid\I_t]=0$. The $\I_t$-measurability of $\tau_{t,h}$ and the conditional centering of $x_t$ imply $e_x^\top\mathbb{E}[\psi_t(\tau_{t,h}-\beta_h^\star)x_t\mid\I_t]=\mathbb{E}[x_t^2\mid\I_t](\tau_{t,h}-\beta_h^\star)=\mathrm{Var}(x_t\mid\I_t)(\tau_{t,h}-\beta_h^\star)$, which proves the displayed formula for the shock row. If, in addition, $\mathbb{E}[y_{t+h}(0)\mid\I_t]=\alpha_h^\star+\gamma_h^{\star\top}c_t$, then the first term vanishes entirely, giving the displayed vector with zero intercept and control rows.

(ii) At $\phi^\star$, write $u_t(\phi^\star)=\Delta_t+(B(Z_t)-B)W_t+\varepsilon_t$, where $\Delta_t:=(\Phi_0(Z_t)-\Phi_0)+\sum_{i=1}^p(\Phi_i(Z_t)-\Phi_i)Y_{t-i}$.
Conditioning on $\I_t$, $Z^{\mathrm{pred}}_t$ and $\Delta_t$ are measurable, and
$\mathbb{E}[W_t\mid\I_t]=0$, $\mathbb{E}[\varepsilon_t\mid\I_t]=0$, $\mathbb{E}[W_t W_t^\top\mid\I_t]=\Sigma_{W,t}$.
Therefore $\mathbb{E}[Z^{\mathrm{pred}}_t u_t(\phi^\star)^\top\mid\I_t]=Z^{\mathrm{pred}}_t\,\Delta_t^\top$ and $\mathbb{E}[W_t u_t(\phi^\star)^\top\mid\I_t]=\Sigma_{W,t}\,(B(Z_t)-B)^\top$,
which, stacked and vectorized, gives the claim. In particular, the conditional mean moment vanishes when $\Delta_t\equiv 0$
and $\Sigma_{W,t}(B(Z_t)-B)^\top\equiv 0$.
\end{proof}

\subsection{Uniform-LLNs}

The next two lemmas record one sufficient route to the moment-function and Jacobian uniform laws used in the incomplete-observation appendix. The route is stronger than Assumption~\ref{ass:uniform-lln}, because it imposes covariance-envelope regularity directly on the generic-$\theta$ arrays. Throughout Lemmas~\ref{lem:ULLN} and~\ref{lem:ULLN-J}, use the following appendix-only strengthening. For each fixed $\theta$ and each scalar coordinate $f_{T,t}(\theta)$ of $g_t(W_t,\theta)-\mathbb E_T[g_t(W_t,\theta)]$, there is a summable sequence $b_g^{(\theta)}(h)$ such that
\[
\frac1T\sum_{t=|h|+1}^T
\left|\Cov_T\big(f_{T,t}(\theta),f_{T,t-|h|}(\theta)\big)\right|
\le b_g^{(\theta)}(h)
\]
for all $T$ and all integer lags $h$, with the transpose or coordinate convention used when $h<0$. The same condition holds for each scalar coordinate of $\nabla_\theta g_t(W_t,\theta)-\mathbb E_T[\nabla_\theta g_t(W_t,\theta)]$, with summable bound $b_J^{(\theta)}(h)$. The sample and expectation averages of the Lipschitz envelopes $b_{g,T,t}$ and $b_{g,1,T,t}$ are $O_p(1)$ and $O(1)$, respectively, and the corresponding moment envelopes have the bounded moments stated in Assumption~\ref{ass:smoothness}. For later use, define
\[
\Delta_T(\theta):=(T\rho_T)^{-1}\sum_{t=1}^T(R_{T,t}-\rho_T)g_t(W_t,\theta),\quad
S_T(\theta):=T^{-1}\sum_{t=1}^T\big[g_t(W_t,\theta)-\mathbb{E}_T[g_t(W_t,\theta)]\big],
\]
and $r_T(\theta):=(N^{-1}-(T\rho_T)^{-1})\sum_{t=1}^T R_{T,t}g_t(W_t,\theta)$.
\begin{lemma}\label{lem:ULLN}
Under Assumptions~\ref{ass:smoothness} and~\ref{ass:sampling-app} and the covariance-envelope strengthening just stated, the decomposition $g_N(\theta)-\bar m_T(\theta)=\Delta_T(\theta)+S_T(\theta)+r_T(\theta)$ holds for every $\theta\in\Theta$. Moreover, $\sup_{\theta\in\Theta}\|\Delta_T(\theta)\|=o_p(1)$, $\sup_{\theta\in\Theta}\|S_T(\theta)\|=o_p(1)$, and $\sup_{\theta\in\Theta}\|r_T(\theta)\|=O_p(N^{-1/2})=o_p(1)$. Consequently, $\sup_{\theta\in\Theta}\|g_N(\theta)-\bar m_T(\theta)\|=o_p(1)$.
\end{lemma}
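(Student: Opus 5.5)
The decomposition is algebraic. On $N>0$, write $g_N(\theta)=N^{-1}\sum_tR_{T,t}g_t$ and split $N^{-1}=(T\rho_T)^{-1}+(N^{-1}-(T\rho_T)^{-1})$. The second piece is exactly $r_T(\theta)$. The first gives $(T\rho_T)^{-1}\sum_tR_{T,t}g_t=\Delta_T(\theta)+T^{-1}\sum_tg_t$, since $(T\rho_T)^{-1}\rho_T\sum_t g_t=T^{-1}\sum_tg_t$. Subtracting $\bar m_T(\theta)=T^{-1}\sum_t\mathbb E_T[g_t]$ leaves $S_T(\theta)$. The three uniform bounds are then proved separately, each by a pointwise variance bound followed by a Lipschitz/finite-net argument over compact $\Theta$.

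\textbf{The term $S_T$.} For fixed $\theta$, the covariance-envelope strengthening gives
\[
\Var_T\big(T^{-1}\sum_tf_{T,t}(\theta)\big)\le T^{-1}\sum_hb_g^{(\theta)}(h)=O(T^{-1}),
\]
by the first preliminary identity. Chebyshev then gives $S_T(\theta)=O_p(T^{-1/2})$ pointwise. For uniformity, cover $\Theta$ by an $\eta$-net $\{\theta_1,\dots,\theta_K\}$ with $K$ fixed for fixed $\eta$. On that finite set the pointwise bounds control the maximum. Off the net, use
\[
\|S_T(\theta)-S_T(\theta_k)\|\le\eta\big(T^{-1}\sum_tb_{g,T,t}+T^{-1}\sum_t\mathbb E_Tb_{g,T,t}\big)=\eta\,O_p(1).
\]
Letting $\eta\downarrow0$ after $T\to\infty$ gives $\sup_\theta\|S_T\|=o_p(1)$.

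\textbf{The term $\Delta_T$.} Split $g_t=\mathbb E_T[g_t]+f_{T,t}$.
\begin{itemize}
\item Under the deterministic-window scheme, the mean part $(T\rho_T)^{-1}\sum_t(R_{T,t}-\rho_T)\mathbb E_T[g_t]$ is $o(1)$ uniformly by the balance condition in Assumption~\ref{ass:sampling-app}(ii). The innovation part is a deterministic-weight version of $S_T$. The weights $|R_{T,t}-\rho_T|\le1$ preserve the covariance-sum bound, and $\rho_T\to\rho>0$, so the same net argument applies.
\item Under random sampling, $R$ is independent of the moment process and stationary mixing. Conditional on the moments, $\sum_t(R_{T,t}-\rho_T)g_t$ has variance at most $\sum_{t,s}|\gamma_{R,T}(t-s)|\,\|g_t\|\|g_s\|$. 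Bounding this with $\sum_\ell b_R(\ell)<\infty$ and the $2+\delta$ envelope moments gives $O_p(T)$, hence $\Delta_T(\theta)=O_p(T^{-1/2})$ pointwise. The Lipschitz increment is controlled by $(T\rho_T)^{-1}\sum_t(R_{T,t}+\rho_T)b_{g,T,t}=O_p(1)$.
\end{itemize}

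\textbf{The term $r_T$.} Write
\[
r_T(\theta)=\Big(\frac{T\rho_T}{N}-1\Big)(T\rho_T)^{-1}\sum_tR_{T,t}g_t .
\]
The second factor is bounded uniformly in $\theta$ by $(T\rho_T)^{-1}\sum_tR_{T,t}B_{g,T,t}=O_p(1)$, using Markov and the envelope moments.
\begin{itemize}
\item Under deterministic windows, $N=T\rho_T$ exactly, so $r_T\equiv0$.
\item Under random sampling, $N-T\rho_T=\sum_t(R_{T,t}-\rho_T)$ has variance $O(T)$ by the summable autocovariance envelope. Hence $T\rho_T/N-1=O_p(T^{-1/2})=O_p(N^{-1/2})$, using $N/(T\rho_T)\to_p1$.
\end{itemize}
Either way $\sup_\theta\|r_T\|=O_p(N^{-1/2})$, and the triangle inequality combines the three pieces into the final claim.

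The main obstacle is the random-sampling bound for $\Delta_T$. The variance bound must combine the mixing of $R$ with envelope moments of the moment map without assuming products of $g$ are themselves short-memory. The step to check is that conditioning on the moment path and using only $|\gamma_{R,T}|\le b_R$ together with $\sup_t\mathbb E_T\|B_{g,T,t}\|^2<\infty$ yields the $O(T)$ bound. This follows from Cauchy--Schwarz applied lag by lag.
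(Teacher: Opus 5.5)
Your proposal is correct and follows essentially the same route as the paper. It uses the same three-term identity and the same pointwise $O(T^{-1})$ variance bounds: independence and $\gamma_{R,T}(h)\,\mathbb E_T[g_tg_{t-|h|}]$ under random sampling, and the balance condition plus the weighted covariance envelope under deterministic windows. It then upgrades these to uniform bounds by Lipschitz stochastic equicontinuity over a finite $\eta$-net, and factors $r_T$ as an $O_p(T^{-1/2})$ sample-size ratio times an $O_p(1)$ envelope average. Your conditional-on-the-moment-path bound for $\Delta_T$ and your normalization of $r_T$ are only cosmetic variants of the paper's unconditional covariance calculation and its $\sqrt N\,r_T=A_TB_T(\theta)$ factorization.
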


\begin{proof}[Proof of Lemma \ref{lem:ULLN}]
Fix the environment and write $\mu_t(\theta):=\mathbb E_T[g_t(W_t,\theta)]$. The decomposition is the following identity, since $\bar m_T(\theta)=T^{-1}\sum_t\mu_t(\theta)$:
\[
\begin{aligned}
g_N(\theta)-\bar m_T(\theta)
&=\left(N^{-1}-(T\rho_T)^{-1}\right)\sum_{t=1}^TR_{T,t}g_t(W_t,\theta) \\
&\quad +(T\rho_T)^{-1}\sum_{t=1}^T(R_{T,t}-\rho_T)g_t(W_t,\theta)
+T^{-1}\sum_{t=1}^T\big[g_t(W_t,\theta)-\mu_t(\theta)\big].
\end{aligned}
\]
The three terms on the right are $r_T(\theta)$, $\Delta_T(\theta)$, and $S_T(\theta)$.

First consider $\Delta_T$. For $\theta,\theta'\in\Theta$,
\[
\|\Delta_T(\theta)-\Delta_T(\theta')\|
\le \frac{1}{T\rho_T}\sum_{t=1}^T |R_{T,t}-\rho_T|\,b_{g,T,t}\|\theta-\theta'\|
\le L_{\Delta,T}\|\theta-\theta'\|,
\]
where $L_{\Delta,T}:=\rho_T^{-1}T^{-1}\sum_t b_{g,T,t}=O_p(1)$ because $\rho_T\to\rho>0$. Thus $\Delta_T$ is stochastically equicontinuous. Fix a point $\theta_0\in\Theta$ and a scalar coordinate $r$. Under the random-sampling scheme, independence of the inclusion process and the moment process gives $\mathbb E_T[(R_{T,t}-\rho_T)g_{t,r}(\theta_0)]=0$ and, for each lag $h$,
\[
\Cov_T\big((R_{T,t}-\rho_T)g_{t,r}(\theta_0),(R_{T,t-|h|}-\rho_T)g_{t-|h|,r}(\theta_0)\big)
=\gamma_{R,T}(h)\mathbb E_T[g_{t,r}(\theta_0)g_{t-|h|,r}(\theta_0)].
\]
The expectation on the right is uniformly bounded by Cauchy--Schwarz and the envelope in Assumption~\ref{ass:smoothness}, while $\sum_h|\gamma_{R,T}(h)|$ is uniformly bounded by Assumption~\ref{ass:sampling-app}. Collecting covariances by lag gives $\Var_T(\Delta_{T,r}(\theta_0))=(T\rho_T)^{-2}\sum_{h=-(T-1)}^{T-1}\sum_{t=|h|+1}^T\gamma_{R,T}(h)\mathbb E_T[g_{t,r}(\theta_0)g_{t-|h|,r}(\theta_0)]\le CT^{-1}\rho_T^{-2}\sum_h|\gamma_{R,T}(h)|=O(T^{-1})$.

Under the deterministic-window scheme, put $a_{T,t}:=R_{T,t}/\rho_T-1$. Then $\Delta_T(\theta_0)=T^{-1}\sum_ta_{T,t}g_t(W_t,\theta_0)$ and $\sup_t|a_{T,t}|$ is bounded for all large $T$. The deterministic mean component $T^{-1}\sum_ta_{T,t}\mu_t(\theta)$ is $o(1)$ uniformly in $\theta$ by Assumption~\ref{ass:sampling-app}. The centered component has coordinate variance bounded by
\[
\Var_T\!\left(T^{-1}\sum_{t=1}^Ta_{T,t}\big(g_{t,r}(\theta_0)-\mu_{t,r}(\theta_0)\big)\right)
\le \frac{C}{T}\sum_{h\in\mathbb Z}b_g^{(\theta_0)}(h)=O(T^{-1}),
\]
so Chebyshev's inequality gives pointwise convergence. Hence, under either sampling scheme, $\Delta_T(\theta_0)=o_p(1)$ for each fixed $\theta_0$, after including the deterministic mean term in the deterministic-window case.

Let $(\theta_j)_{j=1}^{J(\eta)}$ be a finite $\eta$-net of the compact set $\Theta$. The preceding pointwise result, a union bound over the finite set of coordinates and net points, and the fixed dimension of the moment vector give $\max_j\|\Delta_T(\theta_j)\|=o_p(1)$. Given $\varepsilon,c>0$, choose $M<\infty$ such that $\Pr(L_{\Delta,T}\le M)\ge1-c$ for all large $T$, and then choose $\eta\le\varepsilon/(2M)$. On $\{L_{\Delta,T}\le M\}$, every $\theta$ is within $\eta$ of some $\theta_j$, so $\|\Delta_T(\theta)\|\le\max_j\|\Delta_T(\theta_j)\|+\varepsilon/2$. Letting $T\to\infty$ and then $c\downarrow0$ proves $\sup_{\theta\in\Theta}\|\Delta_T(\theta)\|\to_p0$.

The term $S_T$ is handled similarly but without sampling weights. Assumption~\ref{ass:smoothness} gives
\[
\|S_T(\theta)-S_T(\theta')\|
\le \left(T^{-1}\sum_{t=1}^Tb_{g,T,t}+T^{-1}\sum_{t=1}^T\mathbb E_Tb_{g,T,t}\right)\|\theta-\theta'\|,
\]
and the multiplier is $O_p(1)$. At a fixed $\theta_0$ and coordinate $r$, the covariance-envelope strengthening gives
\[
\Var_T\!\left(T^{-1}\sum_{t=1}^T\big(g_{t,r}(\theta_0)-\mu_{t,r}(\theta_0)\big)\right)
\le \frac{C}{T}\sum_{h\in\mathbb Z}b_g^{(\theta_0)}(h)=O(T^{-1}).
\]
Chebyshev's inequality gives convergence on every finite net, and the same compactness and stochastic-equicontinuity argument gives $\sup_{\theta\in\Theta}\|S_T(\theta)\|\to_p0$.

To control the denominator ratio, note first that under the deterministic-window scheme, $N=T\rho_T$ exactly, so $r_T(\theta)=0$. Under the random-sampling scheme, write $\sqrt N r_T(\theta)=A_TB_T(\theta)$, where $A_T:=(T\rho_T-N)/(T\rho_T)$ and $B_T(\theta):=N^{-1/2}\sum_tR_{T,t}g_t(W_t,\theta)$. The inclusion covariance envelope gives $\Var(N)=\sum_{h=-(T-1)}^{T-1}\sum_{t=|h|+1}^T\Cov(R_{T,t},R_{T,t-|h|})\le CT\sum_h|\gamma_{R,T}(h)|=O(T)$, so $A_T=O_p(T^{-1/2})$. On the event $\mathcal A_T:=\{|N/(T\rho_T)-1|\le1/2\}$, whose probability tends to one,
\[
\sup_{\theta\in\Theta}\|B_T(\theta)\|
\le \left(\frac{2T}{\rho_T}\right)^{1/2}T^{-1}\sum_{t=1}^TB_{g,T,t}=O_p(\sqrt T).
\]
Thus $\sup_\theta\|\sqrt N r_T(\theta)\|=O_p(1)$ and $\sup_\theta\|r_T(\theta)\|=O_p(N^{-1/2})$. Combining the three bounds proves the lemma.
\end{proof}

\begin{lemma}\label{lem:ULLN-J}
Under Assumptions~\ref{ass:smoothness} and~\ref{ass:sampling-app} and the covariance-envelope strengthening stated before Lemma~\ref{lem:ULLN}, for both sampling schemes,
\[
\sup_{\theta\in\Theta}\left\|
\frac{1}{N}\sum_{t=1}^T R_{T,t}\,\nabla_\theta g_t(W_t,\theta)
- \frac{1}{T}\sum_{t=1}^T \mathbb{E}_T[\nabla_\theta g_t(W_t,\theta)]
\right\|\xrightarrow{p}0.
\]
\end{lemma}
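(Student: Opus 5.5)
The plan is to mirror the proof of Lemma~\ref{lem:ULLN}, applied coordinatewise to the Jacobian array. First write the same three-term identity with $\nabla_\theta g_t$ in place of $g_t$:
\[
\frac1N\sum_{t=1}^TR_{T,t}\nabla_\theta g_t(W_t,\theta)-\frac1T\sum_{t=1}^T\mathbb E_T[\nabla_\theta g_t(W_t,\theta)]
=r^J_T(\theta)+\Delta^J_T(\theta)+S^J_T(\theta).
\]
Here
\[
r^J_T(\theta):=(N^{-1}-(T\rho_T)^{-1})\sum_t R_{T,t}\nabla_\theta g_t(W_t,\theta),\qquad
\Delta^J_T(\theta):=(T\rho_T)^{-1}\sum_t(R_{T,t}-\rho_T)\nabla_\theta g_t(W_t,\theta),
\]
and
\[
S^J_T(\theta):=T^{-1}\sum_t\big(\nabla_\theta g_t(W_t,\theta)-\mathbb E_T[\nabla_\theta g_t(W_t,\theta)]\big).
\]
The identity is purely algebraic, because the population target is $T^{-1}\sum_t\mathbb E_T[\nabla_\theta g_t]$. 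Since the parameter and moment dimensions are fixed, it suffices to show that each term vanishes uniformly in $\theta$ for every scalar entry of the Jacobian.

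For $\Delta^J_T$ and $S^J_T$, stochastic equicontinuity comes from the Lipschitz envelope $b_{g,1,T,t}$ in Assumption~\ref{ass:smoothness}. The moduli are
\[
\rho_T^{-1}T^{-1}\sum_t b_{g,1,T,t}
\quad\text{and}\quad
T^{-1}\sum_t b_{g,1,T,t}+T^{-1}\sum_t\mathbb E_T b_{g,1,T,t},
\]
and both are $O_p(1)$ by the stated strengthening and $\rho_T\to\rho>0$. Pointwise convergence at a fixed $\theta_0$ then follows by Chebyshev, case by case.
\begin{itemize}
\item[(a)] \emph{Random sampling.} Independence of $R_{T,t}$ from the moment map gives mean zero for $\Delta^J_T$. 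The lag covariances factor as $\gamma_{R,T}(h)\,\mathbb E_T[J_{t,r}J_{t-|h|,r}]$. These are bounded by $\sup\mathbb E_T B_{g,1,T,t}^2<\infty$ (the Jacobian envelope has only second moments, which is exactly what Cauchy--Schwarz needs) times the summable envelope $b_R$. Hence the variance is $O(T^{-1})$.
\item[(b)] \emph{Deterministic window.} Use the bounded weights $a_{T,t}=R_{T,t}/\rho_T-1$. The mean part is $o(1)$ uniformly by the Jacobian clause of Assumption~\ref{ass:sampling-app}, and the centered part has variance $O(T^{-1})$ by the summable bound $b_J^{(\theta_0)}$.
\item[(c)] \emph{The term $S^J_T$.} Its variance is directly $O(T^{-1})$ from $b_J^{(\theta_0)}$.
\end{itemize}
A finite $\eta$-net of the compact $\Theta$ plus the union bound then upgrades pointwise convergence to uniform convergence, exactly as before.

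For $r^J_T$: under deterministic windows $N=T\rho_T$, so the term is identically zero. Under random sampling, factor $\sqrt N\,r^J_T(\theta)=A_T B^J_T(\theta)$ with $A_T=(T\rho_T-N)/(T\rho_T)=O_p(T^{-1/2})$, as established in Lemma~\ref{lem:ULLN}. On $\{|N/(T\rho_T)-1|\le 1/2\}$ we have
\[
\sup_\theta\|B^J_T(\theta)\|\le (2T/\rho_T)^{1/2}\,T^{-1}\sum_t B_{g,1,T,t}=O_p(\sqrt T),
\]
where the average is $O_p(1)$ by Markov and the second-moment bound on $B_{g,1,T,t}$. Hence $\sup_\theta\|r^J_T\|=O_p(N^{-1/2})$.

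The only step needing care is confirming that the weaker Jacobian envelope moments in Assumption~\ref{ass:smoothness} suffice. These are second moments for $B_{g,1}$ and first moments for $b_{g,1}$, as opposed to $2+\delta$ for $g$. First moments of $b_{g,1}$ are enough for the $O_p(1)$ Lipschitz modulus, and second moments of $B_{g,1}$ are enough for the covariance factorization. The rest is routine.
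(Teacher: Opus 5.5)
Your proposal is correct and follows the paper's proof of Lemma~\ref{lem:ULLN-J} essentially step by step. Both arguments use the same three-term decomposition into $\Delta_T^J$, $S_T^J$, and $r_T^J$, and the same Lipschitz-envelope equicontinuity. Both then apply Chebyshev pointwise separately under each sampling scheme, upgrade to uniform convergence with a finite net, and handle the ratio term through the factorization $\sqrt N\,r_T^J=A_TB_T^J$. Your check that second moments of $B_{g,1,T,t}$ and first moments of $b_{g,1,T,t}$ are enough matches how the paper uses Cauchy--Schwarz and the $O_p(1)$ Lipschitz multipliers.
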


\begin{proof}[Proof of Lemma \ref{lem:ULLN-J}]
Write $h_t(\theta):=\nabla_\theta g_t(W_t,\theta)$ and $\nu_t(\theta):=\mathbb E_T[h_t(\theta)]$. The same algebra as in Lemma~\ref{lem:ULLN} gives
\[
\frac1N\sum_{t=1}^TR_{T,t}h_t(\theta)-\frac1T\sum_{t=1}^T\nu_t(\theta)
=\Delta_T^J(\theta)+S_T^J(\theta)+r_T^J(\theta),
\]
where $\Delta_T^J(\theta):=(T\rho_T)^{-1}\sum_t(R_{T,t}-\rho_T)h_t(\theta)$, $S_T^J(\theta):=T^{-1}\sum_t(h_t(\theta)-\nu_t(\theta))$, and $r_T^J(\theta):=(N^{-1}-(T\rho_T)^{-1})\sum_tR_{T,t}h_t(\theta)$. The Jacobian has fixed dimension, so it is enough to prove convergence coordinate by coordinate and then pass to any equivalent matrix norm.

Assumption~\ref{ass:smoothness} gives $\|h_t(\theta)-h_t(\theta')\|\le b_{g,1,T,t}\|\theta-\theta'\|$ and $\|\nu_t(\theta)-\nu_t(\theta')\|\le\mathbb E_Tb_{g,1,T,t}\|\theta-\theta'\|$. Hence $\|\Delta_T^J(\theta)-\Delta_T^J(\theta')\|\le \rho_T^{-1}T^{-1}\sum_tb_{g,1,T,t}\|\theta-\theta'\|$ and $\|S_T^J(\theta)-S_T^J(\theta')\|\le [T^{-1}\sum_tb_{g,1,T,t}+T^{-1}\sum_t\mathbb E_Tb_{g,1,T,t}]\|\theta-\theta'\|$. The two multipliers are $O_p(1)$, so both processes are stochastically equicontinuous.

Fix $\theta_0$ and a scalar coordinate $r$ of $h_t(\theta_0)$, denoted $h_{t,r}$, and write $\nu_{t,r}:=\mathbb E_T[h_{t,r}]$. Under the random-sampling scheme, independence gives $\mathbb E_T[(R_{T,t}-\rho_T)h_{t,r}]=0$ and, for each lag $q$, $\Cov_T((R_{T,t}-\rho_T)h_{t,r},(R_{T,t-|q|}-\rho_T)h_{t-|q|,r})=\gamma_{R,T}(q)\mathbb E_T[h_{t,r}h_{t-|q|,r}]$. Cauchy--Schwarz and the Jacobian envelope bound the last expectation uniformly, so collecting covariances by lag gives $\Var_T(\Delta_{T,r}^J(\theta_0))\le CT^{-1}\rho_T^{-2}\sum_q|\gamma_{R,T}(q)|=O(T^{-1})$. Under the deterministic-window scheme, with $a_{T,t}:=R_{T,t}/\rho_T-1$, the decomposition $\Delta_{T,r}^J(\theta_0)=T^{-1}\sum_ta_{T,t}\nu_{t,r}+T^{-1}\sum_ta_{T,t}[h_{t,r}-\nu_{t,r}]$ separates the deterministic balance term from the centered term. The first term is $o(1)$ by the Jacobian balance condition in Assumption~\ref{ass:sampling-app}, and the second has variance bounded by $CT^{-1}\sum_q b_J^{(\theta_0)}(q)=O(T^{-1})$. The same covariance envelope gives $\Var_T(S_{T,r}^J(\theta_0))\le CT^{-1}\sum_q b_J^{(\theta_0)}(q)=O(T^{-1})$. Chebyshev's inequality gives pointwise convergence at $\theta_0$.

A finite-net argument now gives the uniform result. If $(\theta_j)$ is an $\eta$-net and $L_T^J$ denotes the relevant Lipschitz multiplier, then on $\{L_T^J\le M\}$, $\sup_\theta\|\Delta_T^J(\theta)\|\le\max_j\|\Delta_T^J(\theta_j)\|+M\eta$, and the same bound holds for $S_T^J$. Taking a union bound over the finite net and fixed coordinates, then choosing $\eta$ small and letting the complement probability of $\{L_T^J\le M\}$ vanish, yields $\sup_\theta\|\Delta_T^J(\theta)\|=o_p(1)$ and $\sup_\theta\|S_T^J(\theta)\|=o_p(1)$.

The ratio term is zero under deterministic sampling. Under random sampling, work on
\[
\mathcal A_T=\{|N/(T\rho_T)-1|\le1/2\}.
\]
On this event,
\[
\sup_{\theta\in\Theta}\left\|N^{-1/2}\sum_{t=1}^TR_{T,t}h_t(\theta)\right\|
\le \left(\frac{2T}{\rho_T}\right)^{1/2}T^{-1}\sum_{t=1}^TB_{g,1,T,t}=O_p(\sqrt T),
\]
and $A_T=(T\rho_T-N)/(T\rho_T)=O_p(T^{-1/2})$. Hence $\sup_\theta\|r_T^J(\theta)\|=O_p(N^{-1/2})=o_p(1)$. Combining the uniform bounds for the three terms proves the Jacobian law.
\end{proof}

\subsection{Main Results}

\begin{proof}[Proof of Theorem \ref{thm:AN}, part (i)]
The proof uses the fully observed normalization $N=T$. For each realized sample, continuity of $g_t(W_t,\theta)$ in $\theta$ and compactness of $\Theta$ imply that the sample criterion $J_N(\theta):=g_N(\theta)^\top\widehat A_Ng_N(\theta)$ is continuous on a compact set. Hence a measurable argmin selection may be taken. Write $\Delta_T(\theta):=g_N(\theta)-\bar m_T(\theta)$. For every $\theta\in\Theta$,
\[
\begin{aligned}
J_N(\theta)-Q_T(\theta)
&=\bar m_T(\theta)^\top(\widehat A_N-A)\bar m_T(\theta)
  +2\bar m_T(\theta)^\top\widehat A_N\Delta_T(\theta) \\
&\quad +\Delta_T(\theta)^\top\widehat A_N\Delta_T(\theta).
\end{aligned}
\]
Assumption~\ref{ass:smoothness} gives $M_{m,T}:=\sup_{\theta\in\Theta}\|\bar m_T(\theta)\|\le T^{-1}\sum_t\mathbb E_T[B_{g,T,t}]=O(1)$ uniformly in $T$. Assumption~\ref{ass:uniform-lln} gives $M_{\Delta,T}:=\sup_{\theta\in\Theta}\|\Delta_T(\theta)\|\to_p0$, and Assumption~\ref{ass:gmm} gives $\widehat A_N\to_pA$, hence $\|\widehat A_N\|=O_p(1)$. The preceding display yields $\sup_{\theta\in\Theta}|J_N(\theta)-Q_T(\theta)|\le M_{m,T}^2\|\widehat A_N-A\|+2M_{m,T}\|\widehat A_N\|M_{\Delta,T}+\|\widehat A_N\|M_{\Delta,T}^2$, and the right side is $o_p(1)$.

Fix $\varepsilon>0$. By Assumption~\ref{ass:estimand-separation}, there are $\eta_\varepsilon>0$ and $T_\varepsilon<\infty$ such that, for all $T\ge T_\varepsilon$, \(\inf_{\|\theta-\theta_T^\star\|\ge\varepsilon}\big[Q_T(\theta)-Q_T(\theta_T^\star)\big]\ge2\eta_\varepsilon\).
On the event $\sup_{\vartheta\in\Theta}|J_N(\vartheta)-Q_T(\vartheta)|<\eta_\varepsilon$, every $\theta$ outside the $\varepsilon$-ball around $\theta_T^\star$ satisfies
\[
J_N(\theta)-J_N(\theta_T^\star)
\ge Q_T(\theta)-Q_T(\theta_T^\star)
-2\sup_{\vartheta\in\Theta}|J_N(\vartheta)-Q_T(\vartheta)|>0.
\]
Thus no global minimizer can lie outside that ball on that event. Consequently,
\[
\Pr(\|\widehat\theta_N-\theta_T^\star\|\ge\varepsilon)
\le
\Pr\left(\sup_{\theta\in\Theta}|J_N(\theta)-Q_T(\theta)|\ge\eta_\varepsilon\right)\to0,
\]
which proves consistency for the moving sample-period estimand.
\end{proof}

\begin{proof}[Proof of Theorem \ref{thm:AN}, part (ii)]
Throughout this part $N=T$, and part (i) gives $\widehat\theta_N-\theta_T^\star\to_p0$. Assumption~\ref{ass:gmm-interior} gives a fixed radius $\eta>0$ such that the closed ball around $\theta_T^\star$ is contained in $\Theta$ for all large $T$, and, with probability approaching one, $\widehat\theta_N$ is an interior local minimizer in that ball. Intersecting this event with $\{\|\widehat\theta_N-\theta_T^\star\|<\eta\}$ gives a high-probability event on which the line segment between $\theta_T^\star$ and $\widehat\theta_N$ lies in $\Theta$ and the sample first-order condition is available.

Write $\widehat G_N(\theta):=T^{-1}\sum_{t=1}^T\nabla_\theta g_t(W_t,\theta)$, $\widetilde G_N:=\widehat G_N(\widehat\theta_N)$, and \(\bar G_N:=\int_0^1\widehat G_N(\theta_T^\star+r(\widehat\theta_N-\theta_T^\star))\,dr\).
On the high-probability interior event, the derivative of $J_N(\theta)$ at $\widehat\theta_N$ is zero, so $\widetilde G_N^\top\widehat A_Ng_N(\widehat\theta_N)=0$. The mean-value expansion of the sample moment along the same segment gives $g_N(\widehat\theta_N)=g_N(\theta_T^\star)+\bar G_N(\widehat\theta_N-\theta_T^\star)$. Hence
\begin{equation}\label{eq:AN-foc-step3}
0=\widetilde G_N^\top\widehat A_N
\big[g_N(\theta_T^\star)+\bar G_N(\widehat\theta_N-\theta_T^\star)\big].
\end{equation}

The Jacobian law and consistency imply $\widetilde G_N\to_pG$ and $\bar G_N\to_pG$. To see this, let $\theta_N^\dagger$ be any random sequence with $\theta_N^\dagger-\theta_T^\star\to_p0$. Then
\[
\begin{aligned}
\|\widehat G_N(\theta_N^\dagger)-G_T\|
&\le
\sup_{\theta\in\Theta}\left\|\widehat G_N(\theta)-\frac1T\sum_{t=1}^T\mathbb E_T[\nabla_\theta g_t(W_t,\theta)]\right\| \\
&\quad +\frac1T\sum_{t=1}^T\mathbb E_T[b_{g,1,T,t}]\,\|\theta_N^\dagger-\theta_T^\star\|,
\end{aligned}
\]
where $G_T:=T^{-1}\sum_t\mathbb E_T[\nabla_\theta g_t(W_t,\theta_T^\star)]$. The first term is $o_p(1)$ by Assumption~\ref{ass:uniform-lln}, and the second is $o_p(1)$ by Assumption~\ref{ass:smoothness}. Assumption~\ref{ass:smoothness} also gives $G_T\to G$. Applying the same bound to the random points on the line segment gives $\|\bar G_N-G\|\le\int_0^1\|\widehat G_N(\theta_T^\star+r(\widehat\theta_N-\theta_T^\star))-G\|\,dr=o_p(1)$. Since $\widehat A_N\to_pA$, the matrix $\widetilde G_N^\top\widehat A_N\bar G_N$ converges in probability to $G^\top A G$. Because $A\succ0$ and $G$ has full column rank, $G^\top A G$ is nonsingular, so $\widetilde G_N^\top\widehat A_N\bar G_N$ is nonsingular with probability approaching one.

It remains to reduce the score in \eqref{eq:AN-foc-step3} to the centered innovation average. The population criterion is $Q_T(\theta)=\bar m_T(\theta)^\top A\bar m_T(\theta)$, and the deterministic ball in Assumption~\ref{ass:gmm-interior} makes $\theta_T^\star$ an interior minimizer of $Q_T$ for all large $T$. Differentiability therefore gives the population first-order condition \(G_T^\top A\bar m_T(\theta_T^\star)=0\).
Adding and subtracting this identity yields
\[
\begin{aligned}
\widetilde G_N^\top\widehat A_Ng_N(\theta_T^\star)
&=G_T^\top A\big(g_N(\theta_T^\star)-\bar m_T(\theta_T^\star)\big) \\
&\quad +\big(\widetilde G_N^\top\widehat A_N-G_T^\top A\big)\big(g_N(\theta_T^\star)-\bar m_T(\theta_T^\star)\big) \\
&\quad +\big(\widetilde G_N^\top\widehat A_N-G_T^\top A\big)\bar m_T(\theta_T^\star).
\end{aligned}
\]
Under full observation, $g_N(\theta_T^\star)-\bar m_T(\theta_T^\star)=T^{-1}\sum_te_{T,t}$, and Assumption~\ref{ass:dependence}(ii) gives this term as $O_p(T^{-1/2})$. Call the second and third lines in the display $R_{1T}$ and $R_{2T}$. Then $\|R_{1T}\|\le\|\widetilde G_N^\top\widehat A_N-G_T^\top A\|\,\|T^{-1}\sum_te_{T,t}\|=o_p(T^{-1/2})$ and $\|R_{2T}\|\le\|\widetilde G_N^\top\widehat A_N-G_T^\top A\|\,\|\bar m_T(\theta_T^\star)\|=o_p(T^{-1/2})$, using Assumption~\ref{ass:local-correct-specification} for the last factor. Replacing $G_T$ by $G$ in the leading line changes the expression by $\|(G_T-G)^\top A T^{-1}\sum_te_{T,t}\|=o_p(T^{-1/2})$.

Solving \eqref{eq:AN-foc-step3} on the high-probability nonsingularity event gives the linear representation
\[
\sqrt T(\widehat\theta_N-\theta_T^\star)
=-(G^\top A G)^{-1}G^\top A\sqrt T\big(g_N(\theta_T^\star)-\bar m_T(\theta_T^\star)\big)+o_p(1).
\]
Assumption~\ref{ass:dependence}(ii) gives the Gaussian limit of the centered moment with covariance $\Omega_R$, and Slutsky's theorem gives the asserted covariance matrix for $\widehat\theta_N$.

For the delta method, let $\mathcal J_T:=\nabla_\theta h(\theta_T^\star)$. Since $\widehat\theta_N-\theta_T^\star=O_p(T^{-1/2})$ and $h$ is continuously differentiable, the mean-value expansion gives $h(\widehat\theta_N)-h(\theta_T^\star)=\mathcal J_T(\widehat\theta_N-\theta_T^\star)+r_T$, where $\|r_T\|=o_p(T^{-1/2})$. The maintained convergence $\mathcal J_T\to\mathcal J$ and Slutsky's theorem give the displayed delta-method limit.
\end{proof}

\begin{lemma}\label{lem:alpha-sum}
Let $(R_t)$ and $(X_t)$ be two independent two-sided processes with mixing coefficients $\alpha_R(h)$ and $\alpha_X(h)$. If $Z_t:=(R_t,X_t)$, then $\alpha_Z(h)\le \alpha_R(h)+\alpha_X(h)$ for $h\ge 0$.
\end{lemma}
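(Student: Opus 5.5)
The plan is to bound the joint dependence directly from the definition of the strong mixing coefficient and then use independence to factor probabilities. I will take the mixing coefficient in its standard form
\[
\alpha_Z(h)=\sup_t\sup\{|\Pr(A\cap B)-\Pr(A)\Pr(B)|:A\in\sigma(Z_s:s\le t),\,B\in\sigma(Z_s:s\ge t+h)\},
\]
and similarly for $R$ and $X$. Fix $t$ and $h$. Write $\mathcal R_-:=\sigma(R_s:s\le t)$, $\mathcal R_+:=\sigma(R_s:s\ge t+h)$, with $\mathcal X_\pm$ defined analogously. Then $\sigma(Z_s:s\le t)=\mathcal R_-\vee\mathcal X_-$ and $\sigma(Z_s:s\ge t+h)=\mathcal R_+\vee\mathcal X_+$. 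Because $R$ and $X$ are independent, the $\sigma$-fields $\mathcal R_-\vee\mathcal R_+$ and $\mathcal X_-\vee\mathcal X_+$ are independent.

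The first step treats rectangles. Take $A=A_R\cap A_X$ and $B=B_R\cap B_X$ with $A_R\in\mathcal R_-$, $A_X\in\mathcal X_-$, $B_R\in\mathcal R_+$, $B_X\in\mathcal X_+$. Independence gives $\Pr(A\cap B)=\Pr(A_R\cap B_R)\Pr(A_X\cap B_X)$ and $\Pr(A)\Pr(B)=\Pr(A_R)\Pr(B_R)\Pr(A_X)\Pr(B_X)$. The elementary identity $ab-cd=(a-c)b+c(b-d)$, combined with the fact that all factors lie in $[0,1]$, then bounds the difference by $\alpha_R(h)+\alpha_X(h)$.

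The main obstacle is passing from rectangles to general sets in the product $\sigma$-fields, since $\alpha$-mixing does not extend from a generating $\pi$-system by a simple monotone class argument. My plan is to condition instead. For general $A\in\mathcal R_-\vee\mathcal X_-$ and $B\in\mathcal R_+\vee\mathcal X_+$, I will realize the process on a product space $\Omega_R\times\Omega_X$, which is possible by independence. Let $A^x$ and $B^x$ denote the sections at a fixed $X$-path $x$; these belong to $\mathcal R_-$ and $\mathcal R_+$ respectively. By Fubini, $\Pr(A\cap B)=\int\Pr_R(A^x\cap B^x)\,dP_X(x)$. I then insert the intermediate quantity $\int\Pr_R(A^x)\Pr_R(B^x)\,dP_X(x)$. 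The difference between this and $\Pr(A\cap B)$ is at most $\alpha_R(h)$, pointwise in $x$.

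It remains to compare $\int f(x)g(x)\,dP_X$ with $\int f\,dP_X\int g\,dP_X$, where $f(x):=\Pr_R(A^x)$ is $\mathcal X_-$-measurable and $g(x):=\Pr_R(B^x)$ is $\mathcal X_+$-measurable, both taking values in $[0,1]$. This is a covariance of bounded functions. The crude covariance inequality $|\Cov(f,g)|\le4\alpha\|f\|_\infty\|g\|_\infty$ would give $4\alpha_X$, which is too weak. I will therefore use the sharp bound for $[0,1]$-valued functions. Write $f=\int_0^1\1\{f>u\}\,du$, and likewise for $g$. Then $\Cov(f,g)=\iint\Cov(\1\{f>u\},\1\{g>v\})\,du\,dv$. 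Each indicator covariance is bounded by $\alpha_X(h)$, because the events $\{f>u\}$ and $\{g>v\}$ lie in $\mathcal X_-$ and $\mathcal X_+$ respectively. This yields $|\Cov(f,g)|\le\alpha_X(h)$. Note that $\int f\,dP_X=\Pr(A)$ and $\int g\,dP_X=\Pr(B)$.

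Adding the two bounds and taking the supremum over $A$, $B$, and $t$ gives $\alpha_Z(h)\le\alpha_R(h)+\alpha_X(h)$. This argument also covers the rectangle step, so the first step is only a sanity check.
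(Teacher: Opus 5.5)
Your proof is correct, and it takes a different route from the paper's.

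\textbf{The paper's route.} The paper also starts from rectangles $A_R\cap A_X$ and $B_R\cap B_X$, using the same add-and-subtract identity. It then extends to general sets by asserting that, for fixed rectangular $B$, the class of $A$ satisfying $|\Pr(A\cap B)-\Pr(A)\Pr(B)|\le\alpha_R(h)+\alpha_X(h)$ is a monotone class containing the rectangles, hence the whole past field. The same is then done with the roles of $A$ and $B$ swapped.

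\textbf{Your route.} You prove the bound for arbitrary $A\in\mathcal R_-\vee\mathcal X_-$ and $B\in\mathcal R_+\vee\mathcal X_+$ in one pass. Realizing the pair on a product space, Fubini over $X$-sections splits the discrepancy into two parts.
\begin{itemize}
\item The $R$-part is at most $\alpha_R(h)$ pointwise in $x$, because $A^x\in\mathcal R_-$ and $B^x\in\mathcal R_+$.
\item The $X$-part is the covariance of the $\mathcal X_-$-measurable map $x\mapsto\Pr_R(A^x)$ and the $\mathcal X_+$-measurable map $x\mapsto\Pr_R(B^x)$. The layer-cake representation bounds it by $\alpha_X(h)$, avoiding the factor $4$ of the generic bounded-variable covariance inequality.
\end{itemize}

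\textbf{What each buys.} Your route gains rigor at exactly the step you flagged. The class of sets satisfying the bound is closed under complements and monotone limits. It is not closed under disjoint unions, because the two discrepancies add. Rectangles form only a $\pi$-system, not an algebra. So neither the monotone class theorem nor Dynkin's $\pi$--$\lambda$ theorem gives the paper's extension as written. Your conditioning argument avoids the issue entirely. The paper's route is shorter, but closing that step would take an additional argument essentially equivalent to yours.
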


\begin{proof}[Proof of Lemma \ref{lem:alpha-sum}]
Let $\mathcal R_0:=\sigma(R_t:t\le0)$, $\mathcal R_h:=\sigma(R_t:t\ge h)$, $\mathcal X_0:=\sigma(X_t:t\le0)$, and $\mathcal X_h:=\sigma(X_t:t\ge h)$. The past and future $Z$-fields are $\mathcal R_0\vee\mathcal X_0$ and $\mathcal R_h\vee\mathcal X_h$. First take rectangles $A=A_R\cap A_X$ and $B=B_R\cap B_X$, with $A_R\in\mathcal R_0$, $B_R\in\mathcal R_h$, $A_X\in\mathcal X_0$, and $B_X\in\mathcal X_h$. Independence of the full $R$-field and the full $X$-field gives
\[
\begin{aligned}
&\mathbb P(A\cap B)-\mathbb P(A)\mathbb P(B) \\
&\quad=\mathbb P(A_R\cap B_R)\mathbb P(A_X\cap B_X)
      -\mathbb P(A_R)\mathbb P(B_R)\mathbb P(A_X)\mathbb P(B_X).
\end{aligned}
\]
Adding and subtracting $\mathbb P(A_R)\mathbb P(B_R)\mathbb P(A_X\cap B_X)$ gives
\[
\begin{aligned}
&\left|\mathbb P(A\cap B)-\mathbb P(A)\mathbb P(B)\right| \\
&\quad\le
\left|\mathbb P(A_R\cap B_R)-\mathbb P(A_R)\mathbb P(B_R)\right|\mathbb P(A_X\cap B_X) \\
&\qquad +\mathbb P(A_R)\mathbb P(B_R)
\left|\mathbb P(A_X\cap B_X)-\mathbb P(A_X)\mathbb P(B_X)\right| \\
&\quad\le \alpha_R(h)+\alpha_X(h).
\end{aligned}
\]
For fixed rectangular $B$, the class of $A\in\mathcal R_0\vee\mathcal X_0$ satisfying this bound is a monotone class containing the rectangle generator; hence it is the whole past field. Repeating the same argument with $A$ fixed extends the bound to all $B\in\mathcal R_h\vee\mathcal X_h$. Taking suprema over $A$ and $B$ proves $\alpha_Z(h)\le\alpha_R(h)+\alpha_X(h)$.
\end{proof}

\begin{lemma}\label{lem:mixing-CLT}
Let $(Z_{T,t})_{1\le t\le n_T}$ be centered triangular arrays. Suppose:
\begin{enumerate}[label=(\roman*)]
\item there exists a sequence $\alpha(h)$ with $\sum_{h\ge 1}\alpha(h)^{\delta/(2+\delta)}<\infty$ for some $\delta>0$ such that, for every $T$, the array $(Z_{T,t})_t$ is $\alpha$-mixing with coefficients bounded by $\alpha(h)$;
\item $\sup_{T,t}\mathbb{E}|Z_{T,t}|^{2+\delta}<\infty$;
\item for every $\varepsilon>0$, $\sum_{t=1}^{n_T}\mathbb{E}\left[Z_{T,t}^2\,\1\{|Z_{T,t}|>\varepsilon\}\right]\to 0$;
\item $\Var\left(\sum_{t=1}^{n_T}Z_{T,t}\right)\to \sigma^2\in[0,\infty)$.
\end{enumerate}
Then $\sum_{t=1}^{n_T} Z_{T,t}\Rightarrow \mathcal{N}(0,\sigma^2)$.
\end{lemma}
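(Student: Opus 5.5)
The plan is a Bernstein big-block/small-block argument. It reduces $S_T:=\sum_{t=1}^{n_T}Z_{T,t}$ to a sum of nearly independent block sums, and then applies the Lindeberg--Feller theorem to an independent coupling of those blocks. Alternatively, one may cite a triangular-array CLT for strongly mixing arrays (e.g.\ the versions of Ekstr\"om or Francq--Zakoïan) and check its hypotheses against (i)--(iv); the steps below are what that check amounts to. If $\sigma^2=0$, (iv) gives $S_T\to0$ in $L^2$, so the claim is immediate; assume $\sigma^2>0$ from now on.

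\emph{Step 1: blocks.} Choose integers $q_T\to\infty$ and $p_T\to\infty$ with $q_T/p_T\to0$ and $p_T/n_T\to0$. Let $k_T:=\lfloor n_T/(p_T+q_T)\rfloor$, and cut $\{1,\dots,n_T\}$ into alternating big blocks $I_j$ of length $p_T$ and small blocks $J_j$ of length $q_T$, plus a remainder block. Write $S_T=U_T+V_T+W_T$ for the sums over big blocks, small blocks, and the remainder.

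\emph{Step 2: negligibility of $V_T,W_T$.} The tool is Davydov's inequality
\[
|\Cov(Z_{T,s},Z_{T,t})|\le 8\,\alpha(|t-s|)^{\delta/(2+\delta)}\|Z_{T,s}\|_{2+\delta}\|Z_{T,t}\|_{2+\delta},
\]
which holds with the common envelope $\alpha(\cdot)$ by (i). Summing over pairs inside a union of small blocks bounds $\Var(V_T)$ by $C\big(\sum_{h\ge0}\alpha(h)^{\delta/(2+\delta)}\big)\sum_{t\in\cup J_j}\|Z_{T,t}\|_{2+\delta}^2$, and similarly for $W_T$.

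To make the index sets contribute only a vanishing fraction of the total variance, truncate at level $\varepsilon$: write $Z_{T,t}=Z_{T,t}^{\le\varepsilon}+Z_{T,t}^{>\varepsilon}$, each recentered. By (iii), the large parts satisfy $\sum_t\|Z_{T,t}^{>\varepsilon}\|_2^2\to0$. The small parts are bounded, so the covariance inequality for bounded variables, $|\Cov|\le 4\varepsilon^2\alpha(h)$, applies to them, and (ii) is used to control the mixed truncation terms. Combining these with (iv) and the fact that the small-block index set has relative size $q_T/(p_T+q_T)\to0$, I aim to show $\Var(V_T)+\Var(W_T)\to0$.

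\emph{Step 3: decoupling and Lindeberg.} The big-block sums $X_j:=\sum_{t\in I_j}Z_{T,t}$ are separated by gaps of length $q_T$. The standard characteristic-function bound gives
\[
\Big|\mathbb E e^{iuU_T}-\prod_j\mathbb E e^{iuX_j}\Big|\le 16\,k_T\,\alpha(q_T).
\]
Choose $q_T$ so that this tends to zero; this is possible because $\sum_h\alpha(h)^{\delta/(2+\delta)}<\infty$ forces $h\,\alpha(h)\to0$ along suitable sequences, and $k_T\le n_T/p_T$. It then suffices to prove a CLT for independent copies of the $X_j$.

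For those copies, first compute the variance. Since $\Var(U_T)=\Var(S_T)+o(1)\to\sigma^2$ by Step 2, and the cross-block covariances are bounded by a Davydov tail sum of order $\sum_{h\ge q_T}\alpha(h)^{\delta/(2+\delta)}\to0$, we get $\sum_j\Var(X_j)\to\sigma^2$. The Lindeberg condition for the $X_j$ is the remaining requirement, and this is where I expect the main obstacle. I would verify it through a Lyapunov-type bound $\sum_j\mathbb E|X_j|^{2+\delta'}\to0$ for some $0<\delta'<\delta$. To get this bound, apply a Rosenthal inequality for $\alpha$-mixing sums (Rio's moment inequality) within each block. The inputs are (ii), which supplies uniform $(2+\delta)$-moments, and (iii) applied after truncation, which makes the maximal within-block contribution small. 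Together they should yield $\mathbb E|X_j|^{2+\delta'}\lesssim (\Var X_j)^{1+\delta'/2}+o(\Var X_j)$. Given this, $\max_j\Var(X_j)\to0$ follows from the block lengths and the Lindeberg condition (iii), and summing over $j$ gives the Lyapunov bound. Lindeberg--Feller then yields $U_T\Rightarrow\mathcal N(0,\sigma^2)$, and Slutsky's theorem with Step 2 gives the lemma.
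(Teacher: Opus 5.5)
There is a genuine gap. Your blocking architecture is the standard one (the paper itself does no blocking; it simply invokes Peligrad's triangular-array CLT), but the estimates you need in Steps 2--3 do not follow from (i)--(iv) as stated.

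Every bound available under $\alpha$-mixing is stated in terms of the individual summands:
\begin{itemize}
\item Davydov gives $\Var(V_T)\le C\sum_{t\in\cup_jJ_j}\|Z_{T,t}\|_{2+\delta}^2$.
\item The cross-block covariances are bounded by $C\big(\sum_{h\ge q_T}\alpha(h)^{\delta/(2+\delta)}\big)\max_t\|Z_{T,t}\|_{2+\delta}\sum_t\|Z_{T,t}\|_{2+\delta}$, not by the tail sum alone.
\item $\max_j\Var(X_j)\le C\max_j\sum_{t\in I_j}\|Z_{T,t}\|_{2+\delta}^2$.
\item Rio's Rosenthal-type inequality controls $\mathbb E|X_j|^{2+\delta'}$ through $\sum_{s,t\in I_j}|\Cov(Z_{T,s},Z_{T,t})|$ and quantile integrals of the individual $Z_{T,t}$. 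It does not control it through $\Var(X_j)$, so your bound $\mathbb E|X_j|^{2+\delta'}\lesssim(\Var X_j)^{1+\delta'/2}+o(\Var X_j)$ is not what that inequality delivers.
\end{itemize}
All of these vanish only under a normalization such as $\sup_T\sum_t\|Z_{T,t}\|_{2+\delta}^2<\infty$ together with $\max_t\|Z_{T,t}\|_{2+\delta}\to0$.

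Hypothesis (ii) is scale-free: it is an $O(1)$ bound on each of $n_T\to\infty$ summands. With it, the small-block bound is only $O(k_Tq_T)$ and the cross-block bound is only $O\big(n_T\sum_{h\ge q_T}\alpha(h)^{\delta/(2+\delta)}\big)$. Nor can (iv) substitute, because it controls $\Var(S_T)$, which under dependence can be far smaller than $\sum_t\mathbb E Z_{T,t}^2$. For example, $Z_{T,t}=n_T^{-1/2}\eta_t+n_T^{-1/4}(\xi_t-\xi_{t-1})$, with independent bounded centered i.i.d.\ sequences $(\eta_t)$ and $(\xi_t)$, satisfies (i)--(iv) while $\sum_t\mathbb E Z_{T,t}^2\asymp n_T^{1/2}$. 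So the fact that the small blocks occupy a fraction $q_T/(p_T+q_T)\to0$ of the indices does not give them a vanishing share of the variance.

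The truncation patch does not close this.
\begin{itemize}
\item Summing $|\Cov|\le4\varepsilon^2\alpha(h)$ over the roughly $n_Tq_T/p_T$ small-block indices gives a term of order $\varepsilon^2n_Tq_T/p_T$. This diverges for fixed $\varepsilon$, and without a normalized moment bound, Lindeberg gives no rate at which $\varepsilon$ may shrink.
\item The diagonal terms $\sum_{t\in\cup_jJ_j}\mathbb E Z_{T,t}^2$ are not controlled at all.
\item Condition (ii) gives only $O(1)$ control of each mixed truncation term.
\end{itemize}

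The natural repair is to add the normalized condition. It holds in every application in the paper, where $Z_{T,t}$ is $T^{-1/2}$ times an array with uniformly bounded $(2+\delta)$-moments, so $\|Z_{T,t}\|_{2+\delta}\le CT^{-1/2}$. With it, your Step 2 and the variance and cross-block parts of Step 3 go through essentially as you wrote them. The big-block Lindeberg verification still needs a genuine argument in terms of these normalized norms, for instance truncating each summand at a level $\varepsilon_T\downarrow0$ and bounding higher moments of the bounded parts. That is precisely the work that citing an established triangular-array CLT packages, which is what the paper does.
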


\begin{proof}[Proof of Lemma \ref{lem:mixing-CLT}]
This is the triangular-array central limit theorem stated in \citet[][Thm.~2.1]{Peligrad1996}; see also \citet{Utev1990} and \citet[][Thm.~19.1]{Billingsley1999}. Conditions (i) and (ii) give the uniform mixing and moment bounds required by that theorem, condition (iii) is its Lindeberg condition, and condition (iv) fixes the limiting variance. Applying the cited theorem gives the scalar normal limit, including the degenerate case $\sigma^2=0$.
\end{proof}

\begin{proof}[Proof of Proposition \ref{prop:primitive-verification}]
Full observation is maintained throughout, so $N=T$ and no inclusion weights appear. The proof verifies the stochastic high-level conditions not already imposed in the proposition.

First consider the centered-moment CLT in Assumption~\ref{ass:dependence}. Assumption~\ref{ass:primitive-app}(i) gives the fixed-lag covariance limits, the summable covariance envelope, and the uniform $2+\delta$ moment bound for $(e_{T,t})$. It remains to prove the vector central limit theorem. Fix a unit vector $u$ and set $Z_{T,t}:=T^{-1/2}u^\top e_{T,t}$. The scalar array is centered and has the same strong-mixing coefficients as $(u^\top e_{T,t})$. For every $\varepsilon>0$,
\[
\sum_{t=1}^T\mathbb E_T[Z_{T,t}^2\1\{|Z_{T,t}|>\varepsilon\}]
\le \varepsilon^{-\delta}\sum_{t=1}^T\mathbb E_T|Z_{T,t}|^{2+\delta}
\le C\varepsilon^{-\delta}T^{-\delta/2}\to0,
\]
so the Lindeberg condition in Lemma~\ref{lem:mixing-CLT} holds. Collecting covariances by lag gives
\[
\Var_T\!\left(\sum_{t=1}^TZ_{T,t}\right)
=\sum_{\ell=-(T-1)}^{T-1}\left(1-\frac{|\ell|}{T}\right)u^\top\Gamma_{e,T}(\ell)u,
\]
where $\Gamma_{e,T}(\ell):=T^{-1}\sum_{t=|\ell|+1}^T\mathbb E_T[e_{T,t}e_{T,t-|\ell|}^\top]$ with the transpose convention for negative lags. The fixed-lag limits and the summable envelope imply convergence of this variance to $u^\top\Omega_Ru$ by dominated convergence for series: first truncate the lag sum at a fixed $M$, pass to the fixed-lag limits, and then let $M\to\infty$ using the envelope. Lemma~\ref{lem:mixing-CLT} gives $T^{-1/2}\sum_tu^\top e_{T,t}\Rightarrow\mathcal N(0,u^\top\Omega_Ru)$. Since $u$ was arbitrary, Cram\'er--Wold gives $T^{-1/2}\sum_te_{T,t}\Rightarrow\mathcal N(0,\Omega_R)$.

Next verify the uniform laws in Assumption~\ref{ass:uniform-lln}. Let $f_{T,t}(\theta)$ be a scalar coordinate of either $g_t(W_t,\theta)-\mathbb E_T[g_t(W_t,\theta)]$ or $J_{T,t}(\theta)-\mathbb E_T[J_{T,t}(\theta)]$. Assumption~\ref{ass:primitive-app}(ii) implies, uniformly in $\theta$,
\[
\Var_T\!\left(T^{-1}\sum_{t=1}^T f_{T,t}(\theta)\right)
\le T^{-1}\sup_{\vartheta\in\Theta}\frac1T\sum_{h=-(T-1)}^{T-1}\sum_{t=|h|+1}^T
\left|\Cov_T(f_{T,t}(\vartheta),f_{T,t-|h|}(\vartheta))\right|=O(T^{-1}).
\]
Chebyshev's inequality gives pointwise convergence for each fixed $\theta$, and a union bound gives convergence on any finite net. For the moment process, Assumption~\ref{ass:smoothness} gives
\[
\begin{aligned}
&\left\|T^{-1}\sum_t\big(g_t(W_t,\theta)-g_t(W_t,\theta')
-\mathbb E_T[g_t(W_t,\theta)-g_t(W_t,\theta')]\big)\right\| \\
&\quad\le \left(T^{-1}\sum_tb_{g,T,t}+T^{-1}\sum_t\mathbb E_T b_{g,T,t}\right)\|\theta-\theta'\|.
\end{aligned}
\]
The multiplier on the right is $O_p(1)$. The same argument with the Jacobian Lipschitz envelope $b_{g,1,T,t}$ gives stochastic equicontinuity of the centered Jacobian process. Compactness of $\Theta$ now upgrades pointwise convergence to uniform convergence. Explicitly, if $U_T(\theta)$ is either centered process and $(\theta_j)$ is an $\eta$-net, then on the event that its Lipschitz multiplier $L_T$ is at most $M$, $\sup_{\theta\in\Theta}\|U_T(\theta)\|\le\max_j\|U_T(\theta_j)\|+M\eta$. Choosing $M$ with high probability, then $\eta$ smaller than the target tolerance divided by $M$, and then passing to the finite-net limit gives the uniform law. The moment, Jacobian, and parameter dimensions are fixed, so coordinatewise uniform convergence implies the norm convergence in Assumption~\ref{ass:uniform-lln}.

It remains to verify Assumption~\ref{ass:hac-regularity}. For a fixed lag $\ell$ and coordinate pair $(i,j)$, Assumption~\ref{ass:primitive-app}(iii) gives the deterministic product mean $\bar\Gamma_{s,T}(\ell)\to\Gamma_{g,\mathrm{hac}}(\ell)$ and the variance bound $\Var_T(T^{-1}\sum_{t=\ell+1}^TY_{T,t}^{ij}(\ell))=O(T^{-1})$. Therefore $\widehat\Gamma_s(\ell)\to_p\Gamma_{g,\mathrm{hac}}(\ell)$ for every fixed $\ell$. For the growing-window stochastic term, write $A_{T,\ell}:=\widehat\Gamma_s(\ell)-\bar\Gamma_{s,T}(\ell)$ and extend the definition to negative lags by transposition. The componentwise covariance-sum bound gives $\mathbb E_T\|A_{T,\ell}\|^2\le C/T$ uniformly over $0\le\ell\le L_T$. Since $K$ is bounded and dimensions are fixed,
\[
\begin{aligned}
\mathbb E_T\left\|\sum_{1\le |\ell|\le L_T}K(|\ell|/L_T)A_{T,\ell}\right\|^2
&\le C L_T\sum_{1\le |\ell|\le L_T}\mathbb E_T\|A_{T,\ell}\|^2 \\
&=O(L_T^2/T)=o(1).
\end{aligned}
\]
The deterministic lag-window bias imposed in Assumption~\ref{ass:primitive-app}(iii) then gives the growing-window clause for the infeasible centered contribution $s_{T,t}$.

The feasible plug-in clause uses the first-order rate from Theorem~\ref{thm:AN}, which is available here because the preceding paragraphs have verified Assumptions~\ref{ass:dependence} and~\ref{ass:uniform-lln}, and Theorem~\ref{thm:AN} does not require Assumption~\ref{ass:hac-regularity}. The theorem and Assumption~\ref{ass:local-correct-specification} give $\|\widehat\theta_N-\theta_T^\star\|=O_p(T^{-1/2})$, $\|g_N(\theta_T^\star)-\bar m_T(\theta_T^\star)\|=O_p(T^{-1/2})$, and $\|g_N(\widehat\theta_N)-\bar m_T(\theta_T^\star)\|=O_p(T^{-1/2})$. Let $\widehat s_t:=g_t(W_t,\widehat\theta_N)-g_N(\widehat\theta_N)$, $d_t:=\widehat s_t-s_{T,t}$, $\Delta_T:=\|\widehat\theta_N-\theta_T^\star\|$, and $M_T:=\|g_N(\widehat\theta_N)-\bar m_T(\theta_T^\star)\|$. Assumption~\ref{ass:smoothness} gives \(\|d_t\|\le b_{g,T,t}\Delta_T+M_T\) and \(\Delta_T+M_T=O_p(T^{-1/2})\).
Also $\|s_{T,t}\|\le\bar B_{T,t}$ and $\|\widehat s_t\|\le\bar B_{T,t}+\|d_t\|$. Adding and subtracting $\widehat s_ts_{T,t-\ell}^\top$ gives, uniformly over $0\le\ell\le L_T$,
\[
\begin{aligned}
&\frac1T\sum_{t=\ell+1}^T\|\widehat s_t\widehat s_{t-\ell}^\top-s_{T,t}s_{T,t-\ell}^\top\| \\
&\quad\le \frac1T\sum_{t=\ell+1}^T\left(\|d_t\|\bar B_{T,t-\ell}+\bar B_{T,t}\|d_{t-\ell}\|+\|d_t\|\|d_{t-\ell}\|\right).
\end{aligned}
\]
Let $U_{T,\ell}$ denote the right side of the preceding bound. The inequality $\|d_t\|\le b_{g,T,t}\Delta_T+M_T$ gives
\[
U_{T,\ell}\le(\Delta_T+M_T)A_{1,T,\ell}+ (\Delta_T+M_T)^2A_{2,T,\ell},
\]
where $A_{1,T,\ell}:=T^{-1}\sum_{t=\ell+1}^T(b_{g,T,t}\bar B_{T,t-\ell}+\bar B_{T,t}b_{g,T,t-\ell}+\bar B_{T,t}+\bar B_{T,t-\ell})$ and $A_{2,T,\ell}:=T^{-1}\sum_{t=\ell+1}^T(1+b_{g,T,t})(1+b_{g,T,t-\ell})$. Assumption~\ref{ass:primitive-app}(iv), $T^{-1}\sum_t\bar B_{T,t}=O_p(1)$, and Cauchy--Schwarz with the envelope moment bounds give $\sup_{\ell\le L_T}A_{1,T,\ell}=O_p(1)$ and $\sup_{\ell\le L_T}A_{2,T,\ell}=O_p(1)$. Therefore $\sup_{0\le\ell\le L_T}U_{T,\ell}=O_p(T^{-1/2})$. Since $K$ is bounded, with $C_K:=\sup_x|K(x)|$, the lag-window replacement is bounded by $C_K\sum_{|\ell|\le L_T}U_{T,|\ell|}=O_p(L_T/\sqrt T)=o_p(1)$. This verifies the feasible part of Assumption~\ref{ass:hac-regularity}. The final conclusions follow by applying Theorem~\ref{thm:AN}, Proposition~\ref{lem:fixed-env-decomp}, Theorem~\ref{thm:hac}, and Corollary~\ref{cor:scalar-conservative} under the additional assumptions stated in the proposition.
\end{proof}

\begin{proof}[Proof of Corollary~\ref{cor:lp-primitive}]
Fix the finite horizon set $\mathcal H$ and let $q<\infty$ be large enough that each single-date stacked LP moment and Jacobian is a function of primitive variables in a window of radius at most $q$ around the relevant date, after including the largest horizon in $\mathcal H$. For a fixed lag $\ell$, define $\mathcal W_t(\ell):=[t-q,t+q]\cup[t-\ell-q,t-\ell+q]$. Writing $P_t(\ell)$ for any scalar coordinate of $s_{T,t}s_{T,t-\ell}^\top$, the variable $P_t(\ell)$ is measurable with respect to primitive variables in $\mathcal W_t(\ell)$, and $P_{t-h}(\ell)$ is measurable with respect to $\mathcal W_{t-h}(\ell)$. Hence $\Cov(P_t(\ell),P_{t-h}(\ell))=0$ whenever $\mathcal W_t(\ell)\cap\mathcal W_{t-h}(\ell)=\emptyset$. The intersection can be nonempty only if $|h|\le2q$, $|h-\ell|\le2q$, or $|h+\ell|\le2q$. Thus the number of relevant $h$ values is bounded by $12q+3$, uniformly in $T$ and $\ell$, and the maintained moment bound gives $T^{-1}\sum_h\sum_t|\Cov(P_t(\ell),P_{t-h}(\ell))|\le C(12q+3)$ uniformly over the retained lags. The same finite-window argument applies to centered moment and Jacobian coordinates. Thus the covariance envelopes in Assumption~\ref{ass:primitive-app} have the finite-support form required for the stochastic innovation and product-array parts.

The uniform $4+\delta$ bound on the primitive products implies the $2+\delta$ moment bound for $e_{T,t}$, the second-moment envelope bound for the Jacobian, and the product-array moment bounds used in Assumptions~\ref{ass:smoothness} and~\ref{ass:primitive-app}. The fixed-lag Ces\`aro limits of the primitive second moments give the fixed-lag covariance limits for the moment and for the lag-product arrays. When the HAC conclusion is invoked, Assumption~\ref{ass:mean-path} supplies absolute summability for the centered mean path. For the deterministic lag-window bias in Assumption~\ref{ass:primitive-app}(iii), fix $M$ and control the finite set $|\ell|\le M$ by fixed-lag convergence; the remaining tail is bounded, after taking limits, by the innovation finite-support bound plus $\sum_{|\ell|>M}\|\Gamma_{g,\mu}(\ell)\|$, which is arbitrarily small for large $M$.

The LP Jacobian is block diagonal across horizons, with $h$th block $-T^{-1}\sum_t\mathbb E_T[\psi_t\psi_t^\top]$ in the population limit. The nonsingular Gram limit $Q_\psi\succ0$ therefore gives the full-column-rank limit for the stacked Jacobian. In the just-identified normal-equation case, the sample-period normal equations give $\bar m_T(\theta_T^\star)=0$ exactly; the corollary also allows the stated local-correct-specification condition, which is the weaker requirement needed by Theorem~\ref{thm:AN}. These arguments verify the moment-level stochastic, smoothness, product-array, and local-correct-specification conditions. The GMM weighting, moving-estimand separation, and interior/local-solution requirements are the additional assumptions imposed in the corollary rather than consequences of finite-window dependence alone. Under those additional assumptions, Theorem~\ref{thm:AN} applies to the stacked LP estimator. If Assumption~\ref{ass:mean-path} and the kernel and bandwidth conditions also hold, Proposition~\ref{prop:primitive-verification} verifies Assumption~\ref{ass:hac-regularity}; Proposition~\ref{lem:fixed-env-decomp} and Theorem~\ref{thm:hac} then give the stated HAC conclusion.
\end{proof}

\begin{proof}[Proof of Corollary~\ref{cor:var-primitive}]
For fixed $p$, $n$, and $m$, the VAR moment $g_t^{\mathrm{VAR}}(\phi)=\operatorname{vec}(X_tu_t(\phi)^\top)$ is affine in $\phi$ after conditioning on $X_t$, and its Jacobian is a finite-dimensional array of products of components of $X_t$. The paragraph preceding the corollary imposes Assumption~\ref{ass:primitive-app} on the VAR moment, Jacobian, and lag-product arrays. Proposition~\ref{prop:primitive-verification} therefore supplies the centered-moment CLT, the uniform moment and Jacobian laws, and the lag-window HAC law once the remaining GMM conditions in the corollary are imposed.

To identify the Jacobian limit, write the residual as $u_t(\phi)=y_t-\mathcal R_t\phi$, where $\mathcal R_t$ is the fixed linear map that converts the vectorized VAR coefficient into the fitted value. Since $d u_t(\phi)=-\mathcal R_t\,d\phi$, the differential satisfies $d\operatorname{vec}(X_tu_t(\phi)^\top)=\operatorname{vec}(X_t\,d u_t(\phi)^\top)=-(I_n\otimes X_t)\mathcal R_t\,d\phi$. Under the maintained coefficient ordering, $\mathcal R_t\,d\phi$ is the differential of the fitted value and, writing $d\phi=\operatorname{vec}(d\Phi)$, $(I_n\otimes X_t)\mathcal R_t\,d\phi=\operatorname{vec}(X_tX_t^\top d\Phi)=(I_n\otimes X_tX_t^\top)d\phi$, up to the fixed vec-permutation convention. Averaging and taking expectations gives $\bar G_T=-T^{-1}\sum_t\mathbb E_T[(I_n\otimes X_t)\mathcal R_t]=-I_n\otimes T^{-1}\sum_t\mathbb E_T[X_tX_t^\top]$. The limit $Q_X\succ0$ implies that the limiting Jacobian has full column rank. The sample-period VAR normal equations give local correct specification in the just-identified normal-equation case; otherwise the local-correct-specification condition imposed in the corollary is exactly the high-level condition needed to remove a first-order mean component. The primitive product restrictions verify the moment-level stochastic, smoothness, and product-array requirements, while the GMM weighting, moving-estimand separation, and interior/local-solution requirements remain the separate assumptions stated in the corollary. Under those additional assumptions, Theorem~\ref{thm:AN} gives the design-based asymptotic distribution of the VAR coefficient estimator. If the centered VAR mean path also satisfies Assumption~\ref{ass:mean-path}, Proposition~\ref{lem:fixed-env-decomp} identifies the conservative HAC variance limit and Theorem~\ref{thm:hac} gives feasible HAC consistency for that limit.
\end{proof}

The proofs below verify the four conditions of Lemma~\ref{lem:mixing-CLT} for the incomplete-observation CLTs. These appendix-only sampling lemmas use a primitive mixing route in addition to the high-level conditions in the main text: for each unit vector $u$, the scalar arrays $u^\top g_t(W_t,\theta_T^\star)$ and $u^\top e_{T,t}$ have $\alpha$-mixing coefficients dominated by a common envelope $\alpha(h)$ satisfying $\sum_h\alpha(h)^{\delta/(2+\delta)}<\infty$. This route is used only for incomplete observation; the fully observed main theorem uses the innovation CLT in Assumption~\ref{ass:dependence}(ii) directly.

\begin{lemma}\label{lem:schemeA}
Let $Y_{T,t}:=(R_{T,t}-\rho_T)g_t(W_t,\theta_T^\star)\in\mathbb R^d$. Under Assumptions~\ref{ass:dependence}, \ref{ass:smoothness}, \ref{ass:mean-path}, \ref{ass:local-correct-specification}, and~\ref{ass:sampling-app}, and under the random-sampling scheme and the appendix-only mixing route just stated,
\[
T^{-1/2}\sum_{t=1}^T Y_{T,t}\Rightarrow\mathcal N(0,\Sigma_A),
\qquad
\Sigma_A:=\sum_{\ell\in\mathbb Z}\gamma_R(\ell)\Gamma_{g,\mathrm{hac}}(\ell).
\]
The series defining $\Sigma_A$ is absolutely convergent.
\end{lemma}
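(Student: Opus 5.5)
The plan is to reduce the vector statement to a scalar one by Cram\'er--Wold and then apply the triangular-array mixing CLT in Lemma~\ref{lem:mixing-CLT}, checking its four conditions one at a time. Fix a unit vector $u\in\mathbb R^d$ and set $Z_{T,t}:=T^{-1/2}u^\top Y_{T,t}=T^{-1/2}(R_{T,t}-\rho_T)\,u^\top g_t(W_t,\theta_T^\star)$.

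\textbf{Centering.} Under the random-sampling scheme, $R_{T,t}$ is independent of the moment map conditional on $\mathcal E_T$ and has mean $\rho_T$. Hence $\mathbb E_T[Z_{T,t}]=0$ date by date, even though $\mu_{T,t}$ need not vanish.

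\textbf{Mixing.} The array $u^\top g_t$ satisfies the appendix-only mixing route with envelope $\alpha(h)$. Lemma~\ref{lem:alpha-sum} then shows that the pair $(R_{T,t},u^\top g_t)$ is $\alpha$-mixing with coefficients at most $\alpha_R(h)+\alpha(h)$. The summability $\sum_h(\alpha_R(h)+\alpha(h))^{\delta/(2+\delta)}<\infty$ follows from subadditivity of $x\mapsto x^{\delta/(2+\delta)}$. Since $Z_{T,t}$ is a measurable function of the date-$t$ pair, condition (i) holds.

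\textbf{Moments and Lindeberg.} Because $|R_{T,t}-\rho_T|\le1$, the bound $\mathbb E_T|Z_{T,t}|^{2+\delta}\le T^{-1-\delta/2}\,\mathbb E_T B_{g,T,t}^{2+\delta}$ follows from the envelope in Assumption~\ref{ass:smoothness}. This gives condition (ii) after removing the $T^{-1/2}$ scaling. Condition (iii) then follows exactly as in the proof of Proposition~\ref{prop:primitive-verification}: the Lindeberg sum is at most $C\varepsilon^{-\delta}T^{-\delta/2}\to0$.

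\textbf{Variance.} Condition (iv) is the main step. Independence factorizes each covariance as
\[
\Cov_T(Z_{T,t},Z_{T,t-\ell})=T^{-1}\gamma_{R,T}(\ell)\,u^\top\mathbb E_T[g_tg_{t-\ell}^\top]u .
\]
The second factor uses the raw product, not the centered one, because $\mathbb E[R_{T,t}-\rho_T]=0$ kills every mean term. Summing over $t$ and $\ell$ gives
\[
\Var_T\Big(\sum_t Z_{T,t}\Big)=\sum_{|\ell|<T}\gamma_{R,T}(\ell)\,u^\top\Big(T^{-1}\sum_{t>|\ell|}\mathbb E_T[g_tg_{t-|\ell|}^\top]\Big)u,
\]
with the transpose convention for negative lags.

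For each fixed $\ell$, the inner average converges to $\Gamma_{g,\mathrm{hac}}(\ell)$. The decomposition in Proposition~\ref{lem:fixed-env-decomp} supplies this limit for the centered product. Assumption~\ref{ass:local-correct-specification} then shows that the raw and centered products differ only through terms of size $\bar m_T(\theta_T^\star)\to0$ (cross terms involving $\bar m_T$ times bounded means), so they share the same limit. Also $\gamma_{R,T}(\ell)\to\gamma_R(\ell)$.

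To pass to the infinite sum, I would use dominated convergence. The inner matrices are uniformly bounded by Cauchy--Schwarz and the $2+\delta$ envelope, and $|\gamma_{R,T}(\ell)|\le b_R(\ell)$ with $\sum_\ell b_R(\ell)<\infty$. Together these give a summable dominating sequence, so the variance converges to $u^\top\Sigma_A u$. The same domination, $\|\gamma_R(\ell)\Gamma_{g,\mathrm{hac}}(\ell)\|\le C\,b_R(\ell)$, shows that the series defining $\Sigma_A$ converges absolutely.

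Lemma~\ref{lem:mixing-CLT} then gives the scalar limit $\mathcal N(0,u^\top\Sigma_Au)$, and Cram\'er--Wold yields the vector statement. I expect this variance step to be the main obstacle. It needs uniform boundedness in $T$ of the raw lag products, rather than only summability of their limits, and it needs care in justifying that the raw and centered fixed-lag limits agree. If the envelope bound proves delicate, I would instead truncate at $|\ell|\le M$ and control the tail with $\sum_{|\ell|>M}b_R(\ell)$ times a uniform constant.
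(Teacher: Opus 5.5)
Your proposal is correct and follows the paper's proof essentially step for step. The paper also uses Cram\'er--Wold with Lemma~\ref{lem:mixing-CLT}, mixing via Lemma~\ref{lem:alpha-sum} and subadditivity of $x\mapsto x^{\delta/(2+\delta)}$, the same $C\varepsilon^{-\delta}T^{-\delta/2}$ Lindeberg bound, the independence factorization $\gamma_{R,T}(\ell)\,\mathbb E_T[g_tg_{t-\ell}^\top]$ with the raw product, local correct specification to match raw and centered fixed-lag limits, and truncation with dominated convergence using $b_R(\ell)$ and the uniform Cauchy--Schwarz bound on the lag products.
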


\begin{proof}[Proof of Lemma \ref{lem:schemeA}]
Fix $u\in\mathbb R^d$ with $\|u\|=1$, write $g_{T,t}:=g_t(W_t,\theta_T^\star)$, and set $Z_{T,t}:=T^{-1/2}u^\top(R_{T,t}-\rho_T)g_{T,t}$. The variables are centered because the inclusion process is independent of the moment process and $\mathbb E[R_{T,t}-\rho_T]=0$. It remains to verify Lemma~\ref{lem:mixing-CLT} for $\sum_tZ_{T,t}$.

Let $Z'_t:=(R_{T,t},g_{T,t})$. The random-sampling scheme makes the inclusion process independent of the design-shock process, and Lemma~\ref{lem:alpha-sum} gives $\alpha_{Z'}(h)\le\alpha_R(h)+\alpha(h)$. With $\eta:=\delta/(2+\delta)\in(0,1)$, $(a+b)^\eta\le a^\eta+b^\eta$, so the scalar array $(Z_{T,t})$ has mixing coefficients with summable $\eta$ powers. For the moment and Lindeberg conditions, put $p:=2+\delta$. Since $|R_{T,t}-\rho_T|\le1$ and Assumption~\ref{ass:smoothness} gives a uniformly bounded $p$th moment for $g_{T,t}$, \(\sup_{T,t}\mathbb E_T|Z_{T,t}|^p\le C\).
For every $\varepsilon>0$,
\[
\sum_{t=1}^T\mathbb E_T\big[Z_{T,t}^2\1\{|Z_{T,t}|>\varepsilon\}\big]
\le \varepsilon^{-\delta}\sum_{t=1}^T\mathbb E_T|Z_{T,t}|^{2+\delta}
\le C\varepsilon^{-\delta}T^{-\delta/2}\to0.
\]

It remains to identify the variance. For $\ell\ge0$, define $M_{g,T}(\ell):=T^{-1}\sum_{t=\ell+1}^T\mathbb E_T[g_{T,t}g_{T,t-\ell}^\top]$ and set $M_{g,T}(-\ell):=M_{g,T}(\ell)^\top$. Independence of $R$ and the moment process gives, for each fixed lag, $\Cov_T(Y_{T,t},Y_{T,t-|\ell|})=\gamma_{R,T}(\ell)\mathbb E_T[g_{T,t}g_{T,t-|\ell|}^\top]$, with the transpose convention for negative lags. Hence
\[
\Var_T\!\left(\sum_{t=1}^TZ_{T,t}\right)
=\sum_{\ell=-(T-1)}^{T-1}\gamma_{R,T}(\ell)u^\top M_{g,T}(\ell)u.
\]
For fixed $\ell$, $M_{g,T}(\ell)$ has the same limit as the centered HAC product. Let $\bar m_T:=\bar m_T(\theta_T^\star)$ and $\bar M_{g,T}(\ell):=T^{-1}\sum_{t=\ell+1}^T\mathbb E_T[(g_{T,t}-\bar m_T)(g_{T,t-\ell}-\bar m_T)^\top]$. If $A_{T,\ell}:=T^{-1}\sum_{t=\ell+1}^T\mathbb E_Tg_{T,t}$ and $B_{T,\ell}:=T^{-1}\sum_{t=\ell+1}^T\mathbb E_Tg_{T,t-\ell}$, then $M_{g,T}(\ell)-\bar M_{g,T}(\ell)=A_{T,\ell}\bar m_T^\top+\bar m_TB_{T,\ell}^\top-\bar m_T\bar m_T^\top$. The fixed-lag averages $A_{T,\ell}$ and $B_{T,\ell}$ are $O(1)$ by the envelope, and Assumption~\ref{ass:local-correct-specification} gives $\|\bar m_T\|=o(T^{-1/2})$, so the difference is $o(1)$. Therefore $M_{g,T}(\ell)\to\Gamma_{g,\mathrm{hac}}(\ell)$.

The dominated-convergence passage over lags is justified as follows. Assumption~\ref{ass:sampling-app} gives $|\gamma_{R,T}(\ell)|\le b_R(\ell)$ with $\sum_\ell b_R(\ell)<\infty$, while Cauchy--Schwarz and the moment envelope give $\sup_{T,\ell}\|M_{g,T}(\ell)\|<\infty$. Truncate the variance sum at $|\ell|\le L$, pass to fixed-lag limits, and then let $L\to\infty$. The variance converges to $u^\top\Sigma_Au$. Lemma~\ref{lem:mixing-CLT} gives the scalar normal limit, and Cram\'er--Wold gives the vector limit. Absolute convergence of $\Sigma_A$ follows from the same summable bound.
\end{proof}

The next lemma is a generic weighted CLT used for deterministic sampling windows and for the full-sample centered innovation term. Let $S_T:=\sum_{t=1}^T a_{T,t}X_{T,t}$, where $X_{T,t}\in\mathbb R^d$ is centered, has uniformly bounded $2+\delta$ moments, and satisfies the appendix-only mixing envelope stated before Lemma~\ref{lem:schemeA}. For deterministic weights, assume $\max_{t\le T}|a_{T,t}|\to0$ and $\sum_{t=1}^T a_{T,t}^2\to1$. For each $\ell\in\mathbb Z$, define \(C_T(\ell):=\sum_{t=|\ell|+1}^T a_{T,t}a_{T,t-|\ell|}\Cov_T(X_{T,t},X_{T,t-|\ell|})\), with $C_T(-\ell):=C_T(\ell)^\top$ for $\ell>0$. Suppose that, for each fixed $\ell$, $C_T(\ell)\to\kappa(\ell)\Gamma(\ell)$, and that \(\lim_{L\to\infty}\limsup_{T\to\infty}\sum_{|\ell|>L}\|C_T(\ell)\|=0\).
These two covariance conditions imply absolute convergence of $\sum_{\ell\in\mathbb Z}\kappa(\ell)\Gamma(\ell)$. If the weights are random, the same conditions are imposed conditionally on the weight sigma field: the maximum-weight, sum-of-squares, fixed-lag weighted-covariance, and tail conditions hold in probability, with $\sup_T\mathbb E\sum_ta_{T,t}^2<\infty$ and $\mathbb E\sum_t|a_{T,t}|^{2+\delta}\to0$. The deterministic-window application uses deterministic weights $R_{T,t}/\sqrt{T\rho_T}$; its fixed-lag weighted-covariance condition is the selected-window covariance analogue of the pair-frequency condition.

\begin{lemma}\label{lem:schemeB}
Suppose the weighted triangular array introduced in the preceding paragraph satisfies the stated uniform $(2+\delta)$ moment bound, appendix-only mixing envelope, deterministic or random weight conditions, fixed-lag weighted-covariance convergence, and covariance-tail condition. Then \(S_T\Rightarrow\mathcal N(0,\Sigma_B)\), where \(\Sigma_B:=\sum_{\ell\in\mathbb Z}\kappa(\ell)\Gamma(\ell)\).
\end{lemma}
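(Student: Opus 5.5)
We reduce to the scalar case with Cram\'er--Wold and then apply Lemma~\ref{lem:mixing-CLT}, treating deterministic weights first and random weights afterwards by conditioning.

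\emph{Deterministic weights.} Fix a unit vector $u\in\mathbb R^d$ and set $Z_{T,t}:=a_{T,t}u^\top X_{T,t}$. We check the four conditions of Lemma~\ref{lem:mixing-CLT} in turn.

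\begin{enumerate}[label=(\roman*)]
\item Each $Z_{T,t}$ is a deterministic multiple of a coordinate projection of $X_{T,t}$. Its $\alpha$-mixing coefficients are therefore dominated by the appendix-only envelope $\alpha(h)$, which satisfies $\sum_h\alpha(h)^{\delta/(2+\delta)}<\infty$.
\item Since $|a_{T,t}|\to0$ uniformly in $t$, $\sup_{T,t}\mathbb E|Z_{T,t}|^{2+\delta}\le \sup_{T,t}\mathbb E\|X_{T,t}\|^{2+\delta}<\infty$ eventually.
\item For Lindeberg, use the standard bound
\[
\sum_t\mathbb E[Z_{T,t}^2\1\{|Z_{T,t}|>\varepsilon\}]\le\varepsilon^{-\delta}\sum_t|a_{T,t}|^{2+\delta}\,\mathbb E\|X_{T,t}\|^{2+\delta}\le C\varepsilon^{-\delta}\max_t|a_{T,t}|^{\delta}\sum_ta_{T,t}^2\to0.
\]
\item Collecting covariances by lag gives $\Var(\sum_tZ_{T,t})=\sum_{|\ell|<T}u^\top C_T(\ell)u$. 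Truncate this sum at $|\ell|\le L$. On the finite part, fixed-lag convergence gives $\sum_{|\ell|\le L}\kappa(\ell)u^\top\Gamma(\ell)u$. The tail condition makes $\limsup_T$ of the remainder vanish as $L\to\infty$. The same tail bound, passed to the limit, shows that $\sum_\ell\|\kappa(\ell)\Gamma(\ell)\|<\infty$, so the variance converges to $u^\top\Sigma_Bu$.
\end{enumerate}

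Lemma~\ref{lem:mixing-CLT} then gives $u^\top S_T\Rightarrow\mathcal N(0,u^\top\Sigma_Bu)$, including the degenerate case $u^\top\Sigma_Bu=0$. Cram\'er--Wold yields the vector limit.

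\emph{Random weights.} Here we condition on the weight sigma field, which is independent of the $X$-array in the applications. Conditionally, the weights are deterministic, so the argument above applies along subsequences. The plan is as follows.

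\begin{enumerate}[label=(\roman*)]
\item Take any subsequence. The maximum-weight, sum-of-squares, fixed-lag covariance, and tail conditions hold in probability, so we extract a further subsequence along which they hold almost surely for a countable family of fixed lags. A diagonal argument handles the countable family of $\ell$ and the tail levels $L$.
\item Along that further subsequence, the conditional characteristic function of $u^\top S_T$ converges almost surely to $\exp(-u^\top\Sigma_Bu/2)$.
\item Bounded convergence transfers this to the unconditional characteristic function. Since every subsequence has such a further subsequence, the full sequence converges.
\end{enumerate}

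The Lindeberg step needs $\sum_t|a_{T,t}|^{2+\delta}\to0$ conditionally. This follows in probability from $\mathbb E\sum_t|a_{T,t}|^{2+\delta}\to0$. The condition $\sup_T\mathbb E\sum_ta_{T,t}^2<\infty$ guards uniform integrability when we move from conditional to unconditional statements.

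\emph{Main obstacle.} The delicate point is the random-weight case. We must make sure the conditional mixing envelope for $(a_{T,t}X_{T,t})$ holds given the weights; independence guarantees this. We must also make the subsequence argument interchange correctly with the tail condition, which is only stated as $\lim_L\limsup_T$ in probability. I would handle this by fixing a sequence $L_k\uparrow\infty$ and choosing subsequences so that the tail sums are below $1/k$ almost surely eventually.
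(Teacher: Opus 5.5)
Your proposal is correct and follows essentially the same route as the paper. Both arguments use Cram\'er--Wold and then verify the four conditions of Lemma~\ref{lem:mixing-CLT} for deterministic weights: inherited mixing, the moment bound, the identical Lindeberg bound via $\max_t|a_{T,t}|^{\delta}\sum_t a_{T,t}^2$, and lag truncation of the variance using the tail condition. Both then condition on the weight sigma field and use a subsequence argument for random weights. Your characteristic-function transfer, your explicit diagonal choice of $L_k$ for the $\lim_L\limsup_T$ tail condition, and your explicit appeal to independence for conditional mixing are just more careful versions of steps the paper states tersely.
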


\begin{proof}[Proof of Lemma \ref{lem:schemeB}]
Fix $u\in\mathbb R^d$ with $\|u\|=1$, set $Y_{T,t}:=u^\top X_{T,t}$ and $Z_{T,t}:=a_{T,t}Y_{T,t}$. First take the weights to be deterministic. Multiplication by deterministic weights does not enlarge sigma fields, so the scalar array $(Z_{T,t})$ inherits the common mixing envelope. With $p:=2+\delta$, the uniform moment bound on $X_{T,t}$ and boundedness of $\sum_ta_{T,t}^2$ imply $\sup_{T,t}\mathbb E_T|Z_{T,t}|^p<\infty$. The Lindeberg condition follows from
\[
\sum_{t=1}^T\mathbb E_T\big[Z_{T,t}^2\1\{|Z_{T,t}|>\varepsilon\}\big]
\le \varepsilon^{-\delta}\sum_{t=1}^T |a_{T,t}|^{2+\delta}\mathbb E_T|Y_{T,t}|^{2+\delta}
\le C\varepsilon^{-\delta}(\max_t|a_{T,t}|^\delta)\sum_{t=1}^Ta_{T,t}^2=o(1).
\]
The variance equals \(\Var_T\!\left(\sum_{t=1}^TZ_{T,t}\right)=\sum_{\ell=-(T-1)}^{T-1}u^\top C_T(\ell)u\).
For fixed $L$, the fixed-lag covariance condition gives convergence of the partial sum over $|\ell|\le L$ to $\sum_{|\ell|\le L}\kappa(\ell)u^\top\Gamma(\ell)u$. The weighted tail condition makes the remaining lags uniformly negligible, and also makes the limiting tail negligible. Thus the variance converges to $\sigma^2(u):=\sum_{\ell\in\mathbb Z}\kappa(\ell)u^\top\Gamma(\ell)u$. Lemma~\ref{lem:mixing-CLT} gives $\sum_tZ_{T,t}\Rightarrow\mathcal N(0,\sigma^2(u))$, including the degenerate case. Cram\'er--Wold gives the vector conclusion.

For random weights satisfying the conditional version of the restrictions, condition on the weight sigma field. Along every subsequence there is a further subsequence on which the weight and covariance conditions hold almost surely. The deterministic conditional argument then gives the scalar CLT along that further subsequence. For bounded Lipschitz $f$, $\mathbb E f(S_T)-\mathbb E f(Z_{\Sigma_B})=\mathbb E[\mathbb E(f(S_T)-f(Z_{\Sigma_B})\mid\mathcal A_T^w)]$, where $\mathcal A_T^w$ is the weight sigma field and $Z_{\Sigma_B}\sim\mathcal N(0,\Sigma_B)$. The inner term tends to zero along the almost-sure subsequence, and boundedness of $f$ gives dominated convergence. The subsequence principle gives the unconditional conclusion.
\end{proof}

Let $X_{T,t}:=g_t(W_t,\theta_T^\star)-\mu_t(\theta_T^\star)$. For each fixed $\ell\ge0$, set
\[
\Gamma_e(\ell):=\lim_T T^{-1}\sum_{t=\ell+1}^T\Cov_T(X_{T,t},X_{T,t-\ell}),
\qquad
\Gamma_e(-\ell):=\Gamma_e(\ell)^\top.
\]
Under Assumption~\ref{ass:mean-path}, $\Gamma_e(\ell)=\Gamma_{g,\mathrm{hac}}(\ell)-\Gamma_{g,\mu}(\ell)$ for every fixed $\ell$.
\begin{lemma}\label{lem:CLT-mom}
Let Assumptions~\ref{ass:dependence}, \ref{ass:mean-path}, and~\ref{ass:local-correct-specification} hold for $g_t(W_t,\theta_T^\star)$. For incomplete observation, also impose Assumptions~\ref{ass:smoothness} and~\ref{ass:sampling-app} and the appendix-only mixing, weight, weighted-covariance, and covariance-tail conditions used in Lemmas~\ref{lem:schemeA}--\ref{lem:schemeB}. Under the random-sampling scheme, \(\sqrt N\big(g_N(\theta_T^\star)-\bar m_T(\theta_T^\star)\big)\Rightarrow\mathcal N(0,\Omega_R^{\mathrm{obs}})\), where $\Omega_R^{\mathrm{obs}}$ is defined in \eqref{eq:OmegaR_sampling}. Under the deterministic-window scheme, if $m_T\to m$, then
\[
\sqrt N\big(g_N(\theta_T^\star)-\bar m_T(\theta_T^\star)\big)
\Rightarrow\mathcal N\left(m,\sum_{\ell\in\mathbb Z}\kappa_R(\ell)\Gamma_e(\ell)\right).
\]
If $R_{T,t}=1$ for all $t$ and $T$, both conclusions reduce to the fully observed sample limit with covariance $\sum_{\ell\in\mathbb Z}\Gamma_e(\ell)$.
\end{lemma}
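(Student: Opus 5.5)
The proof treats the three cases in turn: the random-sampling scheme, the deterministic-window scheme, and the fully observed case. In each, the centered normalized sample moment is written as a sum of terms to which Lemma~\ref{lem:schemeA} or Lemma~\ref{lem:schemeB} applies, and the pieces are combined by Slutsky's theorem.

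\emph{Random sampling.} Write $g_{T,t}:=g_t(W_t,\theta_T^\star)$ and $\bar m_T:=\bar m_T(\theta_T^\star)$. Use the identity
\[
\sqrt N\big(g_N(\theta_T^\star)-\bar m_T\big)=\sqrt{T\rho_T/N}\,(T\rho_T)^{-1/2}\Big[\sum_t(R_{T,t}-\rho_T)g_{T,t}+\rho_T\sum_t e_{T,t}+\rho_T\sum_t\tilde\mu_{T,t}\Big]-\sqrt N\,\bar m_T\,(1-\text{ratio term}).
\]
The recentering error is $\sqrt N\|\bar m_T\|\cdot O_p(1)=o_p(1)$ by Assumption~\ref{ass:local-correct-specification}. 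The sum $\sum_t\tilde\mu_{T,t}$ vanishes identically. The prefactor $\sqrt{T\rho_T/N}$ tends to $1$ by Assumption~\ref{ass:sampling-app}.

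The first bracketed term is handled by Lemma~\ref{lem:schemeA}, which gives an $\mathcal N(0,\Sigma_A)$ limit. The second term, $T^{-1/2}\sum_t e_{T,t}$, has an $\mathcal N(0,\Omega_R)$ limit by Assumption~\ref{ass:dependence}. Joint convergence follows from a Cram\'er--Wold argument: apply Lemma~\ref{lem:mixing-CLT} to the scalar array $u^\top\{(R_{T,t}-\rho_T)g_{T,t}+\rho_T e_{T,t}\}+v^\top(\cdot)$, using the mixing bound from Lemma~\ref{lem:alpha-sum}. The cross covariances between the two pieces vanish, since $\mathbb E[R_{T,t}-\rho_T]=0$ and $R$ is independent of the moment process.

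The limiting covariance is therefore $\rho^{-1}(\Sigma_A+\rho^2\Omega_R)$. Simplify it using the fixed-lag identities $\Gamma_{g,\mathrm{hac}}=\Gamma_e+\Gamma_{g,\mu}$ (proof of Proposition~\ref{lem:fixed-env-decomp}) and $\gamma_R(\ell)+\rho^2=\rho\,\kappa_R(\ell)$, which holds for lag pairs under stationary inclusion. This gives
\[
\rho^{-1}\sum_\ell\big(\rho\kappa_R(\ell)-\rho^2\big)\Gamma_{g,\mathrm{hac}}(\ell)+\rho\sum_\ell\Gamma_e(\ell)=\sum_\ell\kappa_R(\ell)\Gamma_{g,\mathrm{hac}}(\ell)-\rho\,\Omega_\mu .
\]
At lag zero, $\gamma_R(0)=\rho(1-\rho)$, so $\kappa_R(0)=1$ is consistent. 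The right-hand side is exactly \eqref{eq:OmegaR_sampling}.

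\emph{Deterministic window.} Here $N=T\rho_T$ exactly, so
\[
\sqrt N\big(g_N(\theta_T^\star)-\bar m_T\big)=\sum_t a_{T,t}e_{T,t}+m_T-\sqrt N\,\bar m_T\,T^{-1}\textstyle\sum_t(R_{T,t}/\rho_T-1),
\]
with weights $a_{T,t}=R_{T,t}/\sqrt{T\rho_T}$. The middle term $m_T$ is the mean-path imbalance, because $\sum_t R_{T,t}\tilde\mu_{T,t}/\sqrt{T\rho_T}$ equals $m_T$ up to the $\bar m_T$ recentering. The last term is $o(1)$ by local correct specification. Lemma~\ref{lem:schemeB} applies to $\sum_t a_{T,t}e_{T,t}$ and gives a covariance of $\sum_\ell\kappa_R(\ell)\Gamma_e(\ell)$. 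Shifting by $m_T\to m$ then gives the stated limit.

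\emph{Full observation.} When $R_{T,t}\equiv1$, both schemes collapse: $\gamma_R\equiv0$, $\kappa_R\equiv1$, $m_T=0$, and $\rho=1$. The covariance reduces to $\sum_\ell\Gamma_e(\ell)=\Omega_R$.

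The main obstacle is the random-sampling covariance bookkeeping. The Lemma~\ref{lem:schemeA} piece involves uncentered products, and these must be matched to $\Gamma_{g,\mathrm{hac}}$ and then correctly combined with the innovation term so that exactly $-\rho\,\Omega_\mu$ survives. This requires verifying the pair-frequency identity $\kappa_R(\ell)=\rho+\gamma_R(\ell)/\rho$ for $\ell\neq0$, and checking that the normalizing ratio $N/(T\rho_T)$ does not contribute first-order terms.
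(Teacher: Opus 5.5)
Your proof is correct and follows the paper's argument. Both use the same split into the inclusion-fluctuation term $(T\rho_T)^{-1/2}\sum_t(R_{T,t}-\rho_T)g_{T,t}$ and the innovation term $\sqrt{\rho_T}\,T^{-1/2}\sum_t e_{T,t}$ with zero finite-$T$ cross-covariance, a joint CLT via Lemmas~\ref{lem:mixing-CLT} and~\ref{lem:alpha-sum}, the lag-by-lag combination using $\kappa_R(\ell)=\rho+\gamma_R(\ell)/\rho$ and $\Gamma_e=\Gamma_{g,\mathrm{hac}}-\Gamma_{g,\mu}$, and the exact deterministic-window split handled by Lemma~\ref{lem:schemeB}.

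The differences are cosmetic. You absorb the random denominator through the prefactor $\sqrt{T\rho_T/N}\to_p1$, which needs only $N/(T\rho_T)\to_p1$, instead of the paper's separate bound on $r_T(\theta_T^\star)$. Also, the extra term in your deterministic-window identity is identically zero, not merely $o(1)$, because $\sum_t(R_{T,t}/\rho_T-1)=0$ there.
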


\begin{proof}[Proof of Lemma \ref{lem:CLT-mom}]
Suppress the argument $\theta_T^\star$ and write $g_t:=g_t(W_t,\theta_T^\star)$, $\mu_t:=\mu_{T,t}(\theta_T^\star)$, $\bar m_T:=T^{-1}\sum_t\mu_t$, and $X_t:=g_t-\mu_t$. If the sample is fully observed, then $N=T$ and $g_N(\theta_T^\star)-\bar m_T=T^{-1}\sum_tX_t$, so Assumption~\ref{ass:dependence}(ii) gives the fully observed limit. The remaining argument treats incomplete observation.

Lemma~\ref{lem:ULLN} gives the identity
\[
g_N(\theta_T^\star)-\bar m_T
=(T\rho_T)^{-1}\sum_{t=1}^T(R_{T,t}-\rho_T)g_t+T^{-1}\sum_{t=1}^TX_t+r_T(\theta_T^\star).
\]
The denominator term is negligible at the $\sqrt N$ scale. Under deterministic-window sampling, $N=T\rho_T$ exactly and $r_T(\theta_T^\star)=0$. Under random sampling, write $\sqrt N r_T(\theta_T^\star)=A_TB_T$, where $A_T:=(T\rho_T-N)/(T\rho_T)$ and $B_T:=N^{-1/2}\sum_tR_{T,t}g_t$. The inclusion covariance envelope gives $\Var(N)=O(T)$, so $A_T=O_p(T^{-1/2})$. To bound $B_T$, decompose
\[
N^{-1/2}\sum_tR_{T,t}g_t
=N^{-1/2}\sum_t(R_{T,t}-\rho_T)g_t
+\frac{\rho_T\sqrt T}{\sqrt N}T^{-1/2}\sum_tX_t
+\frac{\rho_TT}{\sqrt N}\bar m_T.
\]
The first term is $O_p(1)$ by Lemma~\ref{lem:schemeA}, the second is $O_p(1)$ by Lemma~\ref{lem:schemeB} with weights $T^{-1/2}$, and the third is $o_p(1)$ by $N/(T\rho_T)\to_p1$ and local correct specification. Hence $B_T=O_p(1)$ and $\sqrt N r_T(\theta_T^\star)=o_p(1)$.

Multiplying by $\sqrt N$ and using $N/(T\rho_T)\to_p1$ gives \(\sqrt N\big(g_N(\theta_T^\star)-\bar m_T\big)=Z_{1,T}+Z_{2,T}+o_p(1)\), where \(Z_{1,T}:=\sqrt{\rho_T}\,T^{-1/2}\sum_{t=1}^TX_t\) and \(Z_{2,T}:=(T\rho_T)^{-1/2}\sum_{t=1}^T(R_{T,t}-\rho_T)g_t\).
Under random sampling, Lemma~\ref{lem:schemeB} with weights $a_{T,t}=T^{-1/2}$ gives $T^{-1/2}\sum_tX_t\Rightarrow\mathcal N(0,\sum_\ell\Gamma_e(\ell))$, so $Z_{1,T}$ has limiting covariance $\rho\sum_\ell\Gamma_e(\ell)$. Lemma~\ref{lem:schemeA}, divided by $\sqrt{\rho_T}$, gives $Z_{2,T}\Rightarrow\mathcal N(0,\rho^{-1}\sum_\ell\gamma_R(\ell)\Gamma_{g,\mathrm{hac}}(\ell))$.

For joint convergence, fix conformable vectors $a$ and $b$ and apply Lemma~\ref{lem:mixing-CLT} to
\[
T^{-1/2}\sum_{t=1}^T
\left[
\sqrt{\rho_T}\,a^\top X_t
\,+\,\rho_T^{-1/2}b^\top(R_{T,t}-\rho_T)g_t
\right].
\]
Lemma~\ref{lem:alpha-sum} supplies the mixing bound for the joint array. The moment and Lindeberg conditions follow from the same envelope inequalities used in Lemmas~\ref{lem:schemeA} and~\ref{lem:schemeB}. The finite-$T$ cross-covariance between the two scalar components is zero: for all dates $s,t$, $\mathbb E[(R_{T,s}-\rho_T)a^\top X_t g_s^\top b]=\mathbb E[R_{T,s}-\rho_T]\mathbb E_T[a^\top X_t g_s^\top b]=0$ by independence. Therefore the limiting variance of every scalar combination has no cross term, and the limiting covariance of $Z_{1,T}+Z_{2,T}$ is
\[
\Sigma_A=\rho\sum_{\ell\in\mathbb Z}\Gamma_e(\ell)+\rho^{-1}\sum_{\ell\in\mathbb Z}\gamma_R(\ell)\Gamma_{g,\mathrm{hac}}(\ell).
\]
Using $\Gamma_e(\ell)=\Gamma_{g,\mathrm{hac}}(\ell)-\Gamma_{g,\mu}(\ell)$, the lag-zero coefficient satisfies $\rho\Gamma_e(0)+\rho^{-1}\gamma_R(0)\Gamma_{g,\mathrm{hac}}(0)=\Gamma_{g,\mathrm{hac}}(0)-\rho\Gamma_{g,\mu}(0)$ because $\gamma_R(0)=\rho(1-\rho)$. For $\ell\ne0$, the coefficient satisfies $\rho\Gamma_e(\ell)+\rho^{-1}\gamma_R(\ell)\Gamma_{g,\mathrm{hac}}(\ell)=\kappa_R(\ell)\Gamma_{g,\mathrm{hac}}(\ell)-\rho\Gamma_{g,\mu}(\ell)$ because $\kappa_R(\ell)=\rho+\gamma_R(\ell)/\rho$. Summing these identities over lags gives $\Gamma_{g,\mathrm{hac}}(0)+\sum_{\ell\ne0}\kappa_R(\ell)\Gamma_{g,\mathrm{hac}}(\ell)-\rho\sum_{\ell\in\mathbb Z}\Gamma_{g,\mu}(\ell)$, which is $\Omega_R^{\mathrm{obs}}$ in \eqref{eq:OmegaR_sampling}. Slutsky's theorem proves the random-sampling conclusion.

Under deterministic-window sampling, there is no denominator remainder, and the decomposition is exact after multiplication by $\sqrt N=\sqrt{T\rho_T}$:
\[
\sqrt N\big(g_N(\theta_T^\star)-\bar m_T\big)
=(T\rho_T)^{-1/2}\sum_{t=1}^TR_{T,t}X_t
+(T\rho_T)^{-1/2}\sum_{t=1}^T(R_{T,t}-\rho_T)\mu_t.
\]
The second term is $m_T$, while Lemma~\ref{lem:schemeB}, applied with deterministic weights $a_{T,t}=R_{T,t}/\sqrt{T\rho_T}$ and the selected-window weighted-covariance and tail conditions maintained in this lemma, gives convergence of the first term to a centered normal law with covariance $\sum_{\ell\in\mathbb Z}\kappa_R(\ell)\Gamma_e(\ell)$. Since $m_T\to m$, Slutsky's theorem gives the deterministic-window conclusion. If $R_{T,t}=1$ for all $t$, then $\rho_T=1$, $m_T=0$, and $\kappa_R(\ell)=1$, so the formula reduces to the fully observed covariance $\sum_\ell\Gamma_e(\ell)$.
\end{proof}

The next two lemmas separate the sampling mechanics from the lag-product law of large numbers based on observed pairs. The first records observed-pair frequency convergence, and the second proves the fixed-lag HAC limit over observed lag pairs and makes explicit the plug-in envelope condition used when $\widehat\theta_N$ replaces $\theta_T^\star$.

\subsection{Incomplete-observation proofs}

\begin{proof}[Proof of Theorem \ref{thm:det-GMM}]
With all probabilities conditional on the conditioning environment, first consider consistency. Assumption~\ref{ass:det-selected-gmm} gives $\sup_{\theta\in\Theta}\|g_N(\theta)-\bar m_T(\theta)\|\to_p0$, because $g_N(\theta)-\bar m_T(\theta)=[g_N(\theta)-M_{N,T}(\theta)]+[M_{N,T}(\theta)-\bar m_T(\theta)]$ uniformly in $\theta$. The envelope part of Assumption~\ref{ass:smoothness} implies $M_{m,T}:=\sup_{\theta\in\Theta}\|\bar m_T(\theta)\|=O(1)$, and Assumption~\ref{ass:gmm} gives $\widehat A_N\to_p A$. With $M_{\Delta,T}:=\sup_{\theta\in\Theta}\|g_N(\theta)-\bar m_T(\theta)\|$, the same expansion as in Theorem~\ref{thm:AN}(i) gives $\sup_{\theta\in\Theta}|g_N(\theta)^\top\widehat A_Ng_N(\theta)-Q_T(\theta)|\le M_{m,T}^2\|\widehat A_N-A\|+2M_{m,T}\|\widehat A_N\|M_{\Delta,T}+\|\widehat A_N\|M_{\Delta,T}^2=o_p(1)$. Since Assumption~\ref{ass:det-selected-gmm} takes $\widehat\theta_N$ to be a measurable global minimizer, the moving-estimand argmin argument used in Theorem~\ref{thm:AN}(i), together with Assumption~\ref{ass:estimand-separation}, gives $\widehat\theta_N-\theta_T^\star\to_p0$.

Now linearize on the high-probability event where the estimator lies in the interior neighborhood from Assumption~\ref{ass:gmm-interior}. Let $\widetilde G_N:=\widehat G_N(\widehat\theta_N)$ and \(\bar G_N:=\int_0^1\widehat G_N(\theta_T^\star+r(\widehat\theta_N-\theta_T^\star))\,dr\). If $\theta_N^\dagger-\theta_T^\star\to_p0$, then $\|\widehat G_N(\theta_N^\dagger)-G\|\le\sup_{\theta\in\Theta}\|\widehat G_N(\theta)-G_{N,T}(\theta)\|+\|G_{N,T}(\theta_N^\dagger)-G_{N,T}(\theta_T^\star)\|+\|G_{N,T}(\theta_T^\star)-G\|$. The middle term is bounded by $N^{-1}\sum_tR_{T,t}\mathbb E_Tb_{g,1,T,t}\|\theta_N^\dagger-\theta_T^\star\|=o_p(1)$, while the first and third terms are $o_p(1)$ by Assumption~\ref{ass:det-selected-gmm}. Applying this bound to $\theta_N^\dagger=\widehat\theta_N$ gives $\widetilde G_N\to_pG$. Applying it to points on the line segment gives $\|\bar G_N-G\|\le\int_0^1\|\widehat G_N(\theta_T^\star+r(\widehat\theta_N-\theta_T^\star))-G\|\,dr=o_p(1)$. The mean-value expansion gives $g_N(\widehat\theta_N)=g_N(\theta_T^\star)+\bar G_N(\widehat\theta_N-\theta_T^\star)$. The sample first-order condition is therefore \(0=\widetilde G_N^\top\widehat A_Ng_N(\theta_T^\star)+\widetilde G_N^\top\widehat A_N\bar G_N(\widehat\theta_N-\theta_T^\star)\).

The deterministic ball in Assumption~\ref{ass:gmm-interior} makes $\theta_T^\star$ an interior minimizer of $Q_T(\theta)=\bar m_T(\theta)^\top A\bar m_T(\theta)$ for all large $T$. Thus differentiability gives the population first-order condition $G_T^\top A\bar m_T(\theta_T^\star)=0$, where $G_T:=T^{-1}\sum_t\mathbb E_T[\nabla_\theta g_t(W_t,\theta_T^\star)]$. Adding and subtracting this identity yields
\[
\widetilde G_N^\top\widehat A_Ng_N(\theta_T^\star)
=G_T^\top A\big(g_N(\theta_T^\star)-\bar m_T(\theta_T^\star)\big)+R_{1T}+R_{2T},
\]
where $R_{1T}:=(\widetilde G_N^\top\widehat A_N-G_T^\top A)(g_N(\theta_T^\star)-\bar m_T(\theta_T^\star))$ and $R_{2T}:=(\widetilde G_N^\top\widehat A_N-G_T^\top A)\bar m_T(\theta_T^\star)$. The centered-moment CLT implies $g_N(\theta_T^\star)-\bar m_T(\theta_T^\star)=O_p(N^{-1/2})$. The Jacobian and weight convergence imply $\widetilde G_N^\top\widehat A_N-G_T^\top A=o_p(1)$, so $R_{1T}=o_p(N^{-1/2})$. Since $N/T\to\rho>0$ and Assumption~\ref{ass:local-correct-specification} gives $\sqrt T\|\bar m_T(\theta_T^\star)\|=o(1)$, $R_{2T}=o_p(N^{-1/2})$.

The matrix $\widetilde G_N^\top\widehat A_N\bar G_N$ converges in probability to $G^\top A G$, which is nonsingular because $A\succ0$ and $G$ has full column rank. Solving the first-order condition gives
\[
\sqrt N(\widehat\theta_N-\theta_T^\star)
=-(G^\top A G)^{-1}G^\top A\sqrt N\big(g_N(\theta_T^\star)-\bar m_T(\theta_T^\star)\big)+o_p(1).
\]
The centered-moment CLT and Slutsky's theorem give the asserted limit with mean $-Hm$ and covariance $H\Omega_BH^\top$.
\end{proof}

\begin{proof}[Proof of Corollary \ref{cor:det-centered}]
Theorem~\ref{thm:det-GMM} gives the limit $\mathcal N(-Hm,H\Omega_BH^\top)$. If $m_T\to0$, then $m=0$. If $R_{T,t}\equiv1$, then $\rho_T=1$ and $R_{T,t}-\rho_T=0$ for every $t$, so $m_T=0$ exactly.
\end{proof}

\begin{proof}[Proof of Proposition \ref{prop:window-centering}]
For a deterministic contiguous window, $\rho_T=N_T/T$ and $T\rho_T=N_T$. Since $\sum_{t=1}^T(R_{T,t}-\rho_T)=0$, replacing $\mu_{T,t}(\theta_T^\star)$ by $\tilde\mu_{T,t}$ leaves $m_T$ unchanged: \(m_T=(T\rho_T)^{-1/2}\sum_{t=1}^T(R_{T,t}-\rho_T)\tilde\mu_{T,t}\).
The centered path satisfies $\sum_{t=1}^T\tilde\mu_{T,t}=0$, and therefore
\[
\sum_{t=1}^T(R_{T,t}-\rho_T)\tilde\mu_{T,t}
=\sum_{t\in I_T}\tilde\mu_{T,t}-\rho_T\sum_{t=1}^T\tilde\mu_{T,t}
=\sum_{t\in I_T}\tilde\mu_{T,t}.
\]
The set $I_T$ is one of the contiguous intervals covered by the partial-sum bound, so $\|\sum_{t\in I_T}\tilde\mu_{T,t}\|=o(\sqrt T)$. Hence $\|m_T\|\le\rho_T^{-1/2}o(1)\to0$ because $\rho_T\to\rho>0$.
\end{proof}

\begin{lemma}\label{lem:kappa-rho-HAC}
Under Assumption~\ref{ass:sampling-app}, for each fixed $\ell\in\mathbb Z$, $\widehat\kappa_R(\ell)\xrightarrow{p}\kappa_R(\ell)$ and $N/T\xrightarrow{p}\rho$.
\end{lemma}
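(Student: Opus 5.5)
The plan is to treat the two sampling schemes of Assumption~\ref{ass:sampling-app} separately. In each case, write both $N/T$ and $\widehat\kappa_R(\ell)$ as ratios of normalized sums of inclusion indicators, and prove convergence of numerator and denominator separately. Write
\[
\widehat\kappa_R(\ell)=\frac{D_T(\ell)/(T\rho_T)}{N/(T\rho_T)} .
\]
Then it suffices to show three things: $N/(T\rho_T)\to_p1$, $\rho_T\to\rho>0$, and $D_T(\ell)/(T\rho_T)\to_p\kappa_R(\ell)$. The claim then follows by the continuous mapping theorem, since $N/T=\rho_T\cdot N/(T\rho_T)\to_p\rho$.

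\emph{Deterministic-window scheme.} Everything is nonrandom conditional on $\mathcal E_T$. We have $N=\sum_tR_{T,t}=T\rho_T$ exactly, so $N/T=\rho_T\to\rho$ by assumption. The assumption also imposes $D_T(\ell)/(T\rho_T)\to\kappa_R(\ell)$ directly, so $\widehat\kappa_R(\ell)=D_T(\ell)/N$ converges to $\kappa_R(\ell)$ deterministically.

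\emph{Random-sampling scheme.} Here $N/(T\rho_T)\to_p1$ is assumed. For $\rho_T\to\rho$, note that $\rho_T=\mathbb E[R_{T,t}]$ by stationarity; convergence to some $\rho\in(0,1]$ is part of the standing setup, so $N/T\to_p\rho$. For the pair counts, fix $\ell\ge0$ (negative lags follow from $D_T(-\ell)=D_T(\ell)$). Define $P_{T,t}:=R_{T,t}R_{T,t-\ell}$, so that
\[
\mathbb E[D_T(\ell)]=\sum_{t=\ell+1}^T\mathbb E[P_{T,t}],
\]
and by definition $\mathbb E[D_T(\ell)]/(T\rho_T)\to\kappa_R(\ell)$. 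It remains to show $\Var(D_T(\ell))=o(T^2)$; Chebyshev's inequality then gives $D_T(\ell)/(T\rho_T)-\mathbb E[D_T(\ell)]/(T\rho_T)\to_p0$, because $\rho_T$ is bounded away from zero.

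The variance bound is the main step. The array $P_{T,t}$ is bounded in $[0,1]$ and is a measurable function of $(R_{T,t},R_{T,t-\ell})$. Its $\alpha$-mixing coefficients at lag $h$ are therefore bounded by $\alpha_R(h-\ell)$ for $h>\ell$, and trivially by $1/4$ otherwise. Davydov's covariance inequality for bounded variables gives
\[
|\Cov(P_{T,t},P_{T,t-h})|\le4\alpha_R((h-\ell)_+).
\]
Summing over $t$ and $h$ yields
\[
\Var(D_T(\ell))\le CT\Bigl(2\ell+1+\sum_{h\ge1}\alpha_R(h)\Bigr).
\]
This is finite because $\alpha_R(h)\le\alpha_R(h)^{\delta/(2+\delta)}$ (as $\alpha_R\le1$) and the latter is summable. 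Hence $\Var(D_T(\ell))=O(T)=o(T^2)$.

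The only delicate point is making sure the mixing summability suffices after the $\ell$-shift; the shift only adds finitely many lags, so it does. If instead one wants to avoid Davydov's inequality, a fallback is to use bounded variables with the mixing inequality $|\Cov(X,Y)|\le4\|X\|_\infty\|Y\|_\infty\alpha$ directly, which yields the same bound.
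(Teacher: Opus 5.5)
Your proposal is correct and follows the paper's proof essentially step for step. Like the paper, you split by sampling scheme and read the deterministic case directly off Assumption~\ref{ass:sampling-app}. In the random-sampling case you bound the variance of the pair counts $R_{T,t}R_{T,t-\ell}$ with a bounded-variable strong-mixing covariance inequality, apply Chebyshev, and then divide by $N/T\to_p\rho>0$.

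The differences are cosmetic. You normalize by $T\rho_T$ and take the mean limit straight from the definition of $\kappa_R(\ell)$, whereas the paper computes $\rho_T^2+\gamma_{R,T}(\ell)\to\rho\kappa_R(\ell)$. Your separation shift $(h-\ell)_+$ is also slightly sharper than the paper's $(|h|-2\ell)_+$.
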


\begin{proof}[Proof of Lemma \ref{lem:kappa-rho-HAC}]
It is enough to treat $\ell\ge0$. For $\ell=0$, $D_T(0)=N$, so $\widehat\kappa_R(0)=1=\kappa_R(0)$, and $N/T\to_p\rho$ follows from $N/(T\rho_T)\to_p1$ and $\rho_T\to\rho$. Fix $\ell>0$. Since $\widehat\kappa_R(\ell)=(T^{-1}D_T(\ell))/(N/T)$, it remains to show $T^{-1}D_T(\ell)\to_p\rho\kappa_R(\ell)$.

Under random sampling, set $H_{T,t}(\ell):=R_{T,t}R_{T,t-\ell}$. This is a bounded fixed-lag transformation of the inclusion process. If $|h|>2\ell$, the sigma fields generated by $H_{T,t}(\ell)$ and $H_{T,t-h}(\ell)$ are separated by at least $|h|-2\ell$ dates. The covariance inequality for bounded strongly mixing variables gives $|\Cov(H_{T,t}(\ell),H_{T,t-h}(\ell))|\le C\alpha_R((|h|-2\ell)_+)$. Because $\alpha_R(h)\le\alpha_R(h)^{\delta/(2+\delta)}$ for all large $h$ and the latter sequence is summable, the lag covariance bound is summable for fixed $\ell$. Hence
\[
\Var\!\left(T^{-1}\sum_{t=\ell+1}^T\big(H_{T,t}(\ell)-\mathbb EH_{T,t}(\ell)\big)\right)
\le C T^{-1}\sum_{h\in\mathbb Z}\alpha_R((|h|-2\ell)_+)=O(T^{-1}).
\]
Chebyshev's inequality gives $T^{-1}\sum_t(H_{T,t}(\ell)-\mathbb EH_{T,t}(\ell))=o_p(1)$. Strict stationarity of the inclusion process gives $\mathbb EH_{T,t}(\ell)=\mathbb E[R_{T,t}R_{T,t-\ell}]=\rho_T^2+\gamma_{R,T}(\ell)$, so $T^{-1}\sum_{t=\ell+1}^T\mathbb EH_{T,t}(\ell)=(1-\ell/T)(\rho_T^2+\gamma_{R,T}(\ell))\to\rho^2+\gamma_R(\ell)=\rho\kappa_R(\ell)$. Under deterministic-window sampling, $R_{T,t}$ is fixed and $T^{-1}D_T(\ell)\to\rho\kappa_R(\ell)$ is exactly the pair-frequency condition in Assumption~\ref{ass:sampling-app}. Division by $N/T\to_p\rho>0$ gives $\widehat\kappa_R(\ell)\to_p\kappa_R(\ell)$.
\end{proof}

\begin{lemma}\label{lem:pair-LLN-hac}
Fix $\ell\ge0$ with $\kappa_R(\ell)>0$, set $\widehat s_t:=g_t(W_t,\widehat\theta_N)-g_N(\widehat\theta_N)$, and define
\[
\widetilde\Gamma_{\mathrm{hac}}(\ell):=D_T(\ell)^{-1}\sum_{t=\ell+1}^TR_{T,t}R_{T,t-\ell}\widehat s_t\widehat s_{t-\ell}^\top
\]
on the event $D_T(\ell)>0$, with any fixed value on the complement. Suppose Assumptions~\ref{ass:design-environment}, \ref{ass:sampling-app}, \ref{ass:pair-average-app}, and~\ref{ass:smoothness} hold, $\widehat\theta_N\to_p\theta_T^\star$, $g_N(\widehat\theta_N)=o_p(1)$, and the following fixed-lag product conditions hold. With
\[
Y_{T,t}(\ell):=g_t(W_t,\theta_T^\star)g_{t-\ell}(W_{t-\ell},\theta_T^\star)^\top-
\mathbb E_T[g_t(W_t,\theta_T^\star)g_{t-\ell}(W_{t-\ell},\theta_T^\star)^\top],
\]
assume
\[
\sup_T T^{-1}\sum_{h=-(T-1)}^{T-1}\sum_{t=|h|+1}^{T}
\left\|\Cov_T(\operatorname{vec}(Y_{T,t}(\ell)),\operatorname{vec}(Y_{T,t-|h|}(\ell)))\right\|<\infty,
\]
and
\[
T^{-1}\sum_{t=\ell+1}^T b_g(W_t)B_g(W_{t-\ell})=O_p(1),
\qquad
T^{-1}\sum_{t=\ell+1}^T B_g(W_t)b_g(W_{t-\ell})=O_p(1).
\]
Then $\widetilde\Gamma_{\mathrm{hac}}(\ell)\xrightarrow{p}\Gamma_{\mathrm{hac}}(\ell)$, with $\widetilde\Gamma_{\mathrm{hac}}(-\ell):=\widetilde\Gamma_{\mathrm{hac}}(\ell)^\top$ for $\ell>0$. If $\kappa_R(\ell)=0$, the lag is assigned zero weight in the HAC sum over observed lag pairs.
\end{lemma}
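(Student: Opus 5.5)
The plan splits the observed-pair average into three pieces and controls each separately: an infeasible observed-pair average of products at $\theta_T^\star$ built from raw (uncentered) moments, a plug-in error from replacing $\theta_T^\star$ by $\widehat\theta_N$, and a recentering error from subtracting $g_N(\widehat\theta_N)$. Write $g_t:=g_t(W_t,\theta_T^\star)$ and $\hat g_t:=g_t(W_t,\widehat\theta_N)$. Then
\[
\widetilde\Gamma_{\mathrm{hac}}(\ell)=D_T(\ell)^{-1}\sum_{t}R_{T,t}R_{T,t-\ell}\,g_tg_{t-\ell}^\top+\mathcal P_T+\mathcal C_T,
\]
where $\mathcal P_T$ collects $\hat g_t\hat g_{t-\ell}^\top-g_tg_{t-\ell}^\top$ and $\mathcal C_T$ collects the cross terms in $g_N(\widehat\theta_N)$. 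The normalization is handled by Lemma~\ref{lem:kappa-rho-HAC}. It gives $T^{-1}D_T(\ell)\to_p\rho\kappa_R(\ell)>0$, so $\Pr(D_T(\ell)>0)\to1$, and every average with denominator $D_T(\ell)$ is bounded by a $T^{-1}$-normalized average times the $O_p(1)$ factor $T/D_T(\ell)$.

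For the leading term, I split $g_tg_{t-\ell}^\top=\mathbb E_T[g_tg_{t-\ell}^\top]+Y_{T,t}(\ell)$. The first part is exactly $D_T(\ell)^{-1}\sum_tR_{T,t}R_{T,t-\ell}a_{T,t}(\ell)$, which converges to $\Gamma_{\mathrm{hac}}(\ell)$ by Assumption~\ref{ass:pair-average-app}. For the fluctuation part $T^{-1}\sum_tR_{T,t}R_{T,t-\ell}Y_{T,t}(\ell)$, the argument depends on the sampling scheme.

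\emph{Deterministic window.} The weights are fixed and bounded, so the assumed covariance-sum bound gives variance $O(T^{-1})$ directly.

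\emph{Random sampling.} I condition on the inclusion process, which is independent of the moment map by Assumption~\ref{ass:sampling-app}(i). The weights $R_{T,t}R_{T,t-\ell}\in\{0,1\}$ then only restrict the covariance double sum to a subset of terms, so the conditional variance is again $O(T^{-1})$ uniformly in the weights. Chebyshev's inequality applied conditionally, followed by dominated convergence, gives $o_p(1)$.

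For the plug-in term $\mathcal P_T$, I use the preliminary plug-in lag-product identity with weights $Z_t=R_{T,t}R_{T,t-\ell}\le1$. Assumption~\ref{ass:smoothness} gives the bound
\[
\|\hat g_t-g_t\|\le b_g(W_t)\,\Delta_T,\qquad \Delta_T:=\|\widehat\theta_N-\theta_T^\star\|\to_p0 .
\]
The product-envelope averages $T^{-1}\sum_t b_gB_g$ in the hypothesis are $O_p(1)$, and the $b_gb_g$ term is handled by Cauchy--Schwarz together with the envelope moment bounds. Together these give $\mathcal P_T=O_p(1)\Delta_T=o_p(1)$; no rate on $\widehat\theta_N$ is needed because the lag is fixed.

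For the recentering term, $\mathcal C_T$ equals
\[
-\bar A_T\,g_N(\widehat\theta_N)^\top-g_N(\widehat\theta_N)\,\bar B_T^\top+g_N(\widehat\theta_N)g_N(\widehat\theta_N)^\top ,
\]
where $\bar A_T$ and $\bar B_T$ are weighted averages of $\hat g_t$ and $\hat g_{t-\ell}$ over observed pairs. These averages are $O_p(1)$ by the envelope bound $B_g$ and Markov's inequality. Combined with $g_N(\widehat\theta_N)=o_p(1)$, this gives $\mathcal C_T=o_p(1)$. The negative-lag statement follows by transposition, and the $\kappa_R(\ell)=0$ clause is a convention that needs no proof.

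The main obstacle is the fluctuation term under random sampling. The covariance-sum hypothesis is stated for the unweighted array, so it must be transferred to the randomly weighted one. I would handle this with the conditioning-on-$R$ argument above rather than through mixing of the joint process. If the hypothesis bounds the norm of each covariance only after summation, I would instead bound each weighted covariance termwise by its absolute value, using $0\le R_{T,t}R_{T,t-\ell}\le1$ and independence. The step reduces to this transfer of the covariance bound, and I expect termwise domination to suffice for it.
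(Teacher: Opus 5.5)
Your proposal is correct and follows essentially the same route as the paper's proof. Both use the same decomposition into four pieces, each normalized through $T/D_T(\ell)=O_p(1)$ from Lemma~\ref{lem:kappa-rho-HAC}: the observed-pair mean term handled by Assumption~\ref{ass:pair-average-app}, the fluctuation term bounded via independence and $0\le R_{T,t}R_{T,t-\ell}\le 1$, the plug-in envelope term, and the recentering by $g_N(\widehat\theta_N)$. Your conditioning on the inclusion process is equivalent to the paper's unconditional identity $\Cov_T(Z_tY_t,Z_sY_s)=\mathbb E[Z_tZ_s]\Cov_T(Y_t,Y_s)$. Since the covariance-sum hypothesis already places the norm inside the double sum, the termwise domination you anticipate is exactly what the hypothesis supplies.
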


\begin{proof}[Proof of Lemma \ref{lem:pair-LLN-hac}]
All sums in this proof run over $t=\ell+1,\ldots,T$. Since $\kappa_R(\ell)>0$, Lemma~\ref{lem:kappa-rho-HAC} gives $D_T(\ell)/(T\rho_T)\to_p\kappa_R(\ell)$; hence $D_T(\ell)>0$ with probability approaching one and $T/D_T(\ell)=O_p(1)$. Write $Z_t:=R_{T,t}R_{T,t-\ell}$, \(a_t:=\mathbb E_T[g_t(W_t,\theta_T^\star)g_{t-\ell}(W_{t-\ell},\theta_T^\star)^\top]\), and \(Y_t:=Y_{T,t}(\ell)\), and abbreviate $g_t(\theta):=g_t(W_t,\theta)$. First remove selected-sample centering by defining \(\widetilde\Gamma_0(\ell):=D_T(\ell)^{-1}\sum Z_tg_t(\widehat\theta_N)g_{t-\ell}(\widehat\theta_N)^\top\).
On $D_T(\ell)>0$,
\[
\begin{aligned}
\widetilde\Gamma_0(\ell)-\Gamma_{\mathrm{hac}}(\ell)
&=D_T(\ell)^{-1}\sum Z_tY_t
 +\left(D_T(\ell)^{-1}\sum Z_ta_t-\Gamma_{\mathrm{hac}}(\ell)\right) \\
&\quad +D_T(\ell)^{-1}\sum Z_t\big[g_t(\widehat\theta_N)g_{t-\ell}(\widehat\theta_N)^\top-g_t(\theta_T^\star)g_{t-\ell}(\theta_T^\star)^\top\big].
\end{aligned}
\]
For the first term, write it as $(T/D_T(\ell))\bar I_T$ with $\bar I_T:=T^{-1}\sum Z_tY_t$. Under random sampling, the observed-pair indicators are independent of the moment-product array, and $\mathbb E_TY_t=0$. Thus $\Cov_T(\operatorname{vec}(Z_tY_t),\operatorname{vec}(Z_sY_s))=\mathbb E[Z_tZ_s]\Cov_T(\operatorname{vec}(Y_t),\operatorname{vec}(Y_s))$, and $0\le\mathbb E[Z_tZ_s]\le1$; under deterministic-window sampling the same bound holds with fixed factors $Z_tZ_s\le1$. Therefore $\Var_T(\bar I_T)\le T^{-2}\sum_h\sum_t\|\Cov_T(\operatorname{vec}(Y_t),\operatorname{vec}(Y_{t-h}))\|=O(T^{-1})$ by the product covariance condition. Chebyshev's inequality gives $\bar I_T=O_p(T^{-1/2})$, and $T/D_T(\ell)=O_p(1)$ makes the first term $o_p(1)$. The second term is $o_p(1)$ by Assumption~\ref{ass:pair-average-app}.

For the plug-in term, let $\Delta_T:=\|\widehat\theta_N-\theta_T^\star\|$. Assumption~\ref{ass:smoothness} gives
\[
\begin{aligned}
&\|g_t(\widehat\theta_N)g_{t-\ell}(\widehat\theta_N)^\top-g_t(\theta_T^\star)g_{t-\ell}(\theta_T^\star)^\top\| \\
&\quad\le \Delta_T\big[b_g(W_t)B_g(W_{t-\ell})+B_g(W_t)b_g(W_{t-\ell})
+\Delta_T b_g(W_t)b_g(W_{t-\ell})\big].
\end{aligned}
\]
The first two observed-pair averages are $O_p(1)$. For example,
\[
D_T(\ell)^{-1}\sum Z_tb_g(W_t)B_g(W_{t-\ell})
=\frac{T}{D_T(\ell)}\,T^{-1}\sum Z_tb_g(W_t)B_g(W_{t-\ell})
=O_p(1),
\]
and the same calculation applies with $B_g(W_t)b_g(W_{t-\ell})$. For the last average, Cauchy--Schwarz gives
\[
D_T(\ell)^{-1}\sum Z_tb_g(W_t)b_g(W_{t-\ell})
\le
\frac{T}{D_T(\ell)}
\left[T^{-1}\sum b_g(W_t)^2\right]^{1/2}
\left[T^{-1}\sum b_g(W_{t-\ell})^2\right]^{1/2}
=O_p(1),
\]
using the second-moment envelope in Assumption~\ref{ass:smoothness}. Since $\Delta_T=o_p(1)$, the plug-in term is $o_p(1)$. Hence $\widetilde\Gamma_0(\ell)\to_p\Gamma_{\mathrm{hac}}(\ell)$.

It remains to replace uncentered observed-pair products by products centered at $g_N(\widehat\theta_N)$. Let $\bar g_N:=g_N(\widehat\theta_N)$. Since $\widehat s_t=g_t(\widehat\theta_N)-\bar g_N$,
\[
\begin{aligned}
\widetilde\Gamma_{\mathrm{hac}}(\ell)-\widetilde\Gamma_0(\ell)
&=-\left(D_T(\ell)^{-1}\sum Z_tg_t(\widehat\theta_N)\right)\bar g_N^\top \\
&\quad-\bar g_N\left(D_T(\ell)^{-1}\sum Z_tg_{t-\ell}(\widehat\theta_N)\right)^\top
+\bar g_N\bar g_N^\top.
\end{aligned}
\]
The two observed-pair averages are $O_p(1)$. For example,
\[
\left\|D_T(\ell)^{-1}\sum Z_tg_t(\widehat\theta_N)\right\|
\le
\frac{T}{D_T(\ell)}\,T^{-1}\sum_t
\left[B_g(W_t)+b_g(W_t)\Delta_T\right]
=O_p(1)
\]
by consistency and Assumption~\ref{ass:smoothness}, and the lagged average is identical. Since $\bar g_N=o_p(1)$ by assumption, observed-pair centering changes the limit by $o_p(1)$. Negative lags follow by transposition, and lags with $\kappa_R(\ell)=0$ have zero asymptotic observed-pair weight by definition.
\end{proof}

\begin{proof}[Proof of Theorem \ref{thm:hac}]
The proof is for the fully observed sample, so $N=T$. Let $\widehat\Gamma_{\hat s}(\ell)$ denote the empirical lag covariance in \eqref{eq:HAC} built from $\widehat s_t:=g_t(W_t,\widehat\theta_N)-g_N(\widehat\theta_N)$, and let $\widehat\Gamma_s(\ell)$ denote the infeasible analogue built from $s_{T,t}:=g_t(W_t,\theta_T^\star)-\bar m_T(\theta_T^\star)$. Define
\[
\widetilde\Omega_T^+(L_T):=\sum_{|\ell|\le L_T}K(|\ell|/L_T)\widehat\Gamma_s(\ell),
\qquad
\Omega_K^+(L_T):=\sum_{|\ell|\le L_T}K(|\ell|/L_T)\Gamma_{g,\mathrm{hac}}(\ell),
\]
with the convention $K(|\ell|/L_T)=0$ for $|\ell|>L_T$ and $K(0)=1$. Then
\[
\|\widehat\Omega_R^+-\Omega_R^+\|
\le
\|\widehat\Omega_R^+-\widetilde\Omega_T^+(L_T)\|
+\|\widetilde\Omega_T^+(L_T)-\Omega_K^+(L_T)\|
+\|\Omega_K^+(L_T)-\Omega_R^+\|.
\]
The first term is $o_p(1)$ by the feasible plug-in clause of Assumption~\ref{ass:hac-regularity}. This is the only step in the theorem that uses the replacement of $\theta_T^\star$ by $\widehat\theta_N$ and the finite-sample centering by $g_N(\widehat\theta_N)$.

For the second term, separate the zero lag from the nonzero lags:
\[
\widetilde\Omega_T^+(L_T)-\Omega_K^+(L_T)
=\big(\widehat\Gamma_s(0)-\Gamma_{g,\mathrm{hac}}(0)\big)
+\sum_{1\le|\ell|\le L_T}K(|\ell|/L_T)\big(\widehat\Gamma_s(\ell)-\Gamma_{g,\mathrm{hac}}(\ell)\big).
\]
The fixed-lag clause of Assumption~\ref{ass:hac-regularity} controls the zero lag, and the growing-window clause controls the weighted nonzero-lag sum. Hence the second term is $o_p(1)$.

It remains to control deterministic truncation and kernel bias. Let $C_K:=\sup_x|K(x)|<\infty$. For any fixed integer $M$ with $M<L_T$ eventually,
\[
\begin{aligned}
\|\Omega_K^+(L_T)-\Omega_R^+\|
&\le
\sum_{|\ell|\le M}|K(|\ell|/L_T)-1|\,\|\Gamma_{g,\mathrm{hac}}(\ell)\| \\
&\quad +(1+C_K)\sum_{|\ell|>M}\|\Gamma_{g,\mathrm{hac}}(\ell)\|.
\end{aligned}
\]
For fixed $M$, continuity of $K$ at zero makes the first term vanish as $T\to\infty$. Absolute summability of $(\Gamma_{g,\mathrm{hac}}(\ell))$ makes the second term arbitrarily small by taking $M$ large. Therefore $\Omega_K^+(L_T)\to\Omega_R^+$ and the diverging-bandwidth HAC estimator converges in probability to $\Omega_R^+$.

In the fixed flat-top alternative, use the same decomposition with $L_T=L$ fixed. The population truncation term is
\[
\Omega_K^+(L)-\Omega_R^+
=\sum_{|\ell|\le L}(K(|\ell|/L)-1)\Gamma_{g,\mathrm{hac}}(\ell)-\sum_{|\ell|>L}\Gamma_{g,\mathrm{hac}}(\ell),
\]
and it equals zero because the finite-$T$ centered lag covariance is eventually zero beyond $q$ and the flat-top kernel equals one on every integer lag $|\ell|\le q$. The fixed-lag and plug-in clauses of Assumption~\ref{ass:hac-regularity} then give $\widehat\Omega_R^+\to_p\Omega_R^+$. Proposition~\ref{lem:fixed-env-decomp} finally gives $\Omega_R^+=\Omega_R+\Omega_\mu$ with $\Omega_\mu\succeq0$, so the probability limit is a positive-semidefinite upper bound on the design moment covariance.
\end{proof}

\begin{proof}[Proof of Corollary \ref{cor:scalar-conservative}]
Proposition~\ref{lem:fixed-env-decomp} gives $\Omega_R^+-\Omega_R=\Omega_\mu$, and Lemma~\ref{lem:sumGE-psd} gives $\Omega_\mu\succeq0$. Let $H:=(G^\top A G)^{-1}G^\top A$. The GMM sandwich map is a congruence transformation in the moment covariance matrix, so \(\Sigma^+-\Sigma=H(\Omega_R^+-\Omega_R)H^\top=H\Omega_\mu H^\top\succeq0\).
Premultiplying and postmultiplying by any fixed vector $a$ gives $a^\top\Sigma a\le a^\top\Sigma^+a$.
\end{proof}

\begin{proof}[Proof of Corollary \ref{cor:scalar-coverage}]
The delta-method conclusion in Theorem~\ref{thm:AN} gives $\sqrt T(\widehat\tau-\tau_T)\Rightarrow\mathcal N(0,v)$. Corollary~\ref{cor:scalar-conservative} gives $0\le v\le v^+$. Since $v^+>0$ and $\widehat v^+\to_pv^+$, the reported standard error is positive with probability approaching one.

If $v>0$, Slutsky's theorem gives \(\frac{\sqrt T(\widehat\tau-\tau_T)}{\sqrt{\widehat v^+}}\Rightarrow \mathcal N(0,v/v^+)\).
Therefore the limiting coverage probability of the two-sided interval is
\[
\Pr\left(|Z|\le z_{1-\alpha/2}\sqrt{v^+/v}\right)
=2\Phi\!\left(z_{1-\alpha/2}\sqrt{v^+/v}\right)-1,
\qquad Z\sim\mathcal N(0,1).
\]
Because $v/v^+\le1$, the displayed probability is at least $1-\alpha$. If $v=0$, then $\sqrt T(\widehat\tau-\tau_T)=o_p(1)$, while $\sqrt{\widehat v^+}\to_p\sqrt{v^+}>0$. The standardized statistic converges to zero in probability, and the coverage probability converges to one.
\end{proof}

The following lemma records the finite-sample positivity of the HAC quadratic form used by the multiplier construction.

\begin{lemma}
\label{lem:finite-PSD}
Let $N:=\sum_{t=1}^TR_{T,t}>0$, let $w_0=1$, let $w_{|\ell|}=K(|\ell|/L)$ for $1\le |\ell|\le L$, and let $w_{|\ell|}=0$ otherwise. Terms with zero observed-pair count are omitted from the observed-pair representation of $\widehat\Omega_R^{\mathrm{HAC}}(L)$. If the Toeplitz matrix $(w_{|t-s|})_{1\le t,s\le T}$ is nonnegative definite, then
\[
\widehat\Omega_R^{\mathrm{HAC}}(L):=\widetilde\Gamma_{\mathrm{hac}}(0)+\sum_{1\le|\ell|\le L}K(|\ell|/L)\widehat\kappa_R(\ell)\widetilde\Gamma_{\mathrm{hac}}(\ell)
\]
is positive semidefinite almost surely.
\end{lemma}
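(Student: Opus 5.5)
The plan is to write the observed-pair HAC matrix as an explicit quadratic form in a zero-padded, observation-weighted moment array, and then read off positive semidefiniteness from the Toeplitz weight matrix. The first step is a normalization. Since $\widehat\kappa_R(\ell)=D_T(\ell)/N$ and $\widetilde\Gamma_{\mathrm{hac}}(\ell)=D_T(\ell)^{-1}\sum_t R_{T,t}R_{T,t-\ell}\widehat s_t\widehat s_{t-\ell}^\top$ whenever $D_T(\ell)>0$, the product collapses to
\[
\widehat\kappa_R(\ell)\widetilde\Gamma_{\mathrm{hac}}(\ell)=N^{-1}\sum_{t=\ell+1}^T R_{T,t}R_{T,t-\ell}\,\widehat s_t\widehat s_{t-\ell}^\top .
\]
This also holds trivially when $D_T(\ell)=0$: the right side is then an empty (all-zero) sum, which matches the convention of omitting such terms. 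The lag-zero term is $\widetilde\Gamma_{\mathrm{hac}}(0)=N^{-1}\sum_t R_{T,t}\widehat s_t\widehat s_t^\top$, because $R_{T,t}^2=R_{T,t}$ and $D_T(0)=N$. Negative lags are handled by transposition.

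Next define $x_t:=R_{T,t}\widehat s_t\in\mathbb R^k$ for $1\le t\le T$. Because $w_0=1$ and $w_{|\ell|}=0$ for $|\ell|>L$, the previous step gives
\[
\widehat\Omega_R^{\mathrm{HAC}}(L)=N^{-1}\sum_{t=1}^T\sum_{s=1}^T w_{|t-s|}\,x_tx_s^\top .
\]
The key check is that every ordered pair $(t,s)$ with $|t-s|=\ell$ is counted exactly once. Pairs with $t>s$ come from the lag-$\ell$ term, and pairs with $t<s$ come from its transpose at lag $-\ell$.

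Now fix $u\in\mathbb R^k$ and set $v_t:=u^\top x_t$. Then
\[
u^\top\widehat\Omega_R^{\mathrm{HAC}}(L)u=N^{-1}v^\top W v,
\qquad W:=(w_{|t-s|})_{1\le t,s\le T}.
\]
By hypothesis $W\succeq0$, and $N>0$, so this quantity is nonnegative. It holds for every realization and every $u$, which gives positive semidefiniteness almost surely, indeed surely on $\{N>0\}$.

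The only real obstacle is the bookkeeping in the second step. One must confirm that the observed-pair weights $\widehat\kappa_R(\ell)$ exactly cancel the $D_T(\ell)^{-1}$ normalization, leaving a single common factor $N^{-1}$. One must also confirm that whichever centering is used, here $g_N(\widehat\theta_N)$, enters only through the fixed vectors $\widehat s_t$. Once both hold, the kernel weights act as a common Toeplitz matrix on the zero-padded sequence, and no lag-dependent normalizations are left to break the quadratic-form structure.
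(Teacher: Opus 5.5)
Your proof is correct and is essentially the paper's argument. The paper also cancels $\widehat\kappa_R(\ell)=D_T(\ell)/N$ against the $D_T(\ell)^{-1}$ normalization to get $u^\top\widehat\Omega_R^{\mathrm{HAC}}(L)u=N^{-1}\sum_{t,s}R_{T,t}R_{T,s}w_{|t-s|}\phi_t\phi_s$, then uses that $\operatorname{diag}(R)\,W\operatorname{diag}(R)\succeq0$, which is exactly your step of absorbing $R_{T,t}$ into $x_t$ and applying $W\succeq0$ to $v$.
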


\begin{proof}[Proof of Lemma \ref{lem:finite-PSD}]
For $u\in\mathbb{R}^k$, set $\hat s_t:=g_t\!\big(W_t,\hat\theta_N\big)-g_N(\hat\theta_N)$ and $\phi_t:=u^\top \hat s_t$. For $|\ell|\ge1$ with $D_T(\ell)>0$,
\[
\widehat\kappa_R(\ell)\widetilde\Gamma_{\mathrm{hac}}(\ell)
=\frac{D_T(\ell)}{N}\frac1{D_T(\ell)}\sum_{t=|\ell|+1}^TR_{T,t}R_{T,t-|\ell|}\hat s_t\hat s_{t-|\ell|}^\top,
\]
and the same product is zero by convention if $D_T(\ell)=0$. For $\ell=0$, $D_T(0)=N$ and the term is $N^{-1}\sum_tR_{T,t}\hat s_t\hat s_t^\top$. Hence
\[
u^\top \widehat\Omega_R^{\mathrm{HAC}}(L)\,u
=N^{-1}\sum_{t=1}^T\sum_{s=1}^T R_{T,t}R_{T,s}\,w_{|t-s|}\,\phi_t\,\phi_s.
\]
The matrix with entries $R_{T,t}R_{T,s}\,w_{|t-s|}$ equals
\[
\operatorname{diag}(R_{T,1},\ldots,R_{T,T})
(w_{|t-s|})_{t,s}
\operatorname{diag}(R_{T,1},\ldots,R_{T,T}),
\]
which is positive semidefinite because the Toeplitz matrix is positive semidefinite. Therefore the displayed quadratic form is nonnegative for every $u$, proving the claim.
\end{proof}

\begin{proof}[Proof of Theorem \ref{thm:bootstrap}]
All starred probability statements are conditional on the observed sample. Write $\hat s_t:=g_t(W_t,\widehat\theta_N)-g_N(\widehat\theta_N)$ and $\widehat B_N^\ast:=T^{-1/2}\sum_{t=1}^T\xi_t\hat s_t$. The multiplier vector is Gaussian with $\mathbb E^\ast[\xi_t]=0$ and $\mathbb E^\ast[\xi_t\xi_s]=K(|t-s|/L_T)$. The Toeplitz positive-semidefiniteness condition ensures that this conditional Gaussian vector exists for every finite $T$. Conditional on the data, $\widehat B_N^\ast$ is Gaussian with mean zero and covariance
\[
\begin{aligned}
\Var^\ast(\widehat B_N^\ast)
&=T^{-1}\sum_{t=1}^T\sum_{s=1}^TK(|t-s|/L_T)\hat s_t\hat s_s^\top \\
&=\widehat\Omega_R^+(L_T,K),
\end{aligned}
\]
where the second equality is the centered lag-window identity for a fully observed sample.

Theorem~\ref{thm:hac} gives $\widehat\Omega_R^+(L_T,K)\to_p\Omega_R^+$ under the diverging-bandwidth conditions and under the stated fixed flat-top alternative. Let $Z$ be a standard normal vector of dimension $k$. If $C_n\to C$ in matrix norm and $C_n,C\succeq0$, then $C_n^{1/2}\to C^{1/2}$ in matrix norm because the positive-semidefinite square-root map is continuous in finite dimension. Conditional on the data, $\widehat B_N^\ast$ has the same law as $\widehat\Omega_R^+(L_T,K)^{1/2}Z$. Therefore, for every bounded 1-Lipschitz function $f$, $|\mathbb E^\ast f(\widehat B_N^\ast)-\mathbb E f(Z_\Omega)|=|\mathbb E f(\widehat\Omega_R^+(L_T,K)^{1/2}Z)-\mathbb E f((\Omega_R^+)^{1/2}Z)|\le\|\widehat\Omega_R^+(L_T,K)^{1/2}-(\Omega_R^+)^{1/2}\|\mathbb E\|Z\|\to_p0$, with $Z_\Omega\sim\mathcal N(0,\Omega_R^+)$. Equivalently, $\widehat B_N^\ast\Rightarrow^\ast\mathcal N(0,\Omega_R^+)$ in probability.

The bootstrap parameter is defined by the one-step GMM update
\[
\sqrt T(\widehat\theta_N^\ast-\widehat\theta_N)
=-(\widehat G_N^\top\widehat A_N\widehat G_N)^{-1}\widehat G_N^\top\widehat A_N\widehat B_N^\ast.
\]
The consistency proof in Theorem~\ref{thm:AN}, the uniform Jacobian law, and Assumption~\ref{ass:smoothness} give $\widehat G_N\to_pG$, and Assumption~\ref{ass:gmm} gives $\widehat A_N\to_pA$. Since $G^\top A G$ is nonsingular, \((\widehat G_N^\top\widehat A_N\widehat G_N)^{-1}\widehat G_N^\top\widehat A_N\to_p(G^\top A G)^{-1}G^\top A\).
Conditional Slutsky's theorem yields
\[
\sqrt T(\widehat\theta_N^\ast-\widehat\theta_N)\Rightarrow^\ast
\mathcal N\!\left(0,(G^\top A G)^{-1}G^\top A\,\Omega_R^+\,A G\,(G^\top A G)^{-1}\right)
\]
in probability, which is the asserted covariance $\Sigma^+$.
\end{proof}

The next lemma records the fully observed sample reduction to the conventional HAC formulas, connecting the appendix notation to the main-text estimator and confirming that the observed-pair notation collapses to the conventional centered lag-window construction when every date is observed.

\begin{lemma}\label{lem:collapse-HAC}
Suppose $R_{T,t}\equiv 1$ for all $t$ and write $\hat s_t:=g_t(W_t,\hat\theta_N)-g_N(\hat\theta_N)$. Then:

\begin{enumerate}\setlength\itemsep{2pt}
\item[(a)] For any kernel $K$ and bandwidth $1\le L<T$ as in Assumption~\ref{ass:hac-kernel}, the feasible centered HAC estimator
\[
\widehat\Omega_{R}^{\mathrm{HAC}}(L)
=\widetilde\Gamma_{\mathrm{hac}}(0)+\sum_{1\le|\ell|\le L}K(|\ell|/L)\,\widehat\kappa_R(\ell)\,\widetilde\Gamma_{\mathrm{hac}}(\ell)
\]
reduces exactly to the classical centered HAC estimator built from $(\hat s_t)_{t=1}^T$:
\begin{align*}
\widehat\Omega^{\mathrm{hac}}(L)
&:=\frac{1}{T}\sum_{t=1}^T \hat s_t\hat s_t^\top
   \\
   &+\sum_{\ell=1}^{L}K(\ell/L)\,\frac{1}{T}\sum_{t=\ell+1}^T
   \Big(\hat s_t\hat s_{t-\ell}^\top+\hat s_{t-\ell}\hat s_t^\top\Big).
\end{align*}

\item[(b)] The bootstrap moment and update become $\widehat B_T^\ast=T^{-1/2}\sum_{t=1}^T \xi_t\,\hat s_t$ and $\sqrt{T}(\widehat\theta_T^\ast-\widehat\theta_T)=-(\widehat G_T^\top \widehat A_T \widehat G_T)^{-1}\widehat G_T^\top \widehat A_T\,\widehat B_T^\ast$, and $\Var^\ast(\widehat B_T^\ast)=\widehat\Omega^{\mathrm{hac}}(L)$ exactly. If, in addition, the conditions giving the linear representation in Theorem~\ref{thm:AN} hold and $L=L_T$ satisfies Assumption~\ref{ass:hac-kernel}, replacing $\hat s_t$ by $g_t(W_t,\widehat\theta_N)$ in the HAC matrix or multiplier moment changes the limiting covariance by $o_p(1)$.

\end{enumerate}
\end{lemma}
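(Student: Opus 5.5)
The plan is to prove part (a) by direct algebra on the observed-pair objects, and then obtain part (b) from that identity together with the plug-in argument already used for Theorem~\ref{thm:hac}.

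\emph{Part (a).} With $R_{T,t}\equiv1$ we have $N=T$ and $D_T(\ell)=T-|\ell|$, so $\widehat\kappa_R(\ell)=(T-|\ell|)/T$. Substituting into the definition from Lemma~\ref{lem:pair-LLN-hac},
\[
\widehat\kappa_R(\ell)\widetilde\Gamma_{\mathrm{hac}}(\ell)=\frac{T-|\ell|}{T}\cdot\frac{1}{T-|\ell|}\sum_{t=|\ell|+1}^T\hat s_t\hat s_{t-|\ell|}^\top=\frac1T\sum_{t=|\ell|+1}^T\hat s_t\hat s_{t-|\ell|}^\top.
\]
The factor $D_T(\ell)$ cancels exactly, and no terms are dropped because every pair count is positive for $|\ell|<T$. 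For negative lags we use the transpose convention $\widetilde\Gamma_{\mathrm{hac}}(-\ell)=\widetilde\Gamma_{\mathrm{hac}}(\ell)^\top$ together with $\widehat\kappa_R(-\ell)=\widehat\kappa_R(\ell)$. The $\pm\ell$ terms then combine into $K(\ell/L)T^{-1}\sum_{t=\ell+1}^T(\hat s_t\hat s_{t-\ell}^\top+\hat s_{t-\ell}\hat s_t^\top)$, and the zero-lag term is $T^{-1}\sum_t\hat s_t\hat s_t^\top$. This gives $\widehat\Omega^{\mathrm{hac}}(L)$ exactly, which is the same expression as \eqref{eq:HAC}.

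\emph{Part (b), exact identity.} Setting $R_{T,t}\equiv1$ in the multiplier construction gives the displayed $\widehat B_T^\ast$ and the update formula by definition. Conditional on the data,
\[
\Var^\ast(\widehat B_T^\ast)=T^{-1}\sum_{t,s}K(|t-s|/L)\hat s_t\hat s_s^\top.
\]
We group this double sum by $\ell=t-s$, using $K(0)=1$ and $K=0$ beyond $L$. That reproduces the lag-window sum, so by part (a) it equals $\widehat\Omega^{\mathrm{hac}}(L)$ exactly. This is the same identity invoked in the proof of Theorem~\ref{thm:bootstrap}.

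\emph{Part (b), uncentered replacement.} This is the only step that needs an argument beyond algebra. Write $\bar g:=g_N(\widehat\theta_N)$. Replacing $\hat s_t$ by $g_t(W_t,\widehat\theta_N)=\hat s_t+\bar g$ changes the weighted quadratic form by cross terms of the form $\bar g\,\bigl(T^{-1}\sum_{t,s}K(|t-s|/L)\hat s_s\bigr)^\top$ and transposes, plus $\bar g\bar g^\top\, T^{-1}\sum_{t,s}K(|t-s|/L)$. The last factor is $O(L_T)$. We need the following two bounds.
\begin{itemize}
\item By the linear representation in Theorem~\ref{thm:AN}, local correct specification, and the first-order condition, $\bar g=\bar m_T(\theta_T^\star)+O_p(T^{-1/2})=O_p(T^{-1/2})$.
\item The cross terms involve lag-windowed partial sums of $\hat s_s$. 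Since $\sum_s\hat s_s=0$, each windowed sum differs from zero only through boundary terms. These are bounded by $C_K L_T\max_t\|\hat s_t\|$, or more crudely by $O_p(L_T)$ in average norm using the envelope moments.
\end{itemize}
Combining these, the whole perturbation is $O_p(L_T/\sqrt T)+O_p(L_T/T)=o_p(1)$ under Assumption~\ref{ass:hac-kernel}. This matches the $O_p(L_T\|\bar m_T\|^2)+o_p(1)$ remark in Section~\ref{sec:calculating-SEs}.

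The main obstacle is controlling these boundary cross terms uniformly. The first thing I would try is a Cauchy--Schwarz bound of $T^{-1}\sum_t\|\hat s_t\|$ times $2L_T+1$ against $\|\bar g\|$. If that is too crude, I would bound $\|\bar g\|\,T^{-1}\sum_{|\ell|\le L_T}\sum_{t}\|\hat s_t\|\le C(2L_T+1)\|\bar g\|\,T^{-1}\sum_t\|\hat s_t\|=O_p(L_T/\sqrt T)$ directly. The same bound applies to the multiplier moment, because its conditional variance is this same quadratic form.
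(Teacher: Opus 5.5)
Your proposal is correct and follows the paper's proof essentially step for step. Part (a) uses the same cancellation $\widehat\kappa_R(\ell)\widetilde\Gamma_{\mathrm{hac}}(\ell)=T^{-1}\sum_{t>\ell}\hat s_t\hat s_{t-\ell}^\top$. The exact covariance identity in (b) regroups the multiplier double sum by lag, and the uncentered replacement uses the same crude bound $C(2L_T+1)\|\bar g\|\,T^{-1}\sum_t\|\hat s_t\|+O(L_T)\|\bar g\|^2=O_p(L_T/\sqrt T)$ with $\|\bar g\|=O_p(T^{-1/2})$. Your aside exploiting $\sum_s\hat s_s=0$ is valid, since per lag the windowed sum is at most $\ell\max_t\|\hat s_t\|$, but it is not needed; the paper relies only on the triangle-inequality bound you end with.
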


\begin{proof}[Proof of Lemma \ref{lem:collapse-HAC}]
Since $R_{T,t}\equiv1$, $N=T$. For $0\le\ell<T$, the observed-pair count is $D_T(\ell)=T-\ell$, and therefore $\widehat\kappa_R(\ell)=(T-\ell)/T$. For each positive retained lag $\ell$,
\[
\widetilde\Gamma_{\mathrm{hac}}(\ell)=(T-\ell)^{-1}\sum_{t=\ell+1}^T\hat s_t\hat s_{t-\ell}^\top,
\qquad
\widetilde\Gamma_{\mathrm{hac}}(-\ell)=\widetilde\Gamma_{\mathrm{hac}}(\ell)^\top.
\]
Hence
\[
K(\ell/L)\widehat\kappa_R(\ell)\widetilde\Gamma_{\mathrm{hac}}(\ell)
=K(\ell/L)T^{-1}\sum_{t=\ell+1}^T\hat s_t\hat s_{t-\ell}^\top,
\]
and the negative-lag term is the transpose of the same expression. For $\ell=0$, $\widetilde\Gamma_{\mathrm{hac}}(0)=T^{-1}\sum_t\hat s_t\hat s_t^\top$. Summing the zero lag and both orientations of each positive lag gives the displayed classical centered HAC formula. This proves part (a).

For part (b), the same fully observed substitution gives $\widehat B_T^\ast=T^{-1/2}\sum_t\xi_t\hat s_t$ and the stated one-step update. Conditional on the data,
\[
\Var^\ast(\widehat B_T^\ast)
=T^{-1}\sum_{t=1}^T\sum_{s=1}^TK(|t-s|/L)\hat s_t\hat s_s^\top
=\widehat\Omega^{\mathrm{hac}}(L),
\]
which proves the exact covariance claim.

It remains to justify the uncentered variant under the diverging-bandwidth conditions. Let $\bar g_T:=g_N(\widehat\theta_N)$ and $u_t:=g_t(W_t,\widehat\theta_N)=\hat s_t+\bar g_T$. The first-order expansion in Theorem~\ref{thm:AN} and local correct specification imply $\|\bar g_T\|=O_p(T^{-1/2})$, and Assumption~\ref{ass:smoothness} together with consistency gives $T^{-1}\sum_t\|\hat s_t\|=O_p(1)$. It is enough to bound the zero and positive lags, since negative lags are transposes. With $\bar K:=\sup_x|K(x)|$,
\[
\begin{aligned}
&\left\|T^{-1}\sum_{\ell=0}^{L_T}K(\ell/L_T)
\sum_{t=\ell+1}^T\big(u_tu_{t-\ell}^\top-\hat s_t\hat s_{t-\ell}^\top\big)\right\| \\
&\quad\le C\sum_{\ell=0}^{L_T}\left(\|\bar g_T\|\,T^{-1}\sum_{t=\ell+1}^T(\|\hat s_t\|+\|\hat s_{t-\ell}\|)+\|\bar g_T\|^2\right) \\
&\quad=O_p(L_T/\sqrt T+L_T/T)=o_p(1).
\end{aligned}
\]
The same calculation applies to the conditional covariance of the multiplier moment, because conditional covariance is obtained by the corresponding kernel-weighted double sum. Therefore the centered and uncentered fully observed sample dependent-multiplier constructions calibrated to the HAC long-run variance have the same limiting covariance under $L_T/\sqrt T\to0$.
\end{proof}

\subsection{Regression-adjustment and simultaneous-inference proofs}

\begin{proof}[Proof of Proposition \ref{prop:HAC_RA}]
Write \(\hat s_t:=g_t(W_t,\widehat\theta_N)\) and \(s_t:=g_t(W_t,\theta_T^\star)\). The retained adjustment vector has dimension \(m\), the moment has dimension \(k\), and the projection coefficient \(B\) is \(m\times k\). Thus \(S_{zz}^+\) is \(m\times m\), \(S_{zs}^+\) is \(m\times k\), and \(B^\circ=(S_{zz}^+)^{-1}S_{zs}^+\) is conformable with the residual \(s_t-B^{\circ\top}z_t\).

Assumption~\ref{ass:z_nondeg} gives the fixed-lag limits and the lag-window laws of large numbers for each block of the stacked process \((z_t^\top,s_t^\top)^\top\). Let $\Delta_T:=\|\widehat\theta_N-\theta_T^\star\|$. Assumption~\ref{ass:smoothness} gives $\|\hat s_t-s_t\|\le b_{g,T,t}\Delta_T$. For a $zs$ block, $\|z_t(\hat s_{t-\ell}-s_{t-\ell})^\top\|\le\|z_t\|b_{g,T,t-\ell}\Delta_T$. For an $ss$ block,
\[
\|\hat s_t\hat s_{t-\ell}^\top-s_ts_{t-\ell}^\top\|
\le \|\hat s_t-s_t\|\|s_{t-\ell}\|+
\|s_t\|\|\hat s_{t-\ell}-s_{t-\ell}\|+
\|\hat s_t-s_t\|\|\hat s_{t-\ell}-s_{t-\ell}\|.
\]
Theorem~\ref{thm:AN} gives $\Delta_T=O_p(T^{-1/2})$. The same envelope averages used in the HAC plug-in argument make the lag-window replacements negligible. For example, the $zs$ replacement is bounded by $C_K\Delta_T\sum_{|\ell|\le L_T}T^{-1}\sum_t\|z_t\|b_{g,T,t-|\ell|}=O_p(L_T\Delta_T)=o_p(1)$, and the displayed $ss$ inequality gives the same order plus an $O_p(L_T\Delta_T^2)$ term. Hence the limits are unchanged when $s_t$ is replaced by $\hat s_t$, so \(\hat S_{zz}^{+}\to_p S_{zz}^{+}\), \(\hat S_{zs}^{+}\to_p S_{zs}^{+}\), \(\hat S_{sz}^{+}\to_p S_{sz}^{+}\), and \(\hat S_{ss}^{+}\to_p S_{ss}^{+}\). Since \(S_{zz}^{+}\succ0\), \(\hat S_{zz}^{+}\) is nonsingular with probability approaching one, and continuity of inversion yields \(\hat B=(\hat S_{zz}^{+})^{-1}\hat S_{zs}^{+}\to_p B^\circ\).

For each finite \(T\), the lag-window map \((a,b)\mapsto \hat S_{ab}^+\) is bilinear. Therefore, on the event on which \(\hat S_{zz}^+\) is nonsingular,
\[
\widehat\Omega_R^{+}(\hat r)
=\hat S_{ss}^{+}-\hat S_{sz}^{+}\hat B-\hat B^\top \hat S_{zs}^{+}+\hat B^\top \hat S_{zz}^{+}\hat B.
\]
The preceding block convergence and the continuous mapping theorem give
\[
\widehat\Omega_R^{+}(\hat r)\to_p
S_{ss}^{+}-S_{sz}^{+}B^\circ-B^{\circ\top}S_{zs}^{+}+B^{\circ\top}S_{zz}^{+}B^\circ
=:S_{rr}^{+}(B^\circ).
\]
This proves the consistency assertion.

It remains to prove the order statements. For an arbitrary conformable matrix \(B\), define \(r_t(B):=s_t-B^\top z_t\). The two-sided convention for the long-run matrices gives \(S_{sz}^{+}=S_{zs}^{+\top}\), and bilinearity gives \(S_{r(B)r(B)}^{+}=S_{ss}^{+}-S_{sz}^{+}B-B^\top S_{zs}^{+}+B^\top S_{zz}^{+}B\).
Let \(\Delta:=B-B^\circ\). Since \(S_{zz}^{+}B^\circ=S_{zs}^{+}\) and \(B^{\circ\top}S_{zz}^{+}=S_{sz}^{+}\), expanding the preceding identity at \(B=B^\circ+\Delta\) gives the exact identity \(S_{r(B)r(B)}^{+}=S_{r(B^\circ)r(B^\circ)}^{+}+\Delta^\top S_{zz}^{+}\Delta\).
The last term is positive semidefinite because \(S_{zz}^{+}\succ0\). Taking \(B=0\) gives
\[
\Omega_R^{+}(r)=S_{r(B^\circ)r(B^\circ)}^{+}
=S_{ss}^{+}-S_{sz}^{+}(S_{zz}^{+})^{-1}S_{zs}^{+}\preceq S_{ss}^{+}=\Omega_R^{+}.
\]
The Loewner gap is \(S_{sz}^{+}(S_{zz}^{+})^{-1}S_{zs}^{+}\). If this matrix is zero, then for each column \(x_j\) of \(S_{zs}^{+}\) it holds that \(x_j^\top(S_{zz}^{+})^{-1}x_j=0\); positive definiteness of \((S_{zz}^{+})^{-1}\) implies \(x_j=0\). Conversely, if $S_{zs}^{+}=0$, then that gap is zero. Thus equality holds if and only if \(S_{zs}^{+}=0\), proving part (i).

For part (ii), fix \(A\succ0\) and write \(C(A):=A G(G^\top A G)^{-1}\). Assumption~\ref{ass:smoothness} gives full column rank of \(G\), so \(G^\top A G\) is nonsingular. For any positive semidefinite \(\Omega\), the fixed-weight GMM covariance can be written as \(C(A)^\top\Omega C(A)\). Congruence preserves the Loewner order: if \(M\succeq0\), then \(C(A)^\top M C(A)\succeq0\). Applying this to \(M=\Omega_R^{+}-\Omega_R^{+}(r)\) yields \(\Sigma_{\mathrm{RA}}(A)\preceq\Sigma^{+}(A)\).

For part (iii), positive definiteness of both variance limits permits inversion. The order \(\Omega_R^{+}(r)\preceq\Omega_R^{+}\) and the operator monotonicity of the inverse on the positive-definite cone imply \(\Omega_R^{+}(r)^{-1}\succeq(\Omega_R^{+})^{-1}\). Premultiplication and postmultiplication by \(G\) preserve the Loewner order, so \(G^\top\Omega_R^{+}(r)^{-1}G\succeq G^\top(\Omega_R^{+})^{-1}G\). Both matrices are positive definite because \(G\) has full column rank, and inverting them reverses the Loewner order. This proves the inverse-variance-weight covariance inequality.
\end{proof}

\begin{proof}[Proof of Proposition \ref{prop:HAC_RA_conservative}]
Write \(\mu_{T,t}:=\mathbb E_T[s_t]\), \(\bar\mu_T:=T^{-1}\sum_{t=1}^T\mu_{T,t}\), \(\tilde\mu_{T,t}:=\mu_{T,t}-\bar\mu_T\), and \(e_{T,t}:=s_t-\mu_{T,t}\). Then \(\mathbb E_T[e_{T,t}]=0\) date by date and \(s_t=e_{T,t}+\tilde\mu_{T,t}+\bar\mu_T\). Assumption~\ref{ass:local-correct-specification} gives \(\sqrt T\|\bar\mu_T\|=o(1)\). Put $a_{T,t}:=e_{T,t}+\tilde\mu_{T,t}$. For each fixed lag, define
\[
A_{T,\ell}:=T^{-1}\sum_t\mathbb E_Ta_{T,t},
\qquad
B_{T,\ell}:=T^{-1}\sum_t\mathbb E_Ta_{T,t-\ell}.
\]
Then
\[
T^{-1}\sum_t\mathbb E_T[(a_{T,t}+\bar\mu_T)(a_{T,t-\ell}+\bar\mu_T)^\top-a_{T,t}a_{T,t-\ell}^\top]
=A_{T,\ell}\bar\mu_T^\top+\bar\mu_TB_{T,\ell}^\top+\bar\mu_T\bar\mu_T^\top.
\]
The second-moment envelope gives
\[
\sup_{\ell\le L_T}(\|A_{T,\ell}\|\vee\|B_{T,\ell}\|)
\le T^{-1}\sum_t\mathbb E_T\|a_{T,t}\|=O(1).
\]
Hence the norm of the fixed-lag difference is $O(\|\bar\mu_T\|+\|\bar\mu_T\|^2)=o(1)$, and its growing-bandwidth contribution is $O_p(L_T\|\bar\mu_T\|+L_T\|\bar\mu_T\|^2)=o_p(1)$ because $L_T/\sqrt T\to0$. Thus the long-run calculation may be made with \(e_{T,t}+\tilde\mu_{T,t}\).

For each fixed nonnegative lag \(\ell\),
\[
T^{-1}\sum_{t=\ell+1}^T\mathbb E_T[e_{T,t}\tilde\mu_{T,t-\ell}^\top]
=T^{-1}\sum_{t=\ell+1}^T\mathbb E_T[e_{T,t}]\tilde\mu_{T,t-\ell}^\top=0.
\]
The negative-lag terms are transposes of the corresponding positive-lag terms. Hence \(S_{e\mu}^{+}=S_{\mu e}^{+}=0\). The maintained orthogonality condition \(S_{ez}^{+}=0\) also gives \(S_{ze}^{+}=0\). With \(d_t:=\tilde\mu_{T,t}-B^{\circ\top}z_t\), \(S_{ed}^{+}=S_{e\mu}^{+}-S_{ez}^{+}B^\circ=0\) and \(S_{de}^{+}=0\).
Bilinearity of the long-run operator gives \(S_{rr}^{+}=S_{ee}^{+}+S_{dd}^{+}\).
In the fully observed sample, \(S_{ee}^{+}=\Omega_R\) by the definition of the design covariance in Assumption~\ref{ass:dependence}. To verify the sign of the residual term, fix \(u\in\mathbb R^k\), put \(q_{T,t}:=u^\top d_t\) for \(1\le t\le T\), and extend the scalar sequence by zero outside the sample. With \(c_T(\ell):=T^{-1}\sum_{t\in\mathbb Z}\mathbb E_T[q_{T,t}q_{T,t-\ell}]\), the Fej\'er identity gives, for each fixed \(M\ge1\),
\[
T^{-1}\mathbb E_T\left[\sum_{t\in\mathbb Z}\left(M^{-1/2}\sum_{j=0}^{M-1}q_{T,t-j}\right)^2\right]
=\sum_{|\ell|<M}\left(1-\frac{|\ell|}{M}\right)c_T(\ell)\ge0.
\]
For this fixed $M$, fixed-lag convergence gives $\sum_{|\ell|<M}(1-|\ell|/M)c(\ell)\ge0$, where $c(\ell):=\lim_Tc_T(\ell)$. Absolute summability of the long-run matrices involving $d_t$ then gives $u^\top S_{dd}^{+}u=\lim_{M\to\infty}\sum_{|\ell|<M}(1-|\ell|/M)c(\ell)\ge0$. Hence \(S_{dd}^{+}\succeq0\), and \(\Omega_R^+(r)=\Omega_R+S_{dd}^{+}\succeq\Omega_R\).

It remains to compute \(S_{dd}^{+}\). Since \(S_{ze}^{+}=0\), \(S_{zs}^{+}=S_{z\mu}^{+}\) and \(B^\circ=(S_{zz}^{+})^{-1}S_{z\mu}^{+}\).
Expanding \(d_t=\tilde\mu_{T,t}-B^{\circ\top}z_t\) gives
\[
\begin{aligned}
S_{dd}^{+}
&=S_{\mu\mu}^{+}-S_{\mu z}^{+}B^\circ-B^{\circ\top}S_{z\mu}^{+}+B^{\circ\top}S_{zz}^{+}B^\circ \\
&=S_{\mu\mu}^{+}-S_{\mu z}^{+}(S_{zz}^{+})^{-1}S_{z\mu}^{+}.
\end{aligned}
\]
The upper bound \(\Omega_R^+(r)\preceq\Omega_R^+\) is part (i) of Proposition~\ref{prop:HAC_RA}. Since \(S_{dd}^{+}\succeq0\), equality with \(\Omega_R\) holds if and only if \(S_{dd}^{+}=0\). If there is a fixed matrix \(B_0\) with \(\tilde\mu_{T,t}=B_0^\top z_t\) along the sequence, then \(S_{z\mu}^{+}=S_{zz}^{+}B_0\), so \(B^\circ=B_0\) and the residual matrix \(S_{dd}^{+}\) is zero.
\end{proof}

\clearpage

\begin{proof}[Proof of Theorem \ref{thm:RA-fullobs}]
Fix a conformable matrix \(B\). By the common-mean calculation in Proposition~\ref{prop:HAC_RA_conservative}, replacing $g_t$ by $e_{T,t}+\tilde\mu_{T,t}$ changes every fixed-lag and lag-window long-run calculation by $o(1)$. With \(d_t(B):=\tilde\mu_{T,t}-B^\top z_t\), this representation gives \(r_t(B)=e_{T,t}+d_t(B)\).

The date-specific-centering calculation in the proof of Proposition~\ref{prop:HAC_RA_conservative} gives \(S_{e\mu}^+=0\). Assumption~\ref{ass:RA-orth} gives \(S_{ez}^+=0\). Therefore, for the fixed matrix \(B\), \(S_{e d(B)}^+=S_{e\mu}^+-S_{ez}^+B=0\) and \(S_{d(B)e}^+=0\). Bilinearity gives \(S_{r(B)r(B)}^+=S_{ee}^+ + S_{d(B)d(B)}^+\). In the fully observed sample, \(S_{ee}^+=\Omega_R\). For each fixed vector $u$, the Fej\'er identity applied to $q_{T,t}=u^\top d_t(B)$ gives $u^\top S_{d(B)d(B)}^+u\ge0$, exactly as in Proposition~\ref{prop:HAC_RA_conservative}. Hence \(S_{d(B)d(B)}^+\succeq0\) and \(\Omega_R^+(r(B))=\Omega_R+S_{d(B)d(B)}^+\succeq\Omega_R\).

Now consider the population HAC projection. The long-run quadratic form satisfies \(S_{r(B)r(B)}^+=S_{gg}^+-S_{gz}^+B-B^\top S_{zg}^+ + B^\top S_{zz}^+B\).
Let \(B^*:=(S_{zz}^+)^{-1}S_{zg}^+\) and \(\Delta:=B-B^*\). Expanding at $B=B^*+\Delta$, the cross terms are $-S_{gz}^+\Delta-\Delta^\top S_{zg}^+ + B^{*\top}S_{zz}^+\Delta+\Delta^\top S_{zz}^+B^*=0$ because $S_{zz}^+B^*=S_{zg}^+$ and $B^{*\top}S_{zz}^+=S_{gz}^+$. The remainder is $\Delta^\top S_{zz}^+\Delta\succeq0$, so \(S_{r(B)r(B)}^+=S_{r(B^*)r(B^*)}^+ + \Delta^\top S_{zz}^+\Delta\).
Thus \(B^*\) minimizes the adjusted long-run matrix in Loewner order. Taking \(B=0\) yields \(\Omega_R^+(r(B^*))\preceq S_{gg}^+=\Omega_R^+\). Combining this upper bound with the lower bound already proved gives \(\Omega_R\preceq\Omega_R^+(r(B^*))\preceq\Omega_R^+\).
Finally, \(S_{ez}^+=0\) implies \(S_{ze}^+=0\), and date-specific centering implies \(S_{e\mu}^+=S_{\mu e}^+=0\). Therefore \(S_{zg}^+=S_{z\mu}^+\) and \(S_{gz}^+=S_{\mu z}^+\). Substituting \(B^*\) into \(S_{d(B^*)d(B^*)}^+\) gives \(S_{d(B^*)d(B^*)}^+=S_{\mu\mu}^+-S_{\mu z}^+(S_{zz}^+)^{-1}S_{z\mu}^+\).
Together with the identity \(\Omega_R^+(r(B^*))=\Omega_R+S_{d(B^*)d(B^*)}^+\), this is the stated formula. Since the difference between \(\Omega_R^+(r(B^*))\) and \(\Omega_R\) is the positive semidefinite matrix \(S_{d(B^*)d(B^*)}^+\), equality with \(\Omega_R\) holds if and only if \(S_{d(B^*)d(B^*)}^+=0\).
\end{proof}

\begin{proof}[Proof of Corollary \ref{cor:RA-oracle}]
If \(S_{d(B_0)d(B_0)}^+=0\), Theorem~\ref{thm:RA-fullobs} gives \(\Omega_R^+(r(B_0))=\Omega_R\). The completion-of-squares identity in the same theorem gives \(\Omega_R^+(r(B^*))\preceq\Omega_R^+(r(B_0))=\Omega_R\), because \(B^*\) is the population HAC projection coefficient. The lower bound in Theorem~\ref{thm:RA-fullobs} gives \(\Omega_R\preceq\Omega_R^+(r(B^*))\). Combining the two inequalities yields \(\Omega_R^+(r(B^*))=\Omega_R\). If \(\tilde\mu_{T,t}=B_0^\top z_t\) for every \(t\) along the sequence, then \(d_t(B_0)=0\) for every \(t\), so \(S_{d(B_0)d(B_0)}^+=0\).
\end{proof}

\begin{proof}[Proof of Proposition \ref{prop:RA-counterexample}]
The adjustment is exact on the retained scalar moment: \(r_t(B)=g_t-z_t=0\) for every date. Hence every sample lag product of the adjusted contribution is zero, and the adjusted HAC limit is \(\Omega_R^+(r(B))=0\).

For the unadjusted contribution, \(g_t=\mu_{T,t}\) is fixed under the assignment design, so there is no centered assignment innovation. The raw fixed-lag HAC limits are
\[
\Gamma_{g,\mathrm{hac}}(0)=\lim_{T\to\infty}\frac1T\sum_{t=1}^T\mu_{T,t}^2=\sigma_\mu^2,
\qquad
\Gamma_{g,\mathrm{hac}}(\ell)=0\quad(\ell\ne0),
\]
where the second equality is the vanishing of nonzero empirical lag products. Thus, under the random-sampling weights in \eqref{eq:OmegaR_sampling}, \(\Omega_R^{+,\mathrm{obs}}=\sigma_\mu^2\). Let \(\bar\mu_T:=T^{-1}\sum_{t=1}^T\mu_{T,t}\). The strengthened centering condition gives \(\bar\mu_T=o(T^{-1/2})\). For each fixed \(\ell\), the pointwise expansion $(\mu_{T,t}-\bar\mu_T)(\mu_{T,t-\ell}-\bar\mu_T)-\mu_{T,t}\mu_{T,t-\ell}=-\bar\mu_T\mu_{T,t}-\bar\mu_T\mu_{T,t-\ell}+\bar\mu_T^2$ gives
\[
\frac1T\sum_{t=\ell+1}^T(\mu_{T,t}-\bar\mu_T)(\mu_{T,t-\ell}-\bar\mu_T)
-\frac1T\sum_{t=\ell+1}^T\mu_{T,t}\mu_{T,t-\ell}=o(1),
\]
because the difference is bounded by constants times \(|\bar\mu_T|T^{-1}\sum_t|\mu_{T,t}|+\bar\mu_T^2\), and \(T^{-1}\sum_t|\mu_{T,t}|=O(1)\) by Cauchy--Schwarz. The centered mean-path long-run variance is therefore \(\Omega_\mu=\sigma_\mu^2\). Under i.i.d.\ Bernoulli sampling with rate \(\rho\), equation~\eqref{eq:OmegaR_sampling} gives \(\Omega_R^{\mathrm{obs}}=\Omega_R^{+,\mathrm{obs}}-\rho\Omega_\mu=(1-\rho)\sigma_\mu^2>0\).
Thus the adjusted observed-pair HAC limit is zero while the exact incomplete-observation design covariance is strictly positive. The fully observed lower bound therefore does not extend to incomplete random sampling without additional restrictions.
\end{proof}

\begin{proof}[Proof of Theorem \ref{thm:bonf}]
Let $c_B:=z_{1-\alpha/(2J)}$ and $c_S:=z_{\{1+(1-\alpha)^{1/J}\}/2}$. Since $\Pr(\widehat s_j>0\ \text{for all }j)=1-o(1)$, the Bonferroni event is equivalent up to $o(1)$ probability to $\{|T_{j,T}|\le c_B$ for all $j\}$, and the \v{S}id{\'a}k event is equivalent up to $o(1)$ probability to $\{|T_{j,T}|\le c_S$ for all $j\}$. The limiting Gaussian law assigns probability zero to the boundary of each of these rectangles. This is also true when the limiting covariance matrix is singular, because each boundary face is contained in an event of the form $\{Z_j=\pm c\}$ with $c>0$, while $Z_j$ is either nondegenerate normal or identically zero. Hence weak convergence on these continuity sets gives convergence of the corresponding joint coverage probabilities.

For Bonferroni, fix $j$. If $\tau_j=0$, then $\Pr(|Z_j|>c_B)=0$; if $0<\tau_j\le1$, write $Z_j=\tau_j Z_0$ with $Z_0\sim\mathcal N(0,1)$, so $\Pr(|Z_j|>c_B)\le\Pr(|Z_0|>c_B)=\alpha/J$. The union bound gives $\Pr(\max_{j\le J}|Z_j|>c_B)\le\alpha$, and therefore $\Pr(|Z_j|\le c_B\ \text{for all }j)\ge1-\alpha$. The preceding convergence gives the asserted Bonferroni coverage bound.

For \v{S}id{\'a}k, the same marginal argument gives $\Pr(|Z_j|\le c_S)\ge\Pr(|Z_0|\le c_S)=(1-\alpha)^{1/J}$ for every $j$. \v{S}id{\'a}k's Gaussian inequality for centered normal vectors gives $\Pr(|Z_j|\le c_S\ \text{for all }j)\ge\prod_{j=1}^J\Pr(|Z_j|\le c_S)$ when the covariance matrix is nonsingular. If it is singular, apply the nonsingular case to $Z+\varepsilon^{1/2}U$, where $U\sim\mathcal N(0,I_J)$ is independent of $Z$, and let $\varepsilon\downarrow0$ using the boundary-continuity argument above. Thus $\Pr(|Z_j|\le c_S\ \text{for all }j)\ge\prod_{j=1}^J\Pr(|Z_j|\le c_S)\ge1-\alpha$. The preceding convergence gives the asserted \v{S}id{\'a}k coverage bound.
\end{proof}

\subsection{LP--VAR, misspecification, and generated-regressor proofs}

\begin{proof}[Proof of Theorem \ref{thm:LP-local}]
Fix $h\in\mathcal H$. Since $Q_{h,T}\to Q_h\succ0$, $Q_{h,T}$ is nonsingular for all large $T$. The population LP moment can be written as $\bar m_{h,T}(\theta)=T^{-1}\sum_t\mathbb E_T[\psi_t y_{t+h}]-Q_{h,T}\theta$. Hence the normal equation defining $\theta_{h,T}^\star$ is exactly \(0=\bar m_{h,T}(\theta_{0,h})-Q_{h,T}(\theta_{h,T}^\star-\theta_{0,h})\),
and therefore \(\theta_{h,T}^\star-\theta_{0,h}=Q_{h,T}^{-1}\bar m_{h,T}(\theta_{0,h})=T^{-1/2}Q_{h,T}^{-1}d_{h,T}+o(T^{-1/2})\).
It remains to identify the shock coordinate of the limiting drift. Permute the coordinates so that the shock coordinate appears first and the nuisance block $(1,c_t^\top)^\top$ appears second. The restrictions $T^{-1}\sum_t\mathbb E_T[x_t]\to0$ and $T^{-1}\sum_t\mathbb E_T[x_t c_t^\top]\to0$ imply that the permuted limit of $Q_{h,T}$ is block diagonal between these two blocks. Write $Q_h=\begin{psmallmatrix}q_x&0\\0&Q_c\end{psmallmatrix}$ and $d_h=(0,d_c^\top)^\top$. Since $Q_h\succ0$, $q_x>0$ and $Q_c\succ0$, and $Q_h^{-1}d_h=(0,(Q_c^{-1}d_c)^\top)^\top$. Hence the shock coordinate of $Q_h^{-1}d_h$ is zero. Thus \(\sqrt T(\beta_{h,T}^\star-\beta_{0,h})=s_x^\top Q_{h,T}^{-1}d_{h,T}+o(1)\to s_x^\top Q_h^{-1}d_h=0\).
The design-based CLT gives $\sqrt T(\widehat\beta_h^{LP}-\beta_{h,T}^\star)\Rightarrow\mathcal N(0,v_{LP,h})$. Adding the preceding deterministic $o(1)$ shift yields the stated centered limit around $\beta_{0,h}$.
\end{proof}

\begin{proof}[Proof of Theorem \ref{thm:VAR-local}]
Write $m_T(\phi):=\bar m_T^{VAR}(\phi)$, $G_T:=G_{\phi,T}$, $A:=A_\phi$, and $\delta_T:=\phi_T^\star-\phi_0$. Since $G_T\to G_\phi$, $G_\phi$ has full column rank, and $A\succ0$, the matrix $M_T:=G_T^\top A G_T$ is nonsingular for all large $T$ and $M_T^{-1}$ is bounded. The local expansion at $\phi_T^\star$ gives \(m_T(\phi_T^\star)=T^{-1/2}d_{VAR,T}+G_T\delta_T+a_T(\phi_T^\star)+b_T\), where $\|a_T(\phi_T^\star)\|=o(\|\delta_T\|)+o(T^{-1/2})$ and $\|b_T\|=o(T^{-1/2})$. The interior first-order condition is $(D_{\phi,T}^\star)^\top A m_T(\phi_T^\star)=0$, and $D_{\phi,T}^\star=G_T+E_T$ with $\|E_T\|=o(1)$. Substitution gives the central system
\[
\begin{aligned}
0&=M_T\delta_T+G_T^\top A T^{-1/2}d_{VAR,T}+G_T^\top A a_T(\phi_T^\star)+c_T, \\
c_T&:=G_T^\top A b_T+E_T^\top A\big[T^{-1/2}d_{VAR,T}+G_T\delta_T+a_T(\phi_T^\star)+b_T\big].
\end{aligned}
\]
Since $d_{VAR,T}=O(1)$, $G_T=O(1)$, $\|E_T\|=o(1)$, and $\|a_T(\phi_T^\star)\|=o(\|\delta_T\|)+o(T^{-1/2})$, this definition gives $\|c_T\|=o(T^{-1/2})+o(\|\delta_T\|)$. Boundedness of $M_T^{-1}$ then implies $\|\delta_T\|\le CT^{-1/2}+\eta_T\|\delta_T\|+o(T^{-1/2})$ with $\eta_T\to0$. For all large $T$, absorbing the middle term gives $\delta_T=O(T^{-1/2})$. Consequently $a_T(\phi_T^\star)=o(T^{-1/2})$, and the first-order condition reduces to \(M_T\sqrt T(\phi_T^\star-\phi_0)=-G_T^\top A d_{VAR,T}+o(1)\). Thus $\sqrt T(\phi_T^\star-\phi_0)=-H_{\phi,T}d_{VAR,T}+o(1)\to-H_\phi d_{VAR}$, where $H_{\phi,T}:=(G_T^\top A G_T)^{-1}G_T^\top A$.

The design-based CLT gives $\widehat\phi-\phi_T^\star=O_p(T^{-1/2})$, and the preceding identity gives $\phi_T^\star-\phi_0=O(T^{-1/2})$; hence $\widehat\phi\to_p\phi_0$. Differentiability of $r$ at $\phi_0$ yields \(\sqrt T\big(r(\widehat\phi)-r(\phi_0)\big)=R\sqrt T(\widehat\phi-\phi_T^\star)+R\sqrt T(\phi_T^\star-\phi_0)+o_p(1)\).
The first term converges to $\mathcal N(0,R\Sigma_\phi R^\top)$, and the second term converges to $-RH_\phi d_{VAR}$. Slutsky's theorem gives the stated limit.
\end{proof}

\begin{proof}[Proof of Proposition \ref{prop:tangent}]
Since $\Sigma_{LP}\succ0$ and $C$ has full row rank, $C\Sigma_{LP}C^\top\succ0$. Let $X\sim\mathcal N(0,\Sigma_{LP})$. The joint vector $(X,CX)$ is Gaussian with covariance blocks $\Var(X)=\Sigma_{LP}$, $\Cov(X,CX)=\Sigma_{LP}C^\top$, and $\Var(CX)=C\Sigma_{LP}C^\top$. The Gaussian conditioning formula gives \(\Var(X\mid CX=0)=\Sigma_{LP}-\Sigma_{LP}C^\top(C\Sigma_{LP}C^\top)^{-1}C\Sigma_{LP}\).
The subtracted matrix is positive semidefinite because $\Sigma_{LP}C^\top(C\Sigma_{LP}C^\top)^{-1}C\Sigma_{LP}=M^\top M$, where $M:=(C\Sigma_{LP}C^\top)^{-1/2}C\Sigma_{LP}$. Hence the conditional covariance is weakly smaller than $\Sigma_{LP}$ in Loewner order. The tangent-space equivalence identifies the VAR influence vector with this restricted Gaussian influence vector on $\ker(C)=\operatorname{Im}(R)$. Therefore the conditional covariance above is the VAR IRF covariance under the stated local restriction, and the Loewner inequality follows.
\end{proof}

\begin{proof}[Proof of Proposition \ref{prop:LP-VAR-drift-design}]
The first sentence follows from Theorem~\ref{thm:LP-local}, which shows $\sqrt T(\beta_{h,T}^\star-\beta_{0,h})\to0$ for each fixed $h\in\mathcal H$. The second sentence follows from Theorem~\ref{thm:VAR-local}, which gives the local pseudo-true shift $\sqrt T(\phi_T^\star-\phi_0)\to-H_\phi d_{VAR}$ and therefore the impulse-response drift $b_{VAR}=-RH_\phi d_{VAR}$. Proposition~\ref{prop:tangent} supplies the separate restriction under which the VAR covariance is the covariance of the Gaussian LP limit after imposing the linear tangent restrictions. Without that tangent-space restriction, the preceding theorems compare local centering and local drift, but they do not order the covariance matrices.
\end{proof}

\begin{proof}[Proof of Proposition \ref{prop:misspecified-gmm}]
Write $G_T:=G_T(\theta_T^\star)$, $J_{T,t}:=J_{T,t}(\theta_T^\star)$, $\Delta_{e,T}:=T^{-1}\sum_t e_{T,t}$, and $\Delta_{J,T}:=T^{-1}\sum_t(J_{T,t}-G_T)$. The consistency argument in Theorem~\ref{thm:AN} uses uniform criterion convergence and moving-estimand separation, not Assumption~\ref{ass:local-correct-specification}; it therefore gives $\widehat\theta_N-\theta_T^\star\to_p0$ for the fixed-weight criterion. On the high-probability event where the sample first-order condition holds, the local expansion gives \(0=\widehat G_N(\theta_T^\star)^\top A g_N(\theta_T^\star)+\mathcal D_T(\widehat\theta_N-\theta_T^\star)+o_p(T^{-1/2})\).
The population first-order condition at the interior pseudo-true value is $G_T^\top A\bar m_T^\star=0$. At $\theta_T^\star$, the sample moment and sample Jacobian decompose as $g_N(\theta_T^\star)=\bar m_T^\star+\Delta_{e,T}$ and $\widehat G_N(\theta_T^\star)=G_T+\Delta_{J,T}$. Substituting these decompositions and using the population first-order condition yields
\[
\begin{aligned}
\widehat G_N(\theta_T^\star)^\top A g_N(\theta_T^\star)
&=G_T^\top A\Delta_{e,T}+\Delta_{J,T}^\top A\bar m_T^\star
   +\Delta_{J,T}^\top A\Delta_{e,T}  \\
&=T^{-1}\sum_{t=1}^T\zeta_{T,t}+\Delta_{J,T}^\top A\Delta_{e,T} .
\end{aligned}
\]
The two root-$T$ boundedness assumptions give $\Delta_{J,T}=O_p(T^{-1/2})$ and $\Delta_{e,T}=O_p(T^{-1/2})$, so $\Delta_{J,T}^\top A\Delta_{e,T}=O_p(T^{-1})=o_p(T^{-1/2})$. Hence \(\widehat G_N(\theta_T^\star)^\top A g_N(\theta_T^\star)=T^{-1}\sum_{t=1}^T\zeta_{T,t}+o_p(T^{-1/2})\).
Multiplying the first-order condition by $\sqrt T$ gives \(\sqrt T(\widehat\theta_N-\theta_T^\star)=-\mathcal D_T^{-1}T^{-1/2}\sum_{t=1}^T\zeta_{T,t}+o_p(1)\),
because $\mathcal D_T\to\mathcal D$ and $\mathcal D$ is nonsingular. Since $\mathcal D_T^{-1}\to\mathcal D^{-1}$ and $T^{-1/2}\sum_t\zeta_{T,t}\Rightarrow\mathcal N(0,\Omega_\zeta)$, Slutsky's theorem gives the stated Gaussian limit.
\end{proof}

\begin{proof}[Proof of Theorem \ref{thm:twostep}]
Let $g_N(\theta,\pi):=T^{-1}\sum_tg_t(\theta,\pi)$. The first-stage influence representation implies $\widehat\pi-\pi_T^\star=O_p(T^{-1/2})$. Because $G_{\theta,T}\to G_\theta$, $G_\theta$ has full column rank, and $A\succ0$, the matrix $G_{\theta,T}^\top A G_{\theta,T}$ is nonsingular for all large $T$. The local expansion of the second-stage first-order condition gives
\[
0=G_{\theta,T}^\top A\Big[g_N(\theta_T^\star,\pi_T^\star)+G_{\theta,T}(\widehat\theta-\theta_T^\star)+G_{\pi,T}(\widehat\pi-\pi_T^\star)\Big]+o_p(T^{-1/2}).
\]
Solving this linear system yields
\[
\sqrt T(\widehat\theta-\theta_T^\star)
=-H_{\theta,T}\Big[\sqrt T g_N(\theta_T^\star,\pi_T^\star)+G_{\pi,T}\sqrt T(\widehat\pi-\pi_T^\star)\Big]+o_p(1).
\]
Since $G_{\pi,T}\to G_\pi$, substituting the first-stage representation gives
\[
\sqrt T g_N(\theta_T^\star,\pi_T^\star)+G_{\pi,T}\sqrt T(\widehat\pi-\pi_T^\star)
=T^{-1/2}\sum_{t=1}^T\big(g_t(\theta_T^\star,\pi_T^\star)+G_{\pi,T}IF_{\pi,T,t}\big)+o_p(1).
\]
Define $\widetilde g_{T,t}:=g_t(\theta_T^\star,\pi_T^\star)+G_{\pi,T}IF_{\pi,T,t}$. The preceding displays give
\[
\sqrt T(\widehat\theta-\theta_T^\star)
=-H_{\theta,T}T^{-1/2}\sum_{t=1}^T\widetilde g_{T,t}+o_p(1),
\]
which is the augmented-moment representation. If $\widetilde e_{T,t}:=\widetilde g_{T,t}-T^{-1}\sum_s\mathbb E_T[\widetilde g_{T,s}]$, augmented local correct specification gives $T^{-1/2}\sum_t\widetilde g_{T,t}=T^{-1/2}\sum_t\widetilde e_{T,t}+o_p(1)$. The maintained augmented CLT gives the limit of $T^{-1/2}\sum_t\widetilde e_{T,t}$, and the augmented mean-path and product-array conditions define the corresponding $\Omega_R$ and $\Omega_R^+$. With these replacements, the first-order map is $H_{\theta}:=(G_\theta^\top A G_\theta)^{-1}G_\theta^\top A$, so the proofs of Theorems~\ref{thm:AN}, \ref{thm:hac}, and~\ref{thm:bootstrap} apply with $g_t$ replaced by $\widetilde g_{T,t}$ and limiting Jacobian $G_\theta$.
\end{proof}

\begin{proof}[Proof of Corollary \ref{cor:stacked}]
Applying Theorem~\ref{thm:AN} to the stacked just-identified system gives \(\sqrt T(\widehat\eta-\eta_T^\star)=-G_{h,T}^{-1}T^{-1/2}\sum_{t=1}^T h_t(\eta_T^\star)+o_p(1)\).
For the block lower-triangular Jacobian in the corollary,
\[
G_{h,T}^{-1}
=\begin{bmatrix}
G_{a,T}^{-1} & 0\\
-G_{\theta,T}^{-1}G_{\pi,T}G_{a,T}^{-1} & G_{\theta,T}^{-1}
\end{bmatrix}.
\]
The first-stage block is therefore \(\sqrt T(\widehat\pi-\pi_T^\star)=-G_{a,T}^{-1}T^{-1/2}\sum_{t=1}^T a_t(\pi_T^\star)+o_p(1)\), so $IF_{\pi,T,t}=-G_{a,T}^{-1}a_t(\pi_T^\star)$.
\[
\begin{aligned}
\sqrt T(\widehat\theta-\theta_T^\star)
&=G_{\theta,T}^{-1}G_{\pi,T}G_{a,T}^{-1}T^{-1/2}\sum_{t=1}^T a_t(\pi_T^\star)
  -G_{\theta,T}^{-1}T^{-1/2}\sum_{t=1}^T g_t(\theta_T^\star,\pi_T^\star)+o_p(1)\\
&=-G_{\theta,T}^{-1}T^{-1/2}\sum_{t=1}^T\big(g_t(\theta_T^\star,\pi_T^\star)+G_{\pi,T}IF_{\pi,T,t}\big)+o_p(1).
\end{aligned}
\]
This is the augmented-moment representation in Theorem~\ref{thm:twostep}, specialized to a just-identified second-stage block. The stacked covariance matrix is the long-run covariance of $h_t(\eta_T^\star)$; equivalently, the second-stage covariance can be written using the long-run covariance of the displayed augmented moment vector.
\end{proof}

\section{Additional Monte Carlo results}\label{app:mc-extra}

\subsection{No-drift diagnostic}\label{app:mc-nodrift}
The no-mean-path example sets $\tau_t \equiv 1$. The mean path is absent, so the design variance and the conservative variance limit should be nearly identical. Table~\ref{tab:mc_nodrift} reports this case. At $h=0$, both $V_H^{+}/V_R$ and $V_{RA}^{+}/V_R$ equal $1.000$, HAC and RA-HAC coverage are $0.885$ and $0.881$, and mean interval length is $0.202$ and $0.199$. The same near-equality persists at the longer selected horizons, so the remaining coverage deviations are consistent with finite-sample approximation error rather than a mean-path effect.

\begin{table}[!htbp]
  \centering
  \begin{tabular}{lcccccc}
\toprule
$h$ & $V_H^+/V_R$ & $V_{RA}^+/V_R$ & \shortstack{HAC\\cov.} & \shortstack{RA-HAC\\cov.} & \shortstack{HAC\\len.} & \shortstack{RA-HAC\\len.} \\
\midrule
0 & 1.000 (0.004) & 1.000 (0.004) & 0.885 (0.004) & 0.881 (0.005) & 0.202 (0.002) & 0.199 (0.002) \\
4 & 1.012 (0.006) & 1.012 (0.006) & 0.864 (0.005) & 0.860 (0.005) & 0.597 (0.008) & 0.590 (0.008) \\
8 & 0.999 (0.011) & 0.999 (0.011) & 0.855 (0.004) & 0.851 (0.004) & 0.678 (0.012) & 0.670 (0.012) \\
12 & 0.992 (0.008) & 0.992 (0.008) & 0.861 (0.004) & 0.855 (0.004) & 0.695 (0.014) & 0.687 (0.014) \\
\bottomrule
\end{tabular}

  \caption{No-drift diagnostic. RA-HAC denotes regression-adjusted HAC. Entries are averages across fixed environments; parenthesized values are standard errors across environments.}
  \label{tab:mc_nodrift}
\end{table}

\subsection{Misspecified adjustment}\label{app:mc-misspecified}
Misspecified Design C keeps $\tau_t = 1 + c_t$, but the feasible regression adjustment replaces $c_t$ by $y_{t-1}$. The change is substantive because the adjustment variable now varies across shock re-randomizations; the corresponding projection objects are therefore formed by averaging their sample analogues across the reference simulation rather than from a fixed covariate path. In this design, the adjustment variable is predetermined in timing but generated by the same dynamic system being re-randomized, so the fixed-covariate adjustment logic does not apply.

Table~\ref{tab:mc_designC_app} shows that the improvement is negligible. At $h=0$, $(V_H^{+}/V_R, V_{RA}^{+}/V_R, V_O/V_R)=(6.026,5.938,0.999)$, so the misspecified regression adjustment remains close to HAC rather than moving toward the oracle. The same conclusion appears in the interval diagnostics: at $h=0$, HAC and RA-HAC both have coverage essentially equal to one, while mean interval length falls only from $0.498$ to $0.490$, far above the oracle length $0.210$. Figure~\ref{fig:mc_designC_paths} shows that this lack of improvement persists across horizons. The adjustment does not reduce standard errors automatically; its value depends on whether the predetermined covariate captures the relevant predictable date-specific drift.

\begin{table}[p]
  \centering
  {\small
\renewcommand{\arraystretch}{1.05}
\setlength{\tabcolsep}{5pt}
\begin{tabular}{@{}lccc@{}}
\toprule
\multicolumn{4}{@{}l}{\emph{Panel A. Variance ratios}}\\
\midrule
$h$ & $V_H^+/V_R$ & $V_{RA}^+/V_R$ & $V_O/V_R$ \\
\midrule
0 & 6.026 (0.334) & 5.938 (0.324) & 0.999 (0.004) \\
4 & 1.136 (0.014) & 1.133 (0.014) & 1.011 (0.008) \\
8 & 1.028 (0.011) & 1.027 (0.010) & 1.001 (0.010) \\
12 & 1.000 (0.007) & 1.000 (0.007) & 0.993 (0.006) \\
\bottomrule
\end{tabular}

\vspace{0.8em}

\begin{tabular}{@{}lcccccc@{}}
\toprule
\multicolumn{7}{@{}l}{\emph{Panel B. Pointwise interval diagnostics}}\\
\midrule
$h$ & \shortstack{HAC\\cov.} & \shortstack{RA-HAC\\cov.} & \shortstack{Oracle\\cov.} & \shortstack{HAC\\len.} & \shortstack{RA-HAC\\len.} & \shortstack{Oracle\\len.} \\
\midrule
0 & 0.999 (0.000) & 0.999 (0.001) & 0.893 (0.006) & 0.498 (0.014) & 0.490 (0.014) & 0.210 (0.002) \\
4 & 0.875 (0.006) & 0.872 (0.006) & 0.874 (0.006) & 0.699 (0.016) & 0.690 (0.016) & 0.681 (0.016) \\
8 & 0.866 (0.006) & 0.861 (0.006) & 0.873 (0.006) & 0.763 (0.020) & 0.754 (0.020) & 0.773 (0.020) \\
12 & 0.866 (0.006) & 0.861 (0.006) & 0.875 (0.005) & 0.778 (0.021) & 0.769 (0.021) & 0.792 (0.022) \\
\bottomrule
\end{tabular}
}%

  \caption{Misspecified adjustment design (Design C). Panel A reports variance ratios, and Panel B reports pointwise coverage and mean interval length. RA-HAC denotes regression-adjusted HAC. Entries are averages across fixed environments; parenthesized values are standard errors across environments.}
  \label{tab:mc_designC_app}
\end{table}

\begin{figure}[!htbp]
  \centering
  \includegraphics[width=0.82\linewidth]{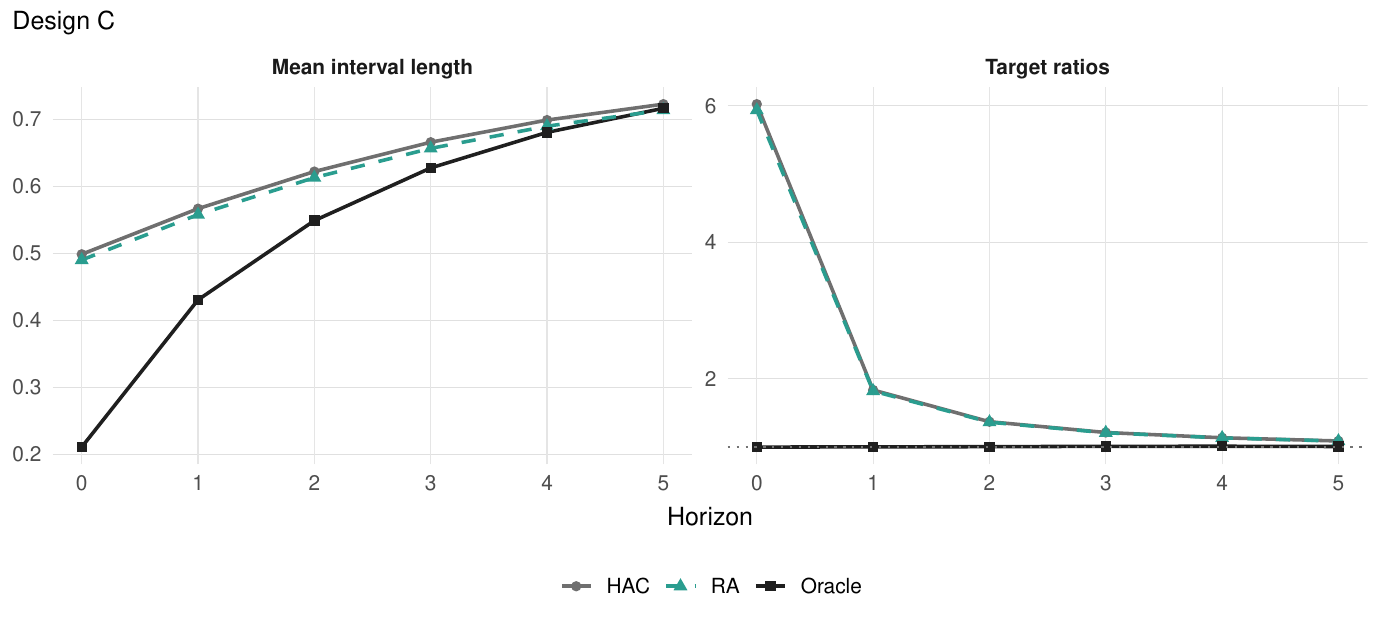}
  \caption{Misspecified adjustment design. The left panel reports variance ratios, and the right panel reports mean interval length. Because the predetermined adjustment is misspecified, the RA-HAC path remains close to HAC rather than moving toward the oracle.}
  \label{fig:mc_designC_paths}
\end{figure}

\subsection{Sample-size comparison}\label{app:mc-sample-size}
Table~\ref{tab:mc_sample_size} summarizes the same three designs at $T \in \{120,240,600\}$, averaging each metric over horizons $h=0,\ldots,12$ within environment and then across environments. Because the table averages over all horizons, it is best viewed as a summary of overall scale rather than as the main evidence on short-horizon conservativeness. The conservative-versus-adjusted pattern is present throughout. At $T=240$, Design A has average HAC and RA-HAC ratios $1.551$ and $1.066$, whereas Design C has $1.551$ and $1.530$; at $T=600$ the same comparison is $1.698$ versus $1.035$ in Design A and $1.698$ versus $1.695$ in Design C. Average interval length falls mechanically with $T$, but the distinction between aligned and misspecified adjustment remains visible in the horizon-averaged variance ratios.

\begin{table}[!htbp]
  \centering
  {\small
\renewcommand{\arraystretch}{1.05}
\setlength{\tabcolsep}{3pt}
\begin{tabular}{@{}llcccccc@{}}
\toprule
$T$ & Design & \shortstack{avg. HAC\\ratio} & \shortstack{avg. RA-HAC\\ratio} & \shortstack{avg. HAC\\cov.} & \shortstack{avg. RA-HAC\\cov.} & \shortstack{avg. HAC\\len.} & \shortstack{avg. RA-HAC\\len.} \\
\midrule
120 & A & 1.446 (0.089) & 1.046 (0.015) & 0.871 (0.006) & 0.852 (0.006) & 0.980 (0.037) & 0.919 (0.031) \\
 & B & 1.774 (0.160) & 1.383 (0.105) & 0.858 (0.010) & 0.846 (0.010) & 1.095 (0.083) & 1.042 (0.079) \\
 & C & 1.446 (0.089) & 1.441 (0.089) & 0.871 (0.006) & 0.865 (0.006) & 0.980 (0.037) & 0.964 (0.036) \\
\midrule
240 & A & 1.551 (0.057) & 1.066 (0.017) & 0.899 (0.004) & 0.880 (0.005) & 0.692 (0.021) & 0.641 (0.020) \\
 & B & 1.777 (0.117) & 1.393 (0.068) & 0.895 (0.005) & 0.886 (0.005) & 0.744 (0.042) & 0.706 (0.037) \\
 & C & 1.551 (0.057) & 1.530 (0.051) & 0.899 (0.004) & 0.895 (0.004) & 0.692 (0.021) & 0.682 (0.021) \\
\midrule
600 & A & 1.698 (0.044) & 1.035 (0.009) & 0.909 (0.005) & 0.883 (0.006) & 0.465 (0.009) & 0.423 (0.008) \\
 & B & 2.248 (0.164) & 1.604 (0.080) & 0.909 (0.006) & 0.898 (0.005) & 0.534 (0.024) & 0.500 (0.020) \\
 & C & 1.698 (0.044) & 1.695 (0.043) & 0.909 (0.005) & 0.906 (0.005) & 0.465 (0.009) & 0.462 (0.009) \\
\bottomrule
\end{tabular}
}%

  \caption{Sample-size comparison, horizon-averaged over $h=0,\ldots,12$. RA-HAC denotes regression-adjusted HAC. Entries are averages across fixed environments; parenthesized values are standard errors across environments.}
  \label{tab:mc_sample_size}
\end{table}

\subsection{Comparison with common LP inference}\label{app:mc-macro-practice}
This subsection keeps the LP estimator fixed and changes only the inference procedure. In addition to Bartlett HAC, regression adjustment, and the oracle, it reports a conventional HAC rule that sets the Bartlett bandwidth equal to $h+1$, heteroskedasticity-robust intervals without long-run correction, and a heteroskedasticity-robust Rademacher multiplier bootstrap with $499$ bootstrap draws in each replication. The comparison is intended to show how the design-based procedures compare with common empirical LP choices when the estimator itself is held fixed.

Tables~\ref{tab:mc_macro_variance_selected}--\ref{tab:mc_macro_interval_selected} compare the procedures at the selected horizons $h=0,4,8,12$. In aligned Design A at $h=0$, the variance ratios are $6.026$ for HAC, $1.633$ for regression-adjusted HAC, $2.765$ for the $h+1$ HAC rule, $1.935$ for heteroskedasticity-robust inference, and $0.999$ for the oracle. The corresponding mean interval lengths are $0.498$, $0.258$, $0.343$, $0.290$, and $0.210$, so regression-adjusted HAC is the closest feasible procedure to the oracle among the methods reported here.

In nonlinear Design B at $h=0$, regression-adjusted HAC still improves materially on baseline HAC, but it is no longer uniformly tighter than the conventional alternatives: the variance ratios are $9.939$, $4.821$, $4.489$, $2.904$, and $0.998$ for HAC, regression-adjusted HAC, the $h+1$ HAC rule, heteroskedasticity-robust inference, and the oracle. In misspecified Design C, regression-adjusted HAC remains close to HAC throughout. Figure~\ref{fig:mc_macro_selected} shows the corresponding horizon paths for Designs A and C. As in the main text, regression adjustment is most effective when the predetermined adjustment tracks the mean path, and its contribution is limited otherwise.

\begin{table}[p]
  \centering
  {\footnotesize
\renewcommand{\arraystretch}{1.05}
\setlength{\tabcolsep}{3.5pt}
\begin{tabular}{@{}llccccc@{}}
\toprule
Design & $h$ & HAC & RA-HAC & $h{+}1$ HAC & HC & Oracle \\
\midrule
A & 0 & 6.026 (0.334) & 1.633 (0.131) & 2.765 (0.101) & 1.935 (0.051) & 0.999 (0.004) \\
 & 4 & 1.136 (0.014) & 1.028 (0.008) & 1.124 (0.014) & 1.053 (0.016) & 1.011 (0.008) \\
 & 8 & 1.028 (0.011) & 1.005 (0.010) & 1.036 (0.009) & 1.003 (0.015) & 1.001 (0.010) \\
 & 12 & 1.000 (0.007) & 0.994 (0.006) & 1.007 (0.005) & 0.989 (0.008) & 0.993 (0.006) \\
\midrule
B & 0 & 9.939 (1.206) & 4.821 (0.446) & 4.489 (0.430) & 2.904 (0.229) & 0.998 (0.004) \\
 & 4 & 1.170 (0.016) & 1.082 (0.011) & 1.156 (0.015) & 1.056 (0.013) & 1.006 (0.006) \\
 & 8 & 1.035 (0.008) & 1.017 (0.008) & 1.042 (0.008) & 1.006 (0.011) & 1.001 (0.007) \\
 & 12 & 1.009 (0.006) & 1.004 (0.006) & 1.014 (0.005) & 0.998 (0.007) & 1.000 (0.006) \\
\midrule
C & 0 & 6.026 (0.334) & 5.938 (0.324) & 2.765 (0.101) & 1.935 (0.051) & 0.999 (0.004) \\
 & 4 & 1.136 (0.014) & 1.133 (0.014) & 1.124 (0.014) & 1.053 (0.016) & 1.011 (0.008) \\
 & 8 & 1.028 (0.011) & 1.027 (0.010) & 1.036 (0.009) & 1.003 (0.015) & 1.001 (0.010) \\
 & 12 & 1.000 (0.007) & 1.000 (0.007) & 1.007 (0.005) & 0.989 (0.008) & 0.993 (0.006) \\
\bottomrule
\end{tabular}
}%

  \caption{Common local-projection inference choices: selected-horizon variance ratios relative to $V_R$. RA-HAC denotes regression-adjusted HAC, and HC denotes heteroskedasticity-robust inference without long-run correction. Entries are averages across fixed environments; parenthesized values are standard errors across environments.}
  \label{tab:mc_macro_variance_selected}
\end{table}

\begin{table}[p]
  \centering
  {\footnotesize
\renewcommand{\arraystretch}{1.05}
\setlength{\tabcolsep}{3.5pt}
\begin{tabular}{@{}llcccccc@{}}
\toprule
\multicolumn{8}{@{}l}{\emph{Panel A. Coverage}}\\
\midrule
Design & $h$ & HAC & RA-HAC & $h{+}1$ HAC & HC & HC boot. & Oracle \\
\midrule
A & 0 & 0.999 (0.000) & 0.938 (0.009) & 0.991 (0.002) & 0.973 (0.003) & 0.973 (0.003) & 0.893 (0.006) \\
 & 4 & 0.875 (0.006) & 0.863 (0.006) & 0.877 (0.006) & 0.887 (0.006) & 0.884 (0.005) & 0.874 (0.006) \\
 & 8 & 0.866 (0.006) & 0.860 (0.006) & 0.858 (0.006) & 0.880 (0.006) & 0.878 (0.006) & 0.873 (0.006) \\
 & 12 & 0.866 (0.006) & 0.859 (0.006) & 0.851 (0.005) & 0.882 (0.005) & 0.882 (0.005) & 0.875 (0.005) \\
\midrule
B & 0 & 0.998 (0.001) & 0.994 (0.002) & 0.993 (0.003) & 0.984 (0.004) & 0.984 (0.004) & 0.898 (0.005) \\
 & 4 & 0.870 (0.004) & 0.858 (0.005) & 0.872 (0.004) & 0.882 (0.004) & 0.878 (0.005) & 0.866 (0.005) \\
 & 8 & 0.859 (0.004) & 0.853 (0.004) & 0.848 (0.004) & 0.874 (0.005) & 0.871 (0.005) & 0.865 (0.004) \\
 & 12 & 0.869 (0.005) & 0.864 (0.005) & 0.853 (0.006) & 0.885 (0.005) & 0.880 (0.005) & 0.879 (0.005) \\
\midrule
C & 0 & 0.999 (0.000) & 0.999 (0.001) & 0.991 (0.002) & 0.973 (0.003) & 0.973 (0.003) & 0.893 (0.006) \\
 & 4 & 0.875 (0.006) & 0.872 (0.006) & 0.877 (0.006) & 0.887 (0.006) & 0.884 (0.005) & 0.874 (0.006) \\
 & 8 & 0.866 (0.006) & 0.861 (0.006) & 0.858 (0.006) & 0.880 (0.006) & 0.878 (0.006) & 0.873 (0.006) \\
 & 12 & 0.866 (0.006) & 0.861 (0.006) & 0.851 (0.005) & 0.882 (0.005) & 0.882 (0.005) & 0.875 (0.005) \\
\bottomrule
\end{tabular}

\vspace{0.8em}

\begin{tabular}{@{}llcccccc@{}}
\toprule
\multicolumn{8}{@{}l}{\emph{Panel B. Mean interval length}}\\
\midrule
Design & $h$ & HAC & RA-HAC & $h{+}1$ HAC & HC & HC boot. & Oracle \\
\midrule
A & 0 & 0.498 (0.014) & 0.258 (0.011) & 0.343 (0.007) & 0.290 (0.005) & 0.288 (0.005) & 0.210 (0.002) \\
 & 4 & 0.699 (0.016) & 0.664 (0.016) & 0.699 (0.016) & 0.706 (0.017) & 0.702 (0.017) & 0.681 (0.016) \\
 & 8 & 0.763 (0.020) & 0.749 (0.020) & 0.751 (0.019) & 0.791 (0.021) & 0.786 (0.021) & 0.773 (0.020) \\
 & 12 & 0.778 (0.021) & 0.767 (0.021) & 0.750 (0.020) & 0.810 (0.022) & 0.805 (0.022) & 0.792 (0.022) \\
\midrule
B & 0 & 0.621 (0.043) & 0.439 (0.022) & 0.428 (0.023) & 0.351 (0.016) & 0.349 (0.016) & 0.218 (0.003) \\
 & 4 & 0.770 (0.037) & 0.734 (0.033) & 0.771 (0.037) & 0.774 (0.035) & 0.769 (0.035) & 0.744 (0.033) \\
 & 8 & 0.828 (0.038) & 0.814 (0.037) & 0.815 (0.037) & 0.861 (0.040) & 0.856 (0.040) & 0.838 (0.038) \\
 & 12 & 0.844 (0.039) & 0.834 (0.038) & 0.815 (0.037) & 0.881 (0.041) & 0.876 (0.040) & 0.859 (0.039) \\
\midrule
C & 0 & 0.498 (0.014) & 0.490 (0.014) & 0.343 (0.007) & 0.290 (0.005) & 0.288 (0.005) & 0.210 (0.002) \\
 & 4 & 0.699 (0.016) & 0.690 (0.016) & 0.699 (0.016) & 0.706 (0.017) & 0.702 (0.017) & 0.681 (0.016) \\
 & 8 & 0.763 (0.020) & 0.754 (0.020) & 0.751 (0.019) & 0.791 (0.021) & 0.786 (0.021) & 0.773 (0.020) \\
 & 12 & 0.778 (0.021) & 0.769 (0.021) & 0.750 (0.020) & 0.810 (0.022) & 0.805 (0.022) & 0.792 (0.022) \\
\bottomrule
\end{tabular}
}%

  \caption{Common local-projection inference choices: selected-horizon interval diagnostics. Panel A reports pointwise coverage, and Panel B reports mean interval length. RA-HAC denotes regression-adjusted HAC, HC denotes heteroskedasticity-robust inference without long-run correction, and HC boot.\ denotes the heteroskedasticity-robust Rademacher multiplier bootstrap. Entries are averages across fixed environments; parenthesized values are standard errors across environments.}
  \label{tab:mc_macro_interval_selected}
\end{table}

\begin{figure}[!htbp]
  \centering
  \includegraphics[width=0.84\linewidth]{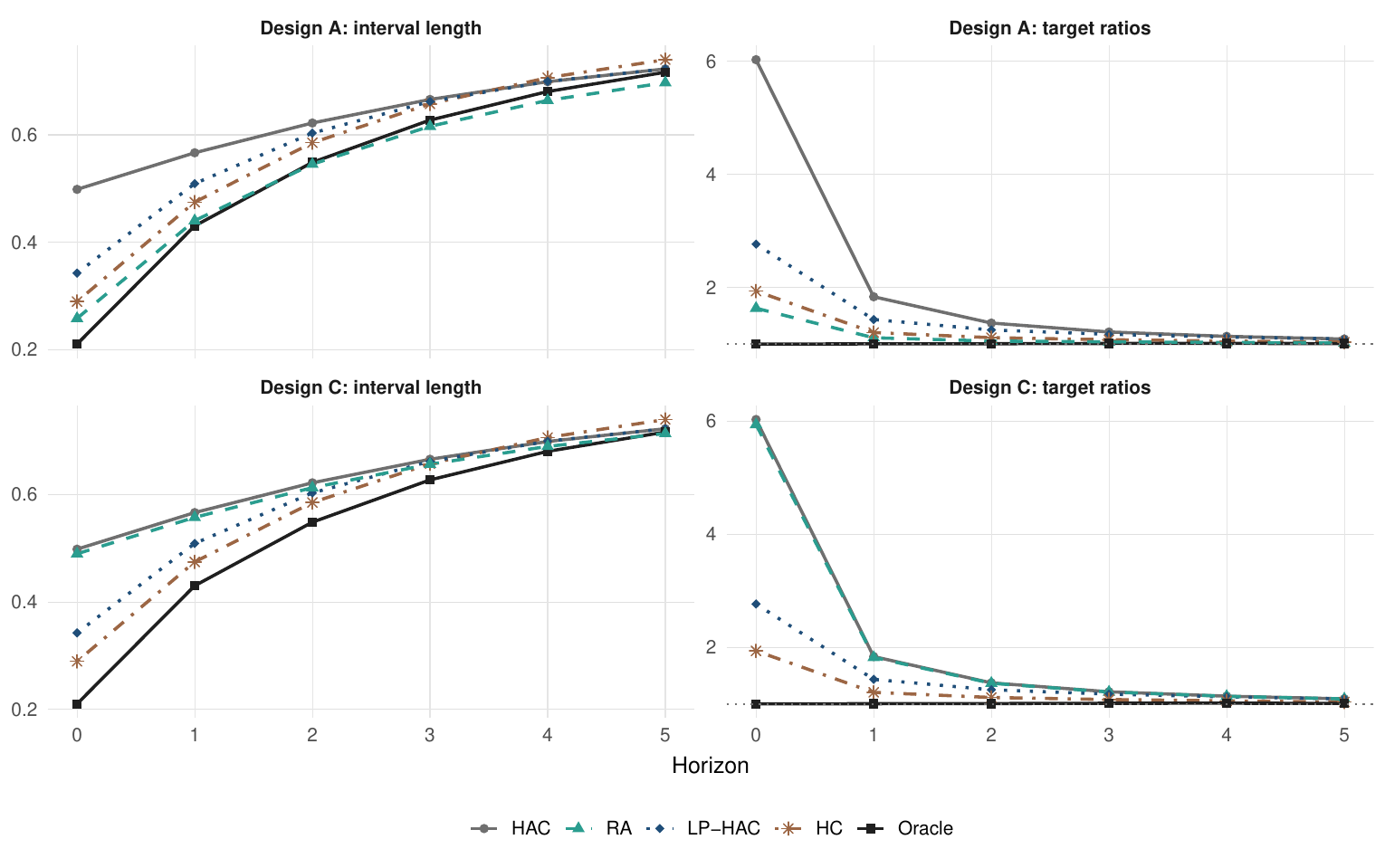}
  \caption{Common local-projection inference choices for Designs A and C. The top row reports variance ratios and the bottom row mean interval length by horizon. In the aligned design, the RA-HAC path is the closest feasible curve to the oracle at short horizons; in the misspecified design, it remains near HAC.}
  \label{fig:mc_macro_selected}
\end{figure}

\subsection{Bandwidth sensitivity}\label{app:mc-bandwidth}
This subsection keeps the data-generating designs fixed and varies only the HAC bandwidth rule. Table~\ref{tab:mc_extra_bandwidth} reports horizon-averaged variance ratios, coverage, and interval length for Designs A--C under three $T_h^{1/3}$ choices, the conventional $h+1$ rule, and the \citet{NeweyWest1994} rule. The oracle is stable across bandwidths, so the sensitivity is in the feasible HAC and regression-adjusted procedures rather than in the design variance itself. HAC remains conservative throughout, with the strongest conservativeness under the larger $T_h^{1/3}$ choice. Regression adjustment remains close to the design benchmark in Design A, yields smaller improvements in Design B, and is nearly indistinguishable from HAC in misspecified Design C. The $h+1$ rule delivers the shortest intervals, but it also lowers coverage relative to the better-balanced $T_h^{1/3}$ choices. These comparisons are consistent with the main-text interpretation: bandwidth choice is a second-order implementation issue asymptotically, but it can materially affect finite-sample conservativeness.

\begin{table}[p]
  \centering
  \small
  \setlength{\tabcolsep}{5pt}
  \begin{tabular}{@{}llccc@{}}
\toprule
Design & Bandwidth rule & \shortstack{Variance\\ratio} & Coverage & \shortstack{Mean\\length} \\
\midrule
A & Small $T_h^{1/3}$ & (1.452, 1.047) & (0.895, 0.875) & (0.708, 0.664) \\
 & Baseline $T_h^{1/3}$ & (1.587, 1.059) & (0.893, 0.869) & (0.711, 0.655) \\
 & Large $T_h^{1/3}$ & (1.755, 1.078) & (0.889, 0.862) & (0.713, 0.645) \\
 & $h+1$ & (1.255, 1.029) & (0.885, 0.863) & (0.683, 0.644) \\
 & Newey--West (1994) & (1.452, 1.047) & (0.895, 0.875) & (0.708, 0.664) \\
\midrule
B & Small $T_h^{1/3}$ & (1.700, 1.324) & (0.892, 0.882) & (0.761, 0.726) \\
 & Baseline $T_h^{1/3}$ & (1.877, 1.392) & (0.889, 0.877) & (0.764, 0.720) \\
 & Large $T_h^{1/3}$ & (2.081, 1.462) & (0.885, 0.869) & (0.766, 0.710) \\
 & $h+1$ & (1.398, 1.194) & (0.883, 0.869) & (0.733, 0.698) \\
 & Newey--West (1994) & (1.700, 1.324) & (0.892, 0.882) & (0.761, 0.726) \\
\midrule
C & Small $T_h^{1/3}$ & (1.421, 1.407) & (0.895, 0.892) & (0.694, 0.687) \\
 & Baseline $T_h^{1/3}$ & (1.548, 1.527) & (0.892, 0.888) & (0.697, 0.687) \\
 & Large $T_h^{1/3}$ & (1.707, 1.677) & (0.889, 0.883) & (0.699, 0.685) \\
 & $h+1$ & (1.238, 1.231) & (0.885, 0.879) & (0.670, 0.659) \\
 & Newey--West (1994) & (1.421, 1.407) & (0.895, 0.892) & (0.694, 0.687) \\
\bottomrule
\end{tabular}

  \caption{Bandwidth sensitivity. Entries average variance ratios, pointwise coverage, and mean interval length over horizons and fixed environments. In each parenthesized pair, the first entry is HAC and the second is RA-HAC.}
  \label{tab:mc_extra_bandwidth}
\end{table}

\subsection{Additional diagnostic designs}\label{app:mc-stress}
Table~\ref{tab:mc_extra_stress} collects two diagnostic designs that illustrate when the feasible procedures require additional structure. Panel A reports a state-feedback design in which the adjustment variable moves under shock re-randomization. There the adjustment variable is affected by re-randomized assignment shocks, HAC understates the design variance, and regression adjustment performs poorly: the adjusted variance ratio falls to $0.033$ and coverage to $0.238$, far below the oracle. This is therefore a substantive failure of regression adjustment rather than a minor quantitative deterioration.

Panel B reports the misspecified overidentified GMM design. There the main issue is variance misspecification rather than regression adjustment: the naive variance estimator is severely downward biased, while the misspecification-robust alternative is close to the design benchmark and restores coverage near the nominal level. Regression adjustment is a variance-reduction device when predetermined covariates track the predictable date-specific mean path. When adjustment variables are affected by re-randomized assignment shocks, the additional orthogonality conditions behind the adjusted lower bound fail. Overidentified settings require misspecification-robust variance formulas once the extra moments are not exactly correct.

\begin{table}[!htbp]
  \centering
  \small
  \setlength{\tabcolsep}{5pt}
  \begin{tabular}{@{}llcc@{}}
\toprule
Diagnostic design & Method & \shortstack{Variance\\ratio} & Coverage \\
\midrule
\multicolumn{4}{@{}l}{\emph{Panel A. State-feedback adjustment failure, $V_R=29.865$}} \\
State feedback & HAC & 0.631 & 0.826 \\
 & RA-HAC & 0.033 & 0.238 \\
 & Oracle & -- & 0.908 \\
\addlinespace[0.25em]
\multicolumn{4}{@{}l}{\emph{Panel B. Misspecified overidentified GMM, $V_R=5.032$}} \\
Misspecified GMM & Naive & 0.392 & 0.722 \\
 & Misspecification-robust & 0.971 & 0.905 \\
 & Oracle & -- & 0.920 \\
\bottomrule
\end{tabular}

  \caption{Additional diagnostic designs. Panel A reports a state-feedback design in which the adjustment variable moves under shock re-randomization. Panel B reports the misspecified overidentified GMM design. RA-HAC denotes regression-adjusted HAC.}
  \label{tab:mc_extra_stress}
\end{table}

\subsection{Simultaneous coverage}\label{app:mc-simultaneous-extra}
The main text studies pointwise inference by horizon, but pointwise intervals should not be read as simultaneous bands, so explicit simultaneous constructions are needed when uniform coverage across horizons is the object. Table~\ref{tab:mc_extra_simultaneous} reports the corresponding results. Average pointwise coverage remains close to the nominal $90\%$ level for HAC, regression adjustment, and the oracle. If one reinterprets those same pointwise intervals as uniform bands, simultaneous coverage falls sharply, to about one-half in both designs. Bonferroni and \v{S}id{\'a}k bands are both valid under the fixed-horizon Gaussian limit in Appendix~\ref{app:simultaneous}; the table reports Bonferroni as a conservative reference. The oracle max-$t$ benchmark is less conservative than Bonferroni but still has below-nominal coverage in finite samples.

\begin{table}[p]
  \centering
  \small
  \setlength{\tabcolsep}{5pt}
  \begin{tabular}{@{}llccccc@{}}
\toprule
Design & Method & \shortstack{Average\\pointwise} & \shortstack{Sim.,\\pointwise} & \shortstack{Sim.,\\Bonferroni} & \shortstack{Sim.,\\oracle max-$t$} & \shortstack{Oracle\\critical value} \\
\midrule
A & HAC & 0.892 & 0.531 & 0.913 & 0.874 & 2.465 \\
A & Oracle & 0.877 & 0.456 & 0.903 & 0.855 & 2.465 \\
A & RA-HAC & 0.870 & 0.446 & 0.888 & 0.837 & 2.465 \\
B & HAC & 0.890 & 0.529 & 0.909 & 0.873 & 2.483 \\
B & Oracle & 0.879 & 0.466 & 0.916 & 0.874 & 2.483 \\
B & RA-HAC & 0.878 & 0.489 & 0.896 & 0.857 & 2.483 \\
\bottomrule
\end{tabular}

  \caption{Simultaneous coverage over a fixed horizon set. The table compares average pointwise coverage with simultaneous coverage under pointwise critical values, Bonferroni bands, and the oracle max-$t$ benchmark. RA-HAC denotes regression-adjusted HAC.}
  \label{tab:mc_extra_simultaneous}
\end{table}

\clearpage

\FloatBarrier
\section{Additional empirical application diagnostics}\label{sec:app-empirical}
This appendix collects the additional monetary results. Figure~\ref{fig:app_money_linear_additional} reports the remaining linear local projections using the rich macro covariate set, while Figure~\ref{fig:app_money_state_additional} reports two additional state-dependent specifications: the slack specification for the corporate spread and the zero-lower-bound (ZLB) specification for CPI\@. Table~\ref{tab:app_money_linear_summary} and Table~\ref{tab:app_money_state_summary} provide the full summary of the reduction in standard errors across local-projection specifications, and Table~\ref{tab:app_money_varx_summary} reports the VARX impulse-response results discussed in the main text.

\begin{figure}[H]
  \centering
  \includegraphics[width=0.82\linewidth]{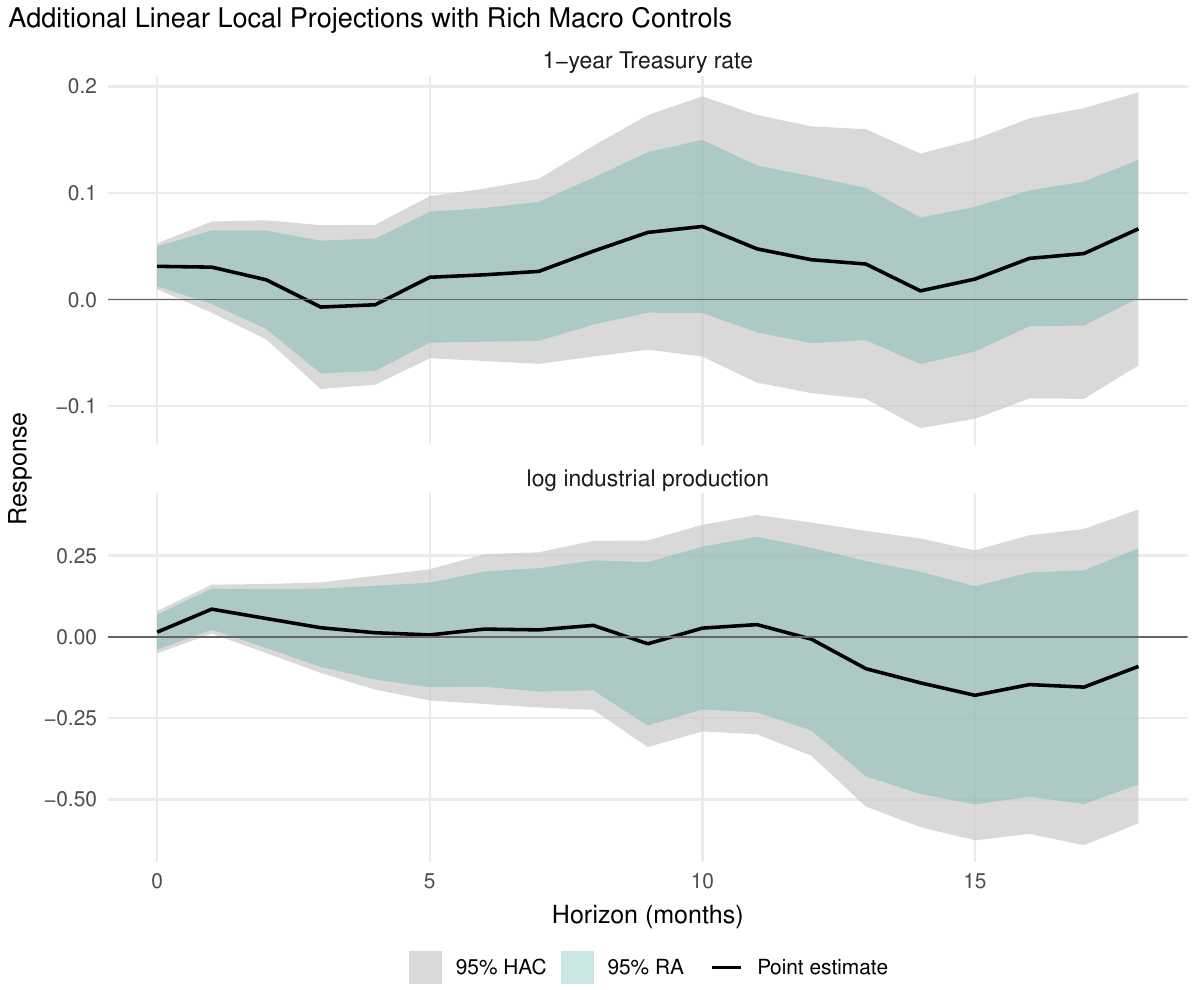}
  \caption{Additional linear local projections using the rich macro covariate set. The panels report the 1-year Treasury rate and log industrial production together with 95\% HAC bands and corresponding regression-adjusted HAC bands; the latter are interpreted with the qualifications in Section~\ref{sec:implementation}.}
  \label{fig:app_money_linear_additional}
\end{figure}

\begin{figure}[H]
  \centering
  \includegraphics[width=0.84\linewidth]{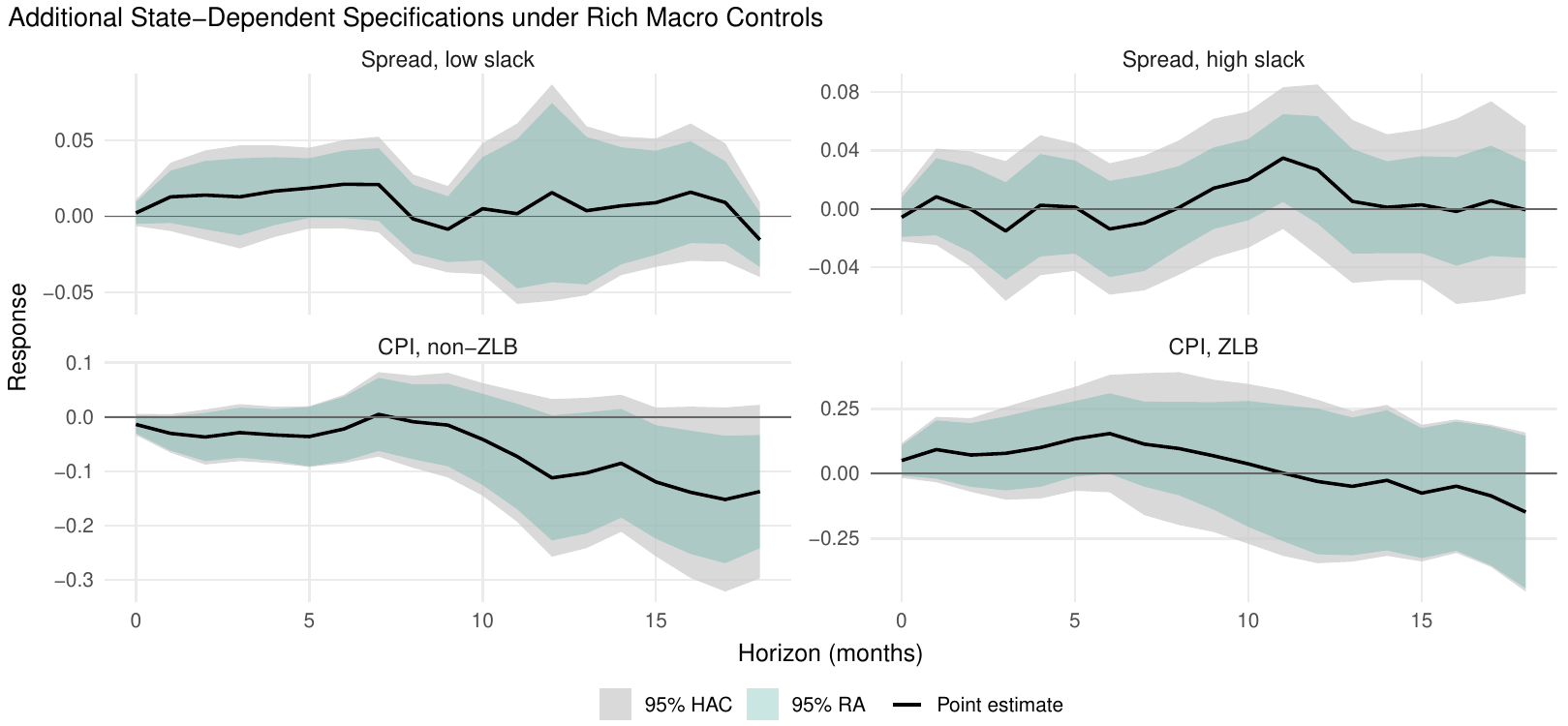}
  \caption{Additional state-dependent specifications using the rich macro covariate set. The top row reports the slack split for the BAA--AAA spread; the bottom row reports the zero-lower-bound (ZLB) split for CPI.}
  \label{fig:app_money_state_additional}
\end{figure}

\begin{table}[p]
  \centering
  \footnotesize
  \setlength{\tabcolsep}{4.3pt}
  \renewcommand{\arraystretch}{1.06}
  \begin{tabular}{llcccc}
    \toprule
    Outcome & $Z_t$ set & Mean reduction & Median reduction & Share positive & HAC$\to$RA flips\\
    \midrule
    log CPI & Parsimonious & 3.71 & 3.27 & 1.00 & 0\\
     & Macro lags & 8.70 & 8.75 & 0.89 & 1\\
     & Rich macro & 16.12 & 16.80 & 0.95 & 4\\
    \addlinespace[0.25em]
    BAA--AAA spread & Parsimonious & 1.57 & 1.75 & 1.00 & 0\\
     & Macro lags & 13.51 & 12.82 & 1.00 & 0\\
     & Rich macro & 22.02 & 21.74 & 1.00 & 1\\
    \addlinespace[0.25em]
    1-year Treasury rate & Parsimonious & 8.95 & 8.71 & 1.00 & 0\\
     & Macro lags & 22.91 & 20.97 & 1.00 & 0\\
     & Rich macro & 32.27 & 31.58 & 1.00 & 1\\
    \addlinespace[0.25em]
    log industrial production & Parsimonious & 3.75 & 1.58 & 1.00 & 0\\
     & Macro lags & 13.89 & 14.85 & 1.00 & 0\\
     & Rich macro & 20.81 & 21.19 & 1.00 & 0\\
    \bottomrule
  \end{tabular}
  \caption{Full linear-local-projection tightening summary for the monetary application. Entries are percentage reductions in the HAC standard error over horizons $h=0,\ldots,18$.}
  \label{tab:app_money_linear_summary}
\end{table}

\begin{table}[p]
  \centering
  \footnotesize
  \setlength{\tabcolsep}{3pt}
  \renewcommand{\arraystretch}{1.05}
  \begin{minipage}{0.96\linewidth}
    \centering
    \textbf{Panel A. CPI}\\[0.25em]
    \begin{tabular*}{\linewidth}{@{\extracolsep{\fill}}p{0.19\linewidth}p{0.21\linewidth}p{0.21\linewidth}ccc@{}}
      \toprule
      Specification & Coefficient & $Z_t$ Set & Mean & Median & Flips\\
      \midrule
      Slack split & Low / non-ZLB & Parsimonious & 3.74 & 1.94 & 0\\
       &  & Macro lags & 6.02 & 8.30 & 0\\
       &  & Rich macro & 12.56 & 13.15 & 1\\
       & High / ZLB & Parsimonious & 6.04 & 5.21 & 4\\
       &  & Macro lags & 22.33 & 21.32 & 6\\
       &  & Rich macro & 19.96 & 17.89 & 5\\
      \addlinespace[0.25em]
      ZLB split & Low / non-ZLB & Parsimonious & 4.32 & 3.03 & 0\\
       &  & Macro lags & 6.73 & 7.61 & 3\\
       &  & Rich macro & 17.79 & 19.01 & 4\\
       & High / ZLB & Parsimonious & 1.06 & 1.11 & 0\\
       &  & Macro lags & 15.48 & 13.09 & 0\\
       &  & Rich macro & 17.24 & 13.53 & 0\\
      \bottomrule
    \end{tabular*}
  \end{minipage}
  \vspace{0.55em}
  \begin{minipage}{0.96\linewidth}
    \centering
    \textbf{Panel B. BAA--AAA Spread}\\[0.25em]
    \begin{tabular*}{\linewidth}{@{\extracolsep{\fill}}p{0.19\linewidth}p{0.21\linewidth}p{0.21\linewidth}ccc@{}}
      \toprule
      Specification & Coefficient & $Z_t$ Set & Mean & Median & Flips\\
      \midrule
      Slack split & Low / non-ZLB & Parsimonious & 1.21 & 0.96 & 0\\
       &  & Macro lags & 14.00 & 15.15 & 0\\
       &  & Rich macro & 22.17 & 23.42 & 0\\
       & High / ZLB & Parsimonious & 3.51 & 0.53 & 0\\
       &  & Macro lags & 23.34 & 20.93 & 1\\
       &  & Rich macro & 33.47 & 35.81 & 1\\
      \addlinespace[0.25em]
      ZLB split & Low / non-ZLB & Parsimonious & 1.38 & 1.29 & 0\\
       &  & Macro lags & 12.77 & 14.98 & 2\\
       &  & Rich macro & 19.29 & 20.37 & 2\\
       & High / ZLB & Parsimonious & 0.72 & 0.43 & 0\\
       &  & Macro lags & 24.16 & 21.07 & 2\\
       &  & Rich macro & 34.05 & 33.02 & 5\\
      \bottomrule
    \end{tabular*}
  \end{minipage}
  \caption{State-dependent local-projection tightening summary for the monetary application. Entries are percentage standard-error reductions over horizons $h=0,\ldots,18$; \emph{Flips} counts HAC$\to$RA significance changes.}
  \label{tab:app_money_state_summary}
\end{table}

\begin{table}[p]
  \centering
  \footnotesize
  \setlength{\tabcolsep}{4.3pt}
  \renewcommand{\arraystretch}{1.06}
  \begin{tabular}{llcccc}
    \toprule
    Impulse response & $Z_t$ set & Mean reduction & Median reduction & Share positive & HAC$\to$RA flips\\
    \midrule
    Cumulative inflation IRF & Parsimonious & 7.00 & 7.97 & 1.00 & 0\\
     & Macro lags & 16.78 & 17.19 & 1.00 & 0\\
     & Rich macro & 25.71 & 25.43 & 1.00 & 0\\
    \addlinespace[0.25em]
    Spread IRF & Parsimonious & 0.88 & 0.83 & 0.95 & 0\\
     & Macro lags & 14.60 & 12.62 & 1.00 & 3\\
     & Rich macro & 19.73 & 20.62 & 1.00 & 2\\
    \addlinespace[0.25em]
    GS1 IRF & Parsimonious & 10.32 & 8.53 & 1.00 & 0\\
     & Macro lags & 12.92 & 10.73 & 1.00 & 0\\
     & Rich macro & 15.77 & 13.76 & 1.00 & 0\\
    \addlinespace[0.25em]
    Cumulative IP IRF & Parsimonious & 4.66 & 3.96 & 1.00 & 0\\
     & Macro lags & 3.69 & 3.18 & 1.00 & 0\\
     & Rich macro & 14.86 & 14.98 & 1.00 & 0\\
    \bottomrule
  \end{tabular}
  \caption{Beyond-LP proof of concept: VARX/IRF tightening on the same monetary dataset. \emph{Cumulative} responses cumulate monthly inflation and IP-growth responses to the policy shock.}
  \label{tab:app_money_varx_summary}
\end{table}

\subsection{Sensitivity checks}\label{app:money-robustness}
The bandwidth and covariate-set sensitivity exercises use the same monthly sample, 1990:02--2019:08, and the same monetary-shock series as the main application. The appendix reports linear and state-dependent specifications over horizons $h=0,\ldots,18$, varying the HAC bandwidth, the predetermined adjustment set, and the interpretation of significance when the object is a fixed horizon set rather than a pointwise statement.

The additional diagnostics sharpen three points: first, the magnitude of the reductions is not an artifact of the bandwidth choice. Table~\ref{tab:app_money_bandwidth_robustness} shows that, for the linear projections using the rich macro covariate set, average standard-error reductions remain substantial across the alternative HAC bandwidth rules considered here.

Second, the source of the standard-error reductions is economically informative. Table~\ref{tab:app_money_feature_ablation} shows that intercept-only residualization is essentially uninformative, deterministic trends alone produce only modest reductions, and the larger gains arise when the predetermined set is expanded to include macro and financial information available at date $t-1$.

Third, pointwise significance gains in the application should not be interpreted as uniform over horizons. Table~\ref{tab:app_money_simultaneous_sensitivity} shows that, once one applies the fixed-horizon Bonferroni adjustment from Appendix~\ref{app:simultaneous}, most horizons at which pointwise significance changes under regression adjustment no longer show a simultaneous rejection. \v{S}id{\'a}k gives the same asymptotic simultaneous guarantee under the fixed-horizon Gaussian limit, with a slightly smaller critical value. This is consistent with the paper's broader message: regression adjustment can materially narrow reported HAC bands, but whether those narrower bands are valid conservative design intervals depends on the inferential object and on the additional orthogonality conditions.

\begin{table}[t!]
  \centering
  \small
  \setlength{\tabcolsep}{4.5pt}
  \begin{tabular*}{\textwidth}{@{\extracolsep{\fill}}llccccc@{}}
    \toprule
    Outcome & Bandwidth & \shortstack{Mean\\reduction} & \shortstack{Median\\reduction} & \shortstack{Share\\positive} & \shortstack{Pointwise\\reversals} & \shortstack{\v{S}id\'{a}k\\reversals}\\
    \midrule
    log CPI & $h+1$ & 16.12 & 16.80 & 0.95 & 4 & 0\\
     & $0.75(h+1)$ & 14.32 & 14.67 & 1.00 & 4 & 0\\
     & $1.5(h+1)$ & 19.46 & 20.94 & 1.00 & 4 & 0\\
     & $n^{1/3}$ & 13.14 & 13.73 & 0.95 & 4 & 0\\
     & 12 & 18.97 & 19.68 & 1.00 & 4 & 0\\
    \addlinespace[0.25em]
    BAA--AAA spread & $h+1$ & 22.02 & 21.74 & 1.00 & 1 & 0\\
     & $0.75(h+1)$ & 20.76 & 19.50 & 1.00 & 0 & 0\\
     & $1.5(h+1)$ & 22.91 & 21.43 & 1.00 & 1 & 0\\
     & $n^{1/3}$ & 20.77 & 19.50 & 1.00 & 1 & 0\\
     & 12 & 22.99 & 24.05 & 1.00 & 1 & 0\\
    \addlinespace[0.25em]
    1-year Treasury rate & $h+1$ & 32.27 & 31.58 & 1.00 & 1 & 1\\
     & $0.75(h+1)$ & 26.46 & 24.80 & 1.00 & 0 & 0\\
     & $1.5(h+1)$ & 39.47 & 42.49 & 1.00 & 2 & 1\\
     & $n^{1/3}$ & 23.30 & 23.37 & 1.00 & 0 & 1\\
     & 12 & 34.43 & 35.45 & 1.00 & 0 & 1\\
    \bottomrule
  \end{tabular*}
  \caption{Bandwidth robustness in the monetary application. The table reports the rich-macro regression-adjustment results for the main linear local projections over horizons $h=0,\ldots,18$.}
  \label{tab:app_money_bandwidth_robustness}
\end{table}

\begin{table}[p]
  \centering
  \small
  \setlength{\tabcolsep}{4pt}
  \begin{tabular*}{\textwidth}{@{\extracolsep{\fill}}llccccc@{}}
    \toprule
    Outcome & Adjustment set & \shortstack{Mean\\reduction} & \shortstack{Median\\reduction} & \shortstack{Share\\positive} & \shortstack{Pointwise\\reversals} & \shortstack{\v{S}id\'{a}k\\reversals}\\
    \midrule
    log CPI & Intercept only & 0.00 & 0.00 & 0.11 & 0 & 0\\
     & Trends & 3.71 & 3.27 & 1.00 & 0 & 0\\
     & Macro lags, no trends & 9.63 & 9.44 & 0.95 & 1 & 0\\
     & Macro lags + trends & 8.70 & 8.75 & 0.89 & 1 & 0\\
     & Financial lags + trends & 11.49 & 9.19 & 0.95 & 4 & 0\\
     & Labor/commodity + trends & 7.27 & 6.83 & 1.00 & 0 & 0\\
     & Rich macro, no trends & 15.24 & 16.86 & 0.95 & 4 & 0\\
     & Rich macro & 16.12 & 16.80 & 0.95 & 4 & 0\\
    \addlinespace[0.25em]
    BAA--AAA spread & Intercept only & -0.00 & 0.00 & 0.05 & 0 & 0\\
     & Trends & 1.57 & 1.75 & 1.00 & 0 & 0\\
     & Macro lags, no trends & 10.52 & 9.82 & 1.00 & 0 & 0\\
     & Macro lags + trends & 13.51 & 12.82 & 1.00 & 0 & 0\\
     & Financial lags + trends & 7.75 & 8.11 & 1.00 & 0 & 0\\
     & Labor/commodity + trends & 7.55 & 7.43 & 1.00 & 0 & 0\\
     & Rich macro, no trends & 19.36 & 19.23 & 1.00 & 1 & 0\\
     & Rich macro & 22.02 & 21.74 & 1.00 & 1 & 0\\
    \bottomrule
  \end{tabular*}
  \caption{Feature-set ablations for regression-adjusted HAC in the monetary application. The baseline bandwidth is $h+1$. The intercept-only row removes only the sample mean of the influence-function path.}
  \label{tab:app_money_feature_ablation}
\end{table}

\begin{table}[t!]
  \centering
  \small
  \setlength{\tabcolsep}{4.5pt}
  \begin{tabular}{lcccc}
    \toprule
    Specification & HAC pointwise & RA pointwise & HAC \v{S}id\'{a}k & RA \v{S}id\'{a}k\\
    \midrule
    log CPI (linear) & 0 & 4 & 0 & 0\\
    1-year Treasury rate (linear) & 1 & 2 & 0 & 1\\
    BAA--AAA spread (linear) & 0 & 1 & 0 & 0\\
    CPI slack split: Low slack/non-ZLB & 0 & 1 & 0 & 0\\
    CPI slack split: High slack/ZLB & 0 & 5 & 0 & 0\\
    \bottomrule
  \end{tabular}
  \caption{Pointwise versus fixed-horizon \v{S}id\'{a}k significance counts in the monetary application. Counts are over horizons $h=0,\ldots,18$ using the rich-macro adjustment set and the baseline bandwidth $h+1$.}
  \label{tab:app_money_simultaneous_sensitivity}
\end{table}

The selected pointwise significance-change table reports horizons at which the rich macro covariate set changes pointwise significance relative to baseline HAC. It is included as a descriptive diagnostic; the simultaneous-sensitivity table above is the relevant reference when the inferential object is a fixed horizon set.

\begin{table}[t!]
  \centering
  \small
  \setlength{\tabcolsep}{4.7pt}
  \renewcommand{\arraystretch}{1.08}
  \begin{tabular}{lccccc}
    \toprule
    Specification & $h$ & Estimate & HAC s.e. & RA s.e. & HAC sig. / RA sig.\\
    \midrule
    log CPI (linear LP) & 15 & -0.100 & 0.063 & 0.048 & No / Yes\\
    log CPI (linear LP) & 18 & -0.115 & 0.073 & 0.049 & No / Yes\\
    BAA--AAA spread (linear LP) & 5 & 0.015 & 0.011 & 0.008 & No / Yes\\
    CPI, high slack & 13 & -0.191 & 0.099 & 0.085 & No / Yes\\
    CPI, low slack & 17 & -0.117 & 0.082 & 0.059 & No / Yes\\
    CPI, high slack & 18 & -0.206 & 0.122 & 0.100 & No / Yes\\
    \bottomrule
  \end{tabular}
  \caption{Selected inference-relevant horizons under the rich-macro adjustment. These rows summarize cases in which regression adjustment changes the practical interpretation of the response by turning an insignificant coefficient under HAC into a significant coefficient under regression-adjusted HAC.}
  \label{tab:money_selected_flips}
\end{table}

\FloatBarrier

\clearpage

\section{Additional macro estimators}\label{app:macro-estimators}
This section is a translation guide showing how several standard macro estimators instantiate the moment-vector notation and where the mean-path term enters.

Throughout, let $x_t$ be the randomized design shock, $c_t$ a vector of predetermined controls, and $z_t$ an external instrument measurable with respect to the relevant pre-shock information set and independent of potential outcomes beyond its correlation with the shock of interest. If the realized path of an instrument, control, or regime indicator is included in $\mathcal E_T$, it is fixed under the design; otherwise it moves with past assignment shocks or with the auxiliary observation rule used to construct it. Let $\psi_t=(1,x_t,c_t^\top)^\top$, and let the sample-period design estimand be $\theta_T^\star\in\arg\min_\theta Q_T(\theta)$ with $Q_T(\theta)=\bar m_T(\theta)^\top A\bar m_T(\theta)$. The design long-run variance of the centered moment is denoted $\Omega_R$, and the corresponding GMM asymptotic covariance is $(G^\top A G)^{-1}G^\top A\Omega_R A G(G^\top A G)^{-1}$.

Table~\ref{tab:macro_translation} summarizes the mapping for the main macro procedures considered in the paper. The moment-vector column records the per-period contribution to the moment. The third column combines the natural conditioning set and a choice of predetermined adjustment variables with the leading source of the mean drift $\mu_t(\theta_T^\star)$.

\begin{table}[p]
  \centering
  \footnotesize
  \renewcommand{\arraystretch}{1.08}
  \setlength{\tabcolsep}{4pt}
  \begin{tabularx}{\textwidth}{@{}p{0.19\textwidth}Y Y@{}}
    \toprule
    Estimator & Per-period moment vector & Conditioning set, predetermined covariates, and main source of mean drift \\
    \midrule
    \multicolumn{3}{@{}l}{\emph{Panel A. Reduced-form response estimators}}\\[-0.2em]
    \addlinespace[0.3em]
    Local projection (LP) &
    $g^{\mathrm{LP}}_{t,h}(\theta_h)=\psi_t\big(y_{t+h}-\psi_t^\top\theta_h\big)$, with $\psi_t=(1,x_t,c_t^\top)^\top$. &
    $\I_t=\sigma(\mathcal F_{t-1},Z_t)$, with $Z_t$ a short vector of lagged values of $(x,c,y)$ and low-order deterministic terms. The mean drift is driven by $\Var(x_t\mid\I_t)(\tau_{t,h}-\beta_h^\star)$, that is, by time variation in the impulse response and/or in the shock variance. \\
    \addlinespace[0.35em]
    LP-IV (external instrument) &
    $g^{\mathrm{LPIV}}_{t,h}(\theta_h)=z_t\big(y_{t+h}-\psi_t^\top\theta_h\big)$. &
    $\I_t=\sigma(\mathcal F_{t-1},Z_t)$, with $Z_t$ containing lagged values of $(z,x,c)$ and low-order trends. The drift is driven by $\Cov(z_t,x_t\mid\I_t)(\tau_{t,h}-\beta_{T,h}^\star)$, so both time-varying instrument relevance and treatment-effect heterogeneity matter. \\
    \addlinespace[0.35em]
    VAR($p$) with observed shocks &
    $g^{\mathrm{VAR}}_t(\phi)=\operatorname{vec}\!\big(Z_tu_t(\phi)^\top\big)$, where $u_t(\phi)=Y_t-\Phi_0-\sum_{j=1}^p A_jY_{t-j}-BW_t$. &
    The natural predetermined block is $Z_t^{\mathrm{pred}}=(1,Y_{t-1}^\top,\ldots,Y_{t-p}^\top)^\top$, so $\I_t=\sigma(\mathcal F_{t-1},Z_t^{\mathrm{pred}})$. The mean drift comes from state dependence in $(\Phi_0,A_j,B)$; in the notation of Remark~\ref{rem:LP_VAR_mean_drift}(ii), it is summarized by the blocks $Z_t^{\mathrm{pred}}\Delta_t^\top$ and $\Sigma_{W,t}(B(Z_t)-B)^\top$. \\
    \midrule
    \multicolumn{3}{@{}l}{\emph{Panel B. Structural and second-moment systems}}\\[-0.2em]
    \addlinespace[0.3em]
    Proxy SVAR / SVAR-IV &
    $g^{\mathrm{SVARIV}}_t(\theta)=\big(u_t z_t-\pi b_1,\ e_i^\top b_1-1\big)^\top$. &
    $\I_t=\sigma(\mathcal F_{t-1},Z_t)$, with $Z_t$ collecting lags of $(y_t,z_t)$ and regime indicators for instrument strength. The mean drift is $(\pi_t b_1(Z_t)-\pi^\star b_1^\star,0)^\top$, so time variation in the instrument signal or in the impact vector both matter. \\
    \addlinespace[0.35em]
    Heteroskedastic SVAR &
    $g^{\mathrm{HSVAR}}_t(\theta)=\bigoplus_{r=1}^R d_{r,t}\,\operatorname{vech}\big(u_tu_t^\top-B\Lambda_rB^\top\big)$. &
    $\I_t=\sigma(\mathcal F_{t-1},d_{1,t},\ldots,d_{R,t})$, and the predetermined covariates stack regime dummies with low-order time dummies. The drift is driven by state dependence in $B(Z_t)$ or in the regime-specific variances $\Lambda_{r,t}$. \\
    \addlinespace[0.35em]
    MD-GMM on\\ autocovariances &
    $g^{\mathrm{MD}}_t(\theta)=\bigoplus_{k\in\mathcal K}\operatorname{vec}\big(r_{k,t}-\Gamma_k(\theta)\big)$, with $r_{k,t}=y_ty_{t-k}^\top$. &
    $\I_t=\sigma(\mathcal F_{t-1},y_{t-1},\ldots,y_{t-K})$, and the predetermined covariates use these lags together with regime indicators when relevant. The mean drift is driven by time variation in the matched second moments, namely $\mathbb{E}[y_ty_{t-k}^\top\mid\I_t]-\Gamma_k(\theta_T^\star)$. \\
    \bottomrule
  \end{tabularx}

  \vspace{0.35em}
  \begin{minipage}{\textwidth}
  \footnotesize \emph{Notes.} The table summarizes how the examples instantiate the moment-vector and conditioning-set notation; it is not a formal result. The detailed derivations, including the explicit form of the mean-path term in each case, are given in the subsections below. In each row, $Z_t$ denotes the predetermined covariate vector dated before the shock and used for regression adjustment, and $\operatorname{vech}$ stacks the distinct entries of a symmetric matrix.
  \end{minipage}
  \caption{Common macro estimators in the design-based GMM setup.}
  \label{tab:macro_translation}
\end{table}

\paragraph{A. Local Projections with an External Instrument (LP-IV).}
Fix a horizon $h\ge 0$. Consider the linear projection
$y_{t+h}=\alpha_h+\beta_h x_t+\gamma_h^\top c_t+u_{t,h}$, instrumented with $z_t$ for $x_t$.
Let $\theta_h=(\alpha_h,\beta_h,\gamma_h^\top)^\top$. Define the (possibly overidentified) IV moments by $g^{\mathrm{LPIV}}_{t,h}(\theta_h):= z_t\,(y_{t+h}-\psi_t^\top\theta_h)\in\mathbb{R}^{d_z}$. The sample-period IV estimand $\theta_{T,h}^\star$ solves
$T^{-1}\sum_t\mathbb{E}[z_t(y_{t+h}-\psi_t^\top\theta_{T,h}^\star)]=0$ in the just-identified or locally correctly specified case. In the overidentified case, define $\theta_{T,h}^\star$ as the minimizer of $\bar m_{T,h}(\theta)^\top A\bar m_{T,h}(\theta)$, where $\bar m_{T,h}(\theta):=T^{-1}\sum_t\mathbb E[z_t(y_{t+h}-\psi_t^\top\theta)]$. Write the $h$-period causal effect as
$\tau_{t,h}:=Y_{t+h}^{(1)}-Y_{t+h}^{(0)}$ in the direct potential-outcome notation.

At $\theta_{T,h}^\star$, the instrument-direction drift is $\mu_{t,h}:=\mathbb{E}[g^{\mathrm{LPIV}}_{t,h}(\theta_{T,h}^\star)\mid\I_t]=\Cov(z_t,x_t\mid\I_t)(\tau_{t,h}-\beta_{T,h}^\star)$, provided the instrument is mean-zero and orthogonal to $c_t$ in design and the conditional mean is correctly absorbed by the control block. In that case, the mean path is constant (and hence zero) whenever
$\Cov(z_t,x_t\mid\I_t)\,(\tau_{t,h}-\beta_{T,h}^\star)$ is constant in $t$.
This is the IV analogue of the LP drift in Remark~\ref{rem:LP_VAR_mean_drift}(i). If the variables entering $y_{t+h}$, $x_t$, $c_t$, and $z_t$ are measurable with respect to a fixed finite shock window and the underlying design-shock array is finite-dependent, then the LP-IV moment is finite-dependent after enlarging the window by a constant depending on $h$. In that special case, a finite flat-top HAC with a kernel whose $K=1$ region covers the finite dependence window estimates the corresponding HAC variance limit $\Omega_R^+$. This limit equals the design long-run variance $\Omega_R$ of the centered moment only when the centered mean-path component has zero long-run variance, or after an adjustment step has removed that component under the fully observed long-run orthogonality conditions in Proposition~\ref{prop:HAC_RA_conservative}.
Asymptotic normality for $\hat\theta_h$ follows from Theorem~\ref{thm:AN} with
$G=\lim_T T^{-1}\sum_t \mathbb{E}_T[\nabla_\theta g^{\mathrm{LPIV}}_{t,h}]$
and weighting matrix $A$.

\paragraph{B. Proxy-SVAR / SVAR-IV.}
Let $u_t$ be the reduced-form innovations from a stable VAR($p$) for $y_t$, and suppose an external
instrument $z_t$ satisfies $u_t=B(Z_t)\varepsilon_t$ with $\varepsilon_t$ structural shocks and
$\mathbb{E}[z_t\varepsilon_{1t}]=\pi_t\neq 0$, $\mathbb{E}[z_t\varepsilon_{jt}]=0$ for $j\neq 1$.
The object is the impact column $b_1$ of $B$ associated with shock~1. Define the parameter $\theta=(b_1,\pi)$ with normalization $e_i^\top b_1=1$ for some chosen $i$. Consider moments $g^{\mathrm{SVARIV}}_{t}(\theta):=(u_t z_t - \pi\, b_1,\ e_i^\top b_1 - 1)^\top\in\mathbb{R}^{n+1}$. The first block enforces $\mathbb{E}[u_t z_t]=\pi\, b_1$ (the covariance restriction), and the second fixes scale. At $\theta_T^\star=(b_1^\star,\pi^\star)$, $\mu_t:=\mathbb{E}[g^{\mathrm{SVARIV}}_{t}(\theta_T^\star)\mid\I_t]=(\pi_t\,b_1(Z_t)-\pi^\star b_1^\star,0)^\top$. Hence the mean moment is zero whenever $\pi_t\,b_1(Z_t)\equiv \pi^\star b_1^\star$; time-invariance of both the instrument signal $\pi_t$ and the
impact vector $b_1(Z_t)$ is a convenient sufficient condition. State dependence in $B(Z_t)$ or time-variation
in $\pi_t$ can therefore generate a nonzero mean path even at the best-fitting $(b_1^\star,\pi^\star)$.

With stable reduced-form VAR dynamics the influence function is short-memory, so standard
design HAC with diverging bandwidth is conservative and adds the mean-path term. A finite
window aligned to exact finite dependence estimates the corresponding HAC variance limit,
which equals the design long-run variance only when the centered mean-path component has
zero long-run variance or has been removed by a valid adjustment.

\paragraph{C. Heteroskedasticity-identified SVAR (variance changes).}
Let $u_t=B\varepsilon_t$ with diagonal $\Var(\varepsilon_t\mid R_t^{\mathrm{reg}}=r)=\Lambda_r$ for regimes
$r=1,\dots,R$ indicated by $d_{r,t}=\1\{R_t^{\mathrm{reg}}=r\}$ (regime indicators are fixed if their paths are included in $\mathcal E_T$, and otherwise move with the assignment or state history).
Consider the parameter $\theta=(\operatorname{vec}B, (\operatorname{vech}\Lambda_r)_{r=1}^R)$ and the stacked second-moment conditions
$g^{\mathrm{HSVAR}}_{t}(\theta):=\bigoplus_{r=1}^R d_{r,t}\,\operatorname{vech}(u_t u_t^\top- B\Lambda_r B^\top)\in\mathbb{R}^{R\,n(n+1)/2}$, where $\operatorname{vech}$ stacks the distinct entries of a symmetric matrix. For each regime $r$, $\mathbb{E}[u_tu_t^\top\mid R_t^{\mathrm{reg}}=r]=B\Lambda_r B^\top$. At $\theta_T^\star=(B^\star,(\Lambda_r^\star)_{r=1}^R)$, $\mu_t=\bigoplus_{r=1}^R \mathbb{E}[d_{r,t}\,\operatorname{vech}(B(Z_t)\Lambda_{r,t}B(Z_t)^\top - B^\star \Lambda_r^\star B^{\star\top})]$. Thus $\mu_t\equiv 0$ whenever $B(Z_t)\Lambda_{r,t}B(Z_t)^\top\equiv B^\star \Lambda_r^\star B^{\star\top}$ within each regime; the stronger conditions
$B(Z_t)\equiv B^\star$ and $\Lambda_{r,t}\equiv \Lambda_r^\star$ are sufficient but not necessary.

If, in addition, $(u_t,R_t^{\mathrm{reg}})$ is serially independent, then the moment process is $0$-dependent in $t$ and
a flat-top HAC with $L=0$ estimates the corresponding HAC variance limit for these moments. This limit equals the design long-run variance of the centered moment only when the centered mean-path component has zero long-run variance, or after an adjustment step has removed that component under the fully observed long-run orthogonality conditions in Proposition~\ref{prop:HAC_RA_conservative}; otherwise one should use the short-memory HAC conditions from the main text.

\paragraph{D. Factor-augmented VARs (FAVAR).}
Let $X_t:=(y_t^\top,F_t^\top)^\top$ stack observables $y_t$ and latent factors $F_t$, and let $d_X:=\dim(X_t)$. Suppose $\widehat F_t$ is estimated in a first step (e.g., by large-$N$ PCA) and then fit a stable
VAR($p$) in the augmented state. Define $Q_t^\top=(1,X_{t-1}^\top,\dots,X_{t-p}^\top,x_t^\top)$ and $Q_t^{\mathrm{pred}}:=(1,X_{t-1}^\top,\dots,X_{t-p}^\top)^\top$. The second-stage moment is $g^{\mathrm{FAVAR}}_{t}(\varphi):=(I_{d_X}\otimes Q_t)\,u_t(\varphi)$, where $u_t(\varphi)=X_t-\Phi_0-\sum_{j=1}^pA_j X_{t-j}-B x_t$, so the example inherits Remark~\ref{rem:LP_VAR_mean_drift}(ii).

With state dependence $\Phi_0(Z_t),A_j(Z_t),B(Z_t)$, set $\Delta_t:=(\Phi_0(Z_t)-\Phi_0)+\sum_{j=1}^p(A_j(Z_t)-A_j)X_{t-j}$ and let $\Sigma_{x,t}:=\mathbb E[x_t x_t^\top\mid\I_t]$. Then
\[
\mathbb E[g^{\mathrm{FAVAR}}_t(\varphi_T^\star)\mid\I_t]
=
\operatorname{vec}\!\begin{pmatrix}
Q_t^{\mathrm{pred}}\Delta_t^\top\\[0.2em]
\Sigma_{x,t}(B(Z_t)-B)^\top
\end{pmatrix}.
\]
Thus $\mu_t\equiv 0$ whenever both stacked blocks vanish; in particular this holds if $\Phi_0$, $A_j$, and $B$ do not vary with $t$. If the factor-estimation error is negligible in the second-stage moment at the $T^{-1/2}$ scale and in the HAC product averages, substituting $\widehat F_t$ leaves the first-order mean-path and variance calculations unchanged. Otherwise the factor-estimation step should be included through the augmented-moment construction in Appendix~\ref{app:two-step}.

\paragraph{E. Minimum-distance GMM on second moments (autocovariance matching).}
Let $\mathcal{K}=\{0,1,\dots,K\}$ be a finite set of lags. For each $k\in\mathcal{K}$ define per-period
autocovariance contributions $r_{k,t}:=y_t y_{t-k}^\top$ (set $r_{k,t}=0$ if $t\le k$). A structural
model implies $\Gamma_k(\theta):=\mathbb{E}[y_t y_{t-k}^\top]$. The MD-GMM moments are $g^{\mathrm{MD}}_{t}(\theta):=\bigoplus_{k\in\mathcal{K}}\operatorname{vec}(r_{k,t}-\Gamma_k(\theta))$. At $\theta_T^\star$, $\mu_t=\bigoplus_{k\in\mathcal{K}}\operatorname{vec}(\mathbb{E}[y_t y_{t-k}^\top\mid\I_t]-\Gamma_k(\theta_T^\star))$, hence $\mu_t\equiv 0$ if and only if the second moments are time-invariant at
the lags used.

Because each block uses observations dated $t,\ldots,t-K$, if the underlying design-shock process is $q$-dependent, then the moment process is $(q+K)$-dependent; in that special case a finite flat-top HAC with a kernel whose $K=1$ region covers the resulting dependence window estimates the corresponding HAC variance limit. Otherwise the moment process is generally short-memory rather than finite-dependent, and the diverging-bandwidth HAC conditions in the main text are the appropriate route. As above, the HAC variance limit equals the design long-run variance only when the centered mean-path component has zero long-run variance, or after a valid adjustment has removed that component.

\clearpage

\end{document}